\pdfoutput=1
\documentclass[sigconf]{acmart}

\AtBeginDocument{%
  \providecommand\BibTeX{{%
    \normalfont B\kern-0.5em{\scshape i\kern-0.25em b}\kern-0.8em\TeX}}}

\usepackage[all]{xy}
\usepackage[utf8]{inputenc}
\usepackage{amsmath, amsthm, amssymb, mathtools}
\usepackage[normalem]{ulem}
\usepackage[ruled,vlined]{algorithm2e}
\usepackage{xcolor}
\usepackage{verbatim}
\usepackage{stmaryrd}
\usepackage{bussproofs}
%\usepackage{complexity}

% \newtheorem{theorem}{Theorem}

% From https://tex.stackexchange.com/questions/103013/is-there-a-renewtheorem-equivalent-of-renewcommand-using-amsthm-and-not-ntheo
\makeatletter
\def\renewtheorem#1{%
  \expandafter\let\csname#1\endcsname\relax
  \expandafter\let\csname c@#1\endcsname\relax
  \gdef\renewtheorem@envname{#1}
  \renewtheorem@secpar
}
\def\renewtheorem@secpar{\@ifnextchar[{\renewtheorem@numberedlike}{\renewtheorem@nonumberedlike}}
\def\renewtheorem@numberedlike[#1]#2{\newtheorem{\renewtheorem@envname}[#1]{#2}}
\def\renewtheorem@nonumberedlike#1{
\def\renewtheorem@caption{#1}
\edef\renewtheorem@nowithin{\noexpand\newtheorem{\renewtheorem@envname}{\renewtheorem@caption}}
\renewtheorem@thirdpar
}
\def\renewtheorem@thirdpar{\@ifnextchar[{\renewtheorem@within}{\renewtheorem@nowithin}}
\def\renewtheorem@within[#1]{\renewtheorem@nowithin[#1]}
\makeatother

\renewtheorem{definition}[theorem]{Definition}
\renewtheorem{lemma}[theorem]{Lemma}
\renewtheorem{corollary}[theorem]{Corollary}
\renewtheorem{proposition}[theorem]{Proposition}
\renewtheorem{remark}[theorem]{Remark}
\renewtheorem{corollary}[theorem]{Corollary}
\renewtheorem{example}[theorem]{Example}
\newtheorem{notation}[theorem]{Notation}
\newtheorem{fact}[theorem]{Fact}

\newcommand{\xgraphproblem}{\mathcal{S}}
\newcommand{\termforest}{G}

\renewcommand{\operatorname}[1]{{\tt #1}}
\newcommand{\pointer}[1]{\operatorname{canonic}(#1)}
\newcommand{\binder}[1]{\operatorname{binder}(#1)}

 % Types of nodes in the graph problem
 \newcommand{\AT}{\mathtt{App}}
 \newcommand{\LAM}{\mathtt{Lam}}
 \newcommand{\VAR}{\mathtt{Var}}

 \newcommand{\App}[2]{\AT(#1,#2)}

% Grafi
\newcommand{\nodes}{{\tt nodes}}

\newcommand{\cundefined}{\mathtt{undefined}}
\newcommand{\FAIL}{\texttt{fail}\xspace}

\newcommand{\callstack}{\operatorname{CS}}
\renewcommand{\callstack}{{\tt active}}
\newcommand{\callstackS}[1]{{\callstack^{\helperS{#1}}}}
\newcommand{\stack}{\operatorname{queue}}

% Logging, commenti, etc.

% \newcommand{\CSC}[1]{\textcolor{blue}{CSC: #1}}

% \newcommand{\und}{{\normalfont \texttt{u}}}
\newcommand{\und}\sim
\newcommand{\undS}[1]{\und^{\helperS{#1}}}
\newcommand{\can}{{\normalfont \texttt{c}}}

\newcommand{\canS}[1]{\can^{\helperS{#1}}}
\newcommand{\cS}[2]{{\canS{#1}}(#2)}
\newcommand{\myc}[1]{\cS{}{#1}}

\newcommand{\helperS}[1]{\ifx&#1&\state\else#1\fi}
\renewcommand{\helperS}[1]{\ifx&#1&\else#1\fi}

\newcommand{\eqc}{=_{\normalfont\texttt{c}}}

\newcommand{\xundirected}{\mathtt{undir}_{\mathtt{query}}}
\newcommand{\xcanonic}{\mathtt{canonic}}

% Nomi belli per lemmi e proprieta'

% Nomi dei blocchi di invarianti

% Nomi degli invarianti

% Not working :-(
%\newtheoremstyle{proofof}{}{}{}{}{}{}{}{\thmname{#1} XXX \thmnote{#2}}
%\theoremstyle{proofof}
\newtheorem*{proofof}{Proof of}

\newcommand{\xgraph}{G}

\newcommand{\refthm}[1]{Theorem~\ref{thm:#1}}

\newcommand{\refprop}[1]{Proposition~\ref{prop:#1}}
\newcommand{\refsect}[1]{Sect.~\ref{sect:#1}}

\newcommand{\reffig}[1]{Fig.~\ref{fig:#1}}

\newcommand{\refdef}[1]{Def.~\ref{def:#1}}

\newcommand{\unfsym}{{\downarrow}}
\newcommand{\unf}[1]{#1\unfsym\,}

\definecolor{myblue}{RGB}{0,57,127} %{0,91,204}
\definecolor{myorange}{RGB}{127,82,0}
\definecolor{myviolet}{RGB}{127,0,104}
\definecolor{myred}{RGB}{255,0,20}
\definecolor{mygreen}{RGB}{0,127,30}

\newcommand{\blue}[1]{{\color{myblue} {#1}}}

\newcommand{\orange}[1]{{\color{myorange}#1}}

\newcommand{\la}[1]{\lambda{#1}.}

%basic contexts

\newcommand{\size}[1]{|#1|}
\renewcommand{\l}{\lambda}

\newcommand{\varnode}{v}
\newcommand{\varnodetwo}{w}

% nuovi invarianti

\SetKwFunction{fin}{\blue{Kill}}
\SetKwFunction{push}{\orange{PushSetAndPropagate}}
\SetKwFunction{main}{BlindSharingCheck}
\SetKwFunction{mainLC}{VarCheck}

% left / right children

% \newcommand{\NestingProperty}{\refnesting}

% {\emph{Existence} (\reflemma{properties_of_alpha_upto})}

% {\emph{Monotonicity} (\reflemma{properties_of_alpha_upto})}

% \approx rules

\newcommand{\genericsystem}{X}
\newcommand{\genericsystemsh}{{\tt sh}X}
\newcommand{\lamgen}{\l_{\genericsystem}}

\newcommand{\nfx}[1]{{\tt nf}_{\genericsystem}(#1)}
\newcommand{\tox}{\to_{\genericsystem}}

\newcommand{\lamgensh}{\l_{\genericsystemsh}}
\newcommand{\toxsh}{\to_{\genericsystemsh}}
\newcommand{\nfxsh}[1]{{\tt nf}_{\genericsystemsh}(#1)}

% core, blind, ...
\newcommand{\core}{blind\xspace}
\newcommand{\Core}{Blind\xspace}

\newcommand{\pregraph}{pre--\ladag}

\newcommand\HomogeneityCheckT{\Core Check}
\newcommand\HomogeneityCheck\HomogeneityCheckT
\newcommand\HomogeneityCheckLower{\Core check}
\newcommand{\NameCheck}{Variables Check}
\newcommand{\NameCheckLower}{Variables check}

\newcommand{\focongruence}{\core sharing equivalence\xspace}
\newcommand{\focongruences}{\core sharing equivalences\xspace}

\newcommand{\hocongruence}{sharing equivalence\xspace}
\newcommand{\hocongruences}{sharing equivalences\xspace}

\newcommand{\Hocongruences}{Sharing equivalences\xspace}

%%%%%%%%%%%%%%%%%%%%%%%%%
\makeatletter
\newcommand\Copy[2]{% #1 is a key, #2 is the text
  \immediate\write\@auxout{\unexpanded{\global\long\@namedef{mytext@#1}{#2}}}%
  #2%
}

\newcommand\Paste[1]{%
  \ifcsname mytext@#1\endcsname
    \@nameuse{mytext@#1}%
  \else
    ``??''
  \fi
}
\makeatother

\newcommand{\calves}{Calv\`es\xspace}
\newcommand{\fernandez}{Fern\'andez\xspace}

\newcommand{\colspace}{@{\hspace{.5cm}}}
\newcommand{\myapp}{\makeatletter @ \makeatother}

\usepackage{tikz}
\usetikzlibrary{calc}
\usetikzlibrary{matrix,arrows}
\usetikzlibrary{decorations.shapes}
\usetikzlibrary{decorations.text}
\usetikzlibrary{decorations.pathmorphing}
\usetikzlibrary{decorations.markings}

\tikzset{
node distance=1.3cm, auto,
every node/.style={font=\tiny },
ocenter/.style={baseline={([yshift=-.5ex, xshift=-.5ex]current bounding box)}},  
labelBeginAbove/.style={postaction={decorate,decoration={markings,mark=at position 0 with {\node[inner sep= 0.6pt, above=1pt]{\tiny #1};}} } },
labelBeginBelow/.style={postaction={decorate,decoration={markings,mark=at position 0 with {\node[inner sep= 0.6pt, below=1pt]{\tiny #1};}}}},
labelEndAbove/.style={postaction={decorate,decoration={markings,mark=at position 1 with {\node[inner sep= 0.6pt, above=1pt]{\tiny #1};}}}},
labelEndBelow/.style={postaction={decorate,decoration={markings,mark=at position 1 with {\node[inner sep= 0.6pt, below=1pt]{\tiny #1};}}}},
labelEndRight/.style={postaction={decorate,decoration={markings,mark=at position 1 with {\node[inner sep= 0.6pt, right=1pt]{\tiny #1};}}}},
labelEndLeft/.style={postaction={decorate,decoration={markings,mark=at position 1 with {\node[inner sep= 0.6pt, left=1pt]{\tiny #1};}}}}
}

\newcommand{\nodeHorDist}{2cm}

%%
%% Macros done by pablo for the paper on abstract machines.
%%

\usepackage{ifthen}

\newcommand{\fixme}[2][]{{\color{red}%
  \ifthenelse { \equal {#1} {} }%
  {\texttt{<fixme>}#2\texttt{</fixme>}}%
  {\texttt{<fixme what="}#1\texttt{">}#2\texttt{</fixme>}}%
}}

\renewcommand\xundirected{\texttt{undir}}
\newcommand\xvisiting{\texttt{building}}
\newcommand{\FailState}{$\mathcal{F}\!\!\mathit{ail}$}
% \renewcommand\undS{\texttt{u}}

% Canonic node
\newcommand{\cannode}{c}

\newcommand{\RelC}{\mathrel{{\canS{}}}}

\newcommand{\urel}{\mathrel{\und}}
\newcommand{\que}{{\tt q}}

\newcommand{\bui}{{\tt b}}
\newcommand{\bS}[2]{\bui(#2)}
\newcommand{\qS}[2]{\que(#2)}

\newcommand{\CrefAppendix}[1]{\Cref{#1} in the Appendix}

\newcommand{\laci}{$\lambda$-calculi}
\newcommand{\queryEdge}{query edge}
\newcommand\simEdge\queryEdge
\newcommand{\simSibling}{${\urel}$neighbour} % FIXME

\newcommand{\tg}\ladag

\newcommand\OnlyInFinalOrTechnicalReport[2]{#2}
\newcommand\CiteExtended{XXX}

% Macros and Notations
\input{./notations/ac}[ac, general]{./notations/general}
\input{./notations/ac}[%
  termgraph, ac-termgraph-relations,
  ac-termgraph-inferences, ac-termgraph-paths,
  lambda-calculus,
  type-option,
  machine-states,
  general, ac,
  ac-alpha-termgraph
 ]{./notations/notations}
\usepackage{amsmath, amsthm, amssymb, mathtools}
\usepackage{bussproofs}
\usepackage{xcolor, graphicx}
\usepackage{stmaryrd}
\usepackage{float}
\hypersetup{bookmarks=false}
\usepackage{cleveref}
\usepackage{tikz}
\usetikzlibrary{calc,decorations.pathmorphing,shapes}

\newcommand{\tmx}{\tmtwo}
\newcommand{\lbvar}[1]{\underline{#1}}
\newcommand{\lfvar}[1]{\overline{#1}}
\newcommand{\lapp}[1]{#1\,}
\newcommand{\labs}{\lambda}

% Natural numbers
\newcommand{\Nat}{\mathbb{N}}
\newcommand{\numb}{i}

% Strings
\newcommand{\Str}{\mathbb{A}}
\newcommand{\str}{a}

% Nodes in the graph

\newcommand{\nid}[1]{\texttt{name}(#1)}

% Functions

\newcommand{\indexOf}[2]{\operatorname{index}(#1 \mid #2)}

% \newcommand{\trx}[1]{(#1)^\Downarrow}
% \newcommand{\trxnp}[1]{{#1}^\Downarrow}

% Paths
\newcounter{sarrow}

\newcommand{\Dir}{d}

\SetKwSwitch{MyCase}{Case}{Other}{match}{with}{case}{otherwise}{end case}{end switch}%
\SetKwProg{procedure}{Procedure}{}{}
\SetKwFunction{BuildClass}{BuildEquivalenceClass}
\SetKwFunction{CheckHomogeneous}{BlindCheck}
\SetKwFunction{Enqueue}{EnqueueAndPropagate}

%%%%%
% \newcommand{\pointer}[1]{\texttt{canonic}(#1)}
\newcommand{\visiting}[1]{\texttt{building}(#1)}

\newcommand{\true}{\operatorname{true}}
\newcommand{\false}{\operatorname{false}}
\newcommand{\mycomma}{\mathrel{{,}}}

\begin{document}

\title[Sharing Equality is Linear]{Sharing Equality is Linear}

%% Author with two affiliations and emails.
\author{Andrea Condoluci}
\affiliation{
\position{}
  \department{Department of Computer Science and Engineering}             %% \department is recommended
  \institution{University of Bologna}           %% \institution is required
 \streetaddress{}
  \city{}
  \state{}
  \postcode{}
  \country{Italy}                   %% \country is recommended
}
\email{andrea.condoluci@unibo.it}         %% \email is recommended

\author{Beniamino Accattoli}
\affiliation{
  \position{}
  \department{LIX}              %% \department is recommended
  \institution{Inria \& \'Ecole Polytechnique}            %% \institution is required
  \streetaddress{}
  \city{}
  \state{}
  \postcode{}
  \country{France}                    %% \country is recommended
}
\email{beniamino.accattoli@inria.fr}          %% \email is recommended

\author{Claudio Sacerdoti Coen}
\affiliation{
\position{}
  \department{Department of Computer Science and Engineering}             %% \department is recommended
  \institution{University of Bologna}           %% \institution is required
   \streetaddress{}
  \city{}
  \state{}
  \postcode{}
  \country{Italy}                   %% \country is recommended
}
\email{claudio.sacerdoticoen@unibo.it}         %% \email is recommended

\renewcommand{\shortauthors}{Condoluci, Accattoli and Sacerdoti Coen}

\begin{abstract}
  % !TEX root = main.tex
The \lac{} is a handy formalism to specify the evaluation of higher-order programs. It is not very handy, however, when one interprets the specification as an execution mechanism, because terms can grow exponentially with the number of $\beta$-steps. This is why implementations of functional languages and proof assistants always rely on some form of sharing of subterms.

These frameworks however do not only evaluate \lat{s}, they also have to compare them for equality. In presence of sharing, one is actually interested in equality of the underlying \emph{unshared} \lat{s}. The literature contains algorithms for such a \emph{sharing equality}, that are polynomial in the sizes of the shared terms.

This paper improves the bounds in the literature by presenting the first \emph{linear time} algorithm. As others before us, we are inspired by Paterson and Wegman's algorithm for first-order unification, itself based on representing terms with sharing as DAGs, and sharing equality as bisimulation of DAGs. Beyond the improved complexity, a distinguishing point of our work is a dissection of the involved concepts. In particular, we show that the algorithm computes the smallest bisimulation between the given DAGs, if any.

%The lambda-calculus is a handy formalism to specify the evaluation of higher-order programs. It is not very handy, however, when one interprets the specification as an execution mechanism, because terms can grow exponentially with the number of beta-steps. This is why implementations of functional languages and proof assistants always rely on some form of sharing of subterms.
%
%These frameworks however do not only evaluate lambda-terms, they also have to compare them for equality. In presence of sharing, one is actually interested in equality---or more precisely alpha-conversion---of the underlying *unshared* lambda-terms. The literature contains algorithms for such a *sharing equality*, that are polynomial in the sizes of the shared terms.
%
%This paper improves the bounds in the literature by presenting the first *linear time* algorithm. As others before us, we are inspired by Paterson and Wegman's algorithm for first-order unification, itself based on representing terms with sharing as DAGs, and sharing equality as bisimulation of DAGs. 
%Beyond the improved complexity, a distinguishing point of our work is a dissection of the involved concepts. In particular, we show that the algorithm computes the smallest bisimulation between the given DAGs, if any.

\end{abstract}

\keywords{lambda-calculus, sharing, alpha-equivalence, bisimulation}
\maketitle

% Temporarily disabled
% !TEX root = main.tex
\section{Introduction}
\subsection*{Origin and Downfall of the Problem}

For as strange as it may sound, the \lac{} is not a good setting for evaluating and representing higher-order programs. It is an excellent specification framework, but---it is simply a matter of fact---no tool based on the \lac{} implements it as it is.

\paragraph{Reasonable evaluation and sharing.} Fix a dialect $\lamgen$ of the \lac{} with a deterministic evaluation strategy $\tox$, and note $\nfx\tm$ the normal form of $\tm$ with respect to $\tox$. If the \lac{} were a reasonable execution model then one would at least expect that mechanizing an evaluation sequence $\tm \tox^n \nfx\tm$ on random access machines (RAM) would have a cost polynomial in the size of $\tm$ and in the number $n$ of $\beta$-steps. In this way a program of $\lamgen$ evaluating in a polynomial number of steps can indeed be considered as having polynomial cost.

Unfortunately, this is not the case, at least not literally. The problem is called \emph{size explosion}: there are families of terms whose size grows exponentially with the number of evaluation steps, obtained by nesting duplications one inside the other---simply writing down the result $\nfx\tm$ may then require cost exponential in $n$.

In many cases sharing is the cure because size explosion is caused by an unnecessary duplications of subterms, that can be avoided if such subterms are instead shared, and evaluation is modified accordingly.

The idea is to introduce an intermediate setting $\lamgensh$ where $\lamgen$ is refined with sharing (we are vague about sharing on purpose) and evaluation in $\lamgen$ is simulated by some refinement $\toxsh$ of $\tox$. A term with sharing $\tm$ represents the ordinary term $\unf\tm$ obtained by unfolding the sharing in $\tm$---the key point is that $\tm$ can be exponentially smaller than $\unf\tm$. Evaluation in $\lamgensh$ produces a shared normal form $\nfxsh\tm$ that is a compact representation of the ordinary result, that is, such that $\unf{\nfxsh\tm} = \nfx{\unf\tm}$. The situation can then be refined as in the following diagram:
\begin{center}
      \begin{tikzpicture}[ocenter]
       \node (l) {\small $\lamgen$};
       \node at (l.center)  [right=2*\nodeHorDist](ram){\small RAM};
       \node at (l.center)  [right=\nodeHorDist](ghost) {};
       \node at (ghost.center)  [below =.5cm](ulsc) {\small $\lamgensh$};
       \draw[->, dotted] (l) to node {\small $polynomial$} (ram);
       \draw[->] (l) to node[below = 5pt, left =2pt] {\small $polynomial$} (ulsc);
       \draw[->] (ulsc) to node[below = 5pt, right =1pt, overlay] {\small {$polynomial$}} (ram);
      \end{tikzpicture}
\end{center}
Let us explain it. One says that $\lamgen$ is \emph{reasonably implementable} if both the simulation of $\lamgen$ in $\lamgensh$ up to sharing and the mechanization of $\lamgensh$ can be done in time polynomial in the size of the initial term $\tm$ and of the number $n$ of $\beta$-steps. If $\lamgen$ is reasonably implementable then it is possible to reason about it as if it were not suffering of size explosion. The main consequence of such a schema is that the number of $\beta$-steps in $\lamgen$ then becomes a reasonable complexity measure---essentially the complexity class $\mathsf{P}$ defined in $\lamgen$ coincides with the one defined by RAM or Turing machines.

The first result in this area appeared only in the nineties and for a special case---Blelloch and Greiner showed that weak (that is, not under abstraction) call-by-value evaluation is reasonably implementable \cite{DBLP:conf/fpca/BlellochG95}. The strong case, where reduction is allowed everywhere, has received a positive answer only in 2014, when Accattoli and Dal Lago have shown that leftmost-outermost evaluation is reasonably implementable \cite{DBLP:conf/csl/AccattoliL14}.

%In the best cases implementations are bilinear, that is, linear in both the size of $\tm_0$ and the number of $\beta$-steps $n$ (and sometimes even logarithmic in $\tm_0$, see Accattoli and Barras's \cite{DBLP:conf/ppdp/AccattoliB17}).

\paragraph{Reasonable conversion and sharing.} Some higher-order settings need more than evaluation of a single term. They often also have to check whether two terms $\tm$ and $\tmtwo$ are $X$-convertible---for instance to implement the equality predicate, as in OCaml, or for type checking in settings using dependent types, typically in Coq. These frameworks usually rely on a set of folklore and ad-hoc heuristics for conversion, that quickly solve many frequent special cases. In the general case, however, the only known algorithm is to first evaluate $\tm$ and $\tmtwo$ to their normal forms $\nfx\tm$ and $\nfx\tmtwo$ and then check $\nfx\tm$ and $\nfx\tmtwo$ for equality---actually, for $\alpha$-equivalence because terms in the \lac{} are identified up to $\alpha$. One can then say that conversion in $\lamgen$ is \emph{reasonable} if checking $\nfx\tm \alphaeq \nfx\tmtwo$ can be done in time polynomial in the sizes of $\tm$ and $\tmtwo$ and in the number of $\beta$ steps to evaluate them.

Sharing is the cure for size explosion during evaluation, but what about conversion? Size explosion forces reasonable evaluations to produce shared results. Equality in $\lamgen$ unfortunately does not trivially reduce to equality in $\lamgensh$, because a single term admits many different shared representations in general. Therefore, one needs to be able to test \emph{sharing equality}, that is to decide whether $\unf\tm \alphaeq \unf\tmtwo$ given two shared terms $\tm$ and $\tmtwo$.

For conversion to be reasonable, sharing equality has to be testable in time polynomial in the sizes of $\tm$ and $\tmtwo$. The obvious algorithm that extracts the unfoldings $\unf\tm$ and $\unf\tmtwo$ and then checks $\alpha$-equivalence is of course too na\"ive, because computing the unfolding is exponential. The tricky point therefore is that sharing equality has to be checked without unfolding the sharing.

In these terms, the question has first been addressed by Accattoli and Dal Lago in \cite{DBLP:conf/rta/AccattoliL12}, where they provide a quadratic algorithm for sharing equality. Consequently, conversion is reasonable.

\paragraph{A closer look to the costs.} Once established that strong evaluation and conversion are both reasonable it is natural to wonder how efficiently can they be implemented. Accattoli and Sacerdoti Coen in \cite{DBLP:conf/lics/AccattoliC15} essentially show that strong evaluation can be implemented within a bilinear overhead, \ie with overhead linear in the size of the initial term and in the number of $\beta$-steps. Their technique has then been simplified by Accattoli and Guerrieri in \cite{DBLP:conf/fsen/AccattoliG17}. Both works actually address \emph{open} evaluation, which is a bit simpler than strong evaluation---the moral however is that evaluation is bilinear. Consequently, the size of the computed result is bilinear.

The bottleneck for conversion then seemed to be Accattoli and Dal Lago's quadratic algorithm for sharing equality. The literature actually contains also other algorithms, studied with different motivations or for slightly different problems, discussed among related works in the next section. None of these algorithms however matches the complexity of evaluation.

In this paper we provide the first algorithm for sharing equality that is linear in the size of the shared terms, improving over the literature. Therefore, the complexity of sharing equality matches the one of evaluation, providing a combined bilinear algorithm for conversion, that is the real motivation behind this work.

\subsection*{Computing Sharing Equality}

\paragraph{Sharing as DAGs.} Sharing can be added to \lat{s} in different forms. In this paper we adopt a graphical approach. Roughly, a \lat{} can be seen as a (sort of) directed tree whose root is the topmost constructor and whose leaves are the (free) variables. A \lat{} with sharing is more generally a Directed Acyclic Graph (DAG). Sharing of a subterm $\tm$ is then the fact that the root node $\node$ of $\tm$ has more than one parent.

This type of sharing is usually called \emph{horizontal} or subterm sharing, and it is essentially the same sharing as in calculi with explicit substitution, environment-based abstract machines, or linear logic---the details are different but all these approaches provide different incarnations of the same notion of sharing. Other types of sharing include so-called \emph{vertical sharing} ($\mu$, \texttt{letrec}), twisted sharing \cite{blom2001term}, and \emph{sharing graphs} \cite{DBLP:conf/popl/Lamping90}. The latter provide a much deeper form of sharing than our DAGs, and are required by Lamping's algorithm for optimal reduction. To our knowledge, sharing equality for sharing graphs has never been studied---it is not even known whether it is reasonable.

\paragraph{Sharing equality as bisimilarity.} When \lat{s} with sharing are represented as DAGs, a natural way of checking sharing equality is to test DAGs for bisimilarity. Careful here: the transition system under study is the one given by the directed edges of the DAG, and not the one given by $\beta$-reduction steps, as in applicative bisimilarity---our DAGs may have $\beta$-redexes but we do not reduce them in this paper, that is an orthogonal issue, namely evaluation. Essentially, two DAGs represent the same unfolded \lat{} if they have the same structural paths, just collapsed differently.

To be precise, sharing equality is based on what we call \emph{sharing equivalences}, that are bisimulations plus some additional requirements about free variables and the requirement that they are equivalence relations.

\paragraph{Binders, cycles, and domination.} A key point of our problem is the presence of binders, \ie abstractions, and the fact that equality on usual \lat{s} is $\alpha$-equivalence. Graphically, it is standard to see abstractions as getting a backward edge from the variable they bind---a way of representing scopes that dates back to Bourbaki in \emph{El\'ements de Th\'eorie des Ensembles}, but also supported by the strong relationship between \lac{} and linear logic proof nets.

In this approach, binders introduce a form of cycle in the \ladag{}: while two free variables are bisimilar only if they coincide, two bound variables are bisimilar only when also their binders are bisimilar, suggesting that \lat{s} with sharing are, as directed graphs, structurally closer to deterministic finite automata (DFA), that may have cycles, than to DAGs. The problem with cycles is that in general bisimilarity of DAGs cannot be checked in linear time---Hopcroft and Karp's algorithm~\cite{hoka71}, the best one, is only pseudo-linear, that is, with an inverse Ackermann factor.

Technically speaking, the cycles induced by binders are not actual cycles:
 unlike usual downward edges, backwards edges do not point to subterms of a node,
 but are merely a graphical representation of \emph{scopes}. They are indeed characterized by a structural property called \emph{domination}---exploring the DAG from the root, one necessarily visits the binder before the bound variable. Domination turns out to be one of the key ingredients for a linear algorithm in presence of binders.

\paragraph{Previous work.} Sharing equality bears similarities with unification, which are discussed in the next section. For what concerns sharing equality itself, in the literature there are only two algorithms explicitly addressing it. First, the already cited quadratic one by Accattoli and Dal Lago. Second, a $O(n\log n)$ algorithm by Grabmayer and Rochel \cite{DBLP:conf/icfp/GrabmayerR14} (where $n$ is the sum of the sizes of the shared terms to compare, and the input of the algorithm is a graph), obtained by a reduction to equivalence of DFAs and treating the more general case of \lat{s} with {\tt letrec}.

\paragraph{Contributions: a theory and a 2-levels linear algorithm.} This paper is divided in two parts. The first part develops a re-usable, self-contained, and clean theory of sharing equality, independent of the algorithm that computes it. Some of its concepts are implicitly used by other authors, but never emerged from the collective unconscious before (\emph{propagated queries} in particular)---others instead are new. A key point is that we bypass the use of $\alpha$-equivalence by relating sharing equalities on DAGs with $\l$-terms represented in a locally nameless way \cite{DBLP:journals/jar/Chargueraud12}. In such an approach, bound names are represented using de Bruijn indices, while free variables are represented using names---thus $\alpha$-equivalence collapses on equality. The theory culminates with the sharing equality theorem, which connects equality of \lat{s} and sharing equivalences on shared \lat{s}, under suitable conditions.

The second part studies a linear algorithm for sharing equality by adapting Paterson and Wegman's (shortened to PW) linear algorithm for first-order unification \cite{PATERSON1978158} to \lat{s} with sharing. Our algorithm is actually composed by a 2-levels, modular approach, pushing further the modularity suggested---but not  implemented---in the nominal unification study by \calves \& \fernandez in \cite{Calves:2010:FNL:2008282.2008297}:
\begin{itemize}
	\item \emph{\HomogeneityCheckLower}: a reformulation of PW from which we removed the management of meta-variables for unification. It is used as a first-order test on \lat{s} with sharing, to check that the unfolded terms have the same skeleton, ignoring variables.
	\item \emph{\NameCheckLower}: a straightforward algorithm executed after the previous one, testing $\alpha$-equivalence by checking that bisimilar bound variables have bisimilar binders and that two different free variables are never equated.
\end{itemize}
The decomposition plus the correctness and the completeness of the checks crucially rely on the theory developed in the first part.

\paragraph{The value of the paper.} It is delicate to explain the value of our work. Three contributions are clear: 1) the improved complexity of the problem, 2) the consequent downfall on the complexity of $\beta$-conversion, and 3) the isolation of a theory of sharing equality.
At the same time, however, our algorithm looks as an easy adaptation of PW, and binders do not seem to play much of a role. Let us then draw attention to the following points:

\begin{itemize}
	\item \emph{Identification of the problem}: the literature presents similar studies and techniques, and yet we are the first to formulate and study the problem \emph{per se} (unification is different, and it is usually not formulated on terms with sharing), directly (\ie without reducing it to DFAs, like in Grabmayer and Rochel), and with a fine-grained look at the complexity (Accattoli and Dal Lago only tried not to be exponential).

	\item \emph{The role of binders}: the fact that binders can be treated straightforwardly is---we believe---an insight and not a weakness of our work. Essentially, domination allows to reduce sharing equality in presence of binders to the \HomogeneityCheckLower, under key assumptions on the context in which terms are tested (see \emph{queries}, \refsect{stating-sh-eq}).

	\item \emph{Minimality}. The set of shared representations of an ordinary \lat{} $\tm$ is a lattice: the bottom element is $\tm$ itself, the top element is the (always existing) maximally sharing of $\tm$, and for any two terms with sharing there exist \emph{inf} and \emph{sup}. Essentially, Accattoli \& Dal Lago and Grabmayer \& Rochel address sharing equality by computing the top elements of the lattices of the two \lat{s} with sharing, and then comparing them for $\alpha$-equivalence. We show that our \HomogeneityCheckLower---and morally every PW-based algorithm---computes the \emph{sup} of $\tm$ and $\tmtwo$, that is, the term having all and only the sharing in $\tm$ or $\tmtwo$, that is the smallest sharing equivalence between the two DAGs. This insight, first pointed out in PW's original paper to characterize most general unifiers, is a prominent concept in our theory of sharing equality as well.

	\item \emph{Proofs, invariants, and detailed development}. We provide detailed correctness, completeness, and linearity proofs, plus a detailed treatment of the relationship between equality on locally nameless \lat{s} and sharing equivalences on \ladag{s}. Our work is therefore self-contained, but for the fact that most of the theorems and their proofs have been \OnlyInFinalOrTechnicalReport{omitted from the body for lack of space.
	The complete technical development can be found in the accompanying extended version of this paper \CiteExtended.}{moved to the appendix.}

	\item \emph{Concrete implementation}. We implemented our algorithm \footnote{The code is available on \url{http://www.cs.unibo.it/~sacerdot/sharing_equivalence.tgz}.} and verified experimentally its linear time complexity.
      However, let us stress that despite providing an implementation our aim is mainly theoretical. Namely, we are interested in showing that sharing equality is linear (to obtain that conversion is bilinear) and not only pseudo-linear, even though other algorithms with super-linear asymptotic complexity may perform better in practice.
\end{itemize}

%The closer in spirit is the faster $O(n\log n)$ algorithm of Grabmayer and Rochel \cite{DBLP:conf/icfp/GrabmayerR14} (where $n$ is the sum of the sizes of the shared terms to compare), obtained by reducing sharing equality to bisimulation of some automata extracted from \lat{s}. There are other quadratic algorithms for nominal unification of \lat{s} (that is, unification modulo $\alpha$-equivalence) that can be adapted to test sharing equality.

%Terms with sharing can take different forms, such as states of environment-based abstract machines, or terms with explicit substitutions, or DAG representations of \lat{s}. While details certainly matter, the three mentioned approaches are essentially equivalent---in this paper. For instance,  are three approaches that provide the same degree of sharing. Here we focus on seeing

% !TEX root = main.tex

\section{Related Problems and Technical Choices}

We suggest the reader to skip this section at a first reading.

\paragraph{Related problems.} There are various problems that are closely related to sharing equality, and that are also treated with bisimilarity-based algorithms. Let us list similarities and differences:
\begin{itemize}
\item \emph{First-order unification}. On the one hand the problem is more general, because unification roughly allows to substitute variables with terms not present in the original DAGs, while in sharing equality this is not possible. On the other hand, the problem is less general, because it does not allow binders and does not test $\alpha$-equivalence. There are basically two linear algorithm for first-order unification, Paterson and Wegman's (PW) \cite{PATERSON1978158} and Martelli and Montanari's (MM) \cite{Martelli:1982:EUA:357162.357169}. Both rely on sharing to be linear. PW even takes terms with sharing as inputs, while MM deals with sharing in a less direct way, except in its less known variant \cite{Martelli:1977:TPS:1624435.1624556}
that takes in input terms shared using the Boyer-Moore technique \cite{boyer1972sharing}.

\item \emph{Nominal unification} is unification up to $\alpha$-equivalence (but not up to $\beta$ or $\eta$ equivalence) of \laci{} extended with name swapping, in the nominal tradition. It has first studied by Urban, Pitts, and Gabbay \cite{DBLP:conf/csl/UrbanPG03} and efficient algorithms are due to two groups, \calves \& \fernandez and Levy \& Villaret, adapting PW and MM form first-order unification. It is very close to sharing equality, but the known best algorithms \cite{Calves:2010:FNL:2008282.2008297,4030} are only quadratic.
See \cite{calvs:LIPIcs:2013:4059} for a unifying presentation.

\item \emph{Pattern unification}. Miller's pattern unification can also be stripped down to test sharing equality. Qian presents a PW-inspired algorithm, claiming linear complexity \cite{10.1007/3-540-56610-4_78}, that seems to work only on unshared terms. We say \emph{claiming} because the algorithm is very involved and the proofs are far from being clear. Moreover, according to Levy and Villaret in \cite{4030}: \emph{it is really difficult to obtain a practical algorithm from the proof described in~\cite{10.1007/3-540-56610-4_78}}. We believe that is fair to say that Qian's work is  hermetic (please try to read it!).

\item \emph{Nominal matching.} \calves \& \fernandez in \cite{CALVES2010283} present an algorithm for nominal matching (a special case of unification) that is linear, but \emph{only} on unshared input terms.

\item \emph{Equivalence of DFA}. Automata do not have binders, and yet they are structurally more general than \lat{s} with sharing, since they allow arbitrary directed cycles, not necessarily dominated. As already pointed out, the best equivalence algorithm is only pseudo-linear \cite{hoka71}.
\end{itemize}
Algorithms for $\alpha$-equivalence extended with further principles (\eg permutations of let expressions), but not up to sharing unfolding, is studied by Schmidt{-}Schau{\ss}, Rau, and Sabel in \cite{DBLP:conf/rta/Schmidt-SchaussRS13}.

\paragraph{Hash-consing.} Hash-consing \cite{DBLP:journals/cacm/Ershov58,goto-hashconsing} is a technique to share purely functional data, traditionally realised through a hash table \cite{Allen:1978:AL:542865}. It is a eager approach to conversion of $\l$-terms, in which one keeps trace in a huge table of all the pairs of convertible terms previously encountered. Our approach is somewhat dual or lazy, as our technique does not record previous tests, only the current equality problem is analysed. With hash consing, to check that two terms are sharing equivalent it suffices to perform a lookup in the hash table, but first the terms have to be hash-consed (\ie{} maximally shared),  which requires quasilinear running times.%---proportional to the size of the terms, plus an additional logarithmic factor for each lookup in the hash table.

\paragraph{Technical choices.} Our algorithm requires \lat{s} to be represented as graphs. This choice is fair, since abstract machines with global environments such as those described by Accattoli and Barras in \cite{DBLP:conf/ppdp/AccattoliB17} do manipulate similar representations, and thus produce \ladag{s} to be compared for sharing equality---moreover, it is essentially the same representation induced by the translation of $\l$-calculus into linear logic proof nets \cite{Reg:Thesis:92,DBLP:conf/ictac/Accattoli18} up to explicit sharing nodes---namely $?$-links---and explicit boxes.
In this paper we do not consider explicit sharing nodes (that in $\l$-calculus syntax correspond to have sharing only on variables), but one can translate in linear time between that representation to ours (and viceversa) by collapsing these \emph{variables-as-sharing} on their child if they are the child of some other node. Our results could be adapted to this other approach, but at the price of more technical definitions.

On the other hand, abstract machines with \emph{local} environments (\eg Krivine asbtract machine), that typically rely on de Bruijn indices, produce different representations where every subterm is paired with an environment, representing a delayed substitution. Those outputs cannot be directly compared for sharing equality, they first have to be translated to a global environment representation---the study of such a translation is future work.

% !TEX root = main.tex
\begin{figure*}[t]
 \begin{tabular}{cc@{\qquad}|@{\qquad}cc@{\qquad}|@{\qquad}cc}
  a)&
  \xymatrix @=1pc @C=0.3pc {
  & & & \myapp \ar[lld] \ar[drr] & & \\
  & \lambda \ar[d] & & & & \myapp \ar[dl] \ar@/^1.1pc/[dddl] \\
  & \myapp \ar[dl] \ar[dr] & & & \lambda \ar[dd] & \\
  x \ar@{-->}@/^1.1pc/[uur] & & \lambda \ar[drr] & y \ar@{-->}@/^/[ur] & & \\
  & y \ar@{-->}@/^/[ur] & & & w & \\
  } &
  b)
  &
 \xymatrix @=1pc @C=0.3pc {
 & \myapp \ar@/_^/[ld] \ar@/^_/[ddrr] & & \\
 \lambda \ar@/^_/[rd] & & & \\
 & \myapp \ar@/_^/[ld] \ar@/^_/[rd] & & \myapp \ar@/^_/[dl] \ar@/^/[dd] \\
 x \ar@{-->}@/^/[uu] & & \lambda \ar@/^_/[dr] & \\
 & y \ar@{-->}@/^/[ur] & & w \\
 }
 &
 c)
 &
 \xymatrix @=1pc @C=0.3pc {
 & \myapp \ar@/_^/[ld] \ar@/^0.6pc/[dd] \\
 \lambda \ar@/^_/[rd] & \\
 & \myapp \ar@/_0.4pc/[dd] \ar@/^0.6pc/[dd] \\
 & \\
 & x \ar@{-->}@/^1.3pc/[uuul] \\
 } \vspace{0.1in}\\
 & $(\lambda x. \, x\,(\lambda y. w))\,((\lambda y. w)\,w)$
  &  & $(\lambda x. \, x\,(\lambda y. w))\,((\lambda y. w)\,w)$
  &&$(\lambda x.x\,x)\,(\text{\sout{$x$}}\,\text{\sout{$x$}})$ \\
\end{tabular}
 \caption{\emph{a)} \lat{} as a \latree, without sharing; \emph{b)} \ladag{} with sharing (same term of a); \emph{c)} \ladag{} breaking domination. }
 \label{fig:first-examples}
\end{figure*}

\section{Preliminaries}
In this section we introduce \ladag{s}, the representation of \lat{s} with sharing that we consider in this paper. First of all we introduce the usual \lac{} without sharing.

\paragraph{\lat{s} and equality.} The syntax of \lat{s} is usually defined as follows:
$$\begin{array}{r\colspace rcl}
\textsc{(Named) Terms} & \tm,\tmtwo & \grameq & \var \mid \la\var \tm \mid \tm~\tmtwo
\end{array}$$
This representation of \lat{s} is called \emph{named}, since variables and abstractions bear names. Equality on named terms is not just syntactic equality, but $\alpha$-equality: terms are identified up to the renaming of bound variables, because for example the named terms $\la\var{\la\vartwo\var}$ and $\la\vartwo{\la\var\vartwo}$ should denote the same \lat{}.
% \begin{definition}[$\alpha$-equivalence \cite{DBLP:journals/corr/abs-0804-3434}]\label{def:conv}
% $\alpha$-equivalence $\alphaeq$ is defined as the smallest equivalence relation over named \lat{s} closed under the following rules:
% \begin{enumerate}
%  \item\label{alpha:refl} \emph{Same variables}: $\var \alphaeq \var$;
%  \item\label{alpha:app} \emph{Application}: $\tm \tmtwo \alphaeq \tmthree \tmfour$ if $\tm \alphaeq \tmthree$ and $\tmtwo \alphaeq \tmfour$;
%  \item\label{alpha:lam} \emph{Same abstracted variable}: $\la\var\tm \alphaeq \la\var\tmtwo$ if $\tm \alphaeq \tmtwo$;
%  \item\label{alpha:ren} \emph{Different abstracted variables}: $\la\var\tm \alphaeq \la\vartwo{(\tm \isub\var\vartwo)}$ if $\vartwo \notin \fv\tm$.
% \end{enumerate}
% Note: $\fv\tm$ denotes the free variables of $\tm$, and $\tm\isub\var\tmtwo$ the capture-avoiding substitution of $\var$ for $\tmtwo$ in $\tm$.
% \end{definition}

This is the standard representation of \lat{s}, but reasoning in presence of names is cumbersome, especially names for bound variables as they force the introduction of $\alpha$-equivalence classes. Moreover \lan{s} naturally bear no names, thus we prefer to avoid assigning arbitrary variable names when performing the readback of a \lan{} to a \lat{}.

On \emph{closed} \lat{s} (terms with only bound variables), this suggests a \emph{nameless} approach: by using \emph{de Bruijn indices}, variables simply consist of an \emph{index}, a natural number indicating the number of abstractions that occur between the abstraction to which the variable is bound and that variable occurrence.
For example, the named term $\la\var{\la\vartwo\var}$ and the nameless term $\lambda\lambda\lbvar 1$ denote the same \lat.
The nameless representation can be extended to \emph{open} \lat{s}, but it adds complications because for example different occurrences of the same free variable unnaturally have different indices.

Since we are forced to work with open terms (because proof assistants require them) the most comfortable representation and the one we adopt in our proofs is the \emph{locally nameless} representation (see Chargu\'eraud \cite{DBLP:journals/jar/Chargueraud12} for a thorough discussion). This representation combines named and nameless, by using de Bruijn indices for bound variables (thus avoiding the need for $\alpha$-equivalence), and names for free variables. This representation of \lat{s} is the most faithful to our definition of \ladag{s}: as we shall see, to compare two bound variable \lan{s} one compares their binders, but to compare two free variable nodes one uses their identifier, which in implementations is usually their memory address or a user-supplied string.

The syntax of locally nameless (l.n.) terms is:
$$\begin{array}{r\colspace rcl}
\textsc{(l.n.) Terms} & \tm,\tmx & \grameq & \lbvar\numb \mid \lfvar\str \mid \lapp\tm\tmx \mid \labs\tm \quad (\numb\in\Nat, \,\str\in\Str)
\end{array}$$
where $\lbvar\numb$ denotes a bound variable of de Bruijn index $\numb$ ($\Nat$ is the set of natural numbers), and $\lfvar\str$ denotes a free variable of name $\str$ ($\Str$ is a set of \emph{atoms}).

In the rest of the paper, we switch seamlessly between different representations of \lat{s} (without sharing): we use named terms in examples, as they are more human-friendly, but locally nameless terms in the technical parts, since much more elegant and short proofs can be obtained by avoiding to reason explicitly on bound names and $\alpha$-equivalence.

\paragraph{Terms as graphs, informally.}
Graphically, \lat{s} can be seen as syntax trees---please have a look at the example in \reffig{first-examples}(a). Note that in all \reffig{first-examples} we provide names to bound variables (both in the graphs and in the terms below) only to make the examples more understandable; as already mentioned, our upcoming notion of \ladag{} follows the locally nameless convention (as in \reffig{sharing-equivalence}).

As seen in the figure, we apply two tweaks relative to variable nodes:
\begin{itemize}
  \item \emph{Binding edges}: bound variable \lan{s} have a binding (dashed) edge towards the abstraction node that binds them;
  \item \emph{Variables merging}: all the \lan{s} corresponding to the occurrences of a same variable are merged together, like the three occurrences of $w$ in the example. In this way, the equality of free variable nodes is just the physical equality of \lan{s}.
\end{itemize}

Sharing is realized by simply allowing all nodes to have more than one parent, as for instance the abstraction on $y$ in \reffig{first-examples}(b) ---note that sharing can happen inside abstractions, \eg $\la\vartwo w$ is shared under the abstraction on $\var$.

Our notion of \lat{s} with sharing is given in \Cref{def:term_forest}. It is built in in two steps: we first introduce \pregraph{s}, and later define the required structural properties that make them \ladag{s}.

\begin{definition}[\pregraph{s}]%\label{def:term_forest}
  A \emph{\pregraph} is a directed graph with three kind of nodes:
   \begin{itemize}
     \item \emph{Application}: an application node is labelled with $\LabelApp$ and has exactly two children, called \emph{left} ($\pleft$) and \emph{right} ($\pright$). We write $\napp{\node}{\nodex}$ for a node labelled by $\LabelApp$ whose left child is $\node$ and whose right child is $\nodex$.
     \item \emph{Abstraction}: an abstraction node is labelled with $\LabelAbs$ and has exactly one child, called its \emph{body} ($\pdown$). We write $\nabs{\node}$ for a node labelled by $\LabelAbs$ with body $\node$.
     We denote by $\nodel$ a generic abstraction node.
     \item \emph{Variable}: a variable node is labelled with $\LabelVar$, and may be \emph{free} or \emph{bound}:
     \begin{itemize}
       \item A \emph{free} variable node has no children, and is denoted by $\nfvar$. We assume a function $\nid\cdot$ that assigns to every free variable node $\node$ an atom $\nid\node\in\Str$ that uniquely identifies it\footnote{We also assume that the set of atoms $\Str$ is such that the equality of atoms can be tested in constant time.}: this identifier is going to be used later, when defining the readback of \lan{s} to \lat{s} (cf. \Cref{def:unfold-lambda}).
       \item A bound \emph{variable} node has exactly one child, called its \emph{binder} ($\pback$). We write $\nbvar\nodel$ for a node labelled by $\LabelVar$ with binder $\nodel$. The corresponding binding edge is represented with a dashed line.
       % We assume that there is a one-to-one correspondence between abstraction nodes and bound variable nodes.
     \end{itemize}
   \end{itemize}
 \end{definition}

\paragraph{Domination.} Not every \pregraph{} represents a \ladag{}. For instance, the graph in \reffig{first-examples}(c) does not, because the bound variable $x$ is visible outside the scope of its abstraction, since there is a path to $x$ from the application \emph{above} the abstraction that does not pass through the abstraction itself. One would say that such a graph represents $(\la\var\var\var) (\var\var)$, but the variable $\var$ in $\var \var$ and the one in $\la\var\var\var \alphaeq \la\vartwo\vartwo \vartwo$ cannot be the same.

It is well-known that scopes corresponding to \lat{s} are characterized by a property borrowed from control-flow graphs called \emph{domination} (also called \emph{unique binding} by \sloppy Wadsworth \cite{wadsworth1971semantics}, and checkable in linear time \cite{DBLP:journals/siamcomp/AlstrupHLT99,DBLP:journals/toplas/BuchsbaumKRW98,DBLP:conf/soda/Gabow90}): given two nodes $\node$ and $\nodetwo$ of a graph $\xgraph$, we say that $\node$ \emph{dominates} $\nodetwo$ when every path from a root of $\xgraph$ to $\nodetwo$ crosses $\node$. To define this property formally, we first need to define paths in a \pregraph{}.

\newcommand\myvar{v}
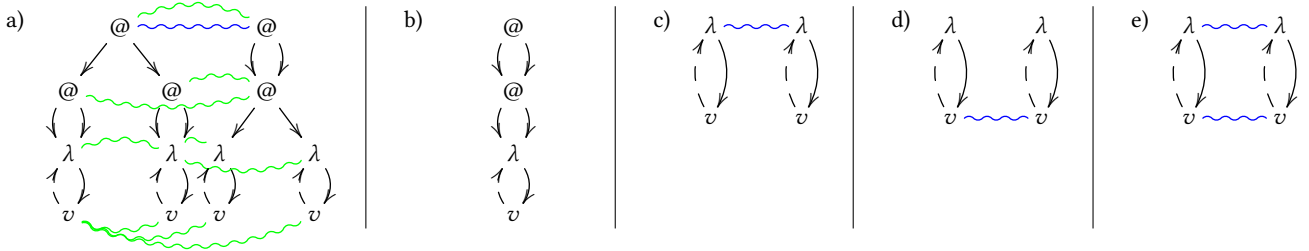
\begin{figure*}[t]
 \begin{tabular}{cc\colspace |\colspace cc\colspace |\colspace cc\colspace |\colspace cc\colspace |\colspace cc}
 a)
 &
 \xymatrix @=1pc @C=0.5pc {
  & \myapp \ar[ld] \ar[rd] \ar@{~}@[blue][rrr] &  & & \myapp \ar@/_/[d] \ar@/^/[d] \ar@[green]@{~}@/_0.7pc/[lll] & \\
  \myapp \ar@/_/[d] \ar@/^/[d] \ar@[green]@{~}@/_/[rrrr] & & \myapp \ar@/_/[d] \ar@/^/[d] \ar@[green]@{~}@/^/[rr] & & \myapp \ar[ld] \ar[rd] & \\
   \lambda \ar@/^/[d] \ar@[green]@{~}@/^/[rr] & &  \lambda \ar@/^/[d] &  \lambda \ar@/^/[d] \ar@[green]@{~}@/_/[l] & &  \lambda \ar@/^/[d] \ar@[green]@{~}@/^/[lll] \\
 \myvar \ar@{-->}@/^/[u] & & \myvar \ar@{-->}@/^/[u] \ar@/^/@[green]@{~}[ll] & \myvar \ar@{-->}@/^/[u] \ar@/^0.75pc/@[green]@{~}[lll] & & \myvar \ar@{-->}@/^/[u] \ar@/^1pc/@[green]@{~}[lllll]
 }
%  &
% b)
% &
% \xymatrix @=1pc @C=0.5pc {
%  & \myapp \ar@(u,l)@[red][] \ar[ld] \ar[rd] \ar@{~}@[blue][rrr] &  & & \myapp \ar@/_/[d] \ar@/^/[d] \ar@[red]@/_0.7pc/[lll] & \\
%  \myapp \ar@/_/[d] \ar@/^/[d] \ar@[red]@/_/[rrrr] & & \myapp \ar@/_/[d] \ar@/^/[d] \ar@[red]@/^/[rr] & & \myapp \ar[ld] \ar[rd] \ar@(d,r)@[red][] & \\
%   \lambda \ar@/_/[d] \ar@/^/[d] \ar@[red]@/^/[rr] & &  \lambda \ar@/_/[d] \ar@/^/[d] \ar@(l,d)@[red][] &  \lambda \ar@/_/[d] \ar@/^/[d] \ar@[red]@/_/[l] & &  \lambda \ar@/_/[d] \ar@/^/[d] \ar@[red]@/^/[lll] \\
% x \ar@(d,l)@[red][] & & y \ar@/^/@[red][ll] & z \ar@/^0.75pc/@[red][lll] & & w \ar@/^1pc/@[red][lllll]
% }
  &
 b)
 &
 \xymatrix @=1pc {
  & \myapp \ar@/_/[d] \ar@/^/[d] & \\
  & \myapp \ar@/_/[d] \ar@/^/[d] & \\
  & \lambda \ar@/^/[d] & \\
  & \myvar \ar@{-->}@/^/[u] & \\
  }
  &
  c)
&
\xymatrix{
 \lambda \ar@/^/[d] \ar@{~}@[blue][r] & \lambda \ar@/^/[d] \\
 \myvar \ar@{-->}@/^/[u] & \myvar \ar@{-->}@/^/[u]
}

&
d)
&
\xymatrix{
 \lambda \ar@/^/[d] & \lambda \ar@/^/[d] \\
 \myvar \ar@{-->}@/^/[u] \ar@[blue]@{~}[r]      & \myvar \ar@{-->}@/^/[u]
}

&
e)
&
\xymatrix{
 \lambda \ar@/^/[d] \ar@{~}@[blue][r] & \lambda \ar@/^/[d] \\
 \myvar \ar@{-->}@/^/[u] \ar@{~}@[blue][r]      & \myvar \ar@{-->}@/^/[u]
}
 \\\end{tabular}
 \caption{Examples of sharing equivalences and queries.}
 \label{fig:sharing-equivalence}
\end{figure*}

\paragraph{Notations for paths.}
Paths are a crucial concept, needed both to define the readback to \lat{s}
  and to state formally the properties of \ladag{s} of being acyclic and dominated.

A path in a graph is determined by a start node together with a \emph{trace}, \ie{} a sequence of directions. The allowed directions are ``$\pleft$'', ``$\pdown$'', and ``$\pright$'': we are not going to consider paths that use binding edges ``$\pback$'', because these edges simply denote scoping and do not actually point to subterms of (the readback of) a node.

\begin{definition}[Paths, traces]\label{def:paths}
  We define \emph{traces} as finite sequences of directions:
  \begin{flalign*}
    \textsc{Directions} \quad & \Dir  ::= \pleft \mid \pdown \mid \pright \\
    \textsc{Traces} \quad & \tra  ::= \epsilon \mid \pcons \Dir \tra
  \end{flalign*}
  Let $\node,\nodex$ be nodes of a \pregraph, and $\tra$ be a trace. We define inductively the judgement ``$\Path\tra\node\nodex$'', which reads ``path from $\node$ to $\nodex$ (of trace $\tra$)'':
  \begin{itemize}
    \item \emph{Empty}: $\Path\epsilon\node\node$.
    \item \emph{Abstraction}: if $\Path\tra\node{\nabs\nodex}$, then
     $\Pathp{\pcons\pdown\tra}\node\nodex$.
    \item \emph{Application}: if $\Path\tra\node{\napp{\nodex_1}{\nodex_2}}$, then
     $\Pathp{\pcons\pleft\tra}\node{\nodex_1}$
     \sloppy and $\Pathp{\pcons\pright\tra}\node{\nodex_2}$.
  \end{itemize}
\end{definition}

We just write $\Path\tra\node$ if $\Path\tra\node\nodex$ for some node $\nodex$ when the endpoint $\nodex$ is not relevant.

To state formally the requirements that make a \pregraph{} a \ladag{}, we need two additional concepts: the roots of a \pregraph{}, and paths crossing a node.

\paragraph{Root nodes.} \pregraph{s} (and later \ladag{s}) may have various root nodes. What is maybe less expected, is that these roots may share some parts of the graph. Consider \reffig{first-examples}(b), and imagine to remove the root and its edges: the outcome is still a perfectly legal \pregraph. We admit these configurations because they actually arise naturally in implementations, especially of proof assistants.

\begin{definition}[Roots]\label{def:root-node--body}
  Let $\noder$ be a node of a \pregraph{} $\termforest$.
  $\noder$ is a root if and only if the only path with endnode $\noder$ has empty trace.
\end{definition}

\begin{definition}[Path Crossing a Node]
  Let $\node,\nodex$ be nodes of a \pregraph, and $\tra$ a trace such that $\Path\tra\node{}$.
  We define inductively the judgment ``$\Path\tra\node{}$ crosses $\nodex$'':
   \begin{itemize}
     \item if $\Path\tra\node\nodex$, then $\Path\tra\node{}$ crosses $\nodex$
     \item if $\Path\tra\node{}$ crosses $\nodex$ and
     $\Pathp{\pcons\Dir\tra}\node{}$, then
     $\Pathp{\pcons\Dir\tra}\node{}$ crosses $\nodex$.
   \end{itemize}
\end{definition}

\begin{definition}[Dominated \pregraph]\label{def:dominated-lagraph--body}
  Let $\xgraph$ be a \pregraph{}, and $\node,\nodex$ be nodes of $\xgraph$: we say that $\nodex$ dominates $\node$
  when every path $\Path\tra\noder\node$ from a root node $\noder$ crosses $\nodex$.

  We say that $\xgraph$ is \emph{dominated} when each bound variable node $\nbvar\nodel$ is dominatedd by its binder $\nodel$.
\end{definition}

\begin{definition}[Acyclic \pregraph]\label{def:f-and-a}
 We say that a \pregraph{} $\termforest$ is \emph{acyclic} when
 for every node $\node$ in $\termforest$ and every trace $\tra$, $\Path\tra\node\node$ if and only if $\tra=\pempty$.
\end{definition}

Our precise definition of \lat{s} with sharing follows:

\begin{definition}[\ladag{s}]\label{def:term_forest}
  A \pregraph{} $\xgraph$ is a \emph{\ladag} if it satisfies the following additional structural properties:
\begin{itemize}
 \item \emph{Finite:} $\xgraph$ has a finite number of nodes.
 \item \emph{Acyclic}: if binding edges are ignored, $\xgraph$ is a DAG (\refdef{f-and-a}).
 \item \emph{Dominated}: bound variable nodes are dominated by their binder (\refdef{dominated-lagraph--body}).
 \end{itemize}
\end{definition}

\paragraph{Readback.} The sharing in a \ladag{} can be unfolded by duplicating shared sub-graphs, obtaining a \latree{}. We prefer however to adopt another approach. We define a readback procedure associating a \lat{} $\den\noder$ (without sharing) to each root node $\noder$ of the \ladag, in such a way that shared sub-graphs simply appear multiple times. However, since we use de Bruijn indices for bound variables, any \lan{} by itself does not uniquely identify a \lat{}: in fact, its readback depends on the path through which it is reached from a root, also known as its \emph{access path} \cite{DBLP:journals/fuin/AriolaK96}. That path determines the abstraction nodes encountered, and thus the indices to assign to bound variable nodes.
We define formally $\indexOf\nodel{\Path\tra\node}$, the index of an abstraction node $\nodel$ according to a path $\Path\tra\node$ crossing $\nodel$ (recall that $\nodel,\nodelx$ denote abstraction nodes):

\begin{definition}[$\indexOf\cdot\cdot$]\label{def:dbi}
  Let $\node,\nodel$ be nodes of a \ladag{} $\xgraph$, and $\Path\tra\node$ a path crossing $\nodel$. We define $\indexOf\nodel{\Path\tra\node}$ by structural induction on the derivation of the judgement ``$\Path\tra\node$ crosses $\nodel$'':
  \begin{flalign*}
    \indexOf \nodel {\Path\tra\node\nodel} & \defeq 0  \\
    \indexOf \nodel {\Pathp{\pcons\Dir\tra}\node\nodelx} & \defeq \indexOf \nodel {\Path\tra\node} + 1 & \mbox{if } \nodel\neq\nodelx\\
    \indexOf \nodel {\Pathp{\pcons\Dir\tra}\node\nodex} & \defeq \indexOf \nodel {\Path\tra\node} & \mbox{otherwise.}
  \end{flalign*}
\end{definition}

\begin{definition}[Readback to \lat{s} $\trxnp\cdot$]\label{def:unfold-lambda}
Let $\xgraph$ be a \ladag{}. For every root $\noder$ and path $\Path\tra\noder\node$, we define the readback $\trx{\Path\tra\noder\node}$ of $\node$ relative to the access path $\Path\tra\noder\node$, by cases on $\node$:
\begin{enumerate}
  \item $\trx{\Path\tra\noder{\nbvar\nodel}} \defeq \lbvar\numb$ where $\numb\defeq\indexOf \nodel {\Path\tra\noder}$.
  \item $\trx{\Path\tra\noder\nfvar} \defeq \lfvar\str $
   where $\str \defeq \nid\node$.
  \item $\trx{\Path\tra\noder{\nabs\nodex}} \defeq \labs\trx{\Pathp{\pcons\pdown\tra}{\noder}\nodex}$.
  \item $\trx{\Path\tra\noder{\napp{\node_1}{\node_2}}} \defeq \lapp{\trx{\Pathp{\pcons\pleft\tra}{\noder}{\node_1}}}{\trx{\Pathp{\pcons\pright\tra}{\noder}{\node_2}}}$.
\end{enumerate}
When $\tra=\pempty$, we just write $\trxnp\noder$ instead of $\trx{\Path\pempty\noder}$.
\end{definition}

Some remarks about \Cref{def:unfold-lambda}:
\begin{itemize}
  \item The hypothesis that $\noder$ is a root node is necessary to ensure that the readback $\trx{\Path\tra\noder{}}$ is a valid locally-nameless term.
  In fact Point~1 of the definition uses $\indexOf \nodel {\Path\tra\noder}$, which is well-defined only if $\Path\tra\noder{\nbvar\nodel}$ crosses $\nodel$.
  When $\noder$ is a root node, this is the case by domination (\Cref{def:dominated-lagraph--body}).

  \item The definition is recursive, but it is not immediately clear what is the measure of termination. In fact, the readback calls itself recursively on longer paths. Still, the definition is well-posed because paths do not use binding edges, and because \ladag{s} are finite and acyclic (\Cref{def:f-and-a}).
\end{itemize}

% !TEX root = main.tex
\begin{figure*}[t]
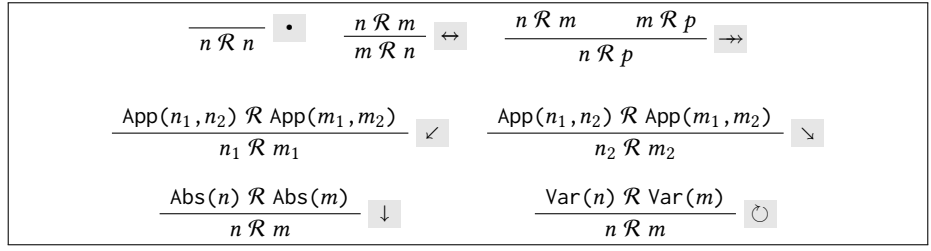

  \fbox{\begin{minipage}{12 cm}
  \begin{center}
  \begin{tabular}{ccc}
    {\AxiomC{}
     \RightLabel{\RuleRefl}
     \UnaryInfC{$\node\Rel\node$}
     \DisplayProof} &
    {\AxiomC{$\node\Rel\nodex$}
     \RightLabel{\RuleSym}
     \UnaryInfC{$\nodex\Rel\node$}
     \DisplayProof} &
     {\AxiomC{$\node\Rel\nodex$}
     \AxiomC{$\nodex\Rel\nodexx$}
      \RightLabel{\RuleTrans}
      \BinaryInfC{$\node\Rel\nodexx$}
      \DisplayProof} \\
  \end{tabular}

  \vspace{0.2in}
  \begin{tabular}{cc}
    {\AxiomC{$\napp{\node_1}{\node_2}\Rel\napp{\nodex_1}{\nodex_2}$}
    \RightLabel{\RuleAppl}
    \UnaryInfC{$\node_1 \Rel \nodex_1$}
    \DisplayProof} &
    {\AxiomC{$\napp{\node_1}{\node_2}\Rel\napp{\nodex_1}{\nodex_2}$}
    \RightLabel{\RuleAppr}
    \UnaryInfC{$\node_2 \Rel \nodex_2$}
    \DisplayProof} \\ \\
    {\AxiomC{$\nabs\node \Rel \nabs\nodex$}
    \RightLabel{\RuleAbs}
    \UnaryInfC{$\node \Rel \nodex$}
    \DisplayProof} &
    {\AxiomC{$\nbvar\node \Rel \nbvar\nodex$}
    \RightLabel{\RuleBV}
    \UnaryInfC{$\node \Rel \nodex$}
    \DisplayProof}
  \end{tabular}
 \end{center}
 \end{minipage}}
  \caption{Sharing equivalence rules on a \ladag}
  \label{all-inference-rules}
\end{figure*}

\section{The Theory of Sharing Equality}
\label{sect:stating-sh-eq}

\paragraph{Sharing equivalence.} To formalize the idea that two different \ladag{s} unfold to the same \lat{}, we introduce a general notion of equivalence between nodes (sharing equivalence, \Cref{def:focong}) whose intended meaning is that two related nodes have the same readback. This notion shall rest on various requirements; the first one is that only \emph{homogeneous} \lan{s} can be related:

\begin{definition}[Homogeneous nodes \cite{PATERSON1978158}]
  Let $\node,\nodex$ be nodes of a \ladag{} $\termforest$.
  We say that $\node$ and $\nodex$ are \emph{homogeneous} if they are both application nodes, or they are both abstraction nodes, or they are both free variable nodes, or they are both bound variable nodes.
\end{definition}

In the following, let us denote by $\Rel$ a generic binary relation over the nodes of a \ladag{} $\termforest$. We call a relation $\Rel$ \emph{homogeneous} if it only relates pairs of homogeneous nodes, \ie{} $\node\Rel\nodex$ implies that $\node$ and $\nodex$ are homogeneous.

Another requirement for sharing equivalences is that they must be closed under certain structural rules. More in general, in the rest of this paper we are going to characterize various kinds of relations as being closed under different sets of rules, which can all be found in \Cref{all-inference-rules}:
\begin{itemize}
  \item \emph{Equivalence rules}: rules \RuleRefl\,\RuleSym\,\RuleTrans{} are the usual rules that characterize equivalence relations,
  respectively reflexivity, symmetry, and transitivity.
  \item \emph{Bisimulation rules}:
  \begin{itemize}
    \item \emph{Downward propagation rules:} rules \RuleAppl\,\RuleAbs\,\RuleAppr{} are downward propagation rules on the \ladag.
    The rules \RuleAppl\, and \RuleAppr\, state that if two application \lan{s} are related, then also their corresponding left and right children should be related.
    The rule \RuleAbs\, states that if two abstraction \lan{s} are related, then also their bodies should be related.
    \item \emph{Scoping rule:} the rule \RuleBV{} states that if two bound variable \lan{s} are related, then also their binders should be related.
  \end{itemize}
\end{itemize}

% \begin{definition}[Bisimulation]\label{def:bisimulation-body}
%   A binary relation $\Rel$ on the term forest $\termforest$ is a \emph{bisimulation} if and only if it is homogeneous and closed under the rules \RuleAppl\, \RuleAbs\, \RuleAppr\, \RuleBV\, of \Cref{all-inference-rules}.
% \end{definition}

The last requirement for a sharing equivalence is the handling of free variable \lan{s}: a sharing equivalence shall not equate two different free variable nodes, cf. the requirement \emph{Open} in the upcoming definition.

We are now ready to define sharing equivalences:

% \begin{definition}[Sharing Equivalences]\label{def:sharing-equivalence-body}
%   Let $\equiv$ a relation over the nodes of a term forest $\termforest$. Then  $\equiv$ is a \emph{sharing equivalence} if:
%    \begin{description}
%     \item[\emph{Equivalence}:] $\equiv$ is an equivalence relation;%, \ie{} closed under the rules \RuleRefl\, \RuleSym\, \RuleTrans\, of \Cref{all-inference-rules}.;
%     \item[\emph{Bisimulation}:] $\equiv$ is a bisimulation (\Cref{def:bisimulation-body});
%     \item[\emph{Open}:] $\node\equiv\nodex$ implies $\node=\nodex$ for every free variable nodes $\node,\nodex$.
%    \end{description}
%    In other words, a sharing equivalence is an homogeneous equivalence relation that is closed under all the rules of \Cref{all-inference-rules}, and respects the Open requirement.
% \end{definition}

\begin{definition}[(\Core) sharing equivalences]
\label{def:focong}
 Let $\equiv$ be a binary relation over the nodes of a \ladag{} $\termforest$.
 \begin{itemize}
  \item $\equiv$ is a \emph{\core sharing equivalence} if:
  \begin{itemize}
   \item \emph{Equivalence}: $\equiv$ is an equivalence relation;
   \item \emph{\Core bisimulation}: $\equiv$ is homogeneous and closed under
   the rules \RuleAppl\, \RuleAbs\, \RuleAppr{} of \Cref{all-inference-rules}.
   % \item[\emph{Homogeneous}:] $\equiv$ is homogeneous;
   % \item[\emph{Propagated}:] $\equiv$ is propagated.
  \end{itemize}

  \item $\equiv$ is a \emph{\hocongruence} if:
  %if it is a \core sharing equivalence and it also satisfies the following \emph{name conditions} on variable \lan{s}:
   \begin{itemize}
    \item \emph{Equivalence}: $\equiv$ is an equivalence relation;
    \item \emph{Bisimulation}: $\equiv$ is homogeneous and closed under
    the rules \RuleAppl\, \RuleAbs\, \RuleAppr\, \RuleBV{} of \Cref{all-inference-rules}.
     % \item[\emph{Scoped}:] $\equiv$ is closed under rule \RuleBV{} of \Cref{all-inference-rules};
    \item \emph{Open}:  $\varnode\equiv\varnodetwo$ implies $\varnode=\varnodetwo$ for every free variable \lan{s} $\varnode,\varnodetwo$.
   \end{itemize}
 \end{itemize}
 Equivalently, $\equiv$ is a \hocongruence if it is a \core sharing equivalence and it also satisfies the following conditions on variable \lan{s}: the \emph{open} requirement for free variable nodes, and the closure under \RuleBV{} for bound variable nodes.
\end{definition}

\emph{Example.} Consider \reffig{sharing-equivalence}(a). The green (horizontal) waves are an economical representation of a sharing equivalence---nodes in the same class are connected by a green path, and reflexive/transitive waves are omitted.

\begin{remark} (\Core) sharing equivalences are closed by intersection, so that if there exists a (\core) sharing equivalence on a \ladag{} then there is a smallest one.
\end{remark}

The requirements for a \hocongruence $\equiv$ on a \ladag{} $\xgraph$  essentially ensure that $\xgraph$ quotiented by $\equiv$ has itself the structure of a \ladag{}.
% (details about $\termforest/{\equiv}$ are in the Appendix).
Note that \focongruences are not enough, because without the scoping rule, binders are not unique up to $\equiv$---it is nonetheless possible to prove that paths up to $\equiv$ are acyclic, which is going to be one of the key properties to prove the completeness of the \HomogeneityCheckLower.

\begin{remark}
\label{thm:quotient} %\refthmp{quotient}{blind}
 \Copy{quotient}{Let $\equiv$ be a \focongruence on a \ladag{} $\termforest$. Then:
  \begin{enumerate}
  	\item \label{p:quotient-blind}
	\emph{Acyclicity up to $\equiv$}: paths upto $\equiv$ in $\xgraph$ are acyclic.
	\item \label{p:quotient-ho}
	\emph{\Hocongruences as \ladag{s}}: if $\equiv$ also satisfies the name conditions then $\termforest/{\equiv}$ is a \ladag{}.
\end{enumerate}}
\end{remark}

For instance, \reffig{sharing-equivalence}(b) shows the \ladag{} corresponding to the quotient of the one of \reffig{sharing-equivalence}(a) by the sharing equivalence induced by the green waves.

\Hocongruences do capture equality of readbacks, as we shall show, in the following sense (this is a sketch given to guide the reader towards the proper relationship, formalized by \Cref{thm:alpha_then_cong_scoped-body} at the end of this section):
\begin{itemize}
\item \emph{sharing to $\alpha$}: if $\node \equiv \nodetwo$ then $\den\node = \den\nodetwo$;
\item \emph{$\alpha$ to sharing}: if $\den\node = \den\nodetwo$ then there exists a \hocongruence $\equiv$ such that $\node \equiv \nodetwo$.
\end{itemize}

\paragraph{(Spreaded) queries.} According to the sketch we just provided, to check the sharing equality of two terms with sharing, \ie two \ladag{s} with roots $\node$ and $\nodetwo$, it is enough to compute the smallest \hocongruence $\equiv$ such that $\node \equiv \nodetwo$, if it exists, and failing otherwise. This is what our algorithm does. At the same time, however, it is slightly more general: it may test more than two nodes at the same time, \ie{} all the pairs of nodes contained in a \emph{query}.

\begin{definition}[Query]\label{def:query}
 A query $\Que$ over a \ladag{} $\xgraph$ is a binary relation over the root nodes of $\xgraph$.
\end{definition}

The simplest case is when there are only two roots $\node$ and $\nodetwo$ and the query contains only $\node \Que \nodetwo$ (depicted as a blue wave in \reffig{sharing-equivalence}(a))---from now on however we work with a generic query $\Que$, which may relate nodes that are roots of not necessarily disjoint or even distinct \ladag{s}; our focus is on the smallest \hocongruence containing $\Que$.

Let us be more precise. Every query $\Que$ induces a number of other \emph{equality requests} obtained by closing $\Que$ with respect to the equivalence and propagation clauses that every sharing equivalence has to satisfy. In other words, every query induces a spreaded query.

\begin{definition}[Spreading $\DownEquivClos\Rel$]
 Let $\Rel$ be a binary relation over the nodes of a \ladag{} $\xgraph$. The spreading $\DownEquivClos\Rel$ induced by $\Rel$ is the binary relation on the nodes of $\xgraph$ inductively defined by closing $\Rel$ under the rules \RuleRefl\,\RuleSym\,\RuleTrans\,\RuleAppl\,\RuleAbs\,\RuleAppr{} of \Cref{all-inference-rules}.
\end{definition}

\emph{Example.} The spreading $\DownEquivClos\Que$ of the (blue) query $\Que$ in \reffig{sharing-equivalence}(a) is the (reflexive and transitive closure) of the green waves.

\paragraph{\Core universality of $\DownEquivClos\Que$.}
Note that the spreaded query $\DownEquivClos\Que$ is defined without knowing if there exists a (\core) sharing equivalence containing the query $\Que$---there might very well be none (if the nodes are not sharing equivalent).
It turns out that the spreaded query $\DownEquivClos\Que$ is itself a \core sharing equivalence, whenever there exists a \core sharing equivalence containing the query $\Que$. In that case, unsurprisingly, $\DownEquivClos\Que$ is also the \emph{smallest} \core sharing equivalence containing the query $\Que$.

\begin{proposition}[\Core universality of $\DownEquivClos\Que$]
\label{prop:univfirst}
\Copy{univfirst}{
 Let $\Que$ be a query. If there exists a \core sharing equivalence $\equiv$ containing $\Que$ then:
 \begin{enumerate}
   \item The spreaded query $\DownEquivClos\Que$ is contained in $\equiv$, \ie ${\DownEquivClos\Que}\subseteq{\equiv}$.
   \item $\DownEquivClos\Que$ is the smallest \core sharing equivalence containing $\Que$.
 \end{enumerate}}
\end{proposition}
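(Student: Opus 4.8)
The plan is to exploit that $\DownEquivClos\Que$ is a least fixed point, namely the inductive closure of $\Que$ under the rules \RuleRefl, \RuleSym, \RuleTrans, \RuleAppl, \RuleAbs, \RuleAppr. This means that containment statements about $\DownEquivClos\Que$ can be proved by rule induction, and that the structural properties of $\DownEquivClos\Que$ can be read off directly from its generating rules. Both parts of the proposition follow from this observation, with the existence hypothesis entering at exactly one point.

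For part (1) I would argue by induction on the derivation of $\node \DownEquivClos\Que \nodex$. The base case is $\node \Que \nodex$, which yields $\node \equiv \nodex$ immediately since $\Que \subseteq \equiv$ by hypothesis. Each inductive case corresponds to one of the six generating rules, and in each case it suffices to observe that $\equiv$ is itself closed under that rule, so that the conclusion remains inside $\equiv$: rules \RuleRefl, \RuleSym, \RuleTrans hold because $\equiv$ is an equivalence relation, while rules \RuleAppl, \RuleAbs, \RuleAppr hold because $\equiv$ is a \core bisimulation. In every case the inductive hypothesis places the premises of the rule in $\equiv$, and closure of $\equiv$ then delivers the conclusion. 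This establishes ${\DownEquivClos\Que} \subseteq {\equiv}$. Crucially, the argument uses nothing specific about $\equiv$ beyond its being a \core sharing equivalence containing $\Que$, so it applies verbatim to \emph{any} such relation, which is what part (2) needs.

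For part (2) I would first show that $\DownEquivClos\Que$ is itself a \core sharing equivalence; minimality is then exactly part (1) instantiated at an arbitrary \core sharing equivalence containing $\Que$. That $\DownEquivClos\Que$ contains $\Que$, is an equivalence relation, and is closed under the propagation rules \RuleAppl, \RuleAbs, \RuleAppr all hold by construction, since these are precisely the rules under which $\Que$ was closed. The only remaining requirement is homogeneity, and here the existence hypothesis becomes indispensable: the generating rules alone do not forbid $\DownEquivClos\Que$ from relating, say, a node $\napp{\node_1}{\node_2}$ to a node $\nabs\nodex$. But part (1) gives ${\DownEquivClos\Que} \subseteq {\equiv}$, and $\equiv$ is homogeneous; hence any pair related by $\DownEquivClos\Que$ is related by $\equiv$ and is therefore homogeneous, making $\DownEquivClos\Que$ homogeneous as well.

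The main obstacle is conceptual rather than computational: one must recognize that homogeneity of $\DownEquivClos\Que$ does \emph{not} follow from its generating rules and must instead be imported from the ambient relation $\equiv$ through part (1). This is precisely why the statement is conditioned on the existence of a containing \core sharing equivalence---without it, $\DownEquivClos\Que$ need not be homogeneous, and so need not be a \core sharing equivalence at all. Everything else reduces to a routine rule induction and a direct reading of the closure properties from the definition of spreading.
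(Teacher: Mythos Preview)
Your proposal is correct and follows essentially the same approach as the paper: part (1) is the observation that $\DownEquivClos\Que$, being the inductive closure under rules that any \core sharing equivalence satisfies, sits inside every such relation containing $\Que$; part (2) obtains homogeneity of $\DownEquivClos\Que$ by inheriting it from $\equiv$ via part (1), with the remaining clauses and minimality holding by construction. The paper's own proof is extremely terse (essentially ``by definition of spreading and of \core sharing equivalence''), so your version is a faithful and more explicit unfolding of the same argument.
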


\paragraph{Cycles up to $\DownEquivClos\Que$.} Let us apply \Cref{thm:quotient}(1) to $\DownEquivClos\Que$, and take the contrapositive statement: if paths up to $\DownEquivClos\Que$ are cyclic then $\DownEquivClos\Que$ is not a \core sharing equivalence. The \HomogeneityCheck{} in \refsect{core-check} indeed fails as soon as it finds a cycle up to $\DownEquivClos\Que$. Note, now, that $\DownEquivClos\Que$ satisfies the \emph{equivalence} and \emph{bisimulation} requirements for a \core sharing equivalence \emph{by definition}. The only way in which it might not be such an equivalence then, is if it is not homogeneous. Said differently, there is in principle no need to check for cycles, it is enough to test for homogeneity. We are going to do it anyway, because cycles provide earlier failures---there are also other practical reasons to do so, to be discussed in \refsect{core-check}.

\paragraph{Universality of the spreaded query $\DownEquivClos\Que$.} Here it lies the key conceptual point in extending the linearity of Paterson and Wegman's algorithm to binders and working up to $\alpha$-equivalence of bound variables.

In general the spreading $\DownEquivClos\Rel$ of a binary relation $\Rel$ may not be a \hocongruence, even though a sharing equivalence containing $\Rel$ exists. Consider for instance the relation $\Rel$ in \reffig{sharing-equivalence}(d): $\Rel$ coincides with its spreading $\DownEquivClos\Rel$ (up to reflexivity), which is not a \hocongruence because it does not include the \LabelAbs-nodes \emph{above} the original query---note that spreading only happens downwards. To obtain a \hocongruence one has to also include the \LabelAbs-nodes (\reffig{sharing-equivalence}(e)). The example does not show it, but in general then one has to start over spreading the new relation (eventually having to add other \LabelAbs-nodes found in the process, and so on). These iterations are obviously problematic in order to keep linear the complexity of the procedure---a key point of Paterson and Wegman's algorithm is that every node is processed only once.

What makes possible to extend their algorithm to binders is that the query is \emph{context-free}, \ie{} it is not in the scope of any abstraction, which is the case since it involves only pairs of roots, that therefore are above all abstractions as in \reffig{sharing-equivalence}(c). Then---remarkably---there is no need to iterate the propagation of the query. Said differently, if the relation $\Que$ is context-free then $\DownEquivClos\Que$ is universal.

The structural property of \ladag{s} guaranteeing the absence of iterations for context-free queries is \emph{domination}. Domination asks that to reach a bound variable from outside its scope one necessarily needs to first pass through its binder. The intuition is that  if one starts with a context-free query then there is no need to iterate because binders are necessarily visited before the variables while propagating the query downwards.

Let us stress, however, that it is not evident that domination is enough. In fact, domination is about \emph{one} bound variable and \emph{its only} binder. For sharing equivalence instead one deals with a \emph{class} of equivalent variables and a \emph{class} of binders---said differently, domination is given in a setting without queries, and is not obvious that it gets along well with them. The fact that domination on single binders is enough for spreaded queries to be universal requires indeed a non-trivial proof and it is a somewhat surprising fact.

% Now, being context-free is a sufficient condition for being $\l$-universal, but it is not a necessary condition. A relaxed sufficient condition, still not necessary, is the following one, asking that every two queried nodes are under exactly the same abstractions.
% 
% \begin{definition}[Well-scoped query]\label{def:scoped}
%  A query $\Que$ is \emph{well-scoped} when for every queried pair of nodes $\node\sim\nodetwo$ and every \LabelAbs-node $\bindoff{\varnode}$,
%  if $\bindoff{\varnode}\dpath^+ \node \dpath^* \varnode$ then $\bindoff{\varnode}\dpath^+ \nodetwo $, and viceversa (remember, $\sim$ is symmetric).
% \end{definition}

\begin{proposition}[Universality of $\DownEquivClos\Que$]
\label{prop:univhigher}
 \Copy{univhigher}{Let $\Que$ be a query over a \ladag{} $\xgraph$.
 If there exists a \hocongruence $\equiv$ containing $\Que$, then
 the spreaded query $\DownEquivClos\Que$ is the smallest \hocongruence containing $\Que$.}
\end{proposition}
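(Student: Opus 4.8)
The plan is to show that the spreaded query $\DownEquivClos\Que$ is itself a \hocongruence{} containing $\Que$; minimality then comes for free. First I would observe that every \hocongruence{} is in particular a \core sharing equivalence, so the hypothesis supplies a \core sharing equivalence containing $\Que$, and \Cref{prop:univfirst} applies: $\DownEquivClos\Que$ is already an equivalence relation, homogeneous, closed under \RuleAppl\,\RuleAbs\,\RuleAppr, and contained in $\equiv$. Symmetrically, any \hocongruence{} $\equiv'$ containing $\Que$ is a \core sharing equivalence containing $\Que$, hence contains $\DownEquivClos\Que$ by \Cref{prop:univfirst}; so the instant $\DownEquivClos\Que$ is known to be a \hocongruence, it is automatically the smallest one. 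Thus everything reduces to verifying the two extra clauses of a \hocongruence: the \emph{open} condition and closure under the scoping rule \RuleBV.

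The \emph{open} condition is immediate: if $\varnode\DownEquivClos\Que\varnodetwo$ for free variable nodes, then $\varnode\equiv\varnodetwo$ since $\DownEquivClos\Que\subseteq{\equiv}$, whence $\varnode=\varnodetwo$ because $\equiv$ is open. All the difficulty is concentrated in \RuleBV, i.e.\ in proving that $\nbvar\nodel\DownEquivClos\Que\nbvar\nodelx$ forces $\nodel\DownEquivClos\Que\nodelx$.

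For \RuleBV{} I would first record a convenient description of the spreading. Because $\Que$ relates only roots and \RuleAppl\,\RuleAbs\,\RuleAppr{} propagate strictly downwards, $\node\DownEquivClos\Que\nodex$ holds exactly when $\node$ and $\nodex$ are linked by a finite zig-zag of \emph{parallel descents}, where a parallel descent is a pair reached from $\Que$-related roots $\noder,\noderx$ by a single common trace $\tra$, i.e.\ $\Path\tra\noder\node$ and $\Path\tra\noderx\nodex$. (Validity of a path forces the visited node at every non-final position to be an application or an abstraction according to the direction taken, so two paths sharing a trace never get stuck during propagation; and under our hypothesis $\DownEquivClos\Que$ is homogeneous, so a zig-zag relating two bound variable nodes passes only through bound variable nodes.) From this I would extract the key observation: along two paths of the same trace $\tra$ from $\Que$-related roots, the $j$-th abstraction crossed on one is reached by the same prefix of $\tra$ as the $j$-th abstraction crossed on the other, hence the two are themselves a parallel descent and are $\DownEquivClos\Que$-related.

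It then remains to handle a single parallel descent relating two bound variable nodes, with binders $\nodel$ and $\nodelx$. By domination each path crosses its binder, so $\nodel$ is the $j_1$-th and $\nodelx$ the $j_2$-th abstraction crossed. The crux---and the step I expect to be the main obstacle---is the depth-matching claim $j_1=j_2$, which is exactly where context-freeness of the query and domination must be combined. I would argue by contradiction: if, say, $j_1<j_2$, let $\nodel'$ be the $j_1$-th abstraction crossed on the second path; the key observation gives $\nodel\DownEquivClos\Que\nodel'$, hence $\nodel\equiv\nodel'$, while the two variable nodes are $\equiv$-related and $\equiv$ is closed under \RuleBV, so $\nodel\equiv\nodelx$ and therefore $\nodel'\equiv\nodelx$ by transitivity. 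But on the second path $\nodel'$ is crossed strictly before $\nodelx$, so there is a non-empty path from $\nodel'$ to $\nodelx$; together with $\nodel'\equiv\nodelx$ this is a non-trivial cycle up to $\equiv$, contradicting acyclicity up to $\equiv$ (\Cref{thm:quotient}), which applies since $\equiv$ is a \core sharing equivalence. The case $j_1>j_2$ is symmetric. With $j_1=j_2$ in hand, the key observation yields $\nodel\DownEquivClos\Que\nodelx$ for the descent, and chaining along the zig-zag via symmetry and transitivity of $\DownEquivClos\Que$ delivers closure under \RuleBV, completing the proof.
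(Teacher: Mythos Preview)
Your proposal is correct and follows essentially the same route as the paper. Unpacking the paper's chain of references, the heart of its argument is: (i) for a single parallel descent $\Path\tra\noder{\nbvar\nodel}$, $\Path\tra\noderx{\nbvar\nodelx}$ from $\Que$-related roots, use domination to see that both paths cross their binders, use $\equiv$'s closure under \RuleBV{} to get $\nodel\equiv\nodelx$, and then invoke the ``No Triangles'' lemma (your acyclicity-up-to-$\equiv$) to force $\nodel$ and $\nodelx$ to sit at the same prefix of $\tra$, hence $\nodel\DownClos\Que\nodelx$; (ii) the open condition is inherited from $\equiv$. The only organizational difference is that the paper first establishes everything for the non-equivalence propagation $\DownClos\Que$ (where the inversion into a single parallel descent is literal, with no zig-zag) and only afterwards lifts to $\DownEquivClos\Que$ via the commutation $(\DownClos\Que)^*=\DownEquivClos\Que$, whereas you work directly with the zig-zag description of $\DownEquivClos\Que$ and chain the single-descent case by transitivity; these are interchangeable packagings of the same argument.
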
 

The proof of this proposition is in 
\OnlyInFinalOrTechnicalReport{the Appendix~A of \CiteExtended{}}{%
\Cref{subsect:shar-eq}, page~\pageref{proof:univhigher}},
where it is obtained as a corollary of other results connecting equality of \lat{s} and \hocongruences, that also rely crucially on the the fact that queries only relate roots. It can also be proved directly, but it requires a very similar reasoning, which is why we rather prove it indirectly.

\paragraph{The sharing equality theorem.} We have now introduced all the needed concepts to state the precise connection between the equality of \lat{s}, queries, and \hocongruences, which is the main result of our abstract study of sharing equality.

\begin{theorem}[Sharing equality]
 \label{thm:alpha_then_cong_scoped-body}
 % \label{downequivclos-iff-alpha}
\Copy{thm:alpha_then_cong_scoped-body}{%
   Let $\Que$ be a query over a \ladag{} $\xgraph$.
   $\DownEquivClos\Que$ is a \hocongruence
   if and only if
   $\trxnp\node=\trxnp\nodex$
   for every \lan{s} $\node,\nodex$ such that $\node\Que\nodex$.
}\end{theorem}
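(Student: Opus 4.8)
\emph{Overall plan and forward direction.} The plan is to prove the two directions separately, reducing each to a statement about readbacks provable by induction on the finite, acyclic structure of $\xgraph$. For the forward direction, since $\Que\subseteq\DownEquivClos\Que$ it suffices to prove the general ``sharing to $\alpha$'' fact: for every \hocongruence $\equiv$ and every pair of roots $\node\equiv\nodex$, $\trxnp\node=\trxnp\nodex$. I would prove this by induction on the readback (\Cref{def:unfold-lambda}), generalized to internal nodes reached by a \emph{common} access trace: if $\Path\tra{\noder_1}{\node_1}$ and $\Path\tra{\noder_2}{\node_2}$ with $\noder_1\equiv\noder_2$ and $\node_1\equiv\node_2$, then $\trx{\Path\tra{\noder_1}{\node_1}}=\trx{\Path\tra{\noder_2}{\node_2}}$. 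Homogeneity of $\equiv$ gives matching top constructors; the bisimulation rules \RuleAppl, \RuleAbs, and \RuleAppr supply the children, still related and reached by the extended common trace, to which the induction hypothesis applies; \emph{Open} makes two related free variables the same node, hence equal; and the bound-variable case is handled next.

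\emph{The bound-variable case.} For two related bound variables $\nbvar{\nodel_1}\equiv\nbvar{\nodel_2}$ I must show their de Bruijn indices (\Cref{def:dbi}) agree. A side induction along $\tra$ (using the same three propagation rules) shows that all intermediate pairs of nodes on the two access paths are $\equiv$-related, hence homogeneous; therefore the two paths cross abstractions in lockstep, and listing the crossed abstractions yields sequences $a_1,\dots,a_k$ and $b_1,\dots,b_k$ with $a_i\equiv b_i$. By \RuleBV{} the binders satisfy $\nodel_1\equiv\nodel_2$, and by domination (\Cref{def:dominated-lagraph--body}) each binder is crossed, so $\nodel_1=a_j$ and $\nodel_2=b_{j'}$ for some $j,j'$. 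If $j\neq j'$ then $b_j\equiv a_j=\nodel_1\equiv\nodel_2=b_{j'}$, so $b_j\equiv b_{j'}$ are two distinct abstractions on a single simple path, \ie{} a cycle up to $\equiv$, contradicting acyclicity up to $\equiv$ (\Cref{thm:quotient}). Hence $j=j'$ and the two indices coincide.

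\emph{Backward direction: homogeneity and Open.} Assume $\trxnp\node=\trxnp\nodex$ whenever $\node\Que\nodex$. By construction $\DownEquivClos\Que$ is an equivalence closed under \RuleAppl, \RuleAbs, and \RuleAppr, so it only remains to check that it is homogeneous, satisfies \emph{Open}, and is closed under \RuleBV. For the first two I would use the \emph{skeleton} of a node: its readback with all de Bruijn indices erased and all free-variable identifiers $\nid\cdot$ kept. Inspecting \Cref{def:unfold-lambda}, the only path-dependent ingredient of the readback is the index of a bound variable; erasing it makes the skeleton depend on the node alone. Thus ``equal skeleton'' is a genuine equivalence relation on nodes, it contains $\Que$ (erasing indices from equal root readbacks), and it is closed under the three propagation rules (the skeleton of an application or abstraction node is the corresponding constructor applied to its children's skeletons). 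Since $\DownEquivClos\Que$ is the \emph{smallest} such relation, every related pair has equal skeleton, hence the same top constructor — so the relation is homogeneous — and two related free variables carry the same identifier, which by uniqueness of $\nid\cdot$ forces them to be the same node, giving \emph{Open}.

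\emph{Backward direction: the scoping rule, and the main obstacle.} Homogeneity makes $\DownEquivClos\Que$ a \core sharing equivalence, so \Cref{thm:quotient} again grants acyclicity up to $\DownEquivClos\Que$. The remaining and central step is closure under \RuleBV: from $\nbvar{\nodel_1}\DownEquivClos\Que\nbvar{\nodel_2}$ I must derive $\nodel_1\DownEquivClos\Que\nodel_2$. The intended argument maintains the invariant that every related pair is \emph{witnessed} by a common access trace from a pair of roots in the equivalence closure of $\Que$; for such a witness the equal root readbacks force equal indices $\indexOf{\nodel_1}{\Path\tra{\noder_1}}=\indexOf{\nodel_2}{\Path\tra{\noder_2}}$, domination places both binders on the two parallel paths, and equal index together with a common trace locates them at a common prefix, whence $\nodel_1\DownEquivClos\Que\nodel_2$ by propagation. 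The main obstacle — and, as the authors stress, the genuinely delicate and surprising point — is that this witnessing must survive \emph{transitivity}: a shared node may be reached by several traces, and a bound-variable descendant could a priori receive different indices under them. The resolution rests on the context-freeness of the query (it relates only roots, which sit above every abstraction) together with domination, which forces every binder to be visited before its variable on every access path; this makes the index of each bound variable within a query-reachable region depend only on the query and the node, so the readback is coherent and the witnessing invariant is preserved under transitivity. Once this is in place — equivalently, once any single \hocongruence containing $\Que$ is exhibited — \Cref{prop:univhigher} identifies $\DownEquivClos\Que$ as the smallest \hocongruence containing $\Que$, completing the proof.
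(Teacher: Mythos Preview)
Your forward direction is essentially correct and close to the paper's argument: you propagate the equivalence along a common trace, use homogeneity to keep the two paths in lockstep, and use acyclicity-up-to (\Cref{thm:quotient}) to force the binders to sit at the same position. The paper packages the same idea slightly differently (via an auxiliary proposition that equates ``$\nodel\Rel\nodelx$'' with ``$\indexOf{\nodel}{\cdot}=\indexOf{\nodelx}{\cdot}$'' for any \core{} bisimulation), but the content is the same.

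Your skeleton trick for the backward direction is a genuinely different and pleasant route to homogeneity and \emph{Open}: the paper instead works with the \emph{propagation} $\DownClos\Que$ (closure under \RuleAppl, \RuleAbs, \RuleAppr{} only, \emph{without} the equivalence rules), uses its inversion property---every pair in $\DownClos\Que$ is literally a pair of endpoints of one common trace from a single $\Que$-pair---to read off homogeneity and \emph{Open} from the hypothesis on readbacks, and only afterwards lifts to $\DownEquivClos\Que=(\DownClos\Que)^*$. Your skeleton argument reaches the same conclusions more directly.

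The gap is the \RuleBV{} case of the backward direction. You correctly isolate transitivity as the obstacle to your ``witnessed by a common access trace'' invariant, but your proposed resolution does not hold: the de~Bruijn index of a shared bound variable genuinely depends on the access path (a bound variable sitting under different numbers of intervening abstractions along two paths gets different indices), so the sentence ``the index\ldots depend[s] only on the query and the node'' is false as stated, and the invariant is \emph{not} preserved by transitivity. The final appeal to \Cref{prop:univhigher} is then circular: that proposition asks you to \emph{exhibit} a \hocongruence containing $\Que$, which is exactly what you are trying to establish. The paper's way out is precisely the decomposition you did not use: prove closure under \RuleBV{} for $\DownClos\Que$---where the inversion property gives you a single witnessing trace for free, so your intended argument (equal readbacks $\Rightarrow$ equal indices $\Rightarrow$ binders at the same prefix $\Rightarrow$ binders related) goes through cleanly---and then lift closure under \RuleBV{} from $\DownClos\Que$ to $(\DownClos\Que)^*=\DownEquivClos\Que$ via the fact that, for homogeneous relations, closure under each structural rule is preserved by the equivalence closure.
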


Despite the---we hope---quite intuitive nature of the theorem, its proof is delicate and requires a number of further concepts and lemmas, developed in \OnlyInFinalOrTechnicalReport{Appendices A--B of \CiteExtended{}}{%
Appendices \ref{sect:relations-on-graphs}--\ref{sect:lambda-calculus}}. The key point is finding an invariant expressing how queries on roots propagate under abstractions and interact with domination, until they eventually satisfy the name conditions for a sharing equivalence, and vice versa.

Let us conclude the section by stressing a subtlety of \refthm{alpha_then_cong_scoped-body}. Consider \reffig{sharing-equivalence}(c)---with that query the statement is satisfied. Consider \reffig{sharing-equivalence}(d)---that relation is not a valid query because it does not relate root nodes, and in fact the statement would fail because the readback of the two queried nodes are not the same and not even well-defined. Consider the relation $\Rel$ in \reffig{sharing-equivalence}(e)---now $\Rel$ and $\DownEquivClos\Rel$ coincide (up to reflexivity) and $\DownEquivClos\Rel$ is a sharing equivalence, but the theorem (correctly) fails, because not all queried pairs of nodes are equivalent, as in \reffig{sharing-equivalence}(d).
\section{Algorithms for Sharing Equality}

From now on, we focus on the algorithmic side of sharing equality.

By the universality of spreaded queries $\DownEquivClos\Que$ (\refprop{univhigher}), checking the satisfability of a query $\Que$ boils down to compute $\DownEquivClos\Que$, and then check that it is a \hocongruence: the fact that the requirements on variables are modular to the \core sharing requirement is one of our main contributions. Indeed it is possible to check sharing equality in two phases:
\begin{enumerate}
  \item \emph{\HomogeneityCheck}: building $\DownEquivClos\Que$ and at the same time checking that it is a \core sharing equivalence, \ie{} that it is homogeneous;
  \item \emph{\NameCheck}: verifing that $\DownEquivClos\Que$ is a \hocongruence by checking the conditions for free and bound variable nodes.
\end{enumerate}
Of course, the difficulty is doing it in linear time, and it essentially lies in the \HomogeneityCheck.

The rest of this part presents two algorithms, the \HomogeneityCheckT{} (\Cref{alg:yes-queue-sharing-check}) and the \NameCheck{} (\Cref{alg:name-check}), with proofs of correctness and completeness, and complexity analyses. The second algorithm is actually straightforward. Be careful, however: the algorithm for the \NameCheck{} is trivial just because the subtleties of this part have been isolated in the previous section.

\section{The \HomogeneityCheckT}
\label{sect:core-check}
In this section we introduce the basic concepts for the \HomogeneityCheckT, plus the algorithm itself.

Our algorithm is a simple adaptation of Paterson and Wegman's, and it relies on the same key ideas in order to be linear. With respect to PW original algorithm, our reformulation does not rely on their notions of \emph{dead/alive nodes} used to keep track of the nodes already processed; in addition it is not destructive, \ie{} it does not remove edges and nodes from the graph, hence it is more suitable for use in computer tools where the \lat{s} to be checked for equality need not be destroyed. Another contribution in this part is a formal proof of correctness and completeness, obtained via the isolation of properties of program runs.
%From a more abstract point of view, it is worth stressing that the fact that such a simple adaptation of Paterson and Wegman's algorithm works in the case of binders is somewhat surprising. Precisely, it is due to the characterization of $\alpha$-equivalence as the \emph{top-down}  and \emph{iteration-free} propagation of a well-scoped query, as stated by the sharing equality theorem (\refthm{alpha_then_cong_scoped}).

\paragraph{Intuitions for the \HomogeneityCheck.} Paterson and Wegman's algorithm is based on a tricky, linear time visit of the \ladag{}. It addresses two main efficiency issues:
\begin{enumerate}
\item \emph{The spreaded query is quadratic}: the number of pairs in the spreaded query $\DownEquivClos\Que$ can be quadratic in the size of the \ladag{}. An equivalence class of cardinality $n$ has indeed $\Omega(n^2)$ pairs for the relation---this is true for every equivalence relation. This point is addressed by rather computing a linear relation $\eqc$ \emph{generating} $\DownEquivClos\Que$, based on keeping a \emph{canonical element} for every sharing equivalence class.
\item \emph{Merging equivalence classes}: merging equivalence classes is an operation that, for as efficient as it may be, it is not a costant time operation. The trickiness of the visit of the \ladag{} is indeed meant to guarantee that, if the query is satisfiable, one never needs to merge two equivalence classes, but only to add single elements to classes.
\end{enumerate}
%\begin{figure}
% \input{figure_algorithm}
% !TEX root = ../main.tex

% \section{The \texorpdfstring{\texttt{C\lowercase{heck}H\lowercase{omogeneous}}}{CheckHomogeneous} Algorithm}

\SetKwSwitch{MyCase}{Case}{Other}{case}{of}{case}{otherwise}{end case}{end switch}%

\LinesNumbered
\begin{algorithm}
 \SetArgSty{textrm}
 \SetNlSty{}{}{}
 \DontPrintSemicolon

 % Header
 \KwData{an initial state}
 \KwResult{\FailState{} or a final state}
 \BlankLine

 % Blind Sharing Check
 \procedure{\CheckHomogeneous{}}{
 \For{\textbf{every} node $\node$\label{vq-main-loop}}
  {\lIf{$\pointer{\node}$ undefined}{$\BuildClass{\node}$\label{vq-build-class-one}}}
 }

 \BlankLine

 \procedure{$\BuildClass{\cannode}$}{
    $\pointer\cannode \defeq \cannode$\label{vq-can-set-one}\;
    $\visiting\cannode \defeq \true$\label{vq-visiting-true}\;
    $\stack(\cannode) \defeq \set{\cannode}$\label{vq-stack-set}\;

    \While{$\stack(\cannode)$ is non-empty\label{vq-stack-while}}{
      $\node \defeq \stack(\cannode)$.pop()\label{vq-peek}\;

      \For{\textbf{every} parent $\nodetwo$ of $\node$\label{vq-parents}}{
        \MyCase{\pointer\nodetwo}{
          undefined $\boldsymbol\Rightarrow$ $\BuildClass{\nodetwo}$\label{vq-build-class-two}\;
          $\cannode'$ $\boldsymbol\Rightarrow$ \lIf{$\visiting{\cannode'}$}{\FAIL{}\label{vq-visiting-fail}}%
        }
      }%

      \For{\textbf{every} \simSibling{} $\nodetwo$ of $\node$\label{vq-siblings}}{%
        \MyCase{\pointer\nodetwo}{
          undefined $\boldsymbol\Rightarrow$ $\Enqueue{\nodetwo, \cannode}$\label{vq-enqueue-one}\;
          $\cannode'$ $\boldsymbol\Rightarrow$
        \lIf{$\cannode' \neq \cannode$}{\FAIL{}\label{vq-c-neq}}%
        }
      }
      % $\stack(\cannode)$.dequeue()\label{vq-dequeue}\;
    }
    $\visiting\cannode \defeq \false$\label{vq-visiting-false}\;
 }

 \BlankLine

 \procedure{$\Enqueue{\nodetwo, \cannode}$}{
  \MyCase{$\nodetwo \mycomma \cannode$\label{vq-homo}}{
    $\nabs{\nodetwo'} \mycomma \nabs{\cannode'}$
      $\boldsymbol\Rightarrow$ create edge $\nodetwo' \sim \cannode'$\label{vq-edges1}\;
    $\napp{\nodetwo_1}{\nodetwo_2} \mycomma \App{\cannode_1}{\cannode_2}$
      $\boldsymbol\Rightarrow$ \\ \quad create edges $\nodetwo_1 \sim \cannode_1$ and $\nodetwo_2 \sim \cannode_2$\label{vq-edges2}\;
    $\nbvar\nodel \mycomma  \nbvar\nodelx$
      $\boldsymbol\Rightarrow ()$\;
    $\nfvar{} \mycomma  \nfvar{}$
      $\boldsymbol\Rightarrow ()$\;
    $\_ \mycomma \_$ $\boldsymbol\Rightarrow$ \FAIL{}\label{vq-fail-homo}\;
  }
  $\pointer \nodetwo \defeq \cannode$\label{vq-can-set-two}\;
  $\stack(\cannode)$.push($\nodetwo$)\label{vq-enqueue-push}\;
 }

 \caption{The \texttt{BlindCheck} Algorithm}
 \label{alg:yes-queue-sharing-check}
\end{algorithm}

%\label{fig:algo}
%\end{figure}
More specifically, the ideas behind \Cref{alg:yes-queue-sharing-check} are:
\begin{itemize}
\item \emph{Top-down recursive exploration}: the algorithm can start on any node, not necessarily a root. However, when processing a node $\node$ the algorithm first makes a recursive call on the parents of $\node$ that have not been visited yet. This is done to avoid the risk of reprocessing $\node$ later because of some new equality requests on $\node$ coming from a parent processed after $\node$.
\item \emph{Query edges}: the query is represented through additional \emph{undirected query edges} between nodes, and it is propagated on child nodes by adding further query edges. The query is propagated carefully, on-demand. The fully propagated query is never computed, because, as explained, in general its size is quadratic in the number of nodes.
\item \emph{Canonic edges}: when a node is visited, it is assigned a canonic node that is a representative of its sharing equivalence class. This is represented via a \emph{directed canonic edge}, which is implemented as a pointer.
\item \emph{Building flag}: each node has a boolean ``building'' flag that is used only by canonic representatives and notes the state of construction of their equivalence class. When undefined, it means that that node is not currently designated as canonic; when true, it means that the equivalence class of that canonic is still being computed; when  false, it signals that the equivalence class has been completely computed.
\item \emph{Failures and cycles}: the algorithm fails in three cases. First, when it finds two nodes in the same class that are not homogeneous (\Cref{vq-fail-homo}), because then the approximation of $\DownEquivClos\Que$ that it is computing cannot be a \core sharing equivalence.
 The two other cases (on \Cref{vq-visiting-fail} and \Cref{vq-c-neq}) the algorithm uses the fact that the canonic edge is already present to infer that it found a cycle up to $\DownEquivClos\Que$, and so, again $\DownEquivClos\Que$ cannot be a \core sharing equivalence (please read again the paragraph after \refprop{univfirst}).
\item \emph{Building a class}: calling $\BuildClass(\node)$ boils down to
\begin{enumerate}
\item collect without duplicates all the nodes in the intended sharing equivalence class of $\node$, that is, the nodes related to $\node$ by a sequence of query edges. This is done by the \emph{while} loop at \Cref{vq-siblings}, that first collects the nodes queried with $\node$ and then iterates on the nodes queried with them. These nodes are inserted in a queue;
\item set $\node$ as the canonical element of its class, by setting the canonical edge of every node in the class (including $\node$) to $\node$;
\item propagate the query on the children (in case $\node$ is a $\LAM$ or a $\AT$ node), by adding query edges between the corresponding children of every node in the class and their canonic.
\item Pushing a node in the queue, setting its canonic, and propagating the query on its children is done by the procedure \Enqueue.
\end{enumerate}
\item \emph{Linearity}: let us now come back to the two efficiency issues we mentioned before:
\begin{itemize}
\item \emph{Merging classes}: the recursive calls are done in order to guarantee that when a node is processed all the query edges for its sharing class are already available, so that the class shall not be extended nor merged with other classes later on during the visit of the \ladag{}.
\item \emph{Propagating the query}: the query is propagated only \emph{after} having set the canonics of the current sharing equivalence class. To explain, consider a class of $k$ nodes, which in general can be defined by $\Omega(k^2)$ query edges. Note that after canonization, the class is represented using only $k-1$ canonic edges, and thus the algorithm propagates only $O(k)$ query edges---this is why the number of query edges is kept linear in the number of the nodes (assuming that the original query itself was linear). If instead one would propagate query edges \emph{before} canonizing the class, then the number of query edges may grow quadratically.
\end{itemize}
\end{itemize}

\paragraph{States.} As explained, the algorithm needs to enrich \ladag{s} with a few additional concepts, namely \emph{canonic edges}, \emph{query edges}, \emph{building flags}, \emph{queues}, and \emph{execution stack}, all grouped under the notion of \emph{program state}.

% \begin{definition}[State]\label{def:state}
 A \emph{state} $\xgraphproblem$ of the algorithm is either \FailState{} or
a tuple
$$(\xgraph, \xundirected, \xcanonic, \xvisiting, \stack, \callstackS{})$$ where $\xgraph$ is a \ladag{}, and the remaining data structures have the following properties:
 \begin{itemize}
  % \item \emph{Dead \& alive nodes}: every node is marked either \emph{dead} (aka already processed) or \emph{alive} (still to be processed, or being processed). The set of alive nodes is $\xalive$ (shortened $\viviS{}$), and the set of dead nodes is its complement $\mortiS{} \defeq \nodes\setminus\viviS{}$;
  \item \emph{Undirected query edges ($\und$)}: $\xundirected$ is a \emph{multiset} of undirected query edges, pairing nodes that are expected to be placed by the algorithm in the same sharing equivalence class. Undirected loops are admitted and there may be multiple occurrences of an undirected edge between two nodes. More precisely, for every undirected edge between $\node$ and $\nodetwo$ with multiplicty $k$ in the state, both $(\node,\nodetwo)$ and $(\nodetwo,\node)$ belong with multiplicity $k$ to $\xundirected$.
  We denote by $\und$ the binary relation over $\xgraph$ such that $\node\urel\nodetwo$ iff the edge $(\node,\nodetwo)$ belongs to $\xundirected$.

  \item \emph{Canonic edges ($\can$)}: nodes may have one additional $\xcanonic$ directed edge pointing to the computed canonical representative of that node. The partial function mapping each node to its canonical representative (if defined) is noted $\canS{}$.
  We then write $\cS{}{\node}=\nodetwo$ if the canonical of $\node$ is $\nodetwo$, and $\cS{}{\node}=\cundefined$ otherwise.
  By abuse of notation, we also consider $\canS{}$ a binary relation on $\xgraph$, where $\node\RelC\nodetwo$ iff $\cS{}\node=\nodetwo$.

  \item \emph{Building flags ($\bui$)}: nodes may have an additional boolean flag $\xvisiting$ that signals whether an equivalence class has or has not been constructed yet. The partial function mapping each node to its building flag (if defined) is noted $\bui$.
  We then write $\bS{}{\node}=\true\mid\false$ if the building flag of $\node$ is defined, and $\bS{}{\node}=\cundefined$ otherwise.

  \item \emph{Queues ($\que$)}: nodes have a queue data structure that is used only on canonic representatives, and contains the nodes of the class that are going to be processed next.
  The partial function mapping each node to its queue (if defined) is noted $\que$.

  \item \emph{Active Calls ($\callstackS{}$)}: a program state contains information on the execution stack of the algorithm, including the active procedures, local variables, and current execution line.
  We leave the concept of execution stack informal; we only define more formally $\callstackS{}$, which records the order of visit of equivalence classes that are under construction, and that is essential in the proof of completeness of the algorithm.
  $\callstackS{}$ is an abstraction of the implicit execution stack of active calls to the procedure $\BuildClass$ where only (part of) the activation frames for \sloppy $\BuildClass{\cannode}$ are represented.
  Formally, $\callstack$ is simply a sequence of nodes of the \ladag{}, and
  $\callstack=[\cannode_1, \ldots,\cannode_K]$ if and only if:
  \begin{itemize}
    \item \emph{Active:} $\BuildClass{\text{$\cannode_1$}}$, \ldots, $\BuildClass{\text{$\cannode_K$}}$
   are exactly the calls to \BuildClass that are currently active, \ie{} have been called before $\State$, but have not yet returned;
   \item \emph{Call Order:}
    for every $0<i<j\leq K$, \\$\BuildClass{\text{$\cannode_i$}}$ was called before $\BuildClass{\text{$\cannode_j$}}$.
  \end{itemize}
\end{itemize}
 Moreover we introduce the following concepts:
 \begin{itemize}
   \item \emph{Program transition}: the change of state caused by the execution of a piece of code. For the sake of readability, we avoid a technical definition of transitions; roughly, a transition is the execution of a line of code, as they appear numbered in the algorithm itself.
   When the line is a \texttt{while} loop, a transition is an iteration of the body, or the exit from the loop; when the line is a \texttt{if-then-else}, a transition is entering one of the branches according to the condition; and so on.
   \item \emph{Fail state}: \FailState{} is the state reached after executing a $\FAIL$ instruction; it has no attributes and no transitions.
   \item \emph{Initial state} $\State_0$: a non-\FailState{} state with the following attributes:
   \begin{itemize}
     \item \emph{Initial $\sim$edges:} simply the initial query, \ie{} ${\undS{}} \defeq {\Que}$.
     \item \emph{Initial assignments:} $\cS{}\node$, $\bS{}\node$, and $\qS{}\node$ are undefined for every node $\node$ of $\xgraph$.
     \item \emph{Initial transition:} the first transition is a call to \sloppy $\CheckHomogeneous{}$.
   \end{itemize}
   \item \emph{Program run}: a sequence of program states starting from $\State_0$ obtained by consecutive transitions.
   \item \emph{Reachable}: a state which is the last state of a program run.
   \item \emph{Failing}: a reachable state that transitions to \FailState.
   \item \emph{Final}: a non-\FailState{} reachable state that has no further transitions.
 \end{itemize}
 % \end{definition}

Details about how the additional structures of enriched \ladag{s} are implemented are given in \refsect{linearity}, where the complexity of the algorithm is analysed.

% !TEX root = main.tex
\subsection{Correctness and Completeness}
Here we prove that \Cref{alg:yes-queue-sharing-check} correctly and completely solves the \core sharing equivalence problem, that is, it checks whether the spreaded query is a \core sharing equivalence.

The proofs rely on a number of general properties of program runs and reachable states. The full statements and proofs are in %
\OnlyInFinalOrTechnicalReport{Appendix D of \CiteExtended{}}%
{\Cref{sect:algorithm1}},
grouped according to the concepts that they analyze. 

\paragraph{Correctness.}
We prove that whenever \Cref{alg:yes-queue-sharing-check} terminates successfully with final state $\State_f$ from an initial query $\Que$, then $\DownEquivClos\Que$ is homogeneous, and therefore a \core sharing equivalence. Additionally, we show that the canonic assignment is a succint representation of $\DownEquivClos\Que$, \ie{} that for all nodes $\node,\nodetwo$: $\node \DownEquivClos\Que \nodetwo$ if and only if $\node$ and $\nodetwo$ have the same canonic assigned in $\State_f$. The following are the most important properties to prove correctness:

\begin{itemize}
  \item \emph{\Core bisimulation upto:}
		in all reachable states, $\RelC$ is a \core bisimulation upto $({\und}\cup{=})$.
	\item \emph{Undirected query edges approximate the spreaded query:}
		in all reachable states, ${\Que} \subseteq {\urel} \subseteq {\DownEquivClos\Que}$.
  \item \emph{The canonic assignment respects the $\und$ constraints:} in all reachable states,
		${\RelC}\subseteq{\urel^*}$
  \item \emph{Eventually, all \simEdge{s} are visited:}
  in all final states, all \simEdge{s} are subsumed by the canonic assignment, \ie{} ${\und} \subseteq {\can^*}$.
\end{itemize}

% \fixme[move?]{More precisely, we are going to show that the relation $\DownEquivClos\Que$ is identical to $\eqc$ in $\State_f$ (\Cref{eqc-eq-qsharp}), where $\eqc$ is defined as the binary relation over the \ladag{} such that for all nodes $\node$,$\nodetwo$: $\node\eqc \nodetwo$ iff $\pointer\node$ and $\pointer\nodetwo$ are both defined and $\pointer\node=\pointer\nodetwo$.}
In every reachable state, we define $\eqc$ as the equivalence relation over the \ladag{} that equates two nodes whenever their canonic representatives are both defined and coincide.
The lemmas above basically state that during the execution of the algorithm, $\eqc$ is a \core sharing equivalence up to the \simEdge{s}, and that $\und$ can indeed be seen as an approximation of the spreaded query $\DownEquivClos\Que$. At the end of the algorithm, $\und$ and $\can$ actually represent the same (up to the $(\cdot)^*$ closure) relation $\eqc$, which is then exactly the spreaded query:

\begin{proposition}[Correctness]\label{prop:correctness-body}
  \Copy{prop:correctness-body}{
  In every final state $\State_f$:
  \begin{itemize}
    \item \emph{Succint representation:} $(\eqc)=(\DownEquivClos\Que)$,
    \item \emph{\HomogeneityCheckLower:} $\DownEquivClos\Que$ is homogeneous, and therefore a \core sharing equivalence.
  \end{itemize}}
\end{proposition}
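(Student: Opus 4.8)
The plan is to show that, in a final state $\State_f$, the canonic-induced relation $\eqc$ is itself the smallest \core sharing equivalence containing $\Que$, which by \core universality (\refprop{univfirst}) is exactly $\DownEquivClos\Que$. I would begin with two elementary observations on canonics. Since every canonic pointer points to a node that is its own canonic (the fixed point set at the head of \BuildClass{}), we have ${\RelC}\subseteq{\eqc}$ and, $\eqc$ being transitive, ${\can^*}\subseteq{\eqc}$; moreover $\eqc$ is precisely the equivalence relation generated by $\RelC$, because each $\RelC$-component consists of a canonic together with all nodes pointing to it. Finally, in $\State_f$ the main loop of \CheckHomogeneous{} has assigned a canonic to every node, so $\eqc$ is a \emph{total} equivalence relation on $\xgraph$.

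Next I would prove that $\eqc$ is a \core sharing equivalence. For \emph{homogeneity}, note that a node $\nodetwo$ acquires canonic $\cannode$ only through \Enqueue{}, which fails unless $\nodetwo$ and $\cannode$ are homogeneous; hence in any reachable non-fail state every node is homogeneous with its canonic, and two nodes sharing a canonic are homogeneous, so $\eqc$ is homogeneous. For the \core bisimulation property I combine the invariant \emph{\Core bisimulation upto}, stating that $\RelC$ is a \core bisimulation upto $({\und}\cup{=})$, with the fact that in $\State_f$ the upto-relation is absorbed into $\eqc$: by \emph{Eventually, all \simEdge{s} are visited} we have ${\und}\subseteq{\can^*}$, and with ${\can^*}\subseteq{\eqc}$ this yields ${\und}\subseteq{\eqc}$. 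Concretely, when a member $\nodetwo$ of the class of $\cannode$ is enqueued, \Enqueue{} creates the $\und$-edges pairing the corresponding children of $\nodetwo$ and $\cannode$ --- the bodies, for two abstractions, and the left and right children, for two applications; since these edges lie in $\eqc$, every class member has its children $\eqc$-equal to those of the canonic, so the child-selection maps are constant on $\eqc$-classes. Thus $\napp{\node_1}{\node_2}\eqc\napp{\nodex_1}{\nodex_2}$ forces $\node_1\eqc\nodex_1$ and $\node_2\eqc\nodex_2$ (routed through the common canonic), and symmetrically for abstractions; that is, $\eqc$ is closed under \RuleAppl\,\RuleAbs\,\RuleAppr. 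Together with homogeneity and the equivalence property, $\eqc$ is a \core sharing equivalence.

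It then remains to sandwich $\eqc$ between $\Que$ and $\DownEquivClos\Que$. By \emph{Undirected query edges approximate the spreaded query} we have ${\Que}\subseteq{\und}$, and with ${\und}\subseteq{\eqc}$ this gives ${\Que}\subseteq{\eqc}$; hence $\eqc$ is a \core sharing equivalence containing $\Que$, so \refprop{univfirst} yields ${\DownEquivClos\Que}\subseteq{\eqc}$. For the converse, \emph{The canonic assignment respects the $\und$ constraints} gives ${\RelC}\subseteq{\urel^*}$, and since ${\und}\subseteq{\DownEquivClos\Que}$ with $\DownEquivClos\Que$ an equivalence relation we obtain ${\urel^*}\subseteq{\DownEquivClos\Que}$, whence ${\RelC}\subseteq{\DownEquivClos\Que}$ and, taking the equivalence closure, ${\eqc}\subseteq{\DownEquivClos\Que}$. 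This establishes the \emph{succinct representation} $({\eqc})=({\DownEquivClos\Que})$, and since $\eqc$ is homogeneous so is $\DownEquivClos\Que$, giving the \HomogeneityCheckLower{} part and hence that $\DownEquivClos\Que$ is a \core sharing equivalence.

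The step I expect to be the main obstacle is the \core bisimulation of $\eqc$: one must argue that the local $\und$-edges created child-wise by \Enqueue{}, once absorbed via ${\und}\subseteq{\eqc}$, genuinely make the child-selection constant on each class, so that the propagation rules \RuleAppl\,\RuleAbs\,\RuleAppr apply along $\eqc$-classes. This is precisely where finality is indispensable --- in intermediate states ${\und}\not\subseteq{\eqc}$, since unresolved queries have not yet been turned into canonic identifications --- and where homogeneity is needed to ensure that the related nodes have matching shapes.
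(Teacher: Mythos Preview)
Your proposal is correct and follows essentially the same route as the paper. The paper's proof is packaged as two lemmas: that $\RelC^*$ is a \core bisimulation (your Step~2, via the same ``upto'' invariant and the absorption ${\und}\subseteq{\RelC^*}$ in the final state), and that $(\RelC^*)=(\DownEquivClos\Que)$ (your Step~3, via the same two inclusions ${\RelC}\subseteq{\urel^*}\subseteq\DownEquivClos\Que$ and ${\Que}\subseteq{\urel}\subseteq{\RelC^*}$). The only cosmetic difference is that for the inclusion $\DownEquivClos\Que\subseteq\eqc$ you invoke \refprop{univfirst}, whereas the paper appeals directly to the minimality of the spreading closure---since $\RelC^*$ is an equivalence relation, propagated, and contains $\Que$, it contains $\DownEquivClos\Que$ by definition; this avoids needing homogeneity for that direction, but the effect is the same.
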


%%%%%%%%%%%%%%%%%%%%%%%%%%%%%%%%%%%%%%%%%%%%%%%%%%%%%%%%%%%%%%%%%%%%%%%%%%%%%%%

\paragraph{Completeness.}
Completeness is the fact that whenever \Cref{alg:yes-queue-sharing-check} fails, $\DownEquivClos\Que$ is not a \core sharing equivalence.
Recall that the algorithm can fail only while executing the following three lines of code:
\begin{itemize}
  \item On \Cref{vq-visiting-fail} during $\BuildClass$, when a node $\node$ is being processed and it has a parent whose equivalence class is still being built;
  \item On \Cref{vq-c-neq} during $\BuildClass$, when a node $\node$ is being processed and it has a \simSibling{} belonging to a difference equivalence class;
  \item On \Cref{vq-fail-homo} during $\Enqueue{\nodetwo,\cannode}$, when the algorithm is trying two relate the nodes $\nodetwo$ and $\cannode$ which are not homogeneous.
\end{itemize}

While in the latter case (\Cref{vq-fail-homo}) the failure of the homogeneous
condition is more evident, in the first two cases it is more subtle.
In fact, when the homogeneous condition fails on \Cref{vq-visiting-fail} and \Cref{vq-c-neq} it is not because we explicitly found two related nodes that are not homogeneous, but because $\DownEquivClos\Que$ does not satisfy an indirect property that is necessary for $\DownEquivClos\Que$ to be homogenous (see below). In these cases the algorithm fails early, even though it has not visited yet the actual pair of nodes that are not homogeneous.

To justify the early failures we use $\callstack$,
 which basically records the equivalence classes that the algorithm is building in a given state, sorted according to the order of calls to $\BuildClass$. Nodes in $\callstack$ respect a certain strict order: if
$\callstack=[\cannode_1, \ldots,\cannode_K]$,
then $\cannode_1 \prec \cannode_2 \prec \ldots \prec \cannode_K$ (%
\OnlyInFinalOrTechnicalReport{Appendix D of \CiteExtended{}}{\CrefAppendix{order-callstack}}%
), where $\node\prec\nodex$ implies that the equivalence class of $\node$ is a child of the one of $\nodex$ in the quotient graph. The algorithm fails on \Cref{vq-visiting-fail} and \Cref{vq-c-neq} because it found a cyclic chain of nodes related by $\prec$, therefore finding a cycle in the quotient graph. By \Cref{thm:quotient}, there cannot exist any \core sharing equivalence containing the initial query in this case.

\begin{proposition}[Completeness]\label{prop:completeness-body}
  \Copy{prop:completeness-body}{%
  If \Cref{alg:yes-queue-sharing-check} fails, then $\DownEquivClos\Que$ is not a \focongruence, and therefore by the \core universality of $\DownEquivClos\Que$ (\Cref{prop:univfirst}), there does not exist any \core sharing equivalence containing the initial query $\Que$.}
\end{proposition}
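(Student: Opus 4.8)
The plan is to reduce the whole statement to a single homogeneity failure. By definition $\DownEquivClos\Que$ is an equivalence relation and is closed under the propagation rules \RuleAppl, \RuleAbs, \RuleAppr; hence the \emph{only} clause of the definition of \focongruence{} that it can still violate is homogeneity. So it is enough to show that, whenever \Cref{alg:yes-queue-sharing-check} fails, $\DownEquivClos\Que$ is not homogeneous, possibly only indirectly, through a cycle and the contrapositive of \Cref{thm:quotient}(1). The final clause of the statement is then immediate: by the contrapositive of \Cref{prop:univfirst}, if some \core sharing equivalence contained $\Que$ then $\DownEquivClos\Que$ would be the smallest one, and in particular \emph{would} be a \focongruence, against what we just proved. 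I would therefore split on the three lines at which the algorithm can fail.

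For the direct failure on \Cref{vq-fail-homo}, the call $\Enqueue{\nodetwo,\cannode}$ enters its default branch exactly because $\nodetwo$ and $\cannode$ are not homogeneous. Tracing back, this call is issued from \BuildClass{} while some node $\node$ of the class under construction is being processed, with $\nodetwo$ a \simSibling{} of $\node$; thus $\node \urel \nodetwo$ and $\node \RelC \cannode$ hold in the current (reachable) state. Using the invariants ${\urel}\subseteq{\DownEquivClos\Que}$ and ${\RelC}\subseteq{\urel^*}$, together with the fact that $\DownEquivClos\Que$ is an equivalence relation, I get $\cannode \mathrel{\DownEquivClos\Que} \node \mathrel{\DownEquivClos\Que} \nodetwo$, hence $\cannode \mathrel{\DownEquivClos\Que} \nodetwo$; since these two nodes are not homogeneous, neither is $\DownEquivClos\Que$.

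The failures on \Cref{vq-visiting-fail} and \Cref{vq-c-neq} are the indirect ones: they do not exhibit a non-homogeneous pair but a cycle in the quotient graph, which by the contrapositive of \Cref{thm:quotient}(1) again prevents $\DownEquivClos\Que$ from being a \focongruence. Here I would use the strict order carried by $\callstack$: if $\callstack=[\cannode_1,\dots,\cannode_K]$ then $\cannode_1\prec\cdots\prec\cannode_K$, where $\cannode\prec\cannode'$ records that the $\DownEquivClos\Que$-class of $\cannode$ is a strict descendant of that of $\cannode'$ in the quotient. On \Cref{vq-visiting-fail} the processed node $\node$ (whose canonic is the stack top $\cannode_K$) has a parent $\nodetwo$ whose canonic $\cannode'$ still satisfies $\visiting{\cannode'}=\true$, hence sits at some position below on the stack; the parent edge makes the class of $\cannode_K$ a descendant of the class of $\cannode'$, while the stack order makes the class of $\cannode'$ a descendant of (or equal to) that of $\cannode_K$, closing a $\prec$-cycle. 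On \Cref{vq-c-neq} the \simSibling{} $\nodetwo$ of $\node$ has a defined canonic $\cannode'\neq\cannode_K$; since $\node\urel\nodetwo$ forces $\cannode_K \mathrel{\DownEquivClos\Que} \cannode'$ (the same true class), and since $\cannode'$ must still satisfy $\visiting{\cannode'}=\true$---completed classes being already closed under $\urel$, so a finished $\cannode'$ would have forced $\node$'s canonic to be $\cannode'$---the stack order places the class of $\cannode'$ strictly below that of $\cannode_K$, and equality of the two classes turns this descent into a cycle.

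I expect the two indirect cases to be the crux, and in particular the rigorous bridge between the \emph{operational} order $\prec$ on $\callstack$ and the \emph{structural} notion of being a child in the quotient graph: this rests on the call-stack ordering lemma and on the reachable-state invariants linking $\urel$, $\RelC$ and $\DownEquivClos\Que$, plus---for \Cref{vq-c-neq}---the closure of a completed class under $\urel$, which is exactly what excludes the alternative that $\cannode'$ has already finished building and thus forces it onto the stack. Once all three cases deliver ``$\DownEquivClos\Que$ is not a \focongruence'', assembling them and invoking \Cref{prop:univfirst} to conclude the non-existence of a \core sharing equivalence containing $\Que$ is routine.
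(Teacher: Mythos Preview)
Your proposal is correct and follows essentially the same three-case decomposition as the paper's proof: the direct inhomogeneity on \Cref{vq-fail-homo}, and the two indirect cases on \Cref{vq-visiting-fail} and \Cref{vq-c-neq} handled by exhibiting a cycle up to $\DownEquivClos\Que$. The only packaging difference is that you invoke \Cref{thm:quotient}(1) (acyclicity up to a \focongruence) as a black box, whereas the paper's appendix re-derives the cycle contradiction through the dedicated staircase lemmas (the call-stack ordering $\cannode_1\prec\cdots\prec\cannode_K$ and the fact that, if $\DownEquivClos\Que$ were homogeneous, no two distinct $\cannode_i,\cannode_j$ on such a chain can be $\DownEquivClos\Que$-related, itself proved via a ``No Triangles'' argument). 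Your handling of the subcase split in \Cref{vq-c-neq}---ruling out that $\BuildClass{\cannode'}$ has already returned because a completed class is closed under $\urel$ and would have forced $\myc\node=\cannode'$---is exactly what the paper does, so there is no gap there either.
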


% !TEX root = main.tex
\subsection{Linearity} \label{sect:linearity}

In this section we show that \Cref{alg:yes-queue-sharing-check} always terminates, and it does so in time linear in the size of the \ladag{} and the initial query.

\paragraph{Low-level assumptions.} In order to analyse the complexity of the algorithm we have to spell out some details about an hypothetical implementation on a RAM of the data structures used by the \HomogeneityCheckLower. %Some of the data structures could be avoided ---and our concrete implementation avoids them--- but the approach described here is easier to analyse, complexity-wise.
 \begin{itemize}
  \item \emph{Nodes}: since \Cref{vq-main-loop} of the algorithm needs to iterate on all nodes of the \ladag{}, we assume an array of pointers to all the nodes of the graph.
  \item \emph{Directed edges}: these edges---despite being directed---have to be traversed in both directions, typically to recurse over the parents of a node. We then assume that every node has an array of pointers to its parents.
  \item \emph{Undirected query edges}: query edges are undirected and are dynamically created; in addition, the algorithm needs to iterate on all \simSibling{s} of a node. In order to obtain the right complexity, every node simply maintains a linked list of its \simSibling{s}, in such a way that when a new undirected edge $(\node,\nodetwo)$ is created, then $\node$ is pushed on the list of \simSibling{s} of $\nodetwo$, and $\nodetwo$ on the one of $\node$.
  \item \emph{Canonical assignment} is obtained by a pointer to a node (possibly $\cundefined$) in the data structure for nodes.
  \item \emph{Building flags} are just implemented via a boolean on each node.
  \item \emph{Queues} do not need to be recorded in the data structure for nodes, as they can be equivalently coded as local variables. 
 \end{itemize}

Let us call \emph{atomic} the following operations performed by the check: finding the first node, finding the next node given the previous node, finding the first parent of a node, finding the next parent of a node given the previous parent, checking and setting canonics, checking and setting building flags, getting the next query edge on a given node, traversing a query edge, adding a query edge between two nodes, pushing to a queue, and popping an element off of a queue.

\begin{lemma}[Atomic operations are constant]
The atomic operations of the \HomogeneityCheckLower{} are all implementable in constant time on a RAM.
\end{lemma}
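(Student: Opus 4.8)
The plan is to go through each of the atomic operations listed just before the lemma statement and exhibit, for each one, a concrete RAM data structure realizing it in $O(1)$ time. The operations fall naturally into a few groups according to the data structures named in the ``Low-level assumptions'' paragraph, so I would organize the argument that way rather than treating all twelve operations in isolation.

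First I would handle \emph{node traversal} (finding the first node, and finding the next node given the previous one). The assumption is that we keep an array of pointers to all nodes; with an array, returning the element at index $0$ and advancing from index $i$ to index $i+1$ are both single array-indexing operations, hence constant time on a RAM. Next, the \emph{parent traversal} operations (first parent of a node, next parent given the previous one) are handled identically: since each node stores an array of pointers to its parents, these reduce to indexing into that per-node array, again $O(1)$. For the \emph{canonic} and \emph{building-flag} operations (checking and setting each), the canonic assignment is a single pointer field on the node and the building flag is a single boolean field; reading or writing a fixed-width field of a record at a known address is a constant-time RAM memory access, so all four of these are immediate.

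The slightly more delicate group is the \emph{query-edge} operations, because query edges are undirected, dynamically created, and a multiset. Here I would invoke the stated implementation: each node keeps a \emph{linked list} of its $\sim$-siblings. Getting the next query edge on a node is advancing one step along that linked list, which is a pointer dereference, so $O(1)$; traversing a query edge is following the stored sibling pointer, again $O(1)$. Adding a query edge $(\node,\nodetwo)$ requires pushing $\node$ onto $\nodetwo$'s list and $\nodetwo$ onto $\node$'s list; since a linked-list push is a constant-time operation (allocate a cell, splice it at the head) and we do exactly two of them, the whole insertion is $O(1)$. Crucially, because the representation is a linked list of occurrences rather than a set, the multiset semantics—multiple occurrences and loops—are respected without any search or deduplication, so no hidden logarithmic or linear factor creeps in. Finally, the \emph{queue} operations (push, pop) are constant time for any standard queue implementation, and the paper notes queues can be kept as local variables, so no special care is needed.

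The \textbf{main obstacle} I anticipate is not any single operation but the query-edge case: one must be careful that the undirected, multiset nature of the $\sim$-edges does not secretly force a membership test or a duplicate check that would break constant time. The resolution is precisely the choice of a linked list of sibling occurrences (as opposed to, say, a hash set or a sorted structure), together with the observation from the algorithm's design that edges are only ever added and iterated, never deleted or searched by value. Once that representation is fixed, every atomic operation is manifestly a bounded number of array indexings, field accesses, and pointer splices, each of which is a unit-cost RAM instruction, and the lemma follows by simply collecting these per-operation observations.
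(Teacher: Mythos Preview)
Your proposal is correct and follows exactly the approach the paper intends: the paper states this lemma without proof, treating it as immediate from the ``Low-level assumptions'' paragraph that fixes the data structures (arrays for nodes and parents, a pointer and a boolean per node, linked lists of $\sim$-siblings, and standard queues). Your write-up is simply a careful unpacking of why those choices make each listed operation $O(1)$, which is precisely what the paper leaves to the reader.
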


Termination and linearity of the algorithm are proved via a global estimation of the number of transitions in a program run. The difficult part is to estimate the number of transitions executing lines of \BuildClass, since it contains multiple nested loops. First of all, we prove that in every program run \BuildClass is called at most once for each node. Then, also the body of the while loop on \Cref{vq-stack-while} is executed in total at most once for each node, because we prove that in a program run each node is enqueued at most once. Since the parents of a node do not change during a program run, the loop on \Cref{vq-parents} is not problematic. As for the loop on \Cref{vq-siblings} on the $\sim$neighbours of a node $\node$, we prove that the $\sim$neighbours of $\node$ do not change after the parents of $\node$ are visited. Finally, recall that the algorithm parsimoniously propagates query edges only between a node and its canonic: as a consequence, in every reachable state $|{\xundirected}|$ is linear in the size of $\Que$ and the number of nodes in the \ladag{}.

Let $\size{\xgraph}$ denote the size of a \ladag{} $\xgraph$,
\ie{} the number of nodes and edges of $\xgraph$. Then:

\begin{proposition}[Linear termination]\label{prop:blind-is-linear}
  \Copy{prop:blind-is-linear}{Let $\State_0$ be an initial state of the algorithm, with a query $\Que$ over a \ladag{} $\xgraph$. Then the \HomogeneityCheckLower{} terminates in a number of transitions linear in $\size{\xgraph}$ and $\size{{\Que}}$.}
\end{proposition}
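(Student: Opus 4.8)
The plan is to bound the total number of transitions in any program run by $O(\size{\xgraph} + \size{\Que})$, which yields both termination and linearity, since by the preceding lemma every atomic operation runs in constant time. The counting rests on two monotonicity facts about reachable states that I would establish (or cite) first: once the canonic pointer $\cS{}{\node}$ of a node becomes defined it is never changed, and a node is pushed onto a stack at most once during an entire run. From these I derive the backbone \emph{at-most-once} lemmas. First, $\BuildClass{\node}$ is invoked at most once per node: the only calls (\Cref{vq-build-class-one} and \Cref{vq-build-class-two}) are guarded by ``$\pointer{\node}$ undefined'', and $\BuildClass$ sets $\cS{}{\node}$ on \Cref{vq-can-set-one} before it can recurse, so a second call is impossible. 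Second, $\Enqueue{\node,\cannode}$ is called at most once per node $\node$, for the analogous reason: it is guarded by ``$\pointer{\node}$ undefined'' on \Cref{vq-enqueue-one} and sets $\cS{}{\node}$ on \Cref{vq-can-set-two}. Consequently every node is pushed at most once, either as the seed $\cannode$ of its own class (\Cref{vq-stack-set}) or through the single $\Enqueue$ targeting it (\Cref{vq-enqueue-push}), so the while loop at \Cref{vq-stack-while}, summed over all $\BuildClass$ activations, performs at most one pop per node, i.e. $O(\size{\xgraph})$ iterations in total.

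Next I charge the two inner loops of the while body for a fixed popped node $\node$. The parent loop at \Cref{vq-parents} runs once per parent of $\node$; since the parent relation is static along a run, summing over the (at most one) processing of each node yields the total number of directed edges, i.e. $O(\size{\xgraph})$, and each iteration performs either a constant-time check or a recursive $\BuildClass$ already charged above. The sibling loop at \Cref{vq-siblings} runs once per current $\sim$neighbour of $\node$; the delicate point is that the multiset of $\sim$neighbours is \emph{not} static, since $\Enqueue$ creates new query edges on the fly (\Cref{vq-edges1}, \Cref{vq-edges2}). Here I would invoke the invariant, proved among the properties of program runs, that once the parent loop for $\node$ (inside the unique processing of $\node$) has completed, no further $\sim$edge incident to $\node$ is ever created. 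The reason is structural: every $\sim$edge touching $\node$ is produced by an $\Enqueue$ that propagates from a \emph{parent} of $\node$, namely an edge between the children of an enqueued node and the children of its canonic, and the top-down discipline of recursing on all parents before handling siblings guarantees that all parents of $\node$ have been fully processed before $\node$'s sibling loop runs. Granting this, the sibling loop of $\node$ iterates exactly once over a frozen neighbour set, so its cost summed over all nodes is at most twice the number of query edges.

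It remains to bound the number of query edges. Initially the multiset of query edges is exactly $\Que$, contributing $\size{\Que}$. New edges are created only inside $\Enqueue$, which creates at most two per call (\Cref{vq-edges1}, \Cref{vq-edges2}) and, by the at-most-once lemma, is called at most once per node; hence the total number of query edges ever created is at most $\size{\Que} + 2\,\size{\xgraph}$, i.e. $O(\size{\xgraph} + \size{\Que})$. This is precisely the payoff of propagating edges only from the canonic to each class member rather than between all pairs. Collecting the pieces: the outer loop at \Cref{vq-main-loop} contributes $O(\size{\xgraph})$, the while iterations $O(\size{\xgraph})$, the parent loops $O(\size{\xgraph})$, the sibling loops $O(\size{\xgraph} + \size{\Que})$, and the $\Enqueue$ calls $O(\size{\xgraph})$ with constant work each. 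Since every atomic operation is constant time, the total number of transitions, and thus the running time, is $O(\size{\xgraph} + \size{\Que})$; in particular the algorithm terminates.

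The main obstacle is exactly the invariant that the $\sim$neighbour multiset of a node is frozen by the time its sibling loop executes: this is the one place where the top-down ``parents first'' exploration genuinely does the work, guaranteeing that classes are never extended or merged after being processed, and its proof requires careful reasoning about the interleaving of recursive $\BuildClass$ calls with the dynamic creation of query edges. Everything else, namely the two at-most-once lemmas and the summations, is routine amortized bookkeeping once the monotonicity of canonic pointers is in place. I would therefore isolate and prove this freezing invariant, together with canonic monotonicity, as part of the general properties of program runs, and reduce the linearity proof to the charging argument above.
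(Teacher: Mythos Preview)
Your proposal is correct and follows essentially the same approach as the paper: the same at-most-once lemmas for \BuildClass{} and \Enqueue{} (grounded in the monotonicity of canonic pointers), the same bound on the total number of query edges ($\size{\Que}+O(\size{\xgraph})$ via the single-call bound on \Enqueue), and the same identification of the ``freezing'' of $\sim$neighbours after the parent loop as the key invariant enabling the charging argument. The paper packages these as separate propositions and then sums exactly as you do; your sketch of why freezing holds is on target, though the formal proof also needs to rule out the case where the \emph{canonic} $\cannode'$ (not just the enqueued node $\nodetwo$) is a parent of $\node$.
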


% !TEX root = main.tex
\section{The \NameCheck}
Our second algorithm (\Cref{alg:name-check}) takes in input the output of the \HomogeneityCheckLower, that is, a \core sharing equivalence on a \ladag{} $\xgraph$ represented via canonic edges, and checks whether the $\VAR$-nodes of $\xgraph$ satisfy the variables conditions for a sharing equivalence---free variable nodes at line 7, and bound variable nodes at line 9.

The \NameCheckLower{} is based on the fact that to compare a node with all those in its class it is enough to compare it with the canonical representative of the class---note that this fact is used twice, for the $\VAR$-nodes and for their binders. The check fails in two cases, either when checking a free variable node (in this case $\DownEquivClos\Que$ is not an open relation) or when checking a bound variable node (in this case $\DownEquivClos\Que$ is not closed under \RuleBV).

% !TEX root = main.tex
\begin{algorithm}
%  \label{algorithm-lambda-check}
 \SetArgSty{textrm}
 \SetNlSty{}{}{}
 \DontPrintSemicolon
 \SetAlgoLined
 % \DontPrintSemicolon

 \KwData{$\pointer\cdot$ representation of $\DownEquivClos\Que$}
 \KwResult{is $\DownEquivClos\Que$ a \hocongruence?}
 \BlankLine

 \SetKwProg{mymain}{Procedure}{}{}

 \SetKw{and}{and}
 \SetKw{nott}{not}

 \mymain{\mainLC{}}{
  \nl \label{alg:check_var_nodes}\ForEach{$\VAR$-node $\varnode$}{
  $\varnodetwo \leftarrow \pointer\varnode$\;
   \If{$\varnode\neq\varnodetwo$}{
   \nl \lIf{$\binder\varnode$ {\normalfont\textbf{or}} $\binder{\varnodetwo}$ {\normalfont\textbf{is}} $\cundefined$ \label{alg:propagate.failUno} \\}%
    {\FAIL\tcp*[f]{free $\VAR$-nodes}}%
   \lElseIf{$\pointer{\binder\varnode} \neq \pointer{\binder \varnodetwo}$}{\label{alg:propagate.failDue}\FAIL%
   \tcp*[f]{bound $\VAR$-nodes}}%
   }
  }
 }
 \caption{\NameCheck}
 \label{alg:name-check}
\end{algorithm}

\begin{theorem}[Correctness \& completeness of the \NameCheckLower]\label{thm:soundness}~\\
\Copy{soundness}{Let $\Que$ be a query over a \ladag{} $\xgraph$ passing the \HomogeneityCheckLower, and let $\canS{}$ be the canonic assignment produced by that check.
 \begin{itemize}
  \item\emph{(Completeness)} If the \NameCheckLower{} fails then there are no \hocongruences containing $\Que$,
  \item\emph{(Correctness)} otherwise $\eqc$ is the smallest \hocongruence containing $\Que$.
 \end{itemize}
 Moreover, the \NameCheckLower{} clearly terminates in time linear in the size of $\xgraph$.}
\end{theorem}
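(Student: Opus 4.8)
The plan is to show that the \NameCheckLower{} succeeds precisely when the spreaded query $\DownEquivClos\Que$ satisfies the two conditions that upgrade a \core sharing equivalence into a \hocongruence: the \emph{Open} condition on free variable nodes and closure under \RuleBV{} for bound variable nodes (\Cref{def:focong}). Throughout I can assume the \HomogeneityCheckLower{} has succeeded, so by \Cref{prop:correctness-body} the relation $\DownEquivClos\Que$ is already a \core sharing equivalence, it is \emph{homogeneous}, and $\eqc=\DownEquivClos\Que$ (\emph{Succint representation}); in particular $\varnode \DownEquivClos\Que \varnodetwo$ holds iff $\varnode$ and $\varnodetwo$ share the same canonic. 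I will also use that in a final state of the first algorithm the canonic assignment $\cS{}{\cdot}$ is total (the main loop of \Cref{alg:yes-queue-sharing-check} assigns a canonic to every node), so that all the lookups performed by \Cref{alg:name-check} are well defined, and that binders of bound variable nodes are defined.

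For \emph{correctness} (no failure) I would argue in two steps. For \emph{Open}: homogeneity forces the class of a free variable node to contain only free variable nodes, whose binder is undefined; hence not firing \Cref{alg:propagate.failUno} means every free variable node is its own canonic, so any two free variable nodes in the same class coincide. For \RuleBV: given bound $\varnode \DownEquivClos\Que \varnodetwo$ with common canonic $\cannode$, not firing \Cref{alg:propagate.failDue} on $\varnode$ and on $\varnodetwo$ yields $\binder\varnode \DownEquivClos\Que \binder\cannode$ and $\binder\varnodetwo \DownEquivClos\Que \binder\cannode$ (the degenerate subcases $\varnode=\cannode$ or $\varnodetwo=\cannode$ being trivial), and transitivity of $\DownEquivClos\Que$ closes the triangle to $\binder\varnode \DownEquivClos\Que \binder\varnodetwo$. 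This makes $\DownEquivClos\Que$ a \hocongruence containing $\Que$; \Cref{prop:univhigher} then upgrades it to the \emph{smallest} such, and since $\eqc=\DownEquivClos\Que$ this is exactly the correctness claim.

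For \emph{completeness} (failure) I would read off the violated condition from the failing line. A failure on \Cref{alg:propagate.failUno} gives a variable node $\varnode$ with $\cS{}{\varnode}=\cannode\neq\varnode$ and an undefined binder; by homogeneity $\varnode$ and $\cannode$ are then distinct free variable nodes with $\varnode \DownEquivClos\Que \cannode$, contradicting \emph{Open}. A failure on \Cref{alg:propagate.failDue} gives a bound variable node $\varnode$ with canonic $\cannode\neq\varnode$ and $\cS{}{\binder\varnode}\neq\cS{}{\binder\cannode}$, so $\binder\varnode$ and $\binder\cannode$ lie in different classes although $\varnode \DownEquivClos\Que \cannode$, contradicting \RuleBV. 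In either case $\DownEquivClos\Que$ is not a \hocongruence, and the contrapositive of \Cref{prop:univhigher} shows that no \hocongruence contains $\Que$ at all. Linearity is then immediate: a single pass over the variable nodes, with a constant number of constant-time canonic and binder lookups per node, runs in time linear in $\size{\xgraph}$.

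The main obstacle---and the only conceptually delicate point---is justifying that the algorithm's per-node comparisons \emph{against the canonical representative only} faithfully capture the two \emph{universally quantified} requirements (\emph{Open} over all pairs of free variable nodes, \RuleBV{} over all related pairs of bound ones). The hard part is the \RuleBV{} direction, where I must lean on transitivity of $\DownEquivClos\Que$ together with the \emph{Succint representation} $\eqc=\DownEquivClos\Que$ to reduce ``all pairs in a class'' to ``each element versus its canonic''; before invoking transitivity I should dispatch the degenerate cases where a node is its own canonic and confirm, via homogeneity and totality of $\cS{}{\cdot}$, that the relevant binders and canonics are all defined.
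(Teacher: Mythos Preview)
Your proposal is correct and follows essentially the same approach as the paper. The only cosmetic difference is packaging: the paper phrases the argument in terms of the canonic relation $\can$ (showing it is open and closed under \RuleBV{} ``up to'' $\eqc$) and then lifts these properties to $\eqc=\can^*$ via the general closure lemmas \Cref{prop:t8whof} and \Cref{closed-bv-star}, whereas you work directly with $\eqc=\DownEquivClos\Que$ and unfold the same transitivity-through-the-canonic argument inline; your explicit appeal to \Cref{prop:univhigher} for minimality is also something the paper leaves implicit.
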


Composing \Cref{prop:correctness-body}, \Cref{prop:completeness-body}, \Cref{prop:blind-is-linear}, and \Cref{thm:soundness}, we obtain the second main result of the paper.

\begin{theorem}[Sharing equality is linear]\label{thm:final}
\Copy{final}{Let $\Que$ be a query over a \ladag{} $\xgraph$. 
The sharing equality algorithm obtained by combining \Cref{alg:yes-queue-sharing-check} and \Cref{alg:name-check} runs in time linear in the sizes of $\xgraph$ and $\Que$, and it succeeds if and only if there exists a \hocongruence containing $\Que$. Moreover, if it succeeds, it outputs a concrete (and linear) representation of the smallest such \hocongruence.}
\end{theorem}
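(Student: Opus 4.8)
The plan is to obtain \Cref{thm:final} as a direct composition of the four results already established for the two phases: \Cref{prop:correctness-body} and \Cref{prop:completeness-body} for the correctness and completeness of the \HomogeneityCheckLower{} (\Cref{alg:yes-queue-sharing-check}), \Cref{prop:blind-is-linear} for its linearity, and \Cref{thm:soundness} for the correctness, completeness, and linearity of the \NameCheckLower{} (\Cref{alg:name-check}). The combined algorithm runs the \HomogeneityCheckLower{} first and, only if it reaches a final state, runs the \NameCheckLower{} on the canonic assignment $\canS{}$ produced by the first phase. Consequently the combined run fails exactly when one of the two phases fails and succeeds exactly when both do, so I would prove the ``if and only if'' by a case analysis on which phase, if any, fails. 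The bridging fact that makes the whole composition seamless is that every \hocongruence{} is in particular a \core sharing equivalence (by \Cref{def:focong}); I would make this inclusion explicit at every point where a statement about \core sharing equivalences is used to conclude something about \hocongruence{s}.

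For the forward direction I would argue by contraposition. Suppose a \hocongruence{} containing $\Que$ exists. Then a \core sharing equivalence containing $\Que$ exists, so by \Cref{prop:completeness-body} the \HomogeneityCheckLower{} cannot fail and reaches a final state; on that state the \NameCheckLower{} is applicable, and by the completeness part of \Cref{thm:soundness} it can fail only when no \hocongruence{} contains $\Que$. Since one does, the \NameCheckLower{} succeeds, and hence so does the combined algorithm.

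For the backward direction I would trace the two ways the algorithm can fail. If the \HomogeneityCheckLower{} fails, \Cref{prop:completeness-body} yields that no \core sharing equivalence contains $\Que$, whence no \hocongruence{} contains $\Que$ either, by the inclusion above. If instead the \HomogeneityCheckLower{} succeeds but the \NameCheckLower{} fails, the completeness part of \Cref{thm:soundness} again gives that no \hocongruence{} contains $\Que$. Thus whenever the algorithm fails there is no \hocongruence{} containing $\Que$, which is the contrapositive of the backward implication. When instead the algorithm succeeds, the correctness part of \Cref{thm:soundness} tells us that $\eqc$ is the smallest \hocongruence{} containing $\Que$; combining this with the succinct-representation clause of \Cref{prop:correctness-body}, which identifies $\eqc$ with $\DownEquivClos\Que$ through the canonic edges, establishes the ``moreover'' claim that the algorithm outputs a concrete and linear representation of the smallest such \hocongruence. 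Finally, linearity follows by summing the two linear bounds: \Cref{prop:blind-is-linear} bounds the first phase in $\size{\xgraph}$ and $\size{\Que}$, and the ``moreover'' of \Cref{thm:soundness} bounds the second in $\size{\xgraph}$, so the total running time is linear in $\size{\xgraph}$ and $\size{\Que}$.

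I do not expect a genuine obstacle here, since the theorem is a corollary of work already done; the only point to get right is the bookkeeping of the sequencing. The hypothesis of \Cref{thm:soundness} requires the query to have passed the \HomogeneityCheckLower{}, so the case analysis must confine all phase-two reasoning to runs in which the first phase has already produced a final state together with its canonic assignment. Keeping the three failure/success cases cleanly separated, and invoking the \hocongruence{}-into-\core-sharing-equivalence inclusion precisely where \Cref{prop:completeness-body} is used, is what makes the composition rigorous rather than merely plausible.
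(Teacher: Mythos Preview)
Your proposal is correct and follows essentially the same approach as the paper, which proves the theorem in one line by ``composing \Cref{prop:correctness-body}, \Cref{prop:completeness-body}, \Cref{prop:blind-is-linear}, and \Cref{thm:soundness}.'' You have simply spelled out the case analysis and the bridging observation (that every \hocongruence{} is a \core sharing equivalence) that the paper leaves implicit.
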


Finally, composing with the sharing equality theorem (\Cref{thm:alpha_then_cong_scoped-body}) one obtains that the algorithm indeed tests the equality of the readbacks of the query, as expected.

% !TEX root = main.tex
\section{Conclusions}

We presented the first linear-time algorithm to check
 the sharing equivalence of \ladag{s}.
Following our development of the theory of sharing equality, we split the algorithm in two parts: a first part performing a check on the DAG structure of \ladag{s}, and the second part checking the additional requirements on variables and scopes. The paper is accompanied by a technical report with formal proofs of correctness, completeness, and termination in linear time.

The motivation for this work stemmed from our previous works on the evaluation of \lat{s}: abstract machines can reduce a \lat{} to normal form in time bilinear in the size of the term and the number of evaluation steps, if \lat{s} are represented in memory as \ladag{s}. When combining bilinear evaluation through abstract machines and the sharing equality algorithm presented here, one obtains a bilinear algorithm for conversion. As a consequence, $\alpha\beta$-conversion can be implemented with only bilinear overhead.

\begin{acks}
This work has been partially funded by the
ANR JCJC grant COCA HOLA (ANR-16-CE40-004-01)
and the \grantsponsor{eutypes}{COST Action EUTypes CA15123}{https://eutypes.cs.ru.nl/} 
STSM\#\grantnum{eutypes}{43345}.
\end{acks} 

% \bibliographystyle{ACM-Reference-Format}
% \bibliography{biblio-short}
%%% -*-BibTeX-*-
%%% Do NOT edit. File created by BibTeX with style
%%% ACM-Reference-Format-Journals [18-Jan-2012].

% Appendix 
\OnlyInFinalOrTechnicalReport{}{%
\clearpage
\appendix
% !TEX root = ../main.tex

% Definition of named alpha equality
% \input{99-01-alpha}

% Theory of Sharing Equality
% !TEX root = ../main.tex

\section{Relations on Graphs}\label{sect:relations-on-graphs}

In this section we are going to introduce several kinds of binary relations on nodes of the \ladag. Some are standard (like \emph{bisimulations}, \Cref{def:bisimulation}); others (like \emph{propagated} relations, \Cref{def:dc}) are auxiliary concepts that we will later connect to the readback from \ladag{s} to \lat{s} (\Cref{def:unfold-lambda}).

% {\color{red}
% \paragraph{Plan: (remove)}
% \begin{itemize}
%   \item \Cref{subsect:bisimulations}:
%    \emph{bisimulations} (\Cref{def:bisimulation}), \emph{homogeneous} relations (\Cref{def:homogeneous}).
%   \item \Cref{subsect:downward}:
%     \emph{propagated} relations (\Cref{def:dc}), \emph{paths} (\Cref{def:paths}), \emph{unfolding} $\DownClos\Rel$ (\Cref{def:unfolding}).
%   \item \Cref{bis-unf}: \emph{valid} nodes (\Cref{def:valid-node}) and relations (\Cref{def:valid-rel}), unfolding is smallest bisimulation (\Cref{uyvzupoicnui}).
%   \item \Cref{subsect:rst}: $\Rel^*$ (\Cref{def:rst-clos}), $\DownEquivClos\Rel$ (\Cref{def:dec}),
%    $\DownClos\Que$ bisimulation iff $\DownEquivClos\Que$ bisimulation (\Cref{final}).
%   \item \Cref{subsect:shar-eq}: \emph{open} relations (\Cref{def:open}), \emph{sharing equivalences} (\Cref{def:shar-eq}), open bisimulation iff sharing equivalence (\Cref{ktuvydmjuylm}).
%   \item \Cref{sect:lambda-calculus}: readback to \lat{s} (\Cref{def:unfold-lambda}, equality of terms and bisimulations/sharing equivalences (\Cref{downclos-iff-alpha} and \Cref{downequivclos-iff-alpha}).
% \end{itemize}}

\subsection{Bisimulations}\label{subsect:bisimulations}

% We define bisimulations on a \ladag{} in a slightly non-standard (but clearly equivalent) way as binary relations that only relate homogeneous \lan{s} and that are closed under bisimulation rules.

\begin{definition}[\Core Bisimulation]\label{def:tg-bisimulation}
  Let $\xgraph$ be a \tg{}, and $\Rel$ be a binary relation over the nodes of $\xgraph$.
  $\Rel$ is a \emph{\core bisimulation} if and only if it is homogeneous and closed under the rules \RuleAppl\, \RuleAbs\, \RuleAppr{} of \Cref{all-inference-rules}.
\end{definition}

\begin{definition}[Bisimulation]\label{def:bisimulation}
  Let $\xgraph$ be a \ladag{}, and $\Rel$ be a binary relation over the nodes of $\xgraph$.
  $\Rel$ is a \emph{bisimulation} if and only if it is homogeneous and closed under the rules \RuleAppl\, \RuleAbs\, \RuleAppr\, \RuleBV{} of \Cref{all-inference-rules}.
\end{definition}

\begin{notation}
  We denote by $\Bis$ a generic bisimulation.
\end{notation}

\subsection{Propagation \texorpdfstring{$\Downarrow$}{}}
\label{subsect:downward}

\begin{definition}[Propagated Relation]\label{def:dc}
  We call \emph{propagated} a relation that is closed under
  the rules \RuleAppl\, \RuleAbs\, \RuleAppr\, of \Cref{all-inference-rules}.
\end{definition}

The following auxiliary propositions (\Cref{propagate} and \Cref{w-l-same-paths}) connect propagated relations and paths on graphs.

\begin{proposition}[Trace Propagation]\label{propagate}
  Let $\Rel$ be a propagated relation over the nodes of a \tg{} $\xgraph$.
  Let $\node,\nodex$ be nodes such that $\node \Rel \nodex$, and let $\tra$ be a trace such that $\Path\tra\node\node'$ and $\Path\tra\nodex\nodex'$.
  Then $\node' \Rel \nodex'$.
\end{proposition}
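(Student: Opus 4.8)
The plan is to argue by induction on the trace $\tra$ (equivalently, on the derivation of the path judgement $\Path\tra\node\node'$ of \Cref{def:paths}), keeping $\nodex$, $\nodex'$ and the hypothesis $\node\Rel\nodex$ as parameters. The whole point is that a propagated relation (\Cref{def:dc}) is, by definition, closed under the three downward rules \RuleAppl, \RuleAppr, \RuleAbs; so it suffices to peel one direction off $\tra$ at a time and apply the matching rule. The key enabling observation is that the two paths share the \emph{same} trace $\tra$: its leading direction uniquely determines which clause of \Cref{def:paths} produced both judgements, and therefore forces the intermediate node reached on the $\node$-side and the one reached on the $\nodex$-side to have the same shape.

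In the base case $\tra=\pempty$ both paths are empty, so $\node'=\node$ and $\nodex'=\nodex$, and $\node'\Rel\nodex'$ is exactly the hypothesis. For the inductive step I would split on the leading direction of $\tra$ and invert both path judgements at once. If $\tra=\pcons\pdown{\tra'}$, then the node reached from $\node$ by $\tra'$ must be an abstraction $\nabs{\node'}$ and, likewise, the one reached from $\nodex$ by $\tra'$ an abstraction $\nabs{\nodex'}$ (otherwise the two $\pdown$-paths could not exist); the induction hypothesis for the shorter trace $\tra'$ yields $\nabs{\node'}\Rel\nabs{\nodex'}$, and closure under \RuleAbs gives $\node'\Rel\nodex'$. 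The two cases $\tra=\pcons\pleft{\tra'}$ and $\tra=\pcons\pright{\tra'}$ are handled identically: there the intermediate nodes are forced to be applications $\napp{\node'}{\node_2}$ and $\napp{\nodex'}{\nodex_2}$, the induction hypothesis gives $\napp{\node'}{\node_2}\Rel\napp{\nodex'}{\nodex_2}$, and \RuleAppl (respectively \RuleAppr) projects onto the left (respectively right) children.

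The only delicate point I foresee is this simultaneous inversion, that is, making rigorous the claim that a common trace pins down the last clause of both path derivations, so that the premise of the relevant propagation rule is actually an instance related by $\Rel$. It is worth emphasising that no homogeneity assumption on $\Rel$ is needed: the mere existence of $\Path\tra\node\node'$ and $\Path\tra\nodex\nodex'$ already guarantees that at every matching step the two intermediate nodes are simultaneously applications or simultaneously abstractions, which is precisely the premise shape demanded by \RuleAppl, \RuleAppr, and \RuleAbs. Beyond that observation the argument is a routine structural induction, with no genuinely hard step.
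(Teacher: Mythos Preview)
Your proposal is correct and follows essentially the same approach as the paper's own proof: a structural induction on the trace $\tra$, with the base case trivial and each inductive case handled by inverting both path judgements simultaneously (the shared leading direction forces both intermediate nodes to be abstractions, respectively applications), applying the induction hypothesis to the shorter trace, and then using the matching closure rule \RuleAbs, \RuleAppl, or \RuleAppr. Your side remark that homogeneity of $\Rel$ is not needed here is also accurate.
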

\begin{proof}
       We prove $\node' \Rel \nodex'$ by structural induction on the trace $\tra$:
       \begin{itemize}
         \item Case $\tra=\pempty$. Clearly $\node' = \node$ and $\nodex'=\nodex$, and we use the hypothesis $\node \Rel \nodex$.
         \item Case $(\pcons\pleft\tra)$.
         Assume
         $\Pathp{\pcons\pleft\tra}\node{\node'}$ and
         $\Pathp{\pcons\pleft\tra}\nodex{\nodex'}$.
         By inversion, there exist $\node'',\nodex''$ such that
         $\Path\tra\node{\napp{\node'}{\node''}}$ and $\Path\tra\nodex{\napp{\nodex'}{\nodex''}}$.
         By \ih{} $\napp{\node'}{\node''} \Rel \napp{\nodex'}{\nodex''}$, which implies $\node' \Rel \nodex'$ by \RuleAppl.
         \item Case $(\pcons\pright\tra)$. Symmetric to the case above.
         \item \emph{Case} $(\pcons\pdown\tra)$.
          Assume $\Path{\pcons\pdown\tra}\node{\node'}$
          and $\Path{\pcons\pdown\tra}\nodex{\nodex'}$.
          By inversion, $\Path\tra\node{\nabs{\node'}}$ and $\Path\tra\nodex{\nabs{\nodex'}}$.
          By \ih{} $\nabs{\node'} \Rel \nabs{\nodex'}$, which implies
          $\node' \Rel \nodex'$ by \RuleAbs.
          \qedhere
       \end{itemize}
\end{proof}

\begin{proposition}[Trace Equivalence]\label{w-l-same-paths}
  Let $\Rel$ be a \core bisimulation over the nodes of a \tg{} $\xgraph$.
  Let $\node,\nodex$ be nodes such that $\node \Rel \nodex$.
  Then for every trace $\tra$, $\Path\tra\node{}$ if and only if $\Path\tra\nodex{}$.
\end{proposition}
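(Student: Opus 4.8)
The plan is to prove the statement by structural induction on the trace, mirroring the induction used for \Cref{propagate} but now establishing an equivalence rather than transporting the relation along a fixed trace. The key preliminary observation is that a \core bisimulation is in particular a \emph{propagated} relation (\Cref{def:dc}), so \Cref{propagate} is available for $\Rel$: whenever $\node \Rel \nodex$ and both $\Path\tra\node{\node'}$ and $\Path\tra\nodex{\nodex'}$ hold, the endpoints satisfy $\node' \Rel \nodex'$. Combined with homogeneity, this is exactly what lets me transfer the \emph{kind} of an endpoint from one side to the other.

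The base case $\tra = \pempty$ is immediate, since $\Path\pempty\node{}$ and $\Path\pempty\nodex{}$ both hold by the empty-path rule. In the inductive case the trace has the form $\pcons\Dir\tra$, and I prove the two implications separately. Assume $\Path{\pcons\Dir\tra}\node{}$; by inversion of the path judgement there is a node $\node'$ with $\Path\tra\node{\node'}$ whose kind is dictated by $\Dir$: an application node when $\Dir$ is $\pleft$ or $\pright$, an abstraction node when $\Dir$ is $\pdown$. In particular $\Path\tra\node{}$ holds, so the induction hypothesis gives $\Path\tra\nodex{}$, say $\Path\tra\nodex{\nodex'}$. Now \Cref{propagate}, applied to $\node \Rel \nodex$ along $\tra$, yields $\node' \Rel \nodex'$, and homogeneity forces $\nodex'$ to be the same kind of node as $\node'$. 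Hence $\nodex'$ admits the step $\Dir$, and $\Path{\pcons\Dir\tra}\nodex{}$ follows.

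The converse implication is symmetric: from $\Path{\pcons\Dir\tra}\nodex{}$ I recover $\Path\tra\node{\node'}$ through the induction hypothesis, and again invoke \Cref{propagate} with the \emph{same} oriented hypothesis $\node \Rel \nodex$ to deduce $\node' \Rel \nodex'$, whence homogeneity transfers the kind back. I expect the only genuine subtlety to be recognizing that neither ingredient suffices on its own: the induction hypothesis only guarantees that \emph{some} $\tra$-path exists on the other side, while \Cref{propagate} together with homogeneity is what guarantees that its endpoint has the \emph{matching kind}, so that the final direction $\Dir$ is admissible. Notably, no symmetry of $\Rel$ is required, since homogeneity is itself a symmetric predicate and therefore transfers kinds in both directions from the single relation $\node \Rel \nodex$.
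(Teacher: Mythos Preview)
Your proposal is correct and follows essentially the same approach as the paper: structural induction on the trace, inversion to obtain the shorter path and its endpoint, the induction hypothesis to produce a matching shorter path on the other side, then \Cref{propagate} plus homogeneity to transfer the node kind and extend by the final direction. Your explicit remark that no symmetry of $\Rel$ is needed (because homogeneity itself is a symmetric predicate) is a nice clarification of what the paper's ``without loss of generality'' actually rests on.
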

\begin{proof}
  Assume that $\node \Rel \nodex$ holds. We proceed by structural induction on $\tra$:
  \begin{itemize}
    \item \emph{Empty Trace:}
       Clearly $\Path\pempty\node$ and $\Path\pempty\nodex$.

    \item \emph{Trace Cons.} Let $\tra\defeq\pcons d \trax$, and
      assume without loss of generality that $\Path\tra\node{}$. We need to prove that $\Path\tra\nodex{}$.

       By inversion, $\Path\trax\node{\node'}$ for some $\node'$, and by \ih{} $\Path\trax\nodex{\nodex'}$ for some $\nodex'$.
       From \Cref{propagate} we obtain $\node' \Rel \nodex'$, where $\node'$ and $\nodex'$ are homogenous because by hypothesis $\Rel$ is homogeneous. We proceed by cases on $\Dir$:
      \begin{itemize}
        \item Case $\Dir=\pdown$. Then $\nodex' = \nabs{\nodex''}$ for some $\nodex''$. Clearly $\Pathp{\pcons\pdown\trax}{\nodex}{\nodex''}$.
        \item Cases $\Dir = \pleft$ or $\Dir = \pright$. Then $\nodex' = \napp{\nodex_1',\nodex_2'}$ for some $\nodex_1',\nodex_2'$.
        Clearly $\Pathp{\pcons\pleft\trax}{\nodex}{\nodex_1'}$ and $\Pathp{\pcons\pright\trax}{\nodex}{\nodex_2'}$.
      \qedhere
      \end{itemize}
  \end{itemize}
\end{proof}

Given a relation $\Rel$, we define its propagation $\DownClos\Rel$ as the smallest propagated relation containing $\Rel$:

\begin{definition}[Propagation $\DownClos\Rel$]\label{def:unfolding}
  Let $\Rel$ be a binary relation over the nodes of a \tg{} $\xgraph$.
  The propagation $\DownClos\Rel$ of $\Rel$ is obtained by closing $\Rel$ under the rules \RuleAppl\, \RuleAbs\, \RuleAppr\, of \Cref{all-inference-rules}.
\end{definition}

\begin{proposition}[Inversion]\label{downclos-to-paths}
  Let $\Rel$ be a relation over the nodes of a \tg{} $\xgraph$, and $\node,\nodex$ be nodes of $\xgraph$.
  $\node \DownClos\Rel \nodex$ if and only there exists a trace $\tra$ and nodes $\node'$, $\nodex'$ such that $\node'\Rel\nodex'$, $\Path\tra{\node'}\node$, and $\Path\tra{\nodex'}\nodex$.
\end{proposition}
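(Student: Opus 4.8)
The plan is to prove both implications by induction, exploiting the two inductive definitions in play: $\DownClos\Rel$ is the closure of $\Rel$ under \RuleAppl\,\RuleAbs\,\RuleAppr{} (\Cref{def:unfolding}), and the path judgement $\Path{}\cdot\cdot$ is itself defined inductively (\Cref{def:paths}). Note that neither direction uses the structural constraints of \ladag{s} (acyclicity, domination), so the argument stays at the level of \pregraph{s}, matching the hypothesis that $\xgraph$ is a \tg{}.

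For the forward implication I would argue by induction on the derivation of $\node \DownClos\Rel \nodex$, keeping the invariant that the witnessing pair $\node'\Rel\nodex'$ is fixed while only the trace grows. In the base case $\node\Rel\nodex$ one takes $\tra\defeq\pempty$ and $\node'\defeq\node$, $\nodex'\defeq\nodex$, so that the two required paths are the empty ones $\Path\pempty\node\node$ and $\Path\pempty\nodex\nodex$. Each inductive case corresponds to one propagation rule and prepends one direction to the trace: for \RuleAbs{} the conclusion $\node\DownClos\Rel\nodex$ comes from $\nabs\node\DownClos\Rel\nabs\nodex$, the \ih{} yields $\tra,\node',\nodex'$ with $\node'\Rel\nodex'$, $\Path\tra{\node'}{\nabs\node}$, and $\Path\tra{\nodex'}{\nabs\nodex}$, and then the Abstraction clause of \Cref{def:paths} produces $\Path{\pcons\pdown\tra}{\node'}\node$ and $\Path{\pcons\pdown\tra}{\nodex'}\nodex$. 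The cases \RuleAppl{} and \RuleAppr{} are identical, prepending $\pleft$ respectively $\pright$ via the Application clause.

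For the backward implication I would argue by induction on the shared trace $\tra$ (with the endpoint nodes universally quantified and $\Rel$ fixed), inverting both paths at each step. If $\tra=\pempty$, inversion on $\Path\pempty{\node'}\node$ and $\Path\pempty{\nodex'}\nodex$ forces $\node=\node'$ and $\nodex=\nodex'$, so $\node\Rel\nodex$ and hence $\node\DownClos\Rel\nodex$ since $\Rel\subseteq\DownClos\Rel$. If $\tra=\pcons\pdown\trax$, inversion on the two paths produces abstraction parents with bodies $\node$ and $\nodex$, i.e. $\Path\trax{\node'}{\nabs\node}$ and $\Path\trax{\nodex'}{\nabs\nodex}$; the \ih{} gives $\nabs\node\DownClos\Rel\nabs\nodex$, and \RuleAbs{} concludes $\node\DownClos\Rel\nodex$. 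The cases $\pcons\pleft\trax$ and $\pcons\pright\trax$ are symmetric, inverting to application parents and closing with \RuleAppl{} resp. \RuleAppr.

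I expect the only delicate point to be the bookkeeping around the trace convention. Since the Abstraction and Application clauses of \Cref{def:paths} \emph{prepend} the new direction, the head of a trace records the \emph{last} step taken, so ``peeling the head'' in the backward induction really inverts the outermost step of each path derivation. The key observation that makes every case close cleanly is that the two paths carry the \emph{same} trace $\tra$: therefore inverting the common head direction lands on the same kind of parent node on both sides (an abstraction for $\pdown$, an application for $\pleft/\pright$), so the \ih{} applies to the two shortened paths simultaneously and exactly one propagation rule discharges the step, leaving no residual obligations.
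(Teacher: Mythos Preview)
Your proof is correct and the forward direction is exactly the paper's argument. For the backward direction the paper takes a shortcut: since $\node'\Rel\nodex'$ implies $\node'\DownClos\Rel\nodex'$ and $\DownClos\Rel$ is propagated by construction, one can invoke the already-established Trace Propagation lemma (\Cref{propagate}) directly to conclude $\node\DownClos\Rel\nodex$, avoiding the explicit trace induction you spell out. Your argument is essentially that lemma's proof inlined, so the two approaches coincide up to factoring; the paper's version is shorter here only because the work was paid for earlier.
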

\begin{proof}~
  \begin{itemize}
    \item[($\Rightarrow$)]
     Assume $\node \DownClos\Rel \nodex$. We proceed by induction on the inductive definition of $\DownClos\Rel$:
     \begin{itemize}
       \item Base case: $\node \DownClos\Rel \nodex$ because $\node \Rel \nodex$. Conclude by considering $\node'=\node$, $\nodex=\nodex'$, and $\trax=\pempty$.
       \item Cases \RuleAppl:
        assume that $\node \DownClos\Rel \nodex$
        because $\napp{\node}{\node''} \DownClos\Rel \napp{\nodex}{\nodex''}$ for some $\node'', \nodex''$.
        By \ih{} there exists $\tra$ such that
        $\Path\tra{\node'}{\napp{\node}{\node''}}$ and $\Path\tra{\nodex'}{\napp{\nodex}{\nodex''}}$.
        Clearly
        $\Pathp{\pcons\pleft\tra}{\node'}\node$ and
        $\Pathp{\pcons\pleft\tra}{\nodex'}\nodex$.
       \item Case \RuleAppr. Symmetric to the case above.
       \item Case \RuleAbs:
       assume that $\node \DownClos\Rel \nodex$
       because $\nabs\node \DownClos\Rel \nabs\nodex$.
       By \ih{} there exists $\tra$ such that
       $\Path\tra{\node'}{\nabs\node}$ and $\Path\tra{\nodex'}{\nabs\nodex}$.
       Clearly
       $\Pathp{\pcons\pdown\tra}{\node'}\node$,
       and $\Pathp{\pcons\pdown\tra}{\nodex'}\nodex$.
     \end{itemize}
    \item[($\Leftarrow$)]
     Note that $\node\Rel\nodex$ implies $\node\DownClos\Rel\nodex$, and conclude by~\Cref{propagate}.
  \end{itemize}
\end{proof}

\subsubsection{Propagation vs. Bisimulation}\label{bis-unf}

The main theorem of this section is the universality of the propagation $\DownClos\Rel$ (\Cref{uyvzupoicnui}), which states that in order to look for a bisimulation containing a given relation $\Rel$ over a \ladag{}, it suffices to check its propagation $\DownClos\Rel$: whenever a bisimulation containing $\Rel$ exists at all, then $\DownClos\Rel$ is also a bisimulation (as we will see, one actually needs an additional condition on $\Rel$).

Before proving \Cref{uyvzupoicnui} we need some additional propositions and definitions. Many of the following results depends on the following proposition, which shows that well-behaved propagated relations cannot relate both a node and their children:

\begin{proposition}[No Triangles]\label{OMG}
  Let $\Rel$ be a \core bisimulation over the nodes of a \tg{} $\xgraph$.
  Let $\node,\nodex,\nodex'$ nodes of $\xgraph$ and $\tra$ a non-empty trace. If $\node \Rel \nodex$ and $\Path\tra\nodex{\nodex'}$, then $\node \Rel \nodex'$ can not hold.
\end{proposition}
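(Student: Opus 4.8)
The plan is to argue by contradiction. Assume that, in addition to the hypotheses $\node \Rel \nodex$ and $\Path\tra\nodex{\nodex'}$ with $\tra \neq \pempty$, we also have $\node \Rel \nodex'$, and derive a cycle in $\xgraph$. The main obstacle is that a \core bisimulation is in general neither symmetric nor transitive, so one cannot simply combine $\node \Rel \nodex$ and $\node \Rel \nodex'$ into a relation between $\nodex$ and $\nodex'$. Instead I would use the two auxiliary results on propagated relations — Trace Propagation (\Cref{propagate}) and Trace Equivalence (\Cref{w-l-same-paths}), both of which apply since a \core bisimulation is in particular propagated — to manufacture an infinite walk that follows the same trace $\tra$ over and over, and then conclude by finiteness and acyclicity of $\xgraph$.

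Concretely, I would build two sequences of nodes $(a_n)_{n\geq 0}$ and $(b_n)_{n\geq 0}$ maintaining the invariant $a_n \Rel b_n$, $\;a_n \Rel b_{n+1}$, and $\Path\tra{b_n}{b_{n+1}}$. The base case is $a_0 := \node$, $b_0 := \nodex$, $b_1 := \nodex'$, for which the three conditions are exactly the two hypotheses plus the contradiction assumption. For the inductive step, from $a_n \Rel b_n$ and $\Path\tra{b_n}{b_{n+1}}$ Trace Equivalence yields a node $a_{n+1}$ with $\Path\tra{a_n}{a_{n+1}}$; then from $a_n \Rel b_{n+1}$ Trace Equivalence yields a node $b_{n+2}$ with $\Path\tra{b_{n+1}}{b_{n+2}}$. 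Two applications of Trace Propagation along $\tra$ — one from $a_n \Rel b_n$ and one from $a_n \Rel b_{n+1}$ — then give $a_{n+1} \Rel b_{n+1}$ and $a_{n+1} \Rel b_{n+2}$, which is precisely the invariant at $n+1$. This produces an infinite descending walk $b_0, b_1, b_2, \dots$ with $\Path\tra{b_n}{b_{n+1}}$ for every $n$. The genuinely delicate point is this bookkeeping: one must keep the auxiliary witnesses $a_n$ around to re-apply Trace Propagation at each step, precisely because $\Rel$ alone is too weak to transport the relation down $\tra$ without them.

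Finally, since $\xgraph$ has only finitely many nodes, the pigeonhole principle gives indices $i < j$ with $b_i = b_j$. Using the routine fact that paths compose — if $\Path{\tra_1}{x}{y}$ and $\Path{\tra_2}{y}{z}$ then there is a path from $x$ to $z$ whose trace is the concatenation of $\tra_2$ and $\tra_1$, established by a straightforward induction on $\tra_2$ — the segment from $b_i$ to $b_j = b_i$ yields a path $\Path{\tra'}{b_i}{b_i}$ whose trace $\tra'$ consists of $(j-i)$ copies of $\tra$, hence $\tra' \neq \pempty$ because $\tra \neq \pempty$ and $j-i \geq 1$. This contradicts acyclicity of $\xgraph$ (\Cref{def:f-and-a}), so $\node \Rel \nodex'$ cannot hold.
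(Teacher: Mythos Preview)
Your proof is correct and follows essentially the same approach as the paper: both argue by contradiction, use Trace Equivalence and Trace Propagation to maintain the invariant $a_n \Rel b_n$, $a_n \Rel b_{n+1}$, $\Path\tra{b_n}{b_{n+1}}$ (the paper writes this with primes rather than indices), and conclude from the resulting infinite $\tra$-walk on the $b$-side. Your treatment is in fact slightly more explicit at the end, spelling out the pigeonhole and path-concatenation step where the paper simply invokes ``finite and acyclic''.
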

\begin{proof}
  Assume that $\node\Rel\nodex'$, and obtain a contradiction. Intuitively, we are going to show that the trace $\tra$ can be iterated arbitrarily many times starting from $\nodex$, contradicting the hypothesis that the \tg{} is finite and acyclic.

  The contradiction will follow by iterating the following construction:
  \begin{itemize}
  \item From the hypotheses that $\node \Rel \nodex$ and $\Path\tra\nodex{\nodex'}$ it follows from \Cref{w-l-same-paths} that $\Path\tra\node{\node'}$ for some $\node'$.
  From the hypothesis that $\node \Rel \nodex$ together with $\Path\tra\node{\node'}$ and $\Path\tra\nodex{\nodex'}$, it follows that $\node' \Rel \nodex'$ by \Cref{propagate}.

  \item From the hypothesis that $\node \Rel \nodex'$ and $\Path\tra\node{\node'}$ it follows from \Cref{w-l-same-paths} that $\Path\tra{\nodex'}{\nodex''}$ for some $\nodex''$.
  From the hypothesis that $\node \Rel \nodex'$ together with $\Path\tra\node{\node'}$ and $\Path\tra{\nodex'}{\nodex''}$, it follows that $\node' \Rel \nodex''$ by \Cref{propagate}.
  \end{itemize}

  The construction above can be repeated by using the new hypotheses $\node' \Rel \nodex'$, $\Path\tra{\nodex'}{\nodex''}$, and $\node' \Rel \nodex''$.
  Iterating the construction arbitrarily many times yields a sequence of nodes $\nodex$, $\nodex'$, $\nodex''$, \ldots such that $\Path\tra\nodex{\nodex'}$, $\Path\tra{\nodex'}{\nodex''}$, \ldots

  This contradicts the property of \tg{s} to be finite and acyclic.
\end{proof}

The universality of $\DownClos\Rel$ (\Cref{uyvzupoicnui}) does not hold for every relation $\Rel$, but an additional requirement for $\Rel$ is necessary: we will require $\Rel$ to be a \emph{query} (\Cref{def:query}), \ie{} relating only root nodes.

The following is a simple property of every path crossing a node:

\begin{proposition}\label{wow-aux}
  Let $\node,\nodex,\nodexx$ be nodes of a \tg{} $\xgraph$, and $\tra$ a trace such that $\Path\tra\node\nodex$ crosses $\nodexx$.
  Then there exists a trace $\trax$ such that $\Path\trax\nodexx\nodex$.
\end{proposition}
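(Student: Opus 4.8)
My plan is to prove the statement by induction on the derivation of the crossing judgment \emph{$\Path\tra\node\nodex$ crosses $\nodexx$}, which mirrors the structure of the trace $\tra$: the base rule equates $\nodexx$ with the endpoint of the path, while the inductive rule removes the head direction of $\tra$, relating the full path to a strictly shorter one that still crosses $\nodexx$. In each case I will exhibit the required suffix trace $\trax$ together with a derivation of $\Path\trax\nodexx\nodex$. In the base case the crossing holds precisely because $\Path\tra\node\nodexx$; since a path endpoint is uniquely determined by its start node and trace, this forces $\nodexx=\nodex$, and I take $\trax\defeq\pempty$, using the empty-path rule to conclude $\Path\pempty\nodexx\nodex$.

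For the inductive case I would write $\tra=\pcons\Dir{\tra'}$, so that the crossing is witnessed by the prefix path $\Path{\tra'}\node{}$ already crossing $\nodexx$; let $\nodetwo$ be its endpoint. The induction hypothesis applied to \emph{$\Path{\tra'}\node\nodetwo$ crosses $\nodexx$} yields a trace $\trax'$ with $\Path{\trax'}\nodexx\nodetwo$, and it then remains only to append the final step $\Dir$. Inverting the full path $\Path{\pcons\Dir{\tra'}}\node\nodex$ pins down the shape of $\nodetwo$: if $\Dir=\pdown$ then $\nodetwo=\nabs\nodex$ and the abstraction rule gives $\Path{\pcons\pdown{\trax'}}\nodexx\nodex$; if $\Dir=\pleft$ then $\nodetwo=\napp\nodex{\node'}$ for some $\node'$ and the application rule gives $\Path{\pcons\pleft{\trax'}}\nodexx\nodex$; the case $\Dir=\pright$ is symmetric with $\nodetwo=\napp{\node'}\nodex$. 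In every case $\trax\defeq\pcons\Dir{\trax'}$ is the desired suffix.

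I do not expect a genuine obstacle here: the proposition merely records that the tail of a path, taken from a node it crosses to its endpoint, is again a path. The only points deserving care are the uniqueness of path endpoints (invoked tacitly in the base case to identify $\nodexx$ with $\nodex$, and to speak of \emph{the} endpoint $\nodetwo$ of the prefix path in the inductive step) and the routine inversion of the path judgment, where each possible head direction $\Dir$ must be matched with the corresponding shape of the intermediate endpoint.
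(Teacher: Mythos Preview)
Your proof is correct and follows essentially the same approach as the paper's: induction on the derivation of the crossing judgment, with $\trax\defeq\pempty$ in the base case and $\trax\defeq\pcons\Dir{\trax'}$ in the inductive case. You supply more detail than the paper (the explicit case analysis on $\Dir$ via inversion of the path judgment, and the appeal to uniqueness of path endpoints), but the structure and key steps are identical.
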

\begin{proof}
  By induction on the definition of ``$\Path\tra\node\nodex$ crosses $\nodexx$'':
  \begin{itemize}
    \item If $\nodex=\nodexx$, conclude with $\trax\defeq\pempty$.
    \item If $\tra = (\pcons\Dir\traxx)$ such that $\Path\traxx\node$ crosses $\nodexx$, use the \ih{} to obtain $\trax$, and conclude with $(\pcons\Dir\trax)$.
    \qedhere
  \end{itemize}
\end{proof}

The following technical proposition is the inductive variant necessary to prove \Cref{gqsthykvnidx}.

\begin{proposition}\label{prop:incasinata}
  Let $\Rel$ be a \core bisimulation over the nodes of a \tg{} $\xgraph$, and let $\Que$ a relation such that ${\Que} \subseteq {\Rel}$.
  Let $\node,\nodex$ be nodes such that $\node \Que \nodex$, and $\tra$ a trace such that
  $\Path\tra\node$ crosses $\nodel$ and $\Path\tra\nodex$ crosses $\nodelx$.
  Then $\nodel \Rel \nodelx$ implies $\nodel \DownClos\Que \nodelx$.
\end{proposition}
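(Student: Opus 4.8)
The plan is to reduce the statement to \emph{Trace Propagation} (\Cref{propagate}). Since $\DownClos\Que$ is propagated by construction (\Cref{def:unfolding}) and $\node\Que\nodex$ entails $\node\DownClos\Que\nodex$, it suffices to produce a single trace $\rho$ with $\Path\rho\node\nodel$ and $\Path\rho\nodex\nodelx$: \Cref{propagate} then gives $\nodel\DownClos\Que\nodelx$ at once. Hence the entire difficulty is to show that $\nodel$ and $\nodelx$ are crossed at the \emph{same depth} along the two parallel walks of trace $\tra$ issued from $\node$ and $\nodex$; the hypothesis $\nodel\Rel\nodelx$ is exactly what will force this.

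First I would fix the setup. From $\Que\subseteq\Rel$ we get $\node\Rel\nodex$. The crossing hypotheses give two initial sub-traces $\trax$ and $\traxx$ of $\tra$ with $\Path\trax\node\nodel$ and $\Path\traxx\nodex\nodelx$; since both walks follow the same $\tra$, each sub-trace is an initial segment of $\tra$, so proving $\trax=\traxx$ amounts to proving that $\nodel$ and $\nodelx$ are crossed at the same depth. Now the two full walks of trace $\tra$ exist (each crosses a node) and, by acyclicity (\Cref{def:f-and-a}), visit pairwise-distinct nodes, so $\nodel$ and $\nodelx$ occur at unique depths, say $i$ and $j$. Finally, the nodes reached after $k$ steps on the two walks are $\Rel$-related for every $k$: this is \Cref{propagate} applied to $\node\Rel\nodex$ along the common length-$k$ initial segment, whose existence on both walks follows from \Cref{w-l-same-paths}.

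The heart of the argument is then ruling out $i\neq j$ through \emph{No Triangles} (\Cref{OMG}). If $i<j$, the node $p$ reached at depth $i$ from $\nodex$ satisfies $\nodel\Rel p$ and lies strictly above $\nodelx$, so there is a non-empty trace $\sigma$ with $\Path\sigma p\nodelx$; \Cref{OMG} then forbids $\nodel\Rel\nodelx$, contradicting the hypothesis. If $i>j$, the node $q$ reached at depth $j$ from $\node$ satisfies $q\Rel\nodelx$ and lies strictly above $\nodel$, giving a non-empty trace $\sigma$ with $\Path\sigma q\nodel$; transporting this path across $q\Rel\nodelx$ by \Cref{w-l-same-paths} yields $\Path\sigma\nodelx{p'}$ for some $p'$, and then \Cref{propagate} yields $\nodel\Rel p'$, so \Cref{OMG} applied to $\nodel\Rel\nodelx$ and the non-empty $\Path\sigma\nodelx{p'}$ again gives a contradiction. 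Therefore $i=j$, hence $\trax=\traxx=:\rho$, and the Trace Propagation step of the first paragraph concludes.

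The main obstacle is precisely this depth-alignment. It is tempting to think it automatic from the bisimulation structure, but it is not: since a \core bisimulation need not be symmetric and $\DownClos\Que$ is only propagated (hence also not symmetric), the two cases $i<j$ and $i>j$ cannot be collapsed by symmetry and must be treated separately, the second one needing the extra transport of the path along the bisimulation before \Cref{OMG} applies. What ultimately forbids two $\Rel$-related nodes from sitting at different depths is acyclicity, repackaged as \Cref{OMG}; everything else is routine bookkeeping on traces.
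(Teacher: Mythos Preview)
Your proof is correct and takes essentially the same approach as the paper: both establish that $\nodel$ and $\nodelx$ are reached at the same position along the two parallel walks of trace $\tra$, using \emph{No Triangles} (\Cref{OMG}) to rule out a mismatch, and then conclude via \Cref{propagate}/\Cref{downclos-to-paths}. The only difference is packaging: the paper runs a structural induction on the two ``crosses'' derivations (peeling the last direction off $\tra$ until one endpoint hits its target, and invoking \Cref{wow-aux} plus \Cref{OMG} in the asymmetric case), whereas you reason directly about the depths $i$ and $j$; your case $i>j$ and the paper's second case unwind to the same application of \Cref{OMG}.
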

\begin{proof}
  Assume $\nodel \Rel \nodelx$. We proceed by structural induction on the predicates ``$\Path\tra\node$ crosses $\nodel$'' and ``$\Path\tra\nodex$ crosses $\nodelx$'':
  \begin{itemize}
    \item Case $\Path\tra\node\nodel$ and $\Path\tra\nodex\nodelx$. By \Cref{downclos-to-paths} $\nodel \DownClos\Que \nodelx$, and we conclude.
    \item Case $\Path\tra\node\nodel$ and $\Path\tra\nodex{\nodex'}\neq\nodelx$: this case is not possible, \ie{} we derive a contradiction. Since
    $\Path\tra\nodex{\nodex'}$ crosses $\nodelx$, we obtain by \Cref{wow-aux} a trace $\trax$ such that $\Path\trax\nodelx{\nodex'}$, that is non-empty because $\nodelx\neq\nodex'$.
    Note that by \Cref{downclos-to-paths} it holds that $\nodel \DownClos\Que \nodex'$, that implies $\nodel \Rel \nodex'$ (because $\Rel$ contains $\Que$, and $\Rel$ is propagated).
    Obtain a contradiction by \Cref{OMG}, using $\nodel \Rel \nodelx$, $\Path\trax\nodelx{\nodex'}$, and $\nodel \Rel \nodex'$.
    \item Case $\Path\tra\node{\node'}\neq\nodel$ and $\Path\tra\nodex\nodelx$: symmetric to the case above.
    \item Case $\tra=(\pcons\Dir\trax)$, $\Pathp{\pcons\Dir\trax}\node{\node'}\neq\nodel$ and $\Pathp{\pcons\Dir\trax}\nodex{\nodex'}\neq\nodelx$.
    It holds that $\Path\trax\node$ crosses $\nodel$ and $\Path\trax\nodex$ crosses $\nodelx$, and use the \ih{} to conclude.
  \qedhere
  \end{itemize}
\end{proof}

The previous result implies that whenever a query $\Que$ is contained in any bisimulation, then $\DownClos\Que$ is itself a bisimulation:

\begin{proposition}\label{gqsthykvnidx}
  Let $\Bis$ be a bisimulation over the nodes of a \ladag{} $\xgraph$, and let $\Que$ be a query such that ${\Que} \subseteq {\Bis}$. Then $\DownClos\Que$ is a bisimulation.
\end{proposition}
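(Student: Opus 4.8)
The plan is to verify the two conditions that a bisimulation must satisfy over and above being propagated, namely \emph{homogeneity} and closure under the scoping rule \RuleBV{}. Closure under \RuleAppl{}, \RuleAbs{}, \RuleAppr{} holds for $\DownClos\Que$ by its very definition (\Cref{def:unfolding}), so these two are the only missing ingredients.

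First I would dispatch homogeneity by a containment argument. Since $\Bis$ is a bisimulation it is in particular propagated, and it contains $\Que$ by hypothesis; because $\DownClos\Que$ is the \emph{smallest} propagated relation containing $\Que$, it follows that ${\DownClos\Que} \subseteq {\Bis}$. As $\Bis$ is homogeneous and homogeneity is clearly inherited by subrelations, $\DownClos\Que$ is homogeneous as well.

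The substantial part is closure under \RuleBV{}. Suppose $\nbvar\nodel \DownClos\Que \nbvar\nodelx$; I must produce $\nodel \DownClos\Que \nodelx$. By the inversion lemma (\Cref{downclos-to-paths}) there are a trace $\tra$ and nodes $\node',\nodex'$ with $\node' \Que \nodex'$, $\Path\tra{\node'}{\nbvar\nodel}$, and $\Path\tra{\nodex'}{\nbvar\nodelx}$; crucially $\node'$ and $\nodex'$ are \emph{roots}, precisely because $\Que$ is a query. This is exactly where \emph{domination} enters: since $\xgraph$ is a \ladag{}, the binder $\nodel$ dominates $\nbvar\nodel$, so the path $\Path\tra{\node'}{\nbvar\nodel}$ from the root $\node'$ must cross $\nodel$; symmetrically $\Path\tra{\nodex'}{\nbvar\nodelx}$ crosses $\nodelx$.

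It then remains to produce the relation between the binders and to pull it back from $\Bis$ to $\DownClos\Que$. From $\node' \Que \nodex' \subseteq \Bis$ and the propagation lemma (\Cref{propagate}) applied along $\tra$, I obtain $\nbvar\nodel \Bis \nbvar\nodelx$; since $\Bis$ is a full bisimulation, its closure under \RuleBV{} yields $\nodel \Bis \nodelx$. Finally I invoke \Cref{prop:incasinata} with $\Rel \defeq \Bis$ (a \core bisimulation containing $\Que$), the query pair $\node' \Que \nodex'$, the trace $\tra$ crossing $\nodel$ and $\nodelx$ respectively, and the fact $\nodel \Bis \nodelx$: its conclusion is exactly $\nodel \DownClos\Que \nodelx$, as desired. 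I expect the crossing step---justifying that domination on a \emph{single} binder suffices once the query lives only on roots, so that \Cref{prop:incasinata} becomes applicable---to be the conceptual crux, while the rest is a bookkeeping chain through the auxiliary propagation and inversion lemmas.
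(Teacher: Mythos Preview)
Your proposal is correct and matches the paper's proof essentially step for step: the same containment argument for homogeneity, the same use of \Cref{downclos-to-paths} plus domination (via $\Que$ being a query) to get paths crossing the binders, and the same invocation of \Cref{prop:incasinata} with $\Rel=\Bis$ after deriving $\nodel\Bis\nodelx$ from closure of $\Bis$ under \RuleBV{}. The only cosmetic difference is that the paper obtains $\nbvar\nodel \Bis \nbvar\nodelx$ directly from the inclusion ${\DownClos\Que}\subseteq{\Bis}$ already established, whereas you re-derive it via \Cref{propagate}; both are fine.
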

\begin{proof}
  Let $\Bis$ be a bisimulation, and let $\Que$ such that ${\Que} \subseteq {\Bis}$. In order to prove that $\DownClos\Que$ is a bisimulation, it suffices to show that $\DownClos\Que$ is homogeneous and closed under \RuleBV.
    \begin{itemize}
      \item Homogeneous. $\DownClos\Que$ is homogeneous because $\DownClos\Que$ is contained in $\Bis$ (since ${\Que}\subseteq{\Bis}$ and $\Bis$ is propagated), and $\Bis$ is homogeneous.
      \item Closed under \RuleBV.
      Let $\nbvar\nodel \DownClos\Que \nbvar\nodelx $: we need to show that $\nodel \DownClos\Que \nodelx$.
      Since $\DownClos\Que$ is contained in $\Bis$, $\nbvar\nodel \DownClos\Que \nbvar\nodelx $ implies $\nbvar\nodel \Bis \nbvar\nodelx$, and since $\Bis$ is closed under \RuleBV{} we obtain $\nodel \Bis \nodelx $.

       By $\nbvar\nodel \DownClos\Que \nbvar\nodelx $ and \Cref{downclos-to-paths}, there exists a trace $\tra$ and nodes $\node\Que\nodex$ such that $\Path\tra\node{\nbvar\nodel}$ and $\Path\tra\nodex{\nbvar\nodelx}$.
       Since $\Que$ is a query, $\Path\tra\node{\nbvar\nodel}$ crosses $\nodel$ and $\Path\tra\nodex{\nbvar\nodelx}$ crosses $\nodelx$.
       We conclude with $\nodel \DownClos\Que \nodelx$ by \Cref{prop:incasinata}.
      \qedhere
    \end{itemize}
\end{proof}

We can finally prove:

\begin{theorem}[Universality of $\DownClos\Que$]\label{uyvzupoicnui}
  Let $\Que$ be a query over the nodes of a \ladag{} $\xgraph$.
  $\DownClos\Que$ is a bisimulation if and only if there exists a bisimulation containing $\Que$.
\end{theorem}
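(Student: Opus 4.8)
The statement is a biconditional in which one direction is immediate and the other carries all the content. For the ($\Rightarrow$) direction I would simply observe that propagation can only enlarge a relation, so $\Que \subseteq \DownClos\Que$ by \Cref{def:unfolding}. Hence if $\DownClos\Que$ happens to be a bisimulation, it is itself a bisimulation containing $\Que$, and there is nothing further to prove. For the ($\Leftarrow$) direction I would invoke \Cref{gqsthykvnidx} directly: given any bisimulation $\Bis$ with $\Que \subseteq \Bis$, that proposition already yields that $\DownClos\Que$ is a bisimulation. Thus the theorem reduces to combining a one-line triviality with the previously established proposition.

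Since the real work lives in \Cref{gqsthykvnidx}, let me recall the shape of its argument, as that is where I would focus attention. Because $\DownClos\Que$ is propagated by construction, to prove it is a bisimulation it suffices to check the two remaining clauses, homogeneity and closure under \RuleBV. Homogeneity comes for free: as $\Bis$ is propagated and contains $\Que$, it contains $\DownClos\Que$, and a subrelation of a homogeneous relation is homogeneous. The closure under \RuleBV{} is the delicate step: starting from $\nbvar\nodel \DownClos\Que \nbvar\nodelx$, one uses the inversion lemma \Cref{downclos-to-paths} to produce a query pair $\node \Que \nodex$ together with a common trace $\tra$ such that $\Path\tra\node{\nbvar\nodel}$ and $\Path\tra\nodex{\nbvar\nodelx}$. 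Here query-ness is essential: because $\node,\nodex$ are roots, domination forces these paths to cross the binders $\nodel,\nodelx$, and then \Cref{prop:incasinata} transfers the relation $\nodel \Bis \nodelx$ (available since $\Bis$ is closed under \RuleBV) back down to the desired $\nodel \DownClos\Que \nodelx$.

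The main obstacle is precisely this \RuleBV{} step, and at its heart sits \Cref{prop:incasinata}, which itself rests on the \emph{no triangles} property \Cref{OMG} forbidding a node to be related to one of its proper descendants in a finite acyclic graph. The conceptual subtlety, already flagged in the introduction, is that domination is stated for a \emph{single} binder and its \emph{single} bound variable, whereas here one must coordinate it across an entire $\DownClos\Que$-class of binders and bound variables; the fact that per-binder domination nonetheless suffices is what makes the spreaded query universal, and is the genuinely non-trivial content hiding behind the otherwise one-line statement of this theorem.
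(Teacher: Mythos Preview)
Your proposal is correct and matches the paper's own proof essentially line for line: the paper dispatches the ($\Rightarrow$) direction by the same triviality that $\Que\subseteq\DownClos\Que$, and the ($\Leftarrow$) direction by a direct appeal to \Cref{gqsthykvnidx}. Your additional recap of the argument inside \Cref{gqsthykvnidx}---homogeneity via containment in $\Bis$, and closure under \RuleBV{} via \Cref{downclos-to-paths}, domination from roots, and \Cref{prop:incasinata}---also faithfully mirrors the paper's proof of that proposition.
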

\begin{proof}
  Clearly if $\DownClos\Que$ is a bisimulation, then $\DownClos\Que$ is the required bisimulation containing $\Que$. To prove the other implication, just use \Cref{gqsthykvnidx}.
\end{proof}

% RST Section
% !TEX root = ../main.tex
\subsection{Equivalence Relations}\label{subsect:rst}

In this section, we are going to define the spreading $\DownEquivClos\Que$ of $\Que$ (\Cref{def:dec}), which is similar to the propagation $\DownClos\Que$ but also an equivalence relation. The main result will be \Cref{final}, in which we prove that $\DownEquivClos\Que$ is a bisimulation if and only if $\DownClos\Que$ is.

We first introduce the \emph{reflexive symmetric transitive closure} ${(\cdot)}^*$:

\begin{definition}[Reflexive Symmetric Transitive Closure]\label{def:rst-clos}
  We denote with $\Rel^*$ the \emph{reflexitive symmetric transitive} (rst) closure of a relation $\Rel$, \ie{} the relation obtained by closing $\Rel$ under the rules \RuleRefl\, \RuleSym\, \RuleTrans\, of \Cref{all-inference-rules}.
\end{definition}

We now lift some previously defined notions to the rst closure.
The rst closure preserves homogeneity:

\begin{proposition}\label{w-l-star}
  Let $\Rel$ be a relation.
  $\Rel$ is homogeneous if and only if $\Rel^*$ is homogeneous.
\end{proposition}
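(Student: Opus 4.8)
The plan is to observe that homogeneity is, at heart, an equivalence relation, so that it is automatically preserved by the reflexive-symmetric-transitive closure. Concretely, I would introduce the relation $H$ defined by $\node \mathrel{H} \nodex$ iff $\node$ and $\nodex$ are homogeneous. Since every node has exactly one kind (application, abstraction, free variable, or bound variable), $H$ is precisely the relation ``$\node$ and $\nodex$ have the same kind'', which partitions the nodes into four classes and is therefore reflexive, symmetric, and transitive --- an equivalence relation. Saying that a relation $\Rel$ is homogeneous is then exactly the inclusion ${\Rel} \subseteq {H}$.

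For the forward direction, suppose $\Rel$ is homogeneous, i.e.\ ${\Rel}\subseteq{H}$. The rst closure is monotone, so ${\Rel^*}\subseteq{H^*}$; and since $H$ is already an equivalence relation, $H^* = H$. Hence ${\Rel^*}\subseteq{H}$, which is exactly homogeneity of $\Rel^*$. For the backward direction, since ${\Rel}\subseteq{\Rel^*}$ trivially, homogeneity of $\Rel^*$ immediately restricts to homogeneity of $\Rel$.

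Alternatively, avoiding the explicit relation $H$, I would prove the forward direction directly by induction on the derivation of $\node \Rel^* \nodex$: the base case $\node \Rel \nodex$ is the hypothesis; the \RuleRefl{} case is immediate since a node has the same kind as itself; the \RuleSym{} case uses that ``having the same kind'' is symmetric; and the \RuleTrans{} case uses that it is transitive, passing through the kind of the intermediate node. There is no genuine obstacle here: the only thing one needs to notice is that the four kind-classes make homogeneity an equivalence relation, which is precisely what renders the rst closure harmless.
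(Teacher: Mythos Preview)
Your proposal is correct and takes essentially the same approach as the paper. The paper's proof is terser---it simply observes that ${\Rel}\subseteq{\Rel^*}$ for one direction and that ``the equivalence rules \RuleRefl, \RuleSym, \RuleTrans{} preserve homogeneity'' for the other---which is exactly your alternative inductive argument; your first formulation via the equivalence relation $H$ is a cleaner packaging of the same observation.
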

\begin{proof}
  Clearly if $\Rel^*$ is homogeneous, then also $\Rel$ is homogeneous because ${\Rel}\subseteq{\Rel}^*$.
  For the other direction, it suffices to note that the equivalence rules \RuleRefl\, \RuleSym\, \RuleTrans\, preserve homogeneity.
\end{proof}

For homogeneous relations, closure under structural rules is preserved by equivalence rules:

\begin{proposition}\label{closed-bv-star}
  Let $\Rel$ be an homogeneous relation,
  and $r \in \{\RuleAppl, \RuleAbs, \RuleAppr, \RuleBV\}$.
  If $\Rel$ is closed under $r$,
  then $\Rel^*$ is closed under $r$.
\end{proposition}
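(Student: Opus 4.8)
The plan is to proceed by induction on the derivation witnessing membership in $\Rel^*$, handling the four rules uniformly. Each rule $r$ concerns a pair of nodes sharing a fixed top constructor (application for \RuleAppl{} and \RuleAppr, abstraction for \RuleAbs, bound variable for \RuleBV) and projects onto a fixed component. I will spell out the case $r=\RuleAbs$; the remaining three are identical up to which component one extracts. Before starting, I record the single ingredient that makes everything work: by \Cref{w-l-star}, since $\Rel$ is homogeneous so is $\Rel^*$. This is the crux, as it forces every intermediate node occurring in an $\Rel^*$-derivation to share the relevant top constructor.

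Now I prove, by induction on the derivation of $\nabs\node \Rel^* \nabs\nodex$ via \RuleRefl, \RuleSym, \RuleTrans, that $\node \Rel^* \nodex$. In the \emph{base case} the pair already lies in $\Rel$, i.e. $\nabs\node \Rel \nabs\nodex$; since $\Rel$ is closed under \RuleAbs{} we obtain $\node \Rel \nodex$, hence $\node \Rel^* \nodex$. The \emph{reflexivity case} has $\nabs\node = \nabs\nodex$, so $\node = \nodex$ and $\node \Rel^* \nodex$ by \RuleRefl. In the \emph{symmetry case}, $\nabs\node \Rel^* \nabs\nodex$ holds because $\nabs\nodex \Rel^* \nabs\node$; the induction hypothesis gives $\nodex \Rel^* \node$, and \RuleSym{} yields $\node \Rel^* \nodex$.

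The only delicate case is \emph{transitivity}: here $\nabs\node \Rel^* \nabs\nodex$ is obtained from $\nabs\node \Rel^* c$ and $c \Rel^* \nabs\nodex$ for some intermediate node $c$. To apply the induction hypothesis to the two sub-derivations I must know that $c$ is itself an abstraction. This is exactly where homogeneity enters: since $\Rel^*$ is homogeneous and relates $\nabs\node$ to $c$, the node $c$ must be an abstraction, say $c = \nabs{c'}$. The two induction hypotheses then give $\node \Rel^* c'$ and $c' \Rel^* \nodex$, and \RuleTrans{} yields $\node \Rel^* \nodex$, closing the induction.

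For \RuleBV{} one repeats the argument with the bound-variable constructor $\nbvar\cdot$ in place of $\nabs\cdot$ and the binder as the extracted component; for \RuleAppl{} and \RuleAppr{} one uses the application constructor $\napp\cdot\cdot$, extracting respectively the left and right child, and in the transitivity case homogeneity again forces the intermediate node to be an application $\napp{c_1}{c_2}$. The main obstacle is thus entirely localized in the transitivity step, and it is dispatched solely by the homogeneity of $\Rel^*$ supplied by \Cref{w-l-star}; no property of \ladag{s} beyond this is required.
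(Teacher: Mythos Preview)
Your proof is correct and follows essentially the same approach as the paper's: both proceed by induction on the derivation of membership in $\Rel^*$, handling the base case by the closure of $\Rel$ under $r$, the \RuleRefl{} and \RuleSym{} cases trivially, and the \RuleTrans{} case by invoking \Cref{w-l-star} to force the intermediate node to have the required top constructor. The only cosmetic difference is that the paper spells out the case $r=\RuleBV$ while you chose $r=\RuleAbs$.
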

\begin{proof}
  We only consider the case when $r$ is \RuleBV; the proofs for the other rules are similar.

  Assume that $\Rel$ is closed under $r$, and let $\nbvar\nodel \Rel^* \nbvar\nodelx$: we need to show that $\nodel \Rel^* \nodelx$. We proceed by induction on the inductive definition of $\Rel^*$:
  \begin{itemize}
    \item Base case: $\nbvar\nodel \Rel^* \nbvar\nodelx$ because $\nbvar\nodel \Rel \nbvar\nodelx$. By the hypothesis that $\Rel$ is closed under \RuleBV{} it follows that $\nodel \Rel \nodelx$, which implies $\nodel \Rel^* \nodelx$.
    \item Case \RuleRefl: $\nbvar\nodel \Rel^* \nbvar\nodelx$ because $\nbvar\nodel = \nbvar\nodelx$. Then $\nodel = \nodelx$, and we conclude with $\nodel \Rel^* \nodelx$ by the rule \RuleRefl.
    \item Case \RuleSym: $\nbvar\nodel \Rel^* \nbvar\nodelx$ because $\nbvar\nodelx \Rel^* \nbvar\nodel$. By \ih{} $\nodelx \Rel^* \nodel$, and we obtain $\nodel \Rel^* \nodelx$ by the rule \RuleSym.
    \item Case \RuleTrans: $\nbvar\nodel \Rel^* \nbvar\nodelx$ because
    $\nbvar\nodel \Rel^* \node$ and $\node \Rel^* \nbvar\nodelx$ for some node $\node$. Since $\Rel$ is homogeneous, by~\Cref{w-l-star} $\Rel^*$ is homogeneous too, and therefore $\node = \nbvar\nodelxx$ for some $\nodelxx$.
    By \ih{} we obtain $\nodel \Rel^* \nodelxx$ and $\nodelxx \Rel^* \nodelx$, and we conclude with $\nodel \Rel^* \nodelx$ by the rule \RuleTrans.
  \qedhere
  \end{itemize}
\end{proof}

We introduce the spreading $\DownEquivClos\Rel$ of a relation $\Rel$:

\begin{definition}[$\DownEquivClos\Rel$]\label{def:dec}
  Let $\Rel$ be a binary relation over the nodes of a \ladag{}.
    $\DownEquivClos\Rel$ is obtained by closing $\Rel$ under the rules \RuleRefl\, \RuleSym\, \RuleTrans\, \RuleAppl\, \RuleAbs\, \RuleAppr\, of \Cref{all-inference-rules}.
\end{definition}

Clearly ${({\DownClos\Rel})^*}\subseteq(\DownEquivClos\Rel)$ for every relation $\Rel$. The converse may in principle not hold, as the equivalence rules may not commute with the rules of the spreading: for example, the rst closure of a spreaded relation may not be spreaded.
But surprisingly this is not the case, under the hypothesis that $\DownClos\Rel$ is homogeneous:

\begin{proposition}\label{downclos-star}
  Let $\Rel$ be a binary relation over the nodes of a \ladag{}.
  If $\DownClos\Rel$ is homogeneous, then ${\DownEquivClos\Rel} = (\DownClos\Rel)^*$.
\end{proposition}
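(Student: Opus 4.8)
The plan is to prove the two inclusions separately, exploiting the fact that $\DownEquivClos\Rel$ is by definition (\Cref{def:dec}) the \emph{smallest} relation containing $\Rel$ and closed under the six rules \RuleRefl, \RuleSym, \RuleTrans, \RuleAppl, \RuleAbs, \RuleAppr. The inclusion $(\DownClos\Rel)^* \subseteq \DownEquivClos\Rel$ is already observed just before the statement and holds for every $\Rel$, so all the work lies in the reverse inclusion $\DownEquivClos\Rel \subseteq (\DownClos\Rel)^*$, and this is precisely where the homogeneity hypothesis enters.

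For the reverse inclusion I would show that $(\DownClos\Rel)^*$ itself satisfies every closure condition in the definition of $\DownEquivClos\Rel$; minimality of $\DownEquivClos\Rel$ then yields $\DownEquivClos\Rel \subseteq (\DownClos\Rel)^*$ for free. Two of the three obligations are immediate: $(\DownClos\Rel)^*$ contains $\Rel$ because $\Rel \subseteq \DownClos\Rel \subseteq (\DownClos\Rel)^*$, and it is closed under the equivalence rules \RuleRefl, \RuleSym, \RuleTrans simply because it is an rst closure.

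The only substantial obligation is that $(\DownClos\Rel)^*$ be closed under the propagation rules \RuleAppl, \RuleAbs, \RuleAppr. Here I would invoke \Cref{closed-bv-star}: the relation $\DownClos\Rel$ is closed under each of these rules by the very definition of propagation (\Cref{def:unfolding}), and it is homogeneous by assumption, so \Cref{closed-bv-star} applies to each $r \in \{\RuleAppl, \RuleAbs, \RuleAppr\}$ and gives that $(\DownClos\Rel)^*$ is closed under $r$ as well. Assembling these three obligations completes the reverse inclusion, and hence the equality.

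The conceptual heart—and the reason the homogeneity hypothesis cannot be dropped—sits inside \Cref{closed-bv-star}, in its transitivity case: to propagate a pair $\node \mathrel{(\DownClos\Rel)^*} \nodetwo$ obtained from $\node \mathrel{(\DownClos\Rel)^*} \nodexx \mathrel{(\DownClos\Rel)^*} \nodetwo$, one must know that the intermediate node $\nodexx$ has the same shape (application, abstraction, \ldots) as the endpoints before a downward rule can fire. Homogeneity of $\DownClos\Rel$, lifted to $(\DownClos\Rel)^*$ by \Cref{w-l-star}, is exactly what guarantees this. Since \Cref{closed-bv-star} is already established, the present argument reduces to recognizing that the proposition's hypothesis supplies precisely the condition that lemma requires.
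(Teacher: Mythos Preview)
Your proof is correct and follows essentially the same approach as the paper: note the easy inclusion $(\DownClos\Rel)^* \subseteq \DownEquivClos\Rel$, then show $(\DownClos\Rel)^*$ is closed under all six rules (equivalence rules by definition of $(\cdot)^*$, propagation rules via \Cref{closed-bv-star} using the homogeneity hypothesis), and conclude by minimality of $\DownEquivClos\Rel$. Your write-up is more detailed than the paper's terse proof and adds a helpful explanation of why homogeneity is essential, but the argument is the same.
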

\begin{proof}~
     Note that ${({\DownClos\Rel})^*}\subseteq(\DownEquivClos\Rel)$,
     and that $({\DownClos\Rel})^*$
     is closed under the rules \RuleRefl\, \RuleSym\, \RuleTrans\, by the definition of $(\cdot)^*$.
     It remains to show that
     $({\DownClos\Rel})^*$ is also closed under the rules \RuleAppl\,\RuleAppr\,\RuleAbs\, which follows from \Cref{closed-bv-star}.
\end{proof}

We lift \Cref{w-l-star} to $\DownEquivClos\Rel$:

\begin{proposition}\label{equiv-w-l-iff}
  Let $\Rel$ be a binary relation over the nodes of a \ladag{}.
  $\DownEquivClos\Rel$ is homogeneous if and only if $\DownClos\Rel$ is homogeneous.
\end{proposition}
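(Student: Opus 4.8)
The plan is to prove the two implications separately, exploiting the two immediately preceding propositions so that no genuinely new argument is required. The observation that organizes everything is the inclusion ${\DownClos\Rel} \subseteq {\DownEquivClos\Rel}$: indeed $\DownEquivClos\Rel$ is obtained by closing $\Rel$ under a superset of the rules defining $\DownClos\Rel$ (namely \RuleAppl, \RuleAbs, \RuleAppr, plus the equivalence rules \RuleRefl, \RuleSym, \RuleTrans), so every pair derivable in $\DownClos\Rel$ is already derivable in $\DownEquivClos\Rel$.

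First I would dispatch the direction from $\DownEquivClos\Rel$ to $\DownClos\Rel$, which is the trivial one. Assuming $\DownEquivClos\Rel$ is homogeneous, and recalling that homogeneity is inherited by any subrelation, the inclusion above shows $\DownClos\Rel$ is such a subrelation, hence homogeneous. Note that this direction uses only the inclusion and not any identity between the two closures, since in this direction the homogeneity hypothesis needed by \Cref{downclos-star} is not available.

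For the converse direction I would assume $\DownClos\Rel$ is homogeneous and chain the two earlier results. By \Cref{downclos-star}, under exactly this homogeneity hypothesis one has the identity ${\DownEquivClos\Rel} = (\DownClos\Rel)^*$. It then remains only to show that $(\DownClos\Rel)^*$ is homogeneous, which is precisely \Cref{w-l-star} applied to the relation $\DownClos\Rel$: a relation is homogeneous if and only if its reflexive-symmetric-transitive closure is. Since $\DownClos\Rel$ is homogeneous by assumption, so is $(\DownClos\Rel)^*$, that is, $\DownEquivClos\Rel$.

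I do not expect a real obstacle, since the substantive work has already been isolated upstream: \Cref{downclos-star} carries the delicate commutation of the equivalence-closure with the propagation-closure, and \Cref{w-l-star} carries the stability of homogeneity under rst-closure. The only point requiring care is to invoke the identity ${\DownEquivClos\Rel} = (\DownClos\Rel)^*$ exclusively in the converse direction, where its homogeneity premise is met, and to rely on the bare inclusion in the forward direction, where no such premise is at hand.
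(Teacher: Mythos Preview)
Your proof is correct and follows essentially the same approach as the paper: the forward direction uses the inclusion ${\DownClos\Rel}\subseteq{\DownEquivClos\Rel}$, and the converse chains \Cref{w-l-star} and \Cref{downclos-star}. Your extra remark about invoking \Cref{downclos-star} only in the direction where its homogeneity premise is available is a valid point of care, though the paper's proof simply orders the two lemmas the other way (first \Cref{w-l-star}, then \Cref{downclos-star}) to the same effect.
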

\begin{proof}
  Clearly if $\DownEquivClos\Rel$ is homogeneous, then also $\DownClos\Rel$ is homogeneous because $(\DownClos\Rel)\subseteq(\DownEquivClos\Rel)$.
  For the other direction, assume that $\DownClos\Rel$ is homogeneous: by~\Cref{w-l-star} it follows that $(\DownClos\Rel)^*$ is homogeneous, and we can conclude by~\Cref{downclos-star}.
\end{proof}

We finally prove:

\begin{theorem}\label{final}
  Let $\Que$ be a query over a \ladag{} $\xgraph$. Then
  $\DownClos\Que$ is a bisimulation
  if and only if
  $\DownEquivClos\Que$ is a bisimulation.
\end{theorem}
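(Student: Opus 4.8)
The plan is to prove the two implications separately, leaning entirely on the rst-closure machinery developed just above (\Cref{w-l-star}, \Cref{closed-bv-star}, \Cref{downclos-star}) together with the universality result for queries (\Cref{gqsthykvnidx}). Recall that a bisimulation is precisely a homogeneous relation closed under \RuleAppl, \RuleAbs, \RuleAppr, and \RuleBV, and that $\DownClos\Que$ is closed under the first three by its very definition, while $\DownEquivClos\Que$ is moreover an equivalence relation.

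For the forward direction I would assume that $\DownClos\Que$ is a bisimulation. In particular it is homogeneous, so \Cref{downclos-star} applies and gives ${\DownEquivClos\Que} = (\DownClos\Que)^*$; it therefore suffices to show that $(\DownClos\Que)^*$ is a bisimulation. Homogeneity of $(\DownClos\Que)^*$ follows from \Cref{w-l-star}, and closure under each of \RuleAppl, \RuleAbs, \RuleAppr, \RuleBV\ follows from \Cref{closed-bv-star} applied rule by rule, using that $\DownClos\Que$ is homogeneous and already closed under all four. This direction is thus essentially mechanical bookkeeping.

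For the backward direction I would assume that $\DownEquivClos\Que$ is a bisimulation. Since $\Que \subseteq \DownEquivClos\Que$, this exhibits a bisimulation containing the query $\Que$. Because $\Que$ is a query, hence relates only root nodes (\Cref{def:query}), \Cref{gqsthykvnidx} applies directly and yields that $\DownClos\Que$ is a bisimulation, as desired.

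The real content of the statement lies in the backward direction, but it has already been absorbed into \Cref{gqsthykvnidx}: the nonobvious point is closure of $\DownClos\Que$ under \RuleBV, which does \emph{not} follow from the definition of propagation and crucially requires $\Que$ to relate only roots, so that every binder is crossed before the variables it binds (via domination). Granted that lemma, the present theorem reduces to routine combination. The only mild care needed is in the forward direction, where invoking \Cref{downclos-star} presupposes that $\DownClos\Que$ is homogeneous; this is immediate, since a bisimulation is homogeneous by definition.
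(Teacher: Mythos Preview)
Your proof is correct and follows essentially the same route as the paper. The only cosmetic differences are that the paper cites \Cref{equiv-w-l-iff} for homogeneity in the forward direction (itself a wrapper around \Cref{w-l-star} and \Cref{downclos-star}) and \Cref{uyvzupoicnui} for the backward direction (itself a wrapper around \Cref{gqsthykvnidx}), whereas you invoke the underlying lemmas directly; also, your check of closure under \RuleAppl, \RuleAbs, \RuleAppr\ for $(\DownClos\Que)^*$ is harmless but unnecessary, since $\DownEquivClos\Que$ is closed under those rules by definition.
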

\begin{proof}~
  \begin{itemize}
    \item[($\Rightarrow$)]
    Assume that $\DownClos\Que$ is a bisimulation; in order to prove that $\DownEquivClos\Que$ is a bisimulation, it suffices to show that $\DownEquivClos\Que$ is homogeneous and closed under \RuleBV.
    The former follows from \Cref{downclos-star} and \Cref{equiv-w-l-iff}.
    The latter follows from \Cref{downclos-star} and \Cref{closed-bv-star}.
    \item[($\Leftarrow$)]
     Assume that $\DownEquivClos\Que$ is a bisimulation. Then
      $\DownClos\Que$ is a bisimulation by the universality of $\DownClos\Que$ (\Cref{uyvzupoicnui}).
    \qedhere
  \end{itemize}
\end{proof}

\subsection{Sharing Equivalences}\label{subsect:shar-eq}

In this section we introduce \emph{sharing equivalences} (\Cref{def:shar-eq}), which are binary relations on the \ladag{} that capture the equivalence of different sharings of subgraphs. They require the new concept of \emph{open} relations, relations that do not identify distict free variable nodes:

\begin{definition}[Open]\label{def:open}
  We call a relation $\Rel$ \emph{open} when $\node \Rel \nodex$ implies $\node = \nodex$ for every free variable nodes $\node$ and $\nodex$.
\end{definition}

\begin{definition}[Sharing Equivalence]\label{def:shar-eq}
  A \emph{sharing equivalence} is an open bisimulation that is also an equivalence relation.
\end{definition}

The main result of the section is the universality of $\DownEquivClos\Que$ (\Cref{thm:universality-sharp}), stating that $\DownEquivClos\Que$ is a sharing equivalence whenever there exists a sharing equivalence containing $\Que$. Some auxiliary propositions follow.

We first lift the open property to the rst closure:

\begin{proposition}\label{prop:t8whof}
  Let $\Rel$ be an homogeneous relation over the nodes of a \ladag{}. $\Rel$ is open if and only if $\Rel^*$ is open.
\end{proposition}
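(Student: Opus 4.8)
The plan is to prove the two directions separately, concentrating all the work in the forward implication. The backward direction ($\Rel^*$ open $\Rightarrow$ $\Rel$ open) is immediate, since ${\Rel}\subseteq{\Rel^*}$: any two free variable nodes related by $\Rel$ are then also related by $\Rel^*$, hence equal by the openness of $\Rel^*$. This mirrors the easy direction of \Cref{w-l-star}.

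For the forward direction I would assume that $\Rel$ is open and homogeneous. The crucial preliminary move is to upgrade homogeneity to the closure: by \Cref{w-l-star}, since $\Rel$ is homogeneous so is $\Rel^*$. This is what guarantees that whenever a free variable node is related by $\Rel^*$ to some node, that node is itself a free variable node. I would then take two free variable nodes $\node,\nodex$ with $\node \Rel^* \nodex$ and prove $\node = \nodex$ by induction on the inductive definition of $\Rel^*$, i.e.\ on the rules \RuleRefl\, \RuleSym\, \RuleTrans{} generating the rst closure. The base case $\node \Rel \nodex$ is closed directly by the openness of $\Rel$; the \RuleRefl{} case gives $\node = \nodex$ trivially; and the \RuleSym{} case follows by applying the \ih{} to $\nodex \Rel^* \node$, whose endpoints are still the same free variable nodes.

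The only delicate case is \RuleTrans, where $\node \Rel^* \nodexx$ and $\nodexx \Rel^* \nodex$ for some intermediate node $\nodexx$. To invoke the induction hypothesis on each sub-derivation I need $\nodexx$ to be a free variable node, and this is exactly where homogeneity of $\Rel^*$ is used: since $\node$ is a free variable node and $\node \Rel^* \nodexx$, the node $\nodexx$ is forced to be a free variable node as well. The \ih{} then yields $\node = \nodexx$ and $\nodexx = \nodex$, hence $\node = \nodex$. The main (and essentially only) obstacle is precisely this transitive case: without first establishing homogeneity of $\Rel^*$ there is no control over the shape of $\nodexx$, and the induction would not close. The overall structure therefore closely parallels the transitivity argument in \Cref{closed-bv-star}.
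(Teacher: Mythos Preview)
Your proposal is correct and follows essentially the same approach as the paper: both directions are handled identically, and for the forward direction the paper likewise proceeds by induction on the rules defining $\Rel^*$, invoking \Cref{w-l-star} in the \RuleTrans{} case to ensure the intermediate node $\nodexx$ is a free variable node before applying the induction hypothesis twice.
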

\begin{proof}
 If $\Rel^*$ is open, then also $\Rel$ is open because ${\Rel}\subseteq{\Rel^*}$.

 Assume now that $\Rel$ is open, and that $\node \Rel^* \nodex$ for $\node,\nodex$ free variable nodes.
 We proceed by induction on the inductive definition of $\Rel^*$. The base case and the cases \RuleRefl\,\RuleSym \, are easy; let us discuss the case \RuleTrans. Assume that $\node \Rel^* \nodex$ because
   $\node \Rel^* \nodexx$ and $\nodexx \Rel^* \nodex$ for some node $\nodexx$. Since $\Rel$ is homogeneous, by~\Cref{w-l-star} $\Rel^*$ is homogeneous too, and therefore $\nodexx$ is a free variable node as well.
   By \ih{} we obtain $\node = \nodexx$ and $\nodexx = \nodex$, and we conclude with $\node = \nodex$.
\end{proof}

\begin{corollary}\label{equiv-closed-iff}
  Let $\Que$ be a query such that $\DownClos\Que$ is homogeneous.
  $\DownClos\Que$ is open if and only if $\DownEquivClos\Que$ is open.
\end{corollary}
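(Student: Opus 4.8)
The plan is to obtain this corollary as an immediate composition of two results already established in this subsection, namely \Cref{downclos-star} (which identifies $\DownEquivClos\Que$ with the rst closure $(\DownClos\Que)^*$ under a homogeneity hypothesis) and \Cref{prop:t8whof} (which lifts openness through the rst closure). The hypothesis that $\DownClos\Que$ is homogeneous is exactly what makes both of these available, so no further structural work on paths or binders is needed here.

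First I would invoke \Cref{downclos-star}: since $\DownClos\Que$ is homogeneous by assumption, that proposition gives the identity ${\DownEquivClos\Que} = (\DownClos\Que)^*$. This reduces the statement to be proved to the equivalence ``$\DownClos\Que$ is open if and only if $(\DownClos\Que)^*$ is open''.

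Next I would apply \Cref{prop:t8whof} with $\Rel \defeq \DownClos\Que$. Its hypothesis is that $\Rel$ is homogeneous, which holds by assumption, and its conclusion is precisely that $\Rel$ is open if and only if $\Rel^*$ is open, i.e.\ that $\DownClos\Que$ is open if and only if $(\DownClos\Que)^*$ is open. Combining this with the identity from the previous step yields $\DownClos\Que$ open $\iff$ $\DownEquivClos\Que$ open, as desired.

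There is no genuine obstacle in this proof: it is purely a matter of chaining the two lemmas, and the only point requiring minimal care is to check that the homogeneity hypothesis on $\DownClos\Que$ is what licenses both the rewriting step and the application of \Cref{prop:t8whof}. The more substantive facts — that equivalence rules preserve homogeneity and openness, and that the rst closure of a homogeneous propagated relation remains propagated — have already been discharged in \Cref{w-l-star}, \Cref{closed-bv-star}, \Cref{downclos-star}, and \Cref{prop:t8whof}, so the corollary itself is essentially a one-line consequence.
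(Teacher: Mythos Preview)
Your proposal is correct and matches the paper's intended argument: the corollary is stated without proof immediately after \Cref{prop:t8whof}, and it follows exactly by combining \Cref{downclos-star} (to identify $\DownEquivClos\Que$ with $(\DownClos\Que)^*$ under the homogeneity hypothesis) with \Cref{prop:t8whof} (to lift openness through the rst closure), just as you do.
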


\begin{theorem}\label{ktuvydmjuylm}
  Let $\Que$ be a query over a \ladag{} $\xgraph$.
  $\DownClos\Que$ is an open bisimulation
  if and only if
  $\DownEquivClos\Que$ is a sharing equivalence.
\end{theorem}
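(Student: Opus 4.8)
The plan is to reduce this biconditional to its two genuinely new ingredients. A sharing equivalence is, by \Cref{def:shar-eq}, an open bisimulation that is also an equivalence relation; since $\DownEquivClos\Que$ is closed under \RuleRefl\,\RuleSym\,\RuleTrans{} by its very definition, it is automatically an equivalence relation. Thus the only properties I need to transfer between $\DownClos\Que$ and $\DownEquivClos\Que$ are being a bisimulation and being open, and both have already been handled by earlier results—so I expect a short proof that simply assembles them.

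First I would dispatch the bisimulation half: \Cref{final} states precisely that $\DownClos\Que$ is a bisimulation if and only if $\DownEquivClos\Que$ is one, so this transfers for free in both implications. Next I would handle openness, where the one point requiring care is that \Cref{equiv-closed-iff} relates openness of $\DownClos\Que$ and of $\DownEquivClos\Que$ only under the side-condition that $\DownClos\Que$ be homogeneous. This side-condition is never an actual obstacle here: in each direction the relevant relation is assumed to be a bisimulation, hence homogeneous by definition, and \Cref{equiv-w-l-iff} then guarantees homogeneity of $\DownClos\Que$ in either case.

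Putting the pieces together, for the forward direction I would assume $\DownClos\Que$ is an open bisimulation. Then \Cref{final} yields that $\DownEquivClos\Que$ is a bisimulation, it is an equivalence relation by construction, and since $\DownClos\Que$ is homogeneous (being a bisimulation) and open, \Cref{equiv-closed-iff} gives that $\DownEquivClos\Que$ is open; hence $\DownEquivClos\Que$ is a sharing equivalence. For the converse I would assume $\DownEquivClos\Que$ is a sharing equivalence: it is in particular a bisimulation, so by \Cref{final} the relation $\DownClos\Que$ is a bisimulation (hence homogeneous), and since $\DownEquivClos\Que$ is open, \Cref{equiv-closed-iff} gives that $\DownClos\Que$ is open; therefore $\DownClos\Que$ is an open bisimulation.

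The main obstacle, such as it is, has already been overcome in the preceding lemmas that reconcile propagation with the reflexive-symmetric-transitive closure—specifically \Cref{downclos-star} and its consequences—so the theorem itself is essentially bookkeeping. The only thing to watch is to discharge the homogeneity hypothesis of \Cref{equiv-closed-iff} before invoking it, which follows in both directions from the fact that the hypothesized relation is a bisimulation.
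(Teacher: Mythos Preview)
Your proposal is correct and follows exactly the paper's approach: the paper's proof is the one-liner ``By \Cref{final} and \Cref{equiv-closed-iff}'', and you have simply unpacked how those two results combine, including the (routine) discharge of the homogeneity hypothesis of \Cref{equiv-closed-iff}.
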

\begin{proof}
  By \Cref{final} and \Cref{equiv-closed-iff}.
\end{proof}

\begin{theorem}%[Universality of $\DownEquivClos\Que$]
  \label{thm:universality-sharp}
  Let $\Que$ be a query over a \ladag{}.
  $\DownEquivClos\Que$ is a sharing equivalence if and only if
  there exists a sharing equivalence containing $\Que$.
\end{theorem}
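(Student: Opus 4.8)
The plan is to reduce everything to the two theorems already established just above, namely the universality of the propagation $\DownClos\Que$ (\Cref{uyvzupoicnui}) and the equivalence between $\DownClos\Que$ being an open bisimulation and $\DownEquivClos\Que$ being a sharing equivalence (\Cref{ktuvydmjuylm}). The backward implication is immediate: if $\DownEquivClos\Que$ is a sharing equivalence, then since $\Que$ is contained in $\DownEquivClos\Que$ by construction, $\DownEquivClos\Que$ itself is the required sharing equivalence containing $\Que$.

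For the forward implication, suppose there exists a sharing equivalence $\equiv$ with $\Que \subseteq {\equiv}$. By definition (\Cref{def:shar-eq}) $\equiv$ is in particular a bisimulation containing $\Que$, so by the universality of $\DownClos\Que$ (\Cref{uyvzupoicnui}) the propagation $\DownClos\Que$ is itself a bisimulation. It then remains only to verify that $\DownClos\Que$ is \emph{open}, for then \Cref{ktuvydmjuylm} immediately gives that $\DownEquivClos\Que$ is a sharing equivalence, concluding the proof.

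To see that $\DownClos\Que$ is open, I would use that $\DownClos\Que$ is the \emph{smallest} propagated relation containing $\Que$ (\Cref{def:unfolding}): since $\equiv$ is a bisimulation, it is in particular propagated, and it contains $\Que$, hence ${\DownClos\Que} \subseteq {\equiv}$. Now if $\node \DownClos\Que \nodex$ for two free variable nodes $\node,\nodex$, then $\node \equiv \nodex$, and the openness of $\equiv$ forces $\node = \nodex$. Thus $\DownClos\Que$ is open, as required.

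The only step that demands a little care is the containment ${\DownClos\Que} \subseteq {\equiv}$ used to transport openness from $\equiv$ down to $\DownClos\Que$; everything else is a mechanical assembly of \Cref{uyvzupoicnui} and \Cref{ktuvydmjuylm}. I do not anticipate a genuine obstacle here, precisely because all the delicate content—the no-triangles argument, the interaction of propagation with domination, and the commutation of the rst closure with the structural rules—has already been absorbed into those two theorems; this final statement is essentially their corollary, packaging them into the clean universality form stated for sharing equivalences.
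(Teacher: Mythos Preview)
Your proposal is correct and follows essentially the same route as the paper, which simply says the non-trivial direction ``follows from \Cref{ktuvydmjuylm} and \Cref{uyvzupoicnui}''. You in fact spell out the one step the paper leaves implicit, namely that openness of $\DownClos\Que$ is inherited from the ambient sharing equivalence via the containment ${\DownClos\Que}\subseteq{\equiv}$; this is exactly the missing glue between those two results, and your justification of the containment by minimality of $\DownClos\Que$ is correct. One cosmetic remark: you have swapped the labels ``forward'' and ``backward'' relative to the usual reading of ``$A$ iff $B$'', but the mathematics is unaffected.
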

\begin{proof}
  The implication from left to right is trivial. The other implication follows from \Cref{ktuvydmjuylm} and \Cref{uyvzupoicnui}.
\end{proof}

\begin{proofof}[\Cref{prop:univfirst}]
\Paste{univfirst}
\end{proofof}
\begin{proof}~
  \begin{enumerate}
    \item By the definition of spreading and \focongruence.
    \item $\DownEquivClos\Que$ is a \focongruence because it is contained by hypothesis in a \focongruence, and therefore it is homogeneous.
    $\DownEquivClos\Que$ is the smallest \focongruence by the definition of spreading.
  \end{enumerate}
\end{proof}

\begin{proofof}[\Cref{prop:univhigher}]
\Paste{univhigher}
\end{proofof}
\begin{proof}\label{proof:univhigher}
  By \Cref{thm:universality-sharp} and the definition of the propagation.
\end{proof}

\section{\texorpdfstring{$\boldsymbol\lambda$-}{Lambda~}Calculus}
\label{sect:lambda-calculus}

In this section, we are going to the define the relationship between the equality of \lat{s} and bisimulations on a \ladag{}.
First, we decompose the equality of readbacks into the equality of their subterms (result similar to \Cref{propagate} and \Cref{w-l-same-paths} for relations):

\begin{proposition}\label{prop-eq-all-pi}
  Let $\noder,\noderx$ be roots of a \ladag{} $\termforest$.
  $\trxnp\noder=\trxnp\noderx$
   holds if and only if, for every trace $\tra$:
   \begin{enumerate}
     \item \emph{Trace Equivalence:}
      $\Path\tra\noder$ if and only if $\Path\tra\noderx$;
     \item \emph{Trace Propagation:}
      if $\Path\tra\noder$ and $\Path\tra\noderx$, then
      $\trx{\Path\tra\noder}=\trx{\Path\tra\noderx}$.
   \end{enumerate}
\end{proposition}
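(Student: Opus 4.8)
The plan is to prove the two implications separately. The right-to-left direction is immediate: instantiating condition~(2) at the empty trace $\tra = \pempty$---whose paths $\Path\pempty\noder$ and $\Path\pempty\noderx$ always exist---yields $\trx{\Path\pempty\noder} = \trx{\Path\pempty\noderx}$, which is exactly $\trxnp\noder = \trxnp\noderx$ by the convention of \Cref{def:unfold-lambda}.

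For the left-to-right direction I would assume $\trxnp\noder = \trxnp\noderx$ and prove that both conditions hold for every trace $\tra$ by structural induction on $\tra$, in the same spirit as \Cref{propagate} and \Cref{w-l-same-paths}. In the base case $\tra = \pempty$, condition~(1) holds since both empty paths exist, and condition~(2) is the hypothesis itself. For the step $\tra = \pcons\Dir\trax$ I would apply the \ih{} to $\trax$. If $\Path\trax\noder$ does not hold, then neither does $\Path\trax\noderx$ (by the \ih{} for condition~(1)), so by the path-formation rules (\Cref{def:paths}) neither of the two extended paths exists, and both conditions hold vacuously. Otherwise, let $\node$ and $\nodex$ be the endpoints of the two paths of trace $\trax$; the \ih{} for condition~(2) gives $\trx{\Path\trax\noder} = \trx{\Path\trax\noderx}$.

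The observation that makes the step work is that the outermost constructor of a readback is determined solely by the \emph{label} of its endpoint: by \Cref{def:unfold-lambda}, a free variable reads back to some $\lfvar\cdot$, a bound variable to some $\lbvar\cdot$, an abstraction to some $\labs\cdot$, and an application to some $\lapp\cdot\cdot$, and these four term shapes are pairwise syntactically distinct. Hence the equality $\trx{\Path\trax\noder} = \trx{\Path\trax\noderx}$ forces $\node$ and $\nodex$ to be homogeneous, and I would case on their common label. If the label is incompatible with $\Dir$ (e.g.\ $\Dir=\pdown$ but the nodes are not abstractions), then neither extended path exists and both conditions are vacuous. If it is compatible---say $\Dir=\pdown$ with $\node = \nabs{\node'}$ and $\nodex = \nabs{\nodex'}$---then both extended paths exist, giving condition~(1); and since \Cref{def:unfold-lambda} unfolds $\trx{\Path\trax\noder}$ as $\labs$ applied to the readback of the body (and symmetrically for $\noderx$), the injectivity of $\labs$ peels off the binder and delivers condition~(2) for $\tra$. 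The two application directions $\pleft$ and $\pright$ are handled identically, using the injectivity of $\lapp$ in the appropriate argument.

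I do not anticipate a real obstacle: this is a routine structural induction, parallel to \Cref{propagate} and \Cref{w-l-same-paths}, the only delicate point being to keep conditions~(1) and~(2) correctly threaded through the vacuous cases so that \emph{same paths} and \emph{equal readbacks along shared paths} are maintained in lockstep. In particular, the bound-variable case needs no special handling of de Bruijn indices: a variable node is a dead end for traces, so it only ever occurs as a non-extending endpoint, and the correctness of its index $\indexOf\cdot\cdot$ is already baked into the equal readbacks of the enclosing abstraction and application nodes.
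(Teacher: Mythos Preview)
Your proof is correct and follows essentially the same approach as the paper: the right-to-left direction is dispatched by instantiating Trace Propagation at $\tra=\pempty$, and the left-to-right direction is a structural induction on $\tra$ that in the step case uses the equality of readbacks at the shorter trace to force the two endpoints to be homogeneous, then peels off the outermost constructor by injectivity. The paper's write-up is terser---it assumes w.l.o.g.\ that $\Path\tra\noder$ exists and leaves the homogeneity-from-equal-readbacks step implicit in the word ``Necessarily''---but the argument is the same.
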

\begin{proof}~
  \begin{itemize}
    \item[$(\Rightarrow)$]
     We assume that $\trxnp\noder=\trxnp\noderx$, and
     proceed by structural induction on $\tra$:
     \begin{itemize}
       \item \emph{Empty Trace.}
          Clearly $\Path\pempty\noder$ and $\Path\pempty\noderx$.
          The fact that $\trx{\Path\pempty\noder} = \trx{\Path\pempty\noderx}$ follows from the hypothesis, since
             $\trx{\Path\pempty\noder}=\trxnp\noder$ and $\trx{\Path\pempty\noderx}=\trxnp\noderx$.

       \item \emph{Trace Cons.} Let $\tra\defeq\pcons\Dir\trax$, and
         assume without loss of generality that $\Path\tra\noder$. We need to prove that $\Path\tra\noderx$ and that $\trx{\Path\tra\noder}=\trx{\Path\tra\noderx}$.

          By inversion $\Path\trax\noder{}$, and by \ih{} $\Path\trax\noderx{}$ and $\trx{\Path\trax\noder}=\trx{\Path\trax\noderx}$. We proceed by cases on $\Dir$:
         \begin{itemize}
           \item Case $\Dir=\pdown$. Necessarily $\Path\trax\noder{\nabs{\node}}$ and $\Path\trax\noderx{\nabs{\nodex}}$ for some $\node,\nodex$.
           By the definition of readback to \lat{s},
           $\trx{\Path\trax\noder{}}=\labs{\trx{\Path\tra\noder{\node}}}$ and $\trx{\Path\trax\noderx{}} = \labs{\trx{\Path\tra\noderx{\nodex}}}$,
           which imply $\trx{\Path\tra\noder{\node}} = \trx{\Path\tra\noderx{\nodex}}$ by the definition of equality.
           \item Case $\Dir = \pleft$. Necessarily $\Path\trax\noder{\napp{\node_1}{\node_2}}$ and $\Path\trax\noder{\napp{\nodex_1}{\nodex_2}}$ for some $\node_1,\node_2,\nodex_1,\nodex_2$.
           By the definition of readback to \lat{s},
           $\trx{\Path\trax\noder{}}=\lapp{\trx{\Path\tra\noder{\node_1}}}{\trx{\Pathp{\pcons\pright\trax}{\noder}{\node_2}}}$ and $\trx{\Path\trax\noderx{}} =\lapp{\trx{\Path\tra\noderx{\nodex_1}}}{\trx{\Pathp{\pcons\pright\trax}{\noderx}{\nodex_2}}}$,
           which imply $\trx{\Path\tra\noder{\node_1}} = \trx{\Path\tra\noderx{\nodex_1}}$ by the definition of equality.
           \item Case $\Dir = \pright$. Similar to the case above.
         \end{itemize}
     \end{itemize}
    \item[$(\Leftarrow)$]
     The statement follows from the hypothesis \emph{Trace Propagation} by taking $\tra\defeq\pempty$.
  \qedhere
  \end{itemize}
\end{proof}

In order to prove \Cref{downclos-iff-alpha}, we need the following auxiliary proposition, similar to \Cref{prop:incasinata} for relations.

\begin{proposition}\label{wow}
  Let $\Rel$ be a \core bisimulation.
  Let $\node,\nodex$ nodes such that $\node\Rel\nodex$, and $\tra$ a trace such that $\Path\tra\node$ crosses $\nodel$ and $\Path\tra\nodex$ crosses $\nodelx$.
  Then $\nodel \Rel \nodelx$ if and only if $\indexOf {\nodel} {\Path\tra\node} = \indexOf {\nodelx} {\Path\tra\nodex}$.
\end{proposition}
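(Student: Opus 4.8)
The plan is to prove the equivalence by structural induction on the two crossing derivations ``$\Path\tra\node$ crosses $\nodel$'' and ``$\Path\tra\nodex$ crosses $\nodelx$'', exactly as in the relational analogue \Cref{prop:incasinata}. Morally this is an induction on $\tra$: since $\node \Rel \nodex$ and $\Rel$ is a \core bisimulation, \Cref{w-l-same-paths} ensures that both paths exist along every prefix of $\tra$, and \Cref{propagate} ensures that the endpoints reached by any common prefix are themselves $\Rel$-related, hence homogeneous. This homogeneity is the mechanism that keeps the two de Bruijn counts in lockstep: at each extension of the path the two sides either both pass through an abstraction, incrementing both indices, or neither does.

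In the base case $\tra=\pempty$ both crossing derivations are forced to be the axiom, so $\node=\nodel$ and $\nodex=\nodelx$; both indices are $0$ by \Cref{def:dbi}, and $\nodel=\node\Rel\nodex=\nodelx$, so both sides of the iff hold. In the inductive step $\tra=\pcons\Dir\tra'$ I would invert the paths to obtain the prefix paths $\Path{\tra'}\node$, $\Path{\tra'}\nodex$ and their endpoints $\node'$, $\nodex'$, which satisfy $\node'\Rel\nodex'$ and are therefore homogeneous, and then split on whether each crossing is the axiom (endpoint equals the crossed node) or the inductive clause (the crossed node already lies on the strict prefix). Here acyclicity (\Cref{def:f-and-a}) is used to note that a node is crossed at a unique position, so the two possibilities are mutually exclusive. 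When both derivations are axioms, both indices are $0$ and $\nodel=\node'\Rel\nodex'=\nodelx$. When both are inductive, homogeneity forces $\node'$ and $\nodex'$ to be simultaneously abstractions (each distinct from $\nodel$, $\nodelx$ by acyclicity) or simultaneously non-abstractions, so by \Cref{def:dbi} each full index is obtained from its prefix index by adding the same constant ($1$ or $0$); the claim then follows from the induction hypothesis applied to $\tra'$.

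The main obstacle — and the only genuinely delicate point — is the two \emph{mixed} cases, where one crossing is an axiom and the other inductive. There the two indices provably differ (one is $0$, the other at least $1$, the relevant endpoint being a further abstraction), so I must show that $\nodel\Rel\nodelx$ fails as well, making both sides of the iff false. The key tool is \Cref{OMG} (no triangles). In the case $\node'=\nodel$ while $\Path{\tra'}\nodex$ crosses $\nodelx$ with endpoint $\nodex'\neq\nodelx$, I would extract a non-empty path $\Path\trax\nodelx\nodex'$ via \Cref{wow-aux}, note $\nodel=\node'\Rel\nodex'$ by \Cref{propagate}, and then, assuming $\nodel\Rel\nodelx$, obtain $\nodel\not\Rel\nodex'$ from \Cref{OMG}, a contradiction. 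The symmetric case is subtler because the extracted non-empty path $\Path\trax\nodel\node'$ leaves $\nodel$ rather than $\nodelx$; since $\Rel$ is \emph{not} assumed symmetric I cannot simply swap roles. Instead I would transport the path across the relation, using \Cref{w-l-same-paths} on $\nodel\Rel\nodelx$ to get $\Path\trax\nodelx\nodexx$ for some $\nodexx$ and \Cref{propagate} to get $\node'\Rel\nodexx$, and then apply \Cref{OMG} to $\node'\Rel\nodelx$ (which holds since $\nodex'=\nodelx$) together with the non-empty path $\Path\trax\nodelx\nodexx$ to contradict $\node'\Rel\nodexx$. Getting the orientation of these no-triangle applications right, given that $\Rel$ need not be symmetric, is the crux of the argument.
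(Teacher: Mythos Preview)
Your proposal is correct and follows essentially the same route as the paper's proof: induction on the trace (equivalently, on the crossing derivations), using \Cref{propagate} to keep the two endpoints homogeneous, and dispatching the mixed axiom/inductive cases via \Cref{wow-aux} and the ``no triangles'' lemma \Cref{OMG}.

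Your treatment of the second mixed case is in fact more careful than the paper's. The paper simply writes ``symmetric to the case above'', but as you observe, $\Rel$ is only assumed to be a \core bisimulation, not a symmetric relation, so one cannot literally swap the two sides in the application of \Cref{OMG}. Your fix---transporting the non-empty path $\Path\trax\nodel{\node'}$ across the assumed relation $\nodel\Rel\nodelx$ via \Cref{w-l-same-paths} to obtain $\Path\trax\nodelx\nodexx$, and then invoking \Cref{OMG} on $\node'\Rel\nodelx$ and that transported path---is a clean way to stay within the stated form of \Cref{OMG}. (An alternative would be to note that the proof of \Cref{OMG} is itself side-symmetric, since both \Cref{propagate} and \Cref{w-l-same-paths} treat the two arguments of $\Rel$ uniformly; but that observation is not made explicit in the paper either.)
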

\begin{proof}
  Let $\Path\tra\node{\node'}$ and $\Path\tra\nodex{\nodex'}$.
  Note that by \Cref{propagate} it follows that $\node'\Rel\nodex'$, which implies that $\node'$ and $\nodex'$ have the same label since $\Rel$ is homogeneous.
  We proceed by structural induction on the trace $\tra$,
  and following the cases of the definition of $\indexOf {\nodel} {\Path\tra\node}$ and $\indexOf {\nodelx} {\Path\tra\nodex}$.

    \begin{enumerate}
      \item Case
       $\indexOf \nodel {\Path\tra\node\nodel} = 0 = \indexOf \nodelx {\Path\tra\nodex\nodelx}$.
       From \Cref{propagate} we obtaine $\nodel \DownClos\Rel \nodelx$, and we can conclude.
      \item Case
       $\indexOf \nodel {\Pathp{\pcons\Dir\tra}\node\nodel} = 0 \neq 1 + \indexOf \nodelx {\Path\tra\nodex} = \indexOf \nodelx {\Pathp{\pcons\Dir\tra}\nodex}$ with $\nodelx\neq\nodex'$.
       It holds that $\nodel \DownClos\Rel \nodex$, and from the fact that $\Path\tra\nodex{\nodex'}$ crosses $\nodelx$ we obtain by \Cref{wow-aux} a trace $\trax$ such that $\Pathp{\pcons\Dir\trax}\nodelx{\nodex'}$.
       From \Cref{OMG} (using $\nodel \Rel \nodex'$ and $\Pathp{\pcons\Dir\trax}\nodelx{\nodex'}$) we obtain $\nodel \not\Rel \nodelx$ and
       conclude.
      \item Case $\indexOf\nodel {\Pathp{\pcons\Dir\tra}\node} = 1 + \indexOf \nodel {\Path\tra\node} \neq 0 = \indexOf \nodel {\Path\tra\nodex\nodelx}$ with $\nodel\neq\node'$.
      Symmetric to the case above.
      \item Case $\indexOf\nodel{\Pathp{\pcons\Dir\tra}\node} = 1 + \indexOf\nodelx{\Path\tra\node}$ and $\indexOf\nodelx {\Pathp{\pcons\Dir\tra}\nodex} = 1 + \indexOf\nodelx{\Path\tra\nodex}$
      with $\nodel\neq\node'$ and $\nodelx\neq\nodex'$.
      Use the \ih{} to conclude.
      \item Case $\indexOf\nodel{\Pathp{\pcons\Dir\tra}\node} = \indexOf {\nodel} {\Path\tra\node}$ and
      $\indexOf {\nodelx} {\Pathp{\pcons\Dir\tra}\nodex} = \indexOf {\nodelx} {\Path\tra\nodex}$:
      use the \ih{} to conclude.
    \end{enumerate}
\end{proof}

We can now prove:

\begin{theorem}\label{downclos-iff-alpha}
  Let $\Que$ a query over a \ladag{} $\xgraph$.
  $\DownClos\Que$ is an open bisimulation
  if and only if
  $\trxnp\node=\trxnp\nodex$
  for every nodes $\node,\nodex$ such that $\node\Que\nodex$.
\end{theorem}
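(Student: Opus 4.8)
The plan is to prove both directions by reducing readback equality to a path-based analysis, using the characterization of \Cref{prop-eq-all-pi} together with the index lemma \Cref{wow}. Throughout I exploit three standing facts: $\Que \subseteq \DownClos\Que$; $\DownClos\Que$ is \emph{propagated} by its very definition (closed under the downward rules); and, crucially, since $\Que$ is a query its related nodes are \emph{roots}, so domination applies to every access path reaching a bound variable. The bound-variable case is where all the delicate interaction between the scoping rule \RuleBV, de Bruijn indices, and domination happens, and I expect it to be the main obstacle in both directions.

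For the forward direction, assume $\DownClos\Que$ is an open bisimulation and fix roots $\node \Que \nodex$, so $\node \DownClos\Que \nodex$. I would invoke \Cref{prop-eq-all-pi} and verify its two conditions. \emph{Trace equivalence} is immediate from \Cref{w-l-same-paths}, since $\DownClos\Que$ is in particular a \core bisimulation. For \emph{trace propagation} the real content lies at the variable leaves, the internal application/abstraction nodes being handled by trace equivalence and the recursive shape of the readback (which terminates because the \ladag{} is finite and acyclic and paths avoid binding edges). Concretely, for a common trace $\tra$ with $\Path\tra\node{\node'}$ and $\Path\tra\nodex{\nodex'}$, \Cref{propagate} gives $\node' \DownClos\Que \nodex'$, and homogeneity forces the same label. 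If $\node',\nodex'$ are free variables, openness gives $\node' = \nodex'$ and hence the same identifier; if $\node' = \nbvar\nodel$ and $\nodex' = \nbvar\nodelx$, closure under \RuleBV yields $\nodel \DownClos\Que \nodelx$, domination ensures $\Path\tra\node$ crosses $\nodel$ and $\Path\tra\nodex$ crosses $\nodelx$, and \Cref{wow} then gives $\indexOf\nodel{\Path\tra\node} = \indexOf\nodelx{\Path\tra\nodex}$, so the two bound-variable readbacks coincide.

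For the converse, assume $\trxnp\node = \trxnp\nodex$ for every $\node \Que \nodex$. Since $\DownClos\Que$ is already propagated, it remains to show it is \emph{homogeneous}, \emph{open}, and \emph{closed under \RuleBV}. The common tool is \Cref{downclos-to-paths}: any pair $\node' \DownClos\Que \nodex'$ arises from a query pair $\node \Que \nodex$ and a trace $\tra$ with $\Path\tra\node{\node'}$, $\Path\tra\nodex{\nodex'}$, and by the forward direction of \Cref{prop-eq-all-pi} applied to $\trxnp\node = \trxnp\nodex$ the readbacks $\trx{\Path\tra\node}$ and $\trx{\Path\tra\nodex}$ are equal. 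Equal readbacks share the same head, and since the head constructor ($\labs$, $\lapp$, $\lfvar$, or $\lbvar$) determines the node's label, this gives homogeneity; for two free variables it forces $\nid{\node'} = \nid{\nodex'}$, hence $\node' = \nodex'$, giving openness.

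Having established homogeneity, $\DownClos\Que$ is now a \core bisimulation, which is exactly what unlocks \Cref{wow} for the last requirement. Given $\nbvar\nodel \DownClos\Que \nbvar\nodelx$, the equal readbacks read $\lbvar{\indexOf\nodel{\Path\tra\node}} = \lbvar{\indexOf\nodelx{\Path\tra\nodex}}$, so the indices match; domination (the paths start at roots $\node,\nodex$) makes them cross $\nodel$ and $\nodelx$, and \Cref{wow} then yields $\nodel \DownClos\Que \nodelx$, i.e. closure under \RuleBV. The one subtlety to flag is precisely this ordering: \Cref{wow} presupposes a \core bisimulation, so homogeneity must be secured \emph{before} \RuleBV-closure — there is no circularity because homogeneity follows from equal readbacks alone, without \RuleBV — and the query hypothesis (only roots are related) is exactly what lets domination govern the relevant access paths.
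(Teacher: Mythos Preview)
Your proposal is correct and follows essentially the same route as the paper: both directions go through \Cref{prop-eq-all-pi}, with \Cref{w-l-same-paths} for trace equivalence, \Cref{downclos-to-paths} and \Cref{propagate} to reduce arbitrary $\DownClos\Que$-pairs to a common trace from a query pair, and \Cref{wow} (together with domination from roots) for the bound-variable case; you also correctly flag that homogeneity must be established before invoking \Cref{wow} in the $(\Leftarrow)$ direction. The only place where the paper is more explicit is the trace-propagation step of the $(\Rightarrow)$ direction: it proves $\trx{\Path\tra\node}=\trx{\Path\tra\nodex}$ for \emph{every} $\tra$ by an explicit induction on $B-|\tra|$ (with $B$ a bound on trace length), handling the $\LabelAbs$ and $\LabelApp$ cases via the inductive hypothesis, whereas you gesture at ``the recursive shape of the readback''; your argument is sound, but you may want to spell out that induction.
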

\begin{proof}~
  First note that $\DownClos\Que$ is an open bisimulation if and only if $\DownClos\Que$ is open, homogeneous, and closed under \RuleBV. We now prove separately the two implications of the statement of the theorem:
  \begin{itemize}
    \item[($\Leftarrow$)]
     Assume $\trxnp\node=\trxnp\nodex$ for every $\node\Que\nodex$; we prove that $\DownClos\Que$ is open, homogeneous, and closed under \RuleBV:
     \begin{itemize}
       \item \emph{Open.}
       Let $\node\DownClos\Que \nodex$ where $\node$ and $\nodex$ are free variable nodes.
       By \Cref{downclos-to-paths}, $\Path\tra{\node'}\node$ and $\Path\tra{\nodex'}\nodex$ for some $\tra$ and $\node'\Que\nodex'$.
       By \Cref{prop-eq-all-pi} $\trx{\Path\tra{\node'}}=\trx{\Path\tra{\nodex'}}$.
       By the definition of readback to \lat{s} $\lfvar{\nid\node}=\lfvar{\nid\nodex}$, which implies  $\nid\node=\nid\nodex$ and therefore $\node=\nodex$.
       \item \emph{Homogeneous.}
       Let $\node\DownClos\Que \nodex$.
       By \Cref{downclos-to-paths}, $\Path\tra{\node'}\node$ and $\Path\tra{\nodex'}\nodex$ for some $\tra$ and $\node'\Que\nodex'$.
       By \Cref{prop-eq-all-pi} $\trx{\Path\tra{\node'}}=\trx{\Path\tra{\nodex'}}$.
       Conclude by the definition of readback to \lat{s}.
    \item \emph{Closed under \RuleBV.}
    Let $\nbvar\nodel \DownClos\Que \nbvar\nodelx$: we need to prove that $\nodel \DownClos\Que \nodelx$.
    By \Cref{downclos-to-paths}, $\Path\tra\node{\nbvar\nodel}$ and $\Path\tra\nodex{\nbvar\nodelx}$ for some $\tra$ and $\node\Que\nodex$,
    and by \Cref{prop-eq-all-pi}
    $\trx{\Path\tra\node{\nbvar\nodel}} = \trx{\Path\tra\nodex{\nbvar\nodelx}}$.
    By the definition of the readback to \lat{s},
    $\indexOf \nodel {\Path\tra\node} = \indexOf \nodelx {\Path\tra\nodex}$. Conclude by \Cref{wow}.
    \end{itemize}

    \item[($\Rightarrow$)]
     Assume that $\DownClos\Que$ is open, homogeneous, and closed under \RuleBV, and let $\node\Que\nodex$. In order to prove that $\trxnp\node=\trxnp\nodex$, by \Cref{prop-eq-all-pi} it suffices to prove that for every trace $\tra$ the conditions \emph{Trace Equivalence} and \emph{Trace Propagation} hold:
     \begin{itemize}
       \item \emph{Trace Equivalence.} Note that $\node\Que\nodex$ implies $\node\DownClos\Que\nodex$, and conclude by \Cref{w-l-same-paths}.
       \item \emph{Trace Propagation.} Assume $\Path\tra\node{\node'}$ and $\Path\tra\nodex{\nodex'}$. By \Cref{downclos-to-paths}, $\node'\DownClos\Que\nodex'$. Note that $\node'$ and $\nodex'$ have the same label because $\DownClos\Que$ is homogeneous by hypothesis. By the property that the \ladag{} is finite and acyclic, there exists a bound on the length of traces in the \ladag{}, say $B$.
       The proof that $\trx{\node'}=\trx{\nodex'}$ proceeds by (course of value) induction on $B-|\tra|$, and by cases on the label of the nodes:
       \begin{itemize}
         \item The cases $\nfvar \DownClos\Que \nbvar\nodel$ and $\nbvar\nodel \DownClos\Que \nfvar$ are impossible because $\DownClos\Que$ is homogeneous.
         \item Case $\nfvar \DownClos\Que \nfvar$. From the hypothesis that $\DownClos\Que$ is open it follows that $\node'=\nodex'$, and we conclude.
         \item Case $\nbvar\nodel \DownClos\Que \nbvar\nodelx$. From the hypothesis that $\DownClos\Que$ is closed under \RuleBV{} it follows that $\nodel \DownClos\Que \nodelx$.
         We need to show that $\trx{\Path\tra\node{\nbvar\nodel}}=\trx{\Path\tra\nodex{\nbvar\nodelx}}$, \ie{} $\indexOf \nodel {\Path\tra\node{\nbvar\nodel}} = \indexOf \nodelx {\Path\tra\nodex{\nbvar\nodelx}}$.
         Since $\Que$ is a query, $\Path\tra\node$ crosses $\nodel$ and $\Path\tra\nodex$ crosses $\nodelx$. The thesis
         $\indexOf \nodel {\Path\tra\node{\nbvar\nodel}} = \indexOf \nodelx {\Path\tra\nodex{\nbvar\nodelx}}$ follows from \Cref{wow}.
         \item Case $\LabelAbs$.
         By \ih{} $\trx{\Pathp{\pcons\pdown\tra}{\node}} = \trx{\Pathp{\pcons\pdown\tra}{\nodex}}$.
         We conclude by the definition of readback, since $\trx{\Path\tra\node} = \labs{\trx{\Pathp{\pcons\pdown\tra}{\node}}}$ and $\trx{\Pathp{\pcons\pdown\tra}{\nodex}} = \labs{\trx{\Pathp{\pcons\pdown\tra}{\nodex}}}$.
         \item Case $\LabelApp$.
         By \ih{} $\trx{\Pathp{\pcons\pleft\tra}{\node}}=\trx{\Pathp{\pcons\pleft\tra}{\nodex}}$ and $\trx{\Pathp{\pcons\pright\tra}{\node}}=\trx{\Pathp{\pcons\pright\tra}{\nodex}}$.
         \sloppy We conclude by the definition of readback, since $\trx{\Path\tra\node} = \lapp{\trx{\Pathp{\pcons\pleft\tra}{\node}}}{\trx{\Pathp{\pcons\pright\tra}{\node}}}$ and $\trx{\Path\tra\nodex} = \lapp{\trx{\Pathp{\pcons\pleft\tra}{\nodex}}}{\trx{\Pathp{\pcons\pright\tra}{\nodex}}}$.
         \qedhere
       \end{itemize}
     \end{itemize}
  \end{itemize}
\end{proof}

\begin{proofof}[\Cref{thm:alpha_then_cong_scoped-body}]
  \Paste{thm:alpha_then_cong_scoped-body}
\end{proofof}
\begin{proof}
  By \Cref{ktuvydmjuylm} and \Cref{downclos-iff-alpha}.
\end{proof}

\section{``Up To'' Relations}

We introduce the notion of relations that are propagated \emph{up to} another relation. This new concept will be necessary when formulating the invariants of the algorithm, which need to capture properties of incomplete relations since the algorithm progressively constructs the required relations.

\begin{definition}[Propagated Up To]\label{def:dc-up-to}
  Let $\Rel,\Relx$ be binary relations over the nodes of a \ladag{} $\xgraph$.
  We say that $\Rel$ is propagated up to $\Relx$ when
  $\Rel$ is closed under the following rules (which are variants of the rules \RuleAppl\,\RuleAppr\,\RuleAbs\, in \Cref{all-inference-rules}):
\begin{enumerate} \setlength\itemsep{1em}
  \item {\AxiomC{$\napp{\node_1}{\node_2}\Rel\napp{\nodex_1}{\nodex_2}$}
  \UnaryInfC{$\node_1 \Relx \nodex_1$}
  \DisplayProof}
  \item {\AxiomC{$\napp{\node_1}{\node_2}\Rel\napp{\nodex_1}{\nodex_2}$}
  \UnaryInfC{$\node_2 \Relx \nodex_2$}
  \DisplayProof}
  \item {\AxiomC{$\nabs\node \Rel \nabs\nodex$}
  \UnaryInfC{$\node \Relx \nodex$}
  \DisplayProof}
\end{enumerate}
\end{definition}

Note that when the relations $\Rel$ and $\Relx$ are the same, the concept of propagated up to coincides with the usual concept of propagated:

\begin{fact}\label{upto-useless}
  If a relation $\Rel$ is propagated upto $\Rel$, then
  $\Rel$ is simply propagated.
\end{fact}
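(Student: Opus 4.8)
The plan is to observe that the statement holds by pure unfolding of definitions, instantiating the target relation $\Relx$ of \Cref{def:dc-up-to} with $\Rel$ itself. First I would write out the three closure conditions of \Cref{def:dc-up-to} in the special case $\Relx = \Rel$. The premise of each of those rules already mentions only $\Rel$ (namely $\napp{\node_1}{\node_2}\Rel\napp{\nodex_1}{\nodex_2}$ for the first two, and $\nabs\node\Rel\nabs\nodex$ for the third), so the substitution affects only the conclusions: $\node_1 \Relx \nodex_1$ becomes $\node_1 \Rel \nodex_1$, $\node_2 \Relx \nodex_2$ becomes $\node_2 \Rel \nodex_2$, and $\node \Relx \nodex$ becomes $\node \Rel \nodex$.

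Comparing the three resulting rules with the rules \RuleAppl, \RuleAppr, \RuleAbs of \Cref{all-inference-rules}, they coincide verbatim. Hence requiring $\Rel$ to be propagated up to $\Rel$ is literally the same as requiring $\Rel$ to be closed under \RuleAppl, \RuleAppr, \RuleAbs, which by \Cref{def:dc} is precisely what it means for $\Rel$ to be propagated. This gives the claim.

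There is no genuine obstacle here: the only point worth checking is that the premises of the up-to rules refer to $\Rel$ and not to the target relation $\Relx$, so that after setting $\Relx = \Rel$ every rule becomes a self-contained closure condition on the single relation $\Rel$ --- and this is exactly how \Cref{def:dc-up-to} is phrased. I therefore expect the formal argument to be a one-line appeal to the two definitions.
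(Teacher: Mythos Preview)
Your proposal is correct and matches the paper's treatment: the paper states this as a \emph{Fact} with no proof, preceded by the remark that when $\Rel$ and $\Relx$ coincide the notion of propagated up to collapses to propagated, which is exactly the definitional unfolding you carry out.
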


Two properties of up to relations follow.
The first proposition shows that the property of being propagated upto $\Relx$ is monotonous on $\Relx$.

\begin{proposition}[Monotonicity]\label{dc-upto-sub}
  Let $\Rel,\Relx,\Relxx$ be binary relations over the nodes of a \ladag{} $\xgraph$.
  If $\Rel$ is propagated upto $\Relx$, and ${\Relx} \subseteq {\Relxx}$, then $\Rel$ is propagated upto $\Relxx$.
\end{proposition}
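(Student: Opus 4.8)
The plan is to verify directly, rule by rule, that $\Rel$ satisfies the three closure conditions of \Cref{def:dc-up-to} with $\Relx$ replaced by $\Relxx$. The key observation is that in each of those rules the \emph{premise} speaks only about $\Rel$, which is left untouched by the hypotheses, while only the \emph{conclusion} mentions the second relation; hence enlarging the second relation can only make the closure conditions easier to satisfy.

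First I would unfold the hypothesis that $\Rel$ is propagated up to $\Relx$, obtaining the three implications stating, respectively, that $\napp{\node_1}{\node_2}\Rel\napp{\nodex_1}{\nodex_2}$ implies $\node_1 \Relx \nodex_1$ and $\node_2 \Relx \nodex_2$, and that $\nabs\node \Rel \nabs\nodex$ implies $\node \Relx \nodex$. Then, to establish that $\Rel$ is propagated up to $\Relxx$, I would take an arbitrary instance of each premise and derive the corresponding conclusion in $\Relxx$. Concretely, for the first rule, assume $\napp{\node_1}{\node_2}\Rel\napp{\nodex_1}{\nodex_2}$; by the hypothesis that $\Rel$ is propagated up to $\Relx$ we get $\node_1 \Relx \nodex_1$, and the inclusion ${\Relx} \subseteq {\Relxx}$ immediately yields $\node_1 \Relxx \nodex_1$, as required. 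The second and third rules are treated identically: the same premise (or $\nabs\node \Rel \nabs\nodex$ for the abstraction rule) gives the conclusion in $\Relx$, which is promoted to $\Relxx$ by the inclusion.

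There is no real obstacle here: the statement is essentially a one-line consequence of the shape of the up-to rules together with the monotonicity of set membership under inclusion. The only point worth keeping in mind is that the two relations play asymmetric roles in \Cref{def:dc-up-to} — the first relation $\Rel$ governs the hypotheses and remains fixed, whereas the second relation governs the conclusions and is the one being enlarged — so no property of $\Rel$ itself needs to be re-examined, and in particular the derivation never requires $\Relxx$ on the premise side.
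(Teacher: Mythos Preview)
Your proposal is correct and follows essentially the same approach as the paper: verify the closure conditions of \Cref{def:dc-up-to} rule by rule, using the hypothesis to obtain the conclusion in $\Relx$ and then the inclusion ${\Relx}\subseteq{\Relxx}$ to lift it to $\Relxx$. The paper's proof spells out only the first rule and notes the others are similar, but the argument is identical to yours.
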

\begin{proof}
  Let $\Rel$ be propagated upto $\Relx$, and let $\Relxx$ be a relation such that ${\Relx} \subseteq {\Relxx}$.
  In order to show that $\Rel$ is propagated upto $\Relxx$, it suffices to check that it is closed under the rules of \Cref{def:dc-up-to}. We show that it is closed under the first rule, the proof for the other rules is similar.
  Let $\napp{\node_1}{\node_2}\Rel\napp{\nodex_1}{\nodex_2}$; since $\Rel$ is propagated upto $\Relx$, it follows that $\node_1 \Relx \nodex_1$. Since ${\Relx} \subseteq {\Relxx}$, we conclude with $\node_1 \Relxx \nodex_1$.
\end{proof}

The second proposition shows that the property of being propagated upto commutes with the rst-closure.

\begin{proposition}\label{dc-star}
  Let $\Rel$ be a binary relation over the nodes of a \ladag{} $\xgraph$, and assume $\Rel$ to be homogeneous.
  If $\Rel$ is propagated upto $\Rel^*$, then $\Rel^*$ is propagated.
\end{proposition}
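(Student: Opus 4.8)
The plan is to show that $\Rel^*$ is closed under the three propagation rules \RuleAppl, \RuleAppr, \RuleAbs, which is exactly what it means for $\Rel^*$ to be propagated. The argument is identical for each rule, so I would spell out the case \RuleAppl\, and note that \RuleAppr\, and \RuleAbs\, are analogous. The very first move is to record that, since $\Rel$ is homogeneous by hypothesis, \Cref{w-l-star} gives that $\Rel^*$ is homogeneous as well; this fact is used crucially below.

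For \RuleAppl\, I would fix a pair $\napp{\node_1}{\node_2}\Rel^*\napp{\nodex_1}{\nodex_2}$ and prove $\node_1 \Rel^* \nodex_1$ by induction on the inductive definition of $\Rel^*$. In the base case, where the pair is related directly by $\Rel$, the conclusion is immediate: the hypothesis that $\Rel$ is propagated upto $\Rel^*$ yields $\node_1 \Rel^* \nodex_1$ at once, which is precisely the point of the ``upto'' formulation (the base case already delivers an $\Rel^*$-relation rather than a mere $\Rel$-relation). The \RuleRefl\, case forces $\napp{\node_1}{\node_2} = \napp{\nodex_1}{\nodex_2}$, hence $\node_1 = \nodex_1$, and closes by reflexivity; the \RuleSym\, case applies the induction hypothesis to the swapped premise, obtaining $\nodex_1 \Rel^* \node_1$, and closes by symmetry.

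The delicate case, and the one I expect to be the main obstacle, is \RuleTrans. Here $\napp{\node_1}{\node_2}\Rel^*\napp{\nodex_1}{\nodex_2}$ holds because $\napp{\node_1}{\node_2}\Rel^*\nodexx$ and $\nodexx\Rel^*\napp{\nodex_1}{\nodex_2}$ for some intermediate node $\nodexx$ which, a priori, could have any shape. The key observation is that homogeneity of $\Rel^*$ forces $\nodexx$ to be an application node, say $\nodexx = \napp{\nodexx_1}{\nodexx_2}$; this is exactly where the homogeneity hypothesis on $\Rel$, lifted through \Cref{w-l-star}, is indispensable, since otherwise the induction hypothesis could not even be applied to the two premises. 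Once $\nodexx$ is known to be an application, the induction hypothesis gives $\node_1 \Rel^* \nodexx_1$ and $\nodexx_1 \Rel^* \nodex_1$, and transitivity concludes $\node_1 \Rel^* \nodex_1$. The overall structure mirrors that of \Cref{closed-bv-star}, the only substantive difference being the handling of the base case through the upto hypothesis.
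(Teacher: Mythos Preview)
Your proposal is correct and matches the paper's approach. The paper's proof is terser: it first invokes \Cref{upto-useless} to rephrase the goal as ``$\Rel^*$ is propagated upto $\Rel^*$'' and then simply says ``the rest of the proof is similar to the proof of \Cref{closed-bv-star}'', which is exactly the induction on $\Rel^*$ you spell out (base case via the upto hypothesis, \RuleRefl/\RuleSym\ trivially, and \RuleTrans\ using homogeneity of $\Rel^*$ from \Cref{w-l-star} to type the intermediate node).
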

\begin{proof}
  Let $\Rel$ be propagated upto $\Rel^*$.
  In order to prove that $\Rel^*$ is propagated, by \Cref{upto-useless} it suffices to prove that $\Rel^*$ is propagated upto $\Rel^*$.
  The rest of the proof is similar to the proof of \Cref{closed-bv-star}.
\end{proof}

\begin{definition}[\Core Bisimulation Upto]
  Let $\Rel,\Relx$ be binary relations over the nodes of a \ladag{} $\xgraph$.
  We say that $\Rel$ is a \emph{\core bisimulation upto} $\Relx$ when
  $\Rel$ is homogeneous and propagated upto $\Relx$.
\end{definition}

% Algorithm 1: Homogeneous Check
\section{The \HomogeneityCheck}\label{sect:algorithm1}
% !TEX root = ../main.tex
\subsection{General Properties}

In this section, we are going to prove general properties of program runs.
The properties are grouped according to the concepts that they analyse,
\ie{} canonic assignment, \BuildClass, \Enqueue, enqueuing, dequeuing, parents, \simSibling{s}.

\paragraph{Canonic Assignment.}
The algorithm assigns a canonic to each node $\node$: intuitively, $\myc\node$ is the canonic representative of the equivalence class to which $\node$ belongs.

The following lemma is a key property required by many of the next results: it shows that nodes cannot change class during a program run, \ie{} once a node is assigned to an equivalence class, it is never re-assigned.

\begin{proposition}[Canonic Assignment is Definitive]\label{canonic-not-overwritten}
  Let $\node$ be a node.
  In every program run:
  \begin{itemize}
    \item $\myc\node$ is never assigned to undefined;
    \item $\myc\node$ is assigned at most once.
  \end{itemize}
\end{proposition}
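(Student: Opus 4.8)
The plan is a direct inspection of the transition system, localized to the two code lines that ever write the canonic component of a state. First I would observe that $\myc\node$ (equivalently $\pointer\node$) is modified only by \Cref{vq-can-set-one}, the first instruction of the procedure \BuildClass, and by \Cref{vq-can-set-two} inside the procedure \Enqueue; every other line of \Cref{alg:yes-queue-sharing-check} leaves the canonic assignment untouched. Both of these writes store an actual node of the \ladag{} (namely the current canonic $\cannode$), never the value $\cundefined$. This already settles the first bullet: no transition of any program run resets a canonic to undefined, and hence, once $\myc\node$ becomes defined, it stays defined for the rest of the run.

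For the second bullet I would show that each of the two write sites is reached only when the canonic of the node being written is currently undefined. The procedure \BuildClass is called on a node $\node$ only at \Cref{vq-build-class-one}, guarded by the test ``$\pointer\node$ undefined'', and at \Cref{vq-build-class-two}, in the branch of the case analysis taken precisely when $\pointer\nodetwo$ is undefined; in either case the argument's canonic is undefined at the moment of the call, and \Cref{vq-can-set-one} executes immediately as the first instruction, with nothing interleaved. Symmetrically, \Enqueue is called on a node $\nodetwo$ only at \Cref{vq-enqueue-one}, again in the ``undefined'' branch, so $\myc\nodetwo$ is undefined at the call; here the only code preceding the write at \Cref{vq-can-set-two} is the homogeneity case analysis at \Cref{vq-homo}, which merely creates query edges, fails, or does nothing, and in particular never modifies $\myc\nodetwo$. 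Thus whenever \Cref{vq-can-set-one} or \Cref{vq-can-set-two} writes $\myc\node$, the canonic of $\node$ was undefined just before.

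Combining the two observations closes the argument, and I would phrase it as an induction on the transitions of the run maintaining the invariant that each canonic, once defined, is never changed. Suppose toward a contradiction that some node $\node$ had $\myc\node$ assigned twice during a run. By the first bullet the canonic is never unset, so at the instant of the second write $\myc\node$ is already defined; but every write to $\myc\node$ is guarded, as just shown, by $\myc\node$ being undefined at that instant, a contradiction. Hence $\myc\node$ is assigned at most once. The only point requiring care, and thus the would-be obstacle, is precisely the interdependence of the two bullets together with the verification that nothing executing between a guard and its corresponding write can re-enable the guard; this is immediate for \BuildClass (the write is its first instruction) and, for \Enqueue, reduces to checking that the case analysis at \Cref{vq-homo} does not touch $\myc\nodetwo$.
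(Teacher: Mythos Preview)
Your proposal is correct and follows essentially the same approach as the paper: identify the two write sites for $\myc\node$, observe that neither writes $\cundefined$, and check that each is only reached via a call that is guarded by the canonic being undefined. You are slightly more explicit than the paper in verifying that nothing executed between the guard and the write (in particular the homogeneity case analysis in \Enqueue) can modify $\myc\nodetwo$, which is a fair point of care but does not change the argument.
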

\begin{proof}
  No line of the algorithm assigns a canonic to undefined, \ie{} after $\myc\node$ is defined, it remains defined throughout the program run.

  In order to show that $\myc\node$ is never assigned twice, it suffices to show that, whenever a canonic is assigned, it was previously undefined.
  A canonic is assigned only during the execution of two lines of the algorithm:
  \begin{itemize}
    \item \Cref{vq-can-set-one}, during the execution of $\BuildClass{\cannode}$ that assigns $\myc \cannode \defeq \cannode$. Note that only \Cref{vq-build-class-one} and \Cref{vq-build-class-two} can call $\BuildClass$, and both lines check beforehand whether the canonic of $\cannode$ is undefined.
    \item \Cref{vq-can-set-two}, during the execution of $\Enqueue{\nodetwo, \cannode}$ that assigns $\myc \nodetwo$ to $\cannode$. Only \Cref{vq-enqueue-one} can call \sloppy$\Enqueue$, and that line checks beforehand whether the canonic of $\nodetwo$ is undefined.
  \qedhere
  \end{itemize}
\end{proof}

The following lemma shows that the mechanism of canonic representatives is correct,
\ie{} if $\cannode$ is assigned as canonic of an equivalence class, then $\cannode$ is itself a representative of that class:

\begin{proposition}[$\myc\cdot$ is Idempotent]\label{inv-idemp}
  Let $\cannode,\node$ be nodes, and $\State$ be a reachable state such that $\myc\node=\cannode$. Then $\myc\cannode$ is defined and $\myc\cannode=\cannode$.
\end{proposition}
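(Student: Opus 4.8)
The plan is to reason by cases on the single transition that assigns $\myc\node$ its value. By \Cref{canonic-not-overwritten}, in any program run $\myc\node$ is never reset to undefined and is assigned at most once, so in the run reaching $\State$ there is a unique transition at which $\myc\node$ is set to $\cannode$, and once set it keeps this value through $\State$. Inspecting the algorithm, a canonic is written at only two lines: \Cref{vq-can-set-one} inside $\BuildClass$, and \Cref{vq-can-set-two} inside $\Enqueue$. I would split along these two possibilities.

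In the first case, the assignment executed is $\pointer{\cannode'} \defeq \cannode'$ for the argument $\cannode'$ of the active call $\BuildClass{\cannode'}$, so the node whose canonic is being written coincides with the assigned canonic. Identifying $\node$ with $\cannode'$, this forces $\node = \cannode$, whence $\myc\cannode = \myc\node = \cannode$ trivially, and in particular $\myc\cannode$ is defined.

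The heart of the argument is the second case, where the write happens at \Cref{vq-can-set-two} during a call $\Enqueue{\node, \cannode}$, so that $\myc\node \defeq \cannode$. Here I would invoke the control-flow fact that $\Enqueue$ is called only from \Cref{vq-enqueue-one}, which lies inside the while-loop body of a call $\BuildClass{\cannode}$ carrying the very same canonic $\cannode$. That call began, as its first instruction, by executing \Cref{vq-can-set-one}, which performed $\myc\cannode \defeq \cannode$ strictly before the current $\Enqueue$ step. Since $\BuildClass{\cannode}$ is the caller of this $\Enqueue$ and has therefore not yet returned, the assignment $\myc\cannode = \cannode$ occurred at an earlier transition of the same run; by \Cref{canonic-not-overwritten} it is never overwritten, so $\myc\cannode$ is defined and equal to $\cannode$ in $\State$, as required.

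The only delicate point---and essentially the sole obstacle---is the bookkeeping of the call structure in the second case: one must be certain that the $\cannode$ appearing as second argument of $\Enqueue$ is exactly the argument of the enclosing $\BuildClass$ whose prologue already canonized it. This is immediate from the syntax of the algorithm, since $\Enqueue{\nodetwo,\cannode}$ at \Cref{vq-enqueue-one} is issued with the current $\cannode$ of $\BuildClass$, but it is the one place where the informal execution-stack reasoning must be made precise; everything else is a direct reading of the two assignment lines combined with \Cref{canonic-not-overwritten}.
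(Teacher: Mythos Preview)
Your proof is correct and follows essentially the same approach as the paper: a case split on the two lines that can assign a canonic, with the first case trivial since $\node=\cannode$, and the second case traced back through the unique caller $\BuildClass{\cannode}$ whose prologue already set $\myc\cannode=\cannode$, appealing to \Cref{canonic-not-overwritten} for persistence. The paper's proof is more terse but structurally identical.
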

\begin{proof}
  Let $\node$ be a node. It suffices to show that whenever $\myc\node$ is assigned to $\cannode$, it holds that $\myc\cannode=\cannode$; once true, this assertion cannot be later falsified because the canonic assignment is definitive (\Cref{canonic-not-overwritten}).

  A canonic is assigned only during the execution of two lines of the algorithm:
  \begin{itemize}
    \item \Cref{vq-can-set-one}, during the execution of $\BuildClass{\cannode}$. In this case $\node=\cannode$ and we conclude.
    \item \Cref{vq-can-set-two}, during the execution of $\Enqueue{\nodetwo,\cannode}$ that assigns $\myc \nodetwo$ to $\cannode$. Only \Cref{vq-enqueue-one} of \sloppy$\BuildClass{\cannode}$ can call $\Enqueue{\nodetwo, \cannode}$, and that line comes after \Cref{vq-can-set-one} which sets $\myc\cannode$ to $\cannode$.
    Therefore $\myc\cannode=\cannode$ in $\State$ as well, because the canonic assignment is definitive (\Cref{canonic-not-overwritten}).
  \qedhere
  \end{itemize}
\end{proof}

During a program run nodes are assigned a canonic, \ie{} temporary equivalence classes are
extended with new nodes.
 If the algorithm
terminates successfully, the equivalence classes cover the whole set of nodes, as the following lemma shows:

\begin{proposition}[Canonic Assignment Completed]\label{all-canonic-assigned}
  In a final state $\State_f$, every node has a canonic assigned.
\end{proposition}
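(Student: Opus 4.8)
The plan is to exploit two facts: that the top-level \texttt{for} loop of \CheckHomogeneous{} (\Cref{vq-main-loop}) ranges over \emph{every} node of the \ladag, and that canonic assignments, once made, are permanent. First I would observe that a final state $\State_f$---being reachable yet admitting no further transitions---must be one in which that loop has run to completion: as long as the loop has an unprocessed node there is always a next transition (either a further iteration or the loop exit), so having no successor forces every iteration to have been performed. Thus each node of $\xgraph$ is processed by the main loop before $\State_f$ is reached.

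Next I would fix an arbitrary node $\node$ and inspect the iteration of the main loop that processes it. Either $\pointer\node$ is already defined at that moment, or the guard on \Cref{vq-build-class-one} fires and calls $\BuildClass{\node}$; but the very first instruction of $\BuildClass{\cannode}$ (\Cref{vq-can-set-one}) executes $\pointer\cannode \defeq \cannode$, so $\pointer\node$ becomes defined. In both cases $\pointer\node$ is defined by the end of its iteration. By \Cref{canonic-not-overwritten} (Canonic Assignment is Definitive), a defined canonic is never reset to undefined, so $\pointer\node$ remains defined through all later transitions, and in particular in $\State_f$. Since $\node$ was arbitrary and every node is processed before $\State_f$, every node has a canonic assigned in $\State_f$.

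The main obstacle I expect is the first step: extracting, from the deliberately informal notion of \emph{program transition}, the precise claim that a state with no successor is one in which the top-level loop has genuinely exhausted all its iterations (rather than stalling mid-iteration). This rests on the convention that each line of code---including each loop iteration and the loop's exit---counts as a transition, so that any state with pending work admits a successor. The recursive calls to $\BuildClass$ on parents (\Cref{vq-build-class-two}) do not interfere: they may assign additional canonics, which only helps, and by \Cref{canonic-not-overwritten} they never clear any.
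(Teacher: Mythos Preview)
Your proposal is correct and follows essentially the same approach as the paper's proof: both argue that a final state implies the main loop on \Cref{vq-main-loop} has completed, then case-split on whether $\pointer\node$ was already defined at the iteration for $\node$ or whether $\BuildClass{\node}$ is invoked and sets it on \Cref{vq-can-set-one}, concluding via \Cref{canonic-not-overwritten}. Your extra care in justifying that a final state forces loop completion, and your remark that recursive calls on \Cref{vq-build-class-two} cannot interfere, are sound elaborations but not substantively different from the paper's argument.
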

\begin{proof}
  Let $\node$ be a node, and let us show that $\myc\node$ is defined in $\State_f$. Since the algorithm terminated, execution exited the main loop on \Cref{vq-main-loop}. That loop iterates on every node $\node$ of the \ladag{}, therefore for every $\node$ there must exist a state $\State$ prior to $\State_f$ such that the next transition is the execution of \Cref{vq-build-class-one} with local variable $\node$.
  \Cref{vq-build-class-one} first checks whether $\myc\node$ is defined. If it is, it does nothing, and we can conclude because $\myc\node$ must be defined in $\State_f$ too since the canonic assignment is definitive (\Cref{canonic-not-overwritten}).
  If instead $\myc\node$ is not defined in $\State$, then $\BuildClass{\node}$ is called; when $\BuildClass{\node}$ is executed, $\node$ is assigned a canonic (\Cref{vq-can-set-one}), and again the canonic is still assigned in $\State_f$ since the canonic assignment is definitive (\Cref{canonic-not-overwritten}).
\end{proof}

\paragraph{BuildEquivalenceClass.} % TODO
 The procedure $\BuildClass{\cannode}$ has the effect of constructing the equivalence class
 of a node $\cannode$, where $\cannode$ is chosen as canonic representative of that class.

\begin{proposition}[No Multiple Calls to $\BuildClass{\node}$]\label{buildclass-once}
  In every program run, $\BuildClass{\node}$ is called at most once for every node $\node$.
\end{proposition}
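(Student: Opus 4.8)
The plan is to observe that $\BuildClass{\cdot}$ is invoked at only two places in the code, and that both invocations are guarded by a test that the target node has no canonic yet. First I would enumerate the call sites: inspecting \Cref{alg:yes-queue-sharing-check}, the procedure $\BuildClass{\cdot}$ is called only on \Cref{vq-build-class-one} (inside the main loop of \CheckHomogeneous{}) and on \Cref{vq-build-class-two} (on an as-yet unvisited parent during an active $\BuildClass$). Crucially, on both lines the call $\BuildClass{\node}$ is performed only after checking that $\pointer\node$ is undefined, \ie{} that $\node$ has not yet been assigned a canonic.

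Next I would track what happens to the canonic of $\node$ across a call. The very first instruction executed by $\BuildClass{\node}$ is \Cref{vq-can-set-one}, which sets $\pointer\node \defeq \node$; hence immediately upon entering the procedure the canonic of $\node$ becomes defined. By the previously established \Cref{canonic-not-overwritten} (Canonic Assignment is Definitive), once $\myc\node$ is defined it is never reassigned to undefined for the remainder of the run. Consequently, from the moment $\BuildClass{\node}$ is first called onward, the guard condition ``$\pointer\node$ undefined'' that precedes both call sites can never hold again.

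The argument then concludes directly: were $\BuildClass{\node}$ called a second time, that second call would have to pass through one of the two guarded lines, which is impossible since the guard has been permanently falsified by the first call's assignment on \Cref{vq-can-set-one}. Therefore $\BuildClass{\node}$ is called at most once for each node $\node$.

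I expect no genuine obstacle here: the statement is an immediate consequence of canonic definitiveness (\Cref{canonic-not-overwritten}) combined with the syntactic fact that every call site is guarded by the undefinedness test. The only points requiring minor care are to verify that the enumeration of call sites is exhaustive (there are exactly these two), and to note that no line executed between a guard and the corresponding assignment on \Cref{vq-can-set-one} can reset the canonic back to undefined; both are discharged by a routine inspection of the code and by appealing to \Cref{canonic-not-overwritten}.
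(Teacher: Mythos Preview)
Your proposal is correct and follows essentially the same approach as the paper's own proof: enumerate the two guarded call sites, observe that \Cref{vq-can-set-one} immediately defines the canonic, and invoke \Cref{canonic-not-overwritten} to conclude the guard can never pass again. The only difference is that you spell out the routine checks (exhaustiveness of call sites, nothing resetting the canonic between guard and assignment) a bit more explicitly than the paper does.
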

\begin{proof}
  Only \Cref{vq-build-class-one} and \Cref{vq-build-class-two} can call $\BuildClass(\node)$, and only if the canonic of $\node$ is undefined.
  Right after $\BuildClass{\node}$ is called, \Cref{vq-can-set-one} is executed, and a canonic is assigned to $\node$. Therefore another call to $\BuildClass{\node}$ is not possible in the future, because $\myc\node$ cannot be ever assigned again to undefined (\Cref{canonic-not-overwritten}).
\end{proof}

A node $\cannode$ can be assigned as a canonic representative only if $\BuildClass{\cannode}$ has been called:

\begin{proposition}[Only $\BuildClass{\cannode}$ Designates Canonics]\label{build-was-called}
  Let $\cannode,\node$ be nodes, and $\State$ be a reachable state such that $\myc\node=\cannode$. Then $\BuildClass{\cannode}$ has been called before $\State$.
\end{proposition}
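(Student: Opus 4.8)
The plan is to mirror the case analyses already used for \Cref{canonic-not-overwritten} and \Cref{inv-idemp}: I would inspect the only two lines of \Cref{alg:yes-queue-sharing-check} that can assign a canonic, and argue backwards from each to locate an active call to $\BuildClass{\cannode}$. The whole statement is a structural bookkeeping fact about the control flow of the algorithm, so the reasoning is a routine but careful tracing of where assignments originate.

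First I would use the hypothesis $\myc\node=\cannode$ in $\State$ together with \Cref{canonic-not-overwritten}: since a canonic is assigned at most once and is never reset to undefined, there is a single earlier transition that actually performed the assignment setting $\myc\node$ to $\cannode$, and this transition necessarily precedes $\State$. The next observation is that a canonic is written only on two lines, namely \Cref{vq-can-set-one} inside $\BuildClass$ and \Cref{vq-can-set-two} inside $\Enqueue$; hence the assignment transition executes one of these two lines, and it suffices to handle both cases.

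The core of the argument is the case split. If the assignment happens on \Cref{vq-can-set-one}, this is the first instruction of $\BuildClass{\cannode}$, which sets $\myc\cannode \defeq \cannode$; so $\node=\cannode$ and the procedure $\BuildClass{\cannode}$ is being executed at that moment, hence it was called before $\State$. If instead the assignment happens on \Cref{vq-can-set-two}, it occurs inside a call $\Enqueue{\nodetwo, \cannode}$ with $\node=\nodetwo$, setting $\myc\nodetwo \defeq \cannode$. Here I would invoke the fact that $\Enqueue$ is invoked only on \Cref{vq-enqueue-one}, which lies in the body of $\BuildClass{\cannode}$; therefore $\BuildClass{\cannode}$ was active, hence had been called, before $\State$.

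The only delicate point — and it is minor — is matching the second argument of $\Enqueue$ with the parameter of the enclosing $\BuildClass$ call, that is, checking that the $\cannode$ appearing in $\myc\nodetwo \defeq \cannode$ is exactly the argument of the surrounding invocation rather than some unrelated node. This is immediate from inspecting \Cref{vq-enqueue-one}, where $\Enqueue{\nodetwo, \cannode}$ is called with precisely the same $\cannode$ naming the currently active $\BuildClass{\cannode}$. I do not expect any genuine obstacle: both branches conclude directly, and the proposition follows.
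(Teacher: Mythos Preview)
Your proposal is correct and follows essentially the same approach as the paper: a case split on the two lines that assign a canonic (\Cref{vq-can-set-one} and \Cref{vq-can-set-two}), tracing each back to an active call of $\BuildClass{\cannode}$. The paper's version is terser and does not explicitly invoke \Cref{canonic-not-overwritten}, but the reasoning is the same.
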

\begin{proof}
  $\cannode$ is assigned as a canonic only during the execution of two lines of the algorithm:
  \begin{itemize}
    \item \Cref{vq-can-set-one}, during the execution of $\BuildClass{\cannode}$.
    \item \Cref{vq-can-set-two}, during the execution of $\Enqueue{\nodetwo,\cannode}$ that sets $\myc \nodetwo$ to $\cannode$. Only \Cref{vq-enqueue-one} of $\BuildClass{\cannode}$ can call $\Enqueue{\nodetwo,\cannode}$.
  \qedhere
  \end{itemize}
\end{proof}

\paragraph{EnqueueAndPropagate.} %TODO
The function $\Enqueue{\nodetwo,\cannode}$ adds the node $\nodetwo$ to the equivalence class represented by $\cannode$, and delays the processing of $\nodetwo$ by pushing it into $\que(\cannode)$.

\begin{proposition}[No Multiple Calls to $\Enqueue{\node,--}$]\label{enqueue-once}
  In every program run, $\Enqueue{\node,--}$ is called at most once for every node $\node$.
\end{proposition}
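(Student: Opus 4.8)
The plan is to adapt, almost verbatim, the argument used for \Cref{buildclass-once} (no multiple calls to $\BuildClass$), leveraging the fact that the canonic assignment is definitive (\Cref{canonic-not-overwritten}). The only call site of $\Enqueue$ is \Cref{vq-enqueue-one}, inside the loop over the \simSibling{s} of the current node in $\BuildClass$. Crucially, that call is guarded: $\Enqueue{\nodetwo, \cannode}$ is invoked only in the branch where $\myc\nodetwo$ is currently undefined.

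First I would observe that whenever $\Enqueue{\nodetwo, \cannode}$ executes, \Cref{vq-can-set-two} sets $\myc\nodetwo \defeq \cannode$, so that immediately after the call the canonic of $\nodetwo$ is defined. Next, by \Cref{canonic-not-overwritten} the canonic assignment is never reset to undefined throughout a program run, hence $\myc\nodetwo$ remains defined forever after. Combining these two facts with the guard at the call site yields the claim: any \emph{subsequent} attempt to call $\Enqueue{\nodetwo, --}$ would require $\myc\nodetwo$ to be undefined in the corresponding state, which is impossible once $\myc\nodetwo$ has been assigned. Therefore $\Enqueue{\nodetwo, --}$ is called at most once for each node $\nodetwo$.

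I expect no real obstacle here: the statement is a direct consequence of the guard-then-assign pattern at the unique call site together with the definitiveness lemma, exactly parallel to \Cref{buildclass-once}. The only point requiring a moment of care is verifying that \Cref{vq-enqueue-one} is genuinely the \emph{only} place in the algorithm where $\Enqueue$ is invoked, so that no unguarded call can sneak in a second enqueue of the same node; a quick inspection of \Cref{alg:yes-queue-sharing-check} confirms this, since $\Enqueue$ appears syntactically only in that line of $\BuildClass$.
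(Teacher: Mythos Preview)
Your proposal is correct and matches the paper's own proof essentially verbatim: both observe that \Cref{vq-enqueue-one} is the unique call site, that it is guarded by the canonic being undefined, that \Cref{vq-can-set-two} then defines it, and that \Cref{canonic-not-overwritten} prevents it from ever becoming undefined again. The only minor slack (shared with the paper) is that $\Enqueue$ could \FAIL{} on \Cref{vq-fail-homo} before reaching \Cref{vq-can-set-two}, but then the run transitions to \FailState{} and no further call can occur anyway.
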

\begin{proof}
  Only \Cref{vq-enqueue-one} can call $\Enqueue(\nodetwo,\cannode)$, and only if the canonic of $\nodetwo$ is undefined.
  After $\Enqueue{\nodetwo,\cannode}$ is called, \Cref{vq-can-set-two} is executed, and a canonic is assigned to $\nodetwo$. Therefore another call to $\BuildClass{\nodetwo,--}$ is not possible in the future, because $\myc\nodetwo$ cannot be ever assigned again to undefined (\Cref{canonic-not-overwritten}).
\end{proof}

\paragraph{Enqueuing.}
 Two lines of the algorithm push a node to a queue:
 \Cref{vq-stack-set}, when $\que(\cannode)$ is created and $\cannode$ pushed to it;
 \Cref{vq-enqueue-push}, when $\nodetwo$ is pushed to $\que(\cannode)$.
 Intuitively, pushing a node to $\que(\cannode)$ results in delaying its processing
 by the algorithm; the node will be fully processed only later, after being popped from
 the queue (Lines~\ref{vq-peek}--\ref{vq-c-neq}).

\begin{proposition}[Enqueue Once]\label{enqueued-once}
  In a program run, every node $\node$ is enqueued at most once.
\end{proposition}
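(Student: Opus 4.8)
The plan is to trace the only two code locations where a node can be pushed onto a queue, and then argue that for a fixed node these two locations fire at most once \emph{in total}. As the preceding paragraph observes, the only enqueuing lines are \Cref{vq-stack-set} (inside $\BuildClass{\cannode}$, which pushes $\cannode$ onto its own freshly created queue) and \Cref{vq-enqueue-push} (inside $\Enqueue{\nodetwo,\cannode}$, which pushes $\nodetwo$ onto $\que(\cannode)$). Thus a given node $\node$ can be enqueued in exactly two ways: as the canonic argument of a call $\BuildClass{\node}$, or as the first argument of a call $\Enqueue{\node,\cannode}$ for some $\cannode$.

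First I would bound each source separately. By \Cref{buildclass-once}, $\BuildClass{\node}$ is called at most once, so \Cref{vq-stack-set} enqueues $\node$ at most once; by \Cref{enqueue-once}, $\Enqueue{\node,--}$ is called at most once, so \Cref{vq-enqueue-push} enqueues $\node$ at most once. It then remains only to rule out the mixed case, in which $\node$ is enqueued once through each source.

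The crux of the argument is that every enqueuing of $\node$ is immediately preceded by an assignment of $\myc\node$ that was guarded by $\myc\node$ being undefined. Concretely, inside $\BuildClass{\node}$ the enqueue on \Cref{vq-stack-set} comes right after \Cref{vq-can-set-one} sets $\myc\node\defeq\node$, and $\BuildClass{\node}$ is invoked (on \Cref{vq-build-class-one} or \Cref{vq-build-class-two}) only when $\myc\node$ is undefined. Symmetrically, inside $\Enqueue{\node,\cannode}$ the enqueue on \Cref{vq-enqueue-push} comes right after \Cref{vq-can-set-two} sets $\myc\node\defeq\cannode$, and $\Enqueue{\node,\cannode}$ is invoked (on \Cref{vq-enqueue-one}) only when $\myc\node$ is undefined. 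By \Cref{canonic-not-overwritten}, once $\myc\node$ is defined it is never reset to undefined. Hence whichever of the two sources fires first permanently defines $\myc\node$, so the guard protecting the other source can never again hold, and that source cannot fire. Therefore the two sources are mutually exclusive, which together with the per-source bounds yields that $\node$ is enqueued at most once.

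I expect the routine part to be the bookkeeping of which lines set canonics and which guards precede the calls. The only genuinely delicate point is recognising that the mutual exclusion between the two enqueuing sources is precisely the canonic-uniqueness phenomenon already captured by \Cref{canonic-not-overwritten}, so that no fresh invariant is needed and the whole proposition reduces to combining the three previously established facts.
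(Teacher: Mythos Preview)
Your proof is correct and rests on the same key fact the paper uses, namely \Cref{canonic-not-overwritten}: every enqueue of $\node$ is immediately preceded by an assignment of $\myc\node$ that is guarded by $\myc\node$ being undefined, and that assignment is definitive.

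The only difference is organisational. You first bound each enqueuing source separately via \Cref{buildclass-once} and \Cref{enqueue-once}, and then invoke the canonic argument solely to rule out the mixed case. The paper skips the per-source bounds entirely and runs your ``crux'' argument uniformly: since each of the two enqueuing lines is immediately preceded by setting $\myc\node$, and $\myc\node$ can be set at most once, the node can be enqueued at most once. In other words, your mutual-exclusion step already proves the whole statement on its own (it forbids a second enqueue from \emph{either} source, not just the other one), so the appeals to \Cref{buildclass-once} and \Cref{enqueue-once} are correct but redundant.
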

\begin{proof}
  The algorithm enqueues a node only shortly after setting its canonic:
  \begin{itemize}
    \item on \Cref{vq-stack-set}, after assigning a canonic on \Cref{vq-can-set-one};
    \item on \Cref{vq-enqueue-push}, right after assigning a canonic on \Cref{vq-can-set-two}.
  \end{itemize}
  Since the canonic of a node can be assigned at most once (\Cref{canonic-not-overwritten}),
  it follows that a node can be enqueued at most once.
\end{proof}

The following lemma shows that $\que(\cannode)$ is a subset of the equivalence
class with canonic representative $\cannode$:

\begin{proposition}[Queue $\subseteq$ Equivalence Class]\label{queue-ec}
  Let $\State$ be a reachable state, and $\node$ a node.
  If $\node\in\que(\cannode)$, then $\myc\node$ is defined and $\myc\node=\cannode$.
\end{proposition}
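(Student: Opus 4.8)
The plan is to prove this as an \emph{invariant} of the algorithm, by induction on the length of the program run leading to $\State$. In the initial state $\State_0$ no queue is defined, so the claim holds vacuously; for the inductive step I only need to examine those transitions that can newly make the hypothesis $\node \in \que(\cannode)$ true. Popping an element from a queue only removes nodes (and if $\node \notin \que(\cannode)$ the hypothesis is vacuous), so the only transitions that can threaten the invariant are the two lines that \emph{push} a node onto a queue: \Cref{vq-stack-set} and \Cref{vq-enqueue-push}.

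For \Cref{vq-stack-set}, inside $\BuildClass{\cannode}$ the queue is initialized as $\que(\cannode) \defeq \set{\cannode}$, so the only node inserted is $\cannode$ itself; but on the immediately preceding \Cref{vq-can-set-one} the canonic $\myc\cannode$ has just been set to $\cannode$, so indeed $\myc\cannode = \cannode$. For \Cref{vq-enqueue-push}, inside $\Enqueue{\nodetwo, \cannode}$ the node $\nodetwo$ is pushed onto $\que(\cannode)$, and on the preceding \Cref{vq-can-set-two} its canonic $\myc\nodetwo$ was just set to $\cannode$; hence $\myc\nodetwo = \cannode$. In both cases the pushed node has a defined canonic equal to $\cannode$ at the very moment of the push.

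It then remains to argue \emph{persistence}: once a node $\node$ has been pushed onto $\que(\cannode)$ with $\myc\node = \cannode$, this equality continues to hold in every later state, in particular for as long as $\node$ remains an element of $\que(\cannode)$. This is immediate from \Cref{canonic-not-overwritten}, which guarantees that a canonic is assigned at most once and is never reset to undefined. I do not expect any genuine obstacle: the statement is a bookkeeping invariant whose only delicate point is checking that queue membership can increase \emph{only} at the two identified push lines, while the persistence of the canonic assignment is supplied entirely by \Cref{canonic-not-overwritten}.
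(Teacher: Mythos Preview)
Your proof is correct and follows essentially the same approach as the paper: identify the only two lines that enqueue a node (\Cref{vq-stack-set} and \Cref{vq-enqueue-push}), observe that in each case the canonic was set to $\cannode$ on the immediately preceding line, and invoke \Cref{canonic-not-overwritten} for persistence. The paper phrases this slightly more tersely (``it suffices to check that $\myc\node=\cannode$ whenever $\node$ is enqueued'') rather than as an explicit induction on the run, but the content is identical.
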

\begin{proof}
  Let $\node$ be a node. Since the canonic assignment is definitive (\Cref{canonic-not-overwritten}), it suffices to check that $\myc\node=\cannode$ whenever $\node$ is enqueued to $\que(\cannode)$. A node is enqueued only during the execution of two lines of the algorithm:
  \begin{itemize}
    \item \Cref{vq-stack-set} during the execution of $\BuildClass{\cannode}$. In this case $\node=\cannode$, and the canonic of $\cannode$ was just assigned to $\cannode$ itself on \Cref{vq-can-set-one}. %, and thus stays so until \Cref{vq-stack-set} because \Cref{vq-visiting-true} does not alter the canonic assignment.
    \item \Cref{vq-enqueue-push} during the execution of $\Enqueue{\nodetwo, \cannode}$. \Cref{vq-enqueue-push} is executed right after \Cref{vq-can-set-two}, which sets $\myc \nodetwo \defeq \cannode$.
  \qedhere
  \end{itemize}
\end{proof}

\paragraph{Dequeuing.} Nodes are popped from a queue only on \Cref{vq-peek}. Once a node is dequeued, the algorithm proceeds by first visiting
its parents (\Cref{vq-parents}) and then its \simSibling{s} (\Cref{vq-siblings}).

\begin{proposition}[Dequeued Nodes Have Correct Canonic]\label{inv-queue-class}
  Let $\State$ be a state reached after the execution of \Cref{vq-peek} with locals $\cannode,\node$. Then $\myc\node$ is defined and $\myc\node=\cannode$.
\end{proposition}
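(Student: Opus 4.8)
The plan is to obtain this as an immediate corollary of the invariant Queue $\subseteq$ Equivalence Class (\Cref{queue-ec}). First I would consider the state $\State'$ immediately \emph{preceding} the transition that executes \Cref{vq-peek}, that is, the state just before $\node$ is popped from $\stack(\cannode)$. Since that instruction removes $\node$ from the queue $\que(\cannode)$, in $\State'$ the node $\node$ must have been an element of $\que(\cannode)$. Applying \Cref{queue-ec} to $\State'$ then yields that $\myc\node$ is defined and $\myc\node = \cannode$ in $\State'$.

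It then remains only to transfer this fact from $\State'$ to $\State$. The single change between $\State'$ and $\State$ is that $\node$ has been dequeued; crucially, popping from a queue does not touch the canonic assignment, so the equality $\myc\node = \cannode$ is still valid in $\State$. Equivalently, one may invoke the fact that the canonic assignment is definitive (\Cref{canonic-not-overwritten}): once $\myc\node$ is set to $\cannode$ it is never reassigned nor cleared, hence the property is preserved across the transition.

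There is essentially no obstacle here, as the statement is a bookkeeping consequence of the queue invariant already established. The only point deserving a word of care is checking that dequeuing leaves the canonic pointers untouched, which is why I phrase the argument through the preceding state $\State'$ together with the definitiveness of the canonic assignment rather than trying to read $\myc\node$ off directly in $\State$.
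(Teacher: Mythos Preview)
Your proposal is correct and follows essentially the same approach as the paper's proof: consider the state $\State'$ just before \Cref{vq-peek} is executed, observe that $\node\in\que(\cannode)$ there, apply \Cref{queue-ec} to obtain $\myc\node=\cannode$ in $\State'$, and then carry this over to $\State$ via the definitiveness of the canonic assignment (\Cref{canonic-not-overwritten}). The paper's proof is just a terser version of exactly this argument.
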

\begin{proof}
  Let $\node$ be a node. $\node\in\que(\cannode)$ in the state $\State'$ right before the execution of \Cref{vq-peek} with locals $\cannode,\node$. Therefore $\myc\node=\cannode$ in $\State'$ by \Cref{queue-ec}. Conclude with $\myc\node=\cannode$ in $\State$ because the canonic assignment is definitive (\Cref{canonic-not-overwritten}).
\end{proof}

\begin{definition}[Processed Node]
  In a reachable state $\State$, we say that a node $\node$ has already been \emph{processed} if Lines~\ref{vq-peek}--\ref{vq-c-neq} with locals $\cannode,\node$ have already been executed (for some $\cannode$).

  In other words, a node $\node$ is processed after it is dequeued and its parents and \simSibling{s} visited.
\end{definition}

After $\BuildClass{\cannode}$ returns, all the nodes in the equivalence class of $\cannode$ are processed:

\begin{proposition}[Equivalence Class Processed]\label{horrible-proof}
  Let $\cannode,\node$ be nodes, and $\State$ a reachable state such that $\BuildClass{\cannode}$ has already returned.
  If $\myc\node=\cannode$, then $\node$ is processed.
\end{proposition}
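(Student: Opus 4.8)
The plan is to show that every node $\node$ with $\myc\node=\cannode$ is pushed onto $\stack(\cannode)$ during the (unique, by \Cref{buildclass-once}) call $\BuildClass{\cannode}$, and that, since this call has already returned in $\State$, its \texttt{while} loop on \Cref{vq-stack-while} has emptied $\stack(\cannode)$; hence every node that was pushed must have been popped on \Cref{vq-peek}, and therefore processed.

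First I would establish that $\node$ enters $\stack(\cannode)$. A node acquires $\cannode$ as its canonic only on the two lines \Cref{vq-can-set-one} and \Cref{vq-can-set-two}, and each of these is immediately followed by a push of the very node just assigned onto $\stack(\cannode)$: on \Cref{vq-stack-set} for $\cannode$ itself, and on \Cref{vq-enqueue-push} for the node $\nodetwo$ enqueued by $\Enqueue{\nodetwo,\cannode}$. Moreover $\Enqueue{\cdot,\cannode}$ is only ever called from \Cref{vq-enqueue-one} inside $\BuildClass{\cannode}$, so no node can acquire canonic $\cannode$ after $\BuildClass{\cannode}$ returns; since the canonic assignment is definitive (\Cref{canonic-not-overwritten}), the assignment $\myc\node=\cannode$ that holds in $\State$ must have been performed during $\BuildClass{\cannode}$, and thus $\node$ was pushed onto $\stack(\cannode)$ before $\State$.

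Next I would argue that every node pushed onto $\stack(\cannode)$ is popped and processed. Because $\BuildClass{\cannode}$ has returned, the loop on \Cref{vq-stack-while} was exited with $\stack(\cannode)$ empty, and nothing touches $\stack(\cannode)$ after the return, since the only pushes (\Cref{vq-stack-set,vq-enqueue-push}) and the only pop (\Cref{vq-peek}) all live inside $\BuildClass{\cannode}$. A balance argument then applies: each push adds one element and each pop removes one, the stack starts and ends empty, so the number of pops equals the number of pushes, and every pushed node is popped exactly once on \Cref{vq-peek}. In particular $\node$ is popped; as this happens inside $\BuildClass{\cannode}$, the enclosing local canonic is $\cannode$ and the popped element is bound to the local $\node$, so the pop occurs with locals $\cannode,\node$ (consistently, $\myc\node=\cannode$ by \Cref{inv-queue-class}). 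Since $\State$ is reached after $\BuildClass{\cannode}$ returned \emph{without} failing, the body of the loop ran to completion for this iteration, i.e. Lines~\ref{vq-peek}--\ref{vq-c-neq} with locals $\cannode,\node$ were executed. Hence $\node$ is processed.

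The main obstacle is the bookkeeping of the second step: making precise that $\stack(\cannode)$ is manipulated \emph{exclusively} by the single call $\BuildClass{\cannode}$ and that the push/pop counts balance, so that no pushed node can be left behind when the loop exits. This is where \Cref{buildclass-once} (uniqueness of the call) and \Cref{enqueued-once} (each node enqueued at most once) keep the correspondence between pushes and distinct nodes clean, while \Cref{inv-queue-class} guarantees that the dequeued node carries the expected canonic.
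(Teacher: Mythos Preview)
Your proposal is correct and follows essentially the same approach as the paper: both argue that the assignment $\myc\node=\cannode$ can only happen inside the (unique) call $\BuildClass{\cannode}$ and is immediately followed by a push onto $\stack(\cannode)$, and that since the \texttt{while} loop on \Cref{vq-stack-while} has terminated with an empty queue, $\node$ must have been popped and processed. Your balance argument is a slightly more explicit repackaging of the paper's observation that the enqueue happens before the loop terminates; the extra citations of \Cref{enqueued-once} and \Cref{inv-queue-class} are not strictly needed but do no harm.
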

\begin{proof}
  \sloppy $\myc\node$ is assigned to $\cannode$ only during the execution of $\BuildClass{\cannode}$ (\Cref{build-was-called}).
  First, note that shortly after $\myc\node$ is assigned to $\cannode$ (\Cref{vq-can-set-one} or \Cref{vq-can-set-two}), $\node$ is enqueued to $\que(\cannode)$ (resp. \Cref{vq-stack-set} and \Cref{vq-enqueue-push}).

  In both cases, $\node$ is enqueued to $\que(\cannode)$ when the execution of the while loop on \Cref{vq-stack-while} has not terminated yet: on \Cref{vq-stack-set} before the beginning of the loop, on \Cref{vq-enqueue-push} during the execution of the loop, since $\Enqueue$ is called from \Cref{vq-enqueue-one} which is inside the while loop.

  \sloppy Note also that a node $\node$ is dequeued from $\que(\cannode)$ only on \Cref{vq-peek} during the execution of $\BuildClass{\cannode}$, and that $\BuildClass{\cannode}$ can be called at most once (\Cref{buildclass-once}).

  In conclusion, $\node$ was enqueued to $\que(\cannode)$ before the completion of the
  while loop on $\que(\cannode)$ on \Cref{vq-stack-while}, and since \sloppy$\BuildClass{\cannode}$ has already returned and the while loop terminated, at some point $\node$ was dequeued from $\que(\cannode)$ on \Cref{vq-peek}, and the body of the loop executed with locals $\cannode,\node$.
\end{proof}

\begin{proposition}[Dequeue Once]\label{dequeued-once}
  In each program run, \Cref{vq-peek} is executed with local $\node$ at most once for every node $\node$.
\end{proposition}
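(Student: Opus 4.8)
The plan is to obtain this as a direct corollary of the already-established \emph{Enqueue Once} property (\Cref{enqueued-once}). The crucial observation concerns the queue data structure: \Cref{vq-peek} is the only line that pops a node from a queue $\que(\cannode)$, and by the standard semantics of queues, a node can be popped only if it was previously pushed, with each pushed occurrence matched to at most one pop. Hence the number of times \Cref{vq-peek} is executed with a given node bound to its local $\node$ is bounded above by the number of times $\node$ is pushed onto a queue during the program run.

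First I would recall that a node is enqueued (pushed) only on \Cref{vq-stack-set} and on \Cref{vq-enqueue-push}. By \Cref{enqueued-once}, across the entire program run each node $\node$ is enqueued at most once. Combined with the queue semantics above, this gives that $\node$ can be dequeued at most once, and therefore \Cref{vq-peek} is reached with $\node$ as its popped local at most once, which is exactly the claim.

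The only point requiring a little care is ruling out that a node could be popped from a queue into which it was never inserted, or that a single inserted occurrence is popped twice. This is handled by the fact that each node resides in at most one queue: by \Cref{queue-ec}, $\node\in\que(\cannode)$ implies $\myc\node=\cannode$, and since the canonic assignment is definitive (\Cref{canonic-not-overwritten}) this canonic is unique, so there is no ambiguity about which queue $\node$ is popped from. I do not expect any genuine obstacle: once \Cref{enqueued-once} is available the statement is essentially immediate, amounting to a short appeal to that lemma together with elementary queue behaviour.
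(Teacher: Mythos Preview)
Your proposal is correct and takes essentially the same approach as the paper: the paper's proof is the one-liner ``Executing \Cref{vq-peek} with local $\node$ dequeues $\node$, which cannot be enqueued twice by \Cref{enqueued-once}.'' Your additional remarks about \Cref{queue-ec} and \Cref{canonic-not-overwritten} are not needed, since total pops across all queues are already bounded by total pushes regardless of which queue a node sits in, but they are not wrong either.
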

\begin{proof}
  Executing \Cref{vq-peek} with local $\node$ dequeues $\node$, which cannot be enqueued twice by \Cref{enqueued-once}.
\end{proof}

\paragraph{Parents.}
The loop on \Cref{vq-parents} iterates on all the parents of a node, building their equivalence classes:

\begin{proposition}[Parent Classes Built]\label{xyz}
  Let $\cannode,\node$ be nodes, and let $\State$ be a state reachable after the execution of the loop on \Cref{vq-parents} of $\BuildClass{\cannode}$ with local variables $\cannode,\node$.
  \sloppy Then for every parent $\nodetwo$ of $\node$:
  \begin{itemize}
    \item $\myc\nodetwo$ is defined, say $\myc\nodetwo=\cannode'$;
    \item \BuildClass{$\cannode'$} has been called and has already returned.
  \end{itemize}
\end{proposition}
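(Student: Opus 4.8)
The plan is to fix an arbitrary parent $\nodetwo$ of $\node$ and trace what the loop on \Cref{vq-parents} did to it. Since the algorithm is non-destructive and never alters the directed edges of the \ladag{}, the set of parents of $\node$ is constant throughout the run, so $\nodetwo$ is genuinely one of the parents the loop iterated over before control reached $\State$. At the iteration handling $\nodetwo$, the algorithm branched on the value of $\pointer\nodetwo$, giving two cases to analyse.

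First I would treat the case in which $\pointer\nodetwo$ was undefined at that iteration. Then \Cref{vq-build-class-two} fired a recursive call $\BuildClass{\nodetwo}$, which---being an ordinary procedure call inside the loop body---must have returned before the loop could advance to the next parent, hence before $\State$. Its very first instruction (\Cref{vq-can-set-one}) set $\myc\nodetwo \defeq \nodetwo$, and by \Cref{canonic-not-overwritten} this assignment is definitive, so $\myc\nodetwo = \nodetwo$ still holds in $\State$. Taking $\cannode'$ to be $\nodetwo$ itself settles both claims at once: the canonic is defined, and the call $\BuildClass{\nodetwo}$ (which is $\BuildClass{\cannode'}$) has been made and has returned.

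Next I would treat the case in which $\pointer\nodetwo$ was already defined, say equal to $\cannode'$. By \Cref{build-was-called}, the only way a node becomes a designated canonic is through a call $\BuildClass{\cannode'}$, so that call was made at some earlier point; in particular the flag $\visiting{\cannode'}$ was raised to $\true$ on \Cref{vq-visiting-true}. The algorithm then tested $\visiting{\cannode'}$ on \Cref{vq-visiting-fail}, and since $\State$ is reached without failure we must have $\visiting{\cannode'} \neq \true$. It then remains to conclude that $\visiting{\cannode'} = \false$ forces $\BuildClass{\cannode'}$ to have already returned.

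The hard part is precisely this last step, which rests on an invariant about the building flag that I would establish separately: for a fixed canonic the flag evolves monotonically from $\cundefined$ to $\true$ (on \Cref{vq-visiting-true}) to $\false$ (on \Cref{vq-visiting-false}), and never back. Monotonicity holds because \Cref{buildclass-once} guarantees $\BuildClass{\cannode'}$ is invoked at most once, so \Cref{vq-visiting-true} and \Cref{vq-visiting-false} each execute at most once for $\cannode'$. Since \Cref{vq-visiting-false} is the final instruction of $\BuildClass{\cannode'}$, observing $\visiting{\cannode'} = \false$ in $\State$ witnesses that this instruction has run, and hence that $\BuildClass{\cannode'}$ has returned. (Consistently, $\cannode' \neq \cannode$ here: the current call $\BuildClass{\cannode}$ keeps $\visiting{\cannode} = \true$ until it returns, so a parent whose canonic were $\cannode$ would instead have triggered the failure on \Cref{vq-visiting-fail}.) This discharges both bullets in the defined case, and together with the undefined case the proposition follows.
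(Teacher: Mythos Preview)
Your proof is correct and follows essentially the same approach as the paper's: both case-split on whether $\pointer\nodetwo$ was defined at the loop iteration, handling the undefined case via the recursive call to $\BuildClass{\nodetwo}$ and the defined case via the non-failure on \Cref{vq-visiting-fail}. You are actually more thorough than the paper, which simply asserts that $\visiting{\cannode'}=\false$ implies $\BuildClass{\cannode'}$ has returned, whereas you spell out the monotonicity of the building flag via \Cref{buildclass-once}.
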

\begin{proof}
  The loop on \Cref{vq-parents} iterates on all parents of $\node$. For every parent $\nodetwo$:
  \begin{itemize}
    \item If $\nodetwo$ has no canonic assigned, $\BuildClass{\nodetwo}$ is called (\Cref{vq-build-class-two}). Since $\State$ is reached after the execution of the loop, the call to $\BuildClass{\nodetwo}$ has already returned.
    Note that $\nodetwo$ was assigned itself as a canonic on \Cref{vq-can-set-one} of $\BuildClass{\nodetwo}$, and $\myc\nodetwo=\nodetwo$ still in $\State$ (\Cref{canonic-not-overwritten}).

    \item If $\nodetwo$ has some $\cannode'$ assigned as a canonic node, then \Cref{vq-visiting-fail} enforces that $\visiting{\cannode'}=\false$. This means that \Cref{vq-visiting-false} of \BuildClass{$\cannode'$} has already been executed, and therefore \BuildClass{$\cannode'$} has already returned.
  \qedhere
  \end{itemize}
\end{proof}

\paragraph{\simSibling{s}.}
 The loop on \Cref{vq-siblings} iterates on all the  \simSibling{s} of a node.
 Note that the loop does not remove $\urel$edges after iterating on them: in fact, the algorithm simply ignores $\urel$edges that have already been encountered, as it calls \Enqueue only on \simSibling{s} that have not been enqueued yet (\Cref{vq-enqueue-one}).

\begin{proposition}[$\urel$ Grows]\label{sim-mono}
  During a program run, $\urel$ monotonically grows.
\end{proposition}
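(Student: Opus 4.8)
The plan is to prove the claim by a direct inspection of \Cref{alg:yes-queue-sharing-check}, showing that no transition ever removes an edge from the multiset $\xundirected$ of undirected query edges. Since a program transition amounts to the execution of a single line of the algorithm, it suffices to examine each line and check that it either leaves $\xundirected$ untouched or only adds edges to it; monotonicity of $\urel$ along the whole run then follows by transitivity over consecutive states.

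First I would isolate the lines that modify $\xundirected$. Scanning the three procedures \CheckHomogeneous, \BuildClass, and \Enqueue, the only instructions affecting the multiset of undirected edges are the two \emph{create edge} operations inside \Enqueue: \Cref{vq-edges1}, creating the edge $\nodetwo' \sim \cannode'$, and \Cref{vq-edges2}, creating the edges $\nodetwo_1 \sim \cannode_1$ and $\nodetwo_2 \sim \cannode_2$. By the definition of a state, creating an edge between $\node$ and $\nodetwo$ adds one occurrence of both $(\node,\nodetwo)$ and $(\nodetwo,\node)$ to $\xundirected$; hence both lines can only enlarge the set of pairs related by $\urel$.

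Then I would observe that no line performs the reverse, deleting operation. In particular the loop on \Cref{vq-siblings}, which iterates over the $\sim$neighbours of the current node, never removes the $\urel$edges it traverses: it simply skips those already processed, thanks to the guard on \Cref{vq-enqueue-one} that calls \Enqueue only on \simSibling{s} without a canonic yet. Consequently, for any transition from a reachable state $\State$ to its successor $\State'$, the relation $\urel$ in $\State$ is contained in the relation $\urel$ in $\State'$, and iterating along the run yields the stated monotonicity.

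I expect no genuinely difficult step here: the argument is an exhaustive but routine case analysis over the lines of the algorithm. The only points deserving care are being complete in the enumeration of the lines that could touch $\xundirected$, and noting that the multiset discipline, where edges are added with multiplicity and never decremented, ensures that even repeated creation of an already-present edge can never cause the underlying relation $\urel$ to shrink.
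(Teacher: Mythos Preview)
Your proposal is correct and follows essentially the same approach as the paper, which simply observes that no line of the algorithm removes $\urel$edges. Your version is more explicit---enumerating the lines that create edges and noting the loop on \Cref{vq-siblings} is non-destructive---but the underlying argument is identical.
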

\begin{proof}
  Just note that no line of the algorithm removes $\urel$edges.
\end{proof}

After the parent classes of a node are built, the set of its \simSibling{s} is not going to change during the program run:

\begin{proposition}[Finalization of \simSibling{s}]\label{finalize-siblings}
  Let $\cannode,\node$ be nodes. No $\urel$edge with endnode $\node$ can be created in
  any state $\State$ reached after the execution of the loop on \Cref{vq-parents} with local variables $\cannode,\node$.
\end{proposition}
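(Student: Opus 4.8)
The plan is to identify the \emph{only} mechanism by which a fresh $\urel$edge is produced and to show that, once the parents of $\node$ have been visited, this mechanism can no longer fire on an edge touching $\node$. Let $\State_0$ denote the state reached right after the loop on \Cref{vq-parents} with locals $\cannode,\node$ completes; I must show that no transition from $\State_0$ onward creates a $\urel$edge with endnode $\node$.

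First I would observe that the only lines creating $\urel$edges are \Cref{vq-edges1} and \Cref{vq-edges2}, both executed inside a call $\Enqueue{\nodetwo,\cannode}$. By inspecting those lines, every edge they create connects a child (via $\pdown$, $\pleft$, or $\pright$) of $\nodetwo$ with the corresponding child of $\cannode$. Hence if such an edge has $\node$ as an endpoint, then $\node$ is a child of $\nodetwo$ or of $\cannode$, i.e.\ $\nodetwo$ or $\cannode$ is a \emph{parent} of $\node$. The statement therefore reduces to proving that, from $\State_0$ onward, no call $\Enqueue{\nodetwo,\cannode}$ can occur in which $\nodetwo$ or $\cannode$ is a parent of $\node$.

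The decisive input is \Cref{xyz}: at $\State_0$, and hence (since the canonic assignment is definitive by \Cref{canonic-not-overwritten}) in every later state, every parent of $\node$ has a defined canonic and the corresponding $\BuildClass$ call has already returned. I then split into two cases. If $\nodetwo$ is a parent of $\node$, the call $\Enqueue{\nodetwo,\cannode}$ is guarded by \Cref{vq-enqueue-one}, which fires only when $\myc{\nodetwo}$ is undefined; but $\myc{\nodetwo}$ is already defined from $\State_0$ on, a contradiction. If instead $\cannode$ is a parent of $\node$, then, being the canonic argument of $\Enqueue$, it satisfies $\myc{\cannode}=\cannode$ (set on \Cref{vq-can-set-one} and stable by idempotence, \Cref{inv-idemp}); applying \Cref{xyz} to the parent $\cannode$ of $\node$ gives that $\BuildClass{\cannode}$ has already returned by $\State_0$, while executing $\Enqueue{\nodetwo,\cannode}$ requires $\BuildClass{\cannode}$ to be active on the call stack. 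Since $\BuildClass{\cannode}$ is called at most once (\Cref{buildclass-once}), it cannot be active again after returning, a contradiction.

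Both cases being impossible, the set of $\urel$edges incident to $\node$ is frozen at $\State_0$, which is the claim. The main obstacle I anticipate is not the contradictions themselves but the two identifications feeding them: making precise that the parents ranged over by \Cref{xyz} are exactly the structural parents used by the edge-creating lines, and confirming that the second argument of any $\Enqueue$ call is genuinely a canonic so that $\myc{\cannode}=\cannode$ holds. Once these are settled the argument closes immediately.
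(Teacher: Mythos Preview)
Your proposal is correct and follows essentially the same approach as the paper: identify that new $\urel$edges arise only in \Enqueue, reduce to the case that one of its two arguments is a parent of $\node$, and derive a contradiction in each case via \Cref{xyz} (the first argument has undefined canonic, the second is the canonic of an active $\BuildClass$ call that \Cref{xyz} says must have already returned). The only cosmetic issue is your reuse of the name $\cannode$ for the generic second argument of \Enqueue, which clashes with the local variable $\cannode$ fixed in the statement; the paper avoids this by writing $\cannode'$.
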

\begin{proof}
  Assume by contradiction that $\State$ is a state reached after the execution of the loop on \Cref{vq-parents} with locals $\cannode,\node$, and that the next transition creates a new $\urel$edge with endnode $\node$.
  New $\urel$edges may be created only on \Cref{vq-edges1} and \Cref{vq-edges2} during the execution of \Enqueue{$\nodetwo,\cannode'$} for some nodes $\nodetwo,\cannode'$. Note that a $\urel$edge with endpoint $\node$ may be created by these lines only if $\node$ has either $\nodetwo$ or $\cannode'$ as a parent.
  We show that these cases are both not possible:
  \begin{itemize}
    \item $\nodetwo$ cannot be a parent of $\node$ since $\myc\nodetwo$ is undefined because of the check on \Cref{vq-enqueue-one} (the only line that may call $\Enqueue$), while by \Cref{xyz} all parents of $\node$ have a canonic assigned.
    \item In order to show that $\cannode'$ cannot be a parent of $\node$, note first that
    $\BuildClass{\cannode'}$ has not returned yet, since $\Enqueue{\nodetwo,\cannode'}$ is called only on \Cref{vq-enqueue-one} of $\BuildClass{\cannode'}$. Note also that $\myc{\cannode'}=\cannode'$ because \Cref{vq-can-set-one} of $\BuildClass{\cannode'}$ has already been executed.

    Therefore $\cannode'$ cannot be a parent of $\node$ by \Cref{xyz}.
  \qedhere
  \end{itemize}
\end{proof}

The following lemma proves that, after the \simSibling{s} of a node $\node$ are handled
by the loop on \Cref{vq-siblings}, the canonic assignment subsumes the relation $\urel$ on $\node$:
\begin{proposition}[All \simSibling{s} Visited]\label{abc}
  Let $\State$ be a state reachable after the execution of the loop on \Cref{vq-siblings} with local variables $\cannode,\node$.
  Then for every $\urel$edge with endnodes $\node$ and $\nodetwo$, $\node \RelC^* \nodetwo$.
\end{proposition}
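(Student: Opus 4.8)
The plan is to establish two facts: that the loop on \Cref{vq-siblings} visits \emph{every} $\urel$edge with endnode $\node$ present in $\State$, and that each such visit forces the two endpoints into a common canonic class. Throughout I will exploit that $\State$ is reachable, hence distinct from \FailState, so execution never triggered a failure along the way, and that the canonic assignment is definitive (\Cref{canonic-not-overwritten}), so any equality of the form $\myc\nodetwo = \cannode$ established before $\State$ still holds in $\State$. As a first, cheap step I would pin down the canonic of $\node$ itself: since $\State$ is reached after the pop on \Cref{vq-peek} with locals $\cannode,\node$, \Cref{inv-queue-class} yields $\myc\node = \cannode$ in $\State$, that is $\node \RelC \cannode$.

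Next I would argue completeness of the iteration. The loop on \Cref{vq-siblings} runs \emph{after} the parents loop on \Cref{vq-parents} with the same locals $\cannode,\node$. By the finalization lemma (\Cref{finalize-siblings}), no $\urel$edge with endnode $\node$ can be created after that parents loop; in particular none is created during or after the siblings loop. Hence every $\urel$edge with endnode $\node$ present in $\State$ was already present when the siblings loop began, so the loop really does range over all of them. Note that the new edges produced by the \Enqueue{} calls inside the loop (Lines~\ref{vq-edges1}--\ref{vq-edges2}) attach to the children of $\nodetwo$ and $\cannode$, never to $\node$, which is precisely why \Cref{finalize-siblings} applies here.

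It then remains to inspect a single iteration. Fix a $\urel$edge between $\node$ and some $\nodetwo$; the loop processes $\nodetwo$ by the case analysis on $\myc\nodetwo$. If $\myc\nodetwo$ is undefined, then \Cref{vq-enqueue-one} calls $\Enqueue{\nodetwo,\cannode}$, whose \Cref{vq-can-set-two} sets $\myc\nodetwo \defeq \cannode$. If instead $\myc\nodetwo$ is already some $\cannode'$, then \Cref{vq-c-neq} would reach \FailState unless $\cannode' = \cannode$; since $\State$ is not \FailState, necessarily $\cannode' = \cannode$, i.e. $\myc\nodetwo = \cannode$. In both cases $\myc\nodetwo = \cannode$ holds at the end of the iteration, hence in $\State$ by \Cref{canonic-not-overwritten}, giving $\nodetwo \RelC \cannode$. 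Combining with $\node \RelC \cannode$, symmetry and transitivity of $\RelC^*$ yield $\node \RelC^* \nodetwo$; the degenerate self-loop case $\nodetwo = \node$ is handled by reflexivity.

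The crux, and the step I expect to need the most care, is the completeness of the iteration through \Cref{finalize-siblings}: one must be certain that no fresh query edge touching $\node$ is spawned while the class of $\cannode$ is still being built, so that the single pass of the siblings loop suffices and no later \Enqueue{} reopens the set of $\sim$siblings of $\node$. Once that is granted, everything else is routine bookkeeping on the definitive-canonic invariant together with the two local failure tests.
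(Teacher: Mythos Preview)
Your proposal is correct and follows essentially the same route as the paper's proof: both use \Cref{inv-queue-class} to fix $\myc\node=\cannode$, invoke \Cref{finalize-siblings} (the paper additionally cites \Cref{sim-mono}) to argue that the loop on \Cref{vq-siblings} ranges over exactly the $\urel$edges with endnode $\node$ that are present in $\State$, then do the two-branch case analysis (\Enqueue{} sets the canonic, the other branch enforces it) and close with \Cref{canonic-not-overwritten}. Your extra remarks on why the edges created inside \Enqueue{} do not touch $\node$ and on the self-loop case are harmless elaborations of the same argument.
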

\begin{proof}
  Let $\State'$ be the state prior to $\State$ in which execution is just entering the loop on \Cref{vq-siblings}. Note that both $\State'$ and $\State$ are reached only after the execution of the loop on \Cref{vq-parents} (with locals $\cannode,\node$),
  and therefore in $\State'$ and $\State$ are present the same $\urel$edges with endnode $\node$ (by \Cref{sim-mono} and \Cref{finalize-siblings}).
  The loop on \Cref{vq-siblings} iterates on all such $\urel$edges with endnodes $\node$ and $\nodetwo$, and for each $\nodetwo$ it either:
  \begin{itemize}
    \item \Cref{vq-enqueue-one}: calls $\Enqueue{\nodetwo,\cannode}$, which sets $\myc\nodetwo\defeq\cannode$ before returning (\Cref{vq-can-set-two});
    \item \Cref{vq-c-neq}: explicitly enforces that $\myc\nodetwo=\cannode$.
  \end{itemize}
  In both cases, we obtain that $\myc\nodetwo=\cannode$ holds also in $\State$ (because $\State$ is reached after the execution of the loop on \Cref{vq-siblings}, and the canonic assignment is immutable by \Cref{canonic-not-overwritten}).
  Note also that $\myc\node=\cannode$ in $\State$ by \Cref{inv-queue-class}, since $\State$ is reached after the execution of \Cref{vq-peek} which dequeues $\node$ from $\que(\cannode)$.

  We thus obtain $\myc\node=\myc\nodetwo=\cannode$ in $\State$, \ie{} $\node \RelC^* \nodetwo$.
\end{proof}

In the following sections, we are going to prove that \Cref{alg:yes-queue-sharing-check} is sound (\Cref{subsect:correctness}), complete (\Cref{subsect:completeness}), and that it runs in linear time (\Cref{subsect:linearity}).
% The following list points to the important results, and the corresponding relevant properties which they depend on:
% 
% \begin{description}
%   \item[Correctness] (\Cref{coro:correctness})
%   \begin{itemize}
%     \item $\RelC$ is homogeneous and propagated (\Cref{lemma-sim-upto})
%     \item ${\RelC}\subseteq{\urel^*}$ (\Cref{l:approx-0})
%     \item $\urel$ Approximates $\DownEquivClos\Que$ (\Cref{l:approx-1})
%     \item All $\urel$edges Visited (\Cref{horrible-corollary})
%     % \item In stato finale: $\eqc$ uguale a $\RelC^*$ (\Cref{l:eqc-eq-cstar}).
%   \end{itemize}
%   \item[Completeness] (\Cref{thm:completeness})
%   \begin{itemize}
%     \item Order of Active Classes (\Cref{order-callstack})
%   \end{itemize}
%   \item[Termination \& Linearity] (\Cref{thm:linearity})
%   \begin{itemize}
%     \item Bound on $\urel$ (\Cref{bound-sim})
%   \end{itemize}
% \end{description}

\subsection{Correctness}\label{subsect:correctness}

% In this section we are going to prove that the algorithm is correct,
% \ie{} that whenever the algorithm terminates successfully with final state $\State_f$,
%  $\DownEquivClos\Que$ is homogeneous. Additionally, we will prove that the canonic assignment is a succint representation of $\DownEquivClos\Que$, \ie{} that for all nodes $\node,\nodetwo$:
%  $\node \DownEquivClos\Que \nodetwo$ if and only if $\node$ and $\nodetwo$ have the same canonic assigned in $\State_f$.
% 
%  More precisely, we are going to show that the relation $\DownEquivClos\Que$ is identical to $\eqc$ in $\State_f$ (\Cref{eqc-eq-qsharp}), where $\eqc$ is defined as follows.

\begin{definition}[Same Canonic, $\eqc$]
  In every reachable state $\State$ we define $\eqc$, a binary relation on the nodes of the \ladag{} such that, for all nodes $\node$,$\nodetwo$:  $\node\eqc \nodetwo$ iff $\myc\node$ and $\myc\nodetwo$ are both defined and $\myc\node=\myc\nodetwo$.
\end{definition}

First of all, let us show that $\eqc$ is identical to $\RelC^*$ in all final states: this allows to simplify the following proofs, using the more familiar relation $\RelC$ instead of the the new $\eqc$.

\begin{proposition}\label{l:eqc-eq-cstar}
  Let $\State_f$ be a final state. Then for all nodes $\node,\nodetwo$: $\node \eqc \nodetwo$ if and only if $\node \RelC^* \nodetwo$.
\end{proposition}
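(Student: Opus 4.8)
The plan is to prove the two inclusions $\eqc \subseteq \RelC^*$ and $\RelC^* \subseteq \eqc$ separately. The first inclusion holds in every reachable state and needs no appeal to finality; the second is precisely where the hypothesis that $\State_f$ is final enters, through the fact that in a final state every node has a defined canonic.

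For the inclusion $\eqc \subseteq \RelC^*$, I would simply unfold the definitions. If $\node \eqc \nodetwo$, then $\myc\node$ and $\myc\nodetwo$ are both defined and equal to a common node $\cannode$. By definition of $\RelC$ this means $\node \RelC \cannode$ and $\nodetwo \RelC \cannode$, hence $\node \RelC^* \cannode$ and, by symmetry, $\cannode \RelC^* \nodetwo$; transitivity of $\RelC^*$ then yields $\node \RelC^* \nodetwo$.

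For the converse $\RelC^* \subseteq \eqc$, the key observation is that $\eqc$ is itself an equivalence relation containing $\RelC$, so it must contain the reflexive symmetric transitive closure $\RelC^*$, which is by definition the smallest such relation. Concretely, I would first check $\RelC \subseteq \eqc$: if $\node \RelC \nodetwo$, i.e.\ $\myc\node = \nodetwo$, then idempotency of the canonic assignment (\Cref{inv-idemp}) gives $\myc\nodetwo = \nodetwo$, so $\myc\node = \nodetwo = \myc\nodetwo$ and therefore $\node \eqc \nodetwo$. Symmetry and transitivity of $\eqc$ are immediate from the symmetry and transitivity of equality of canonics. Reflexivity of $\eqc$ is exactly where finality is needed: in a final state every node has a defined canonic (\Cref{all-canonic-assigned}), so $\myc\node = \myc\node$ witnesses $\node \eqc \node$ for every node $\node$. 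Thus $\eqc$ is an equivalence relation containing $\RelC$, and hence it contains $\RelC^*$.

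There is no real obstacle here; the only point requiring care is that reflexivity of $\eqc$ genuinely fails in intermediate states, where some nodes still have undefined canonics, which is why the statement is restricted to final states and relies on \Cref{all-canonic-assigned}. The other essential ingredient is \Cref{inv-idemp}, guaranteeing that canonic representatives point to themselves and hence that a single $\RelC$-edge already places two nodes in the same $\eqc$-class.
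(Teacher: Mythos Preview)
Your proof is correct and essentially identical to the paper's own argument: both directions use the same ingredients (\Cref{inv-idemp} for the base $\RelC \subseteq \eqc$, and \Cref{all-canonic-assigned} for reflexivity of $\eqc$ in the final state). The only cosmetic difference is that the paper spells out the $\RelC^* \subseteq \eqc$ direction as an explicit induction on the derivation of $\node \RelC^* \nodetwo$, whereas you invoke the universal property of the rst-closure directly; the content is the same.
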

\begin{proof}
  First note that in $\State_f$ all nodes have a canonic assigned by \Cref{all-canonic-assigned}.
  \begin{itemize}
    \item[$(\Rightarrow)$] Let $\node \eqc \nodetwo$, and let us prove that $\node \RelC^* \nodetwo$. By the definition of $\eqc$, there exists $\cannode$ such that $\myc\node = \myc\nodetwo = \cannode$. By the definition of $\RelC$, it holds that $\node \RelC \cannode$ and $\nodetwo \RelC \cannode$, and therefore clearly $\node \RelC^* \nodetwo$.
    \item[$(\Leftarrow)$] Let $\node \RelC^* \nodetwo$, and let us prove that $\node \eqc \nodetwo$. We proceed by induction on the definition of $\RelC^*$:
    \begin{itemize}
      \item Base case. Assume that $\node \RelC^* \nodetwo$ because $\node \RelC \nodetwo$, \ie{} because $\myc\node=\nodetwo$.
      By \Cref{inv-idemp}, $\myc\nodetwo=\nodetwo$, and therefore $\node \eqc \nodetwo$.
      \item Rule \RuleRefl. Assume that $\node \RelC^* \nodetwo$ and $\node=\nodetwo$. From the hypothesis it follows that $\node$ has a canonic assigned, and therefore clearly $\node \eqc \node$.
      \item Rule \RuleSym.
       Assume that $\node \RelC^* \nodetwo$ because $\nodetwo \RelC^* \node$. By \ih{} $\nodetwo \eqc \node$, and we conclude because $\eqc$ is symmetric.
      \item Rule \RuleTrans.
        Assume that $\node \RelC^* \nodetwo$ because $\node \RelC^* \nodethree$ and $\nodethree \RelC^* \nodetwo$ for some node $\nodethree$. By \ih{} $\node \eqc \nodethree$ and $\nodethree \eqc \nodetwo$, and we conclude because $\eqc$ is transitive.
    \qedhere
    \end{itemize}
  \end{itemize}
\end{proof}

In order to prove that $\eqc$ equals $\DownEquivClos\Que$ in $\State_f$ (\Cref{eqc-eq-qsharp}),
we are going to prove that $\RelC^*$ is homogeneous and propagated  (\Cref{final-dc}).
The following lemma is a relaxed variant of that statement which holds for all reachable states: it collapses to the desired statement in the final state because $({\urel}\cup{=})=(\RelC^*)$.

\begin{proposition}[$\RelC$ is Upto]\label{lemma-sim-upto}
  Let $\State$ be a reachable state.
  Then $\RelC$ is a \core bisimulation upto $({\urel}\cup{=})$.
\end{proposition}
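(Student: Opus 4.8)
The plan is to unfold the two defining conditions of being a \core bisimulation upto $({\urel}\cup{=})$: that $\RelC$ is homogeneous, and that it is propagated upto $({\urel}\cup{=})$ in the sense of \Cref{def:dc-up-to}. Both will be established not by induction on the run but by a direct case analysis on \emph{how} each canonic edge present in $\State$ was created. The enabling observation is that a node's canonic is set at most once and never reset (\Cref{canonic-not-overwritten}); hence every pair $\node\RelC\nodetwo$ in a reachable (non-$\FailState{}$) state was produced by a single earlier transition, executing either \Cref{vq-can-set-one} or \Cref{vq-can-set-two}. This turns the global bisimulation-upto statement into a local check at those two lines, provided the $\urel$edges created alongside a canonic edge are still present in $\State$, which holds because $\urel$ only grows (\Cref{sim-mono}).

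First I would treat homogeneity. If $\node\RelC\nodetwo$ was produced on \Cref{vq-can-set-one}, then $\node=\nodetwo$ and a node is homogeneous with itself. If it was produced on \Cref{vq-can-set-two} inside a call $\Enqueue{\node,\nodetwo}$, then control must have traversed the switch on \Cref{vq-homo} without taking the $\FAIL{}$ branch on \Cref{vq-fail-homo} (otherwise the run would have moved to $\FailState{}$ and never reached \Cref{vq-can-set-two}); the only surviving branches pair two abstraction nodes, two application nodes, two bound variable nodes, or two free variable nodes, so $\node$ and $\nodetwo$ are homogeneous.

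Next I would verify the three rules of \Cref{def:dc-up-to}. For the abstraction rule, suppose $\nabs\node\RelC\nabs\nodex$. If the edge was set on \Cref{vq-can-set-one}, then $\nabs\node=\nabs\nodex$, so $\node=\nodex$ and hence $\node\mathrel{({\urel}\cup{=})}\nodex$ via the $=$ component. If it was set on \Cref{vq-can-set-two} during $\Enqueue{\nabs\node,\nabs\nodex}$, then the abstraction branch of \Cref{vq-homo} executed \Cref{vq-edges1}, creating the edge $\node\mathrel{\urel}\nodex$ \emph{before} the canonic was assigned; by monotonicity of $\urel$ (\Cref{sim-mono}) this edge still belongs to $\State$, so $\node\mathrel{({\urel}\cup{=})}\nodex$. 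The two application rules are analogous: in the reflexive case $\napp{\node_1}{\node_2}=\napp{\nodex_1}{\nodex_2}$ gives $\node_i=\nodex_i$, and in the \Enqueue case the application branch executes \Cref{vq-edges2}, creating both $\node_1\mathrel{\urel}\nodex_1$ and $\node_2\mathrel{\urel}\nodex_2$, which persist by \Cref{sim-mono}.

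The main obstacle is not any individual case but the bookkeeping that legitimises the case split: one must be certain that a canonic edge seen in $\State$ really was produced by exactly one of those two lines, and that the child $\urel$edges emitted in the same \Enqueue call were not discarded before $\State$. The first point is exactly \Cref{canonic-not-overwritten}, the second is \Cref{sim-mono}; together they reduce the apparently global property to an invariant localised at the canonic-setting lines and the \Enqueue switch. A minor subtlety worth stating explicitly is that the propagated-upto conditions concern only the downward rules \RuleAppl, \RuleAppr, \RuleAbs{} and not \RuleBV{}, so bound variable nodes never need to be propagated here and enter only through the (trivial) homogeneity check.
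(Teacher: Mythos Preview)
Your proof is correct and uses the same key ingredients as the paper's proof: that canonic edges are created only on \Cref{vq-can-set-one} or \Cref{vq-can-set-two}, that the \Enqueue switch enforces homogeneity and creates the appropriate child $\urel$edges before the canonic is set, and that $\urel$ is monotone (\Cref{sim-mono}).

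The structural difference is that the paper proceeds by induction on the length of the run, showing that the property is an invariant: it holds trivially in the initial state (where ${\RelC}=\emptyset$), transitions that only add $\urel$edges cannot break it, and the two canonic-setting transitions preserve it because of the code executed just before. You instead argue directly in the state $\State$: for each pair $\node\RelC\nodetwo$ you trace it back, via \Cref{canonic-not-overwritten}, to the unique earlier transition that created it, and then use \Cref{sim-mono} to transport the accompanying $\urel$edges forward to $\State$. Your decomposition makes the role of \Cref{canonic-not-overwritten} and \Cref{sim-mono} more explicit and avoids having to quantify over all intermediate transitions; the paper's inductive version is terser but leaves the persistence of the $\urel$edges implicit in the observation that growing $\urel$ ``does not falsify the statement''. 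Either packaging is fine.
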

\begin{proof}
  We prove the statement by induction on the length of the program run leading to $\State$:
  \begin{itemize}
    \item In the initial state ${\RelC} = \emptyset$, and therefore the statement holds trivially.
    \item As for the inductive step, we only need to discuss the program transitions that alter $\urel$ and $\RelC$.
    But first of all, note that $\urel$ can only grow (\Cref{sim-mono}), and that creating new $\urel$edges (while keeping $\RelC$ unchanged) does not falsify the statement if it was true before the addition. Therefore we actually need to discuss only the transitions that alter the canonical assignment, \ie{} \Cref{vq-can-set-one} and \Cref{vq-can-set-two}:
    \begin{itemize}
      \item \Cref{vq-can-set-one}:
       by \ih{}, $\RelC$ was homogeneous and propagated upto $({\urel}\cup{=})$ before the execution of the assignment \sloppy$\myc\cannode=\cannode$. After the execution of that assignment, $\RelC$ differs only for the new entry $(\cannode,\cannode)$. Clearly the new entry satisfies the homogeneous condition, and it satisfies the property of being propagated upto $=$.
      \item \Cref{vq-can-set-two}, during the execution of $\Enqueue{\nodetwo, \cannode}$:
      by \ih{}, $\RelC$ was homogeneous and propagated upto $({\urel}\cup{=})$ before the execution of the assignment $\myc \nodetwo \defeq \cannode$. After the execution of that assignment, $\RelC$ differs only for the new entry $(\nodetwo,\cannode)$. The new entry satisfies the homogeneous condition and the property of being propagated upto $\urel$ because of the code executed in Lines \ref{vq-homo}--\ref{vq-fail-homo}, which checked the labels of the nodes and created $\urel$edges on the corresponding children of $\nodetwo,\cannode$ if any.
    \qedhere
    \end{itemize}
  \end{itemize}
\end{proof}

The following \Cref{l:approx-0} and \Cref{l:approx-1} state properties
that connect the relations $\RelC$, $\Que$, and $\urel$.
In a final state $\RelC^*$ is exactly $\urel^*$, but in intermediate states the following
weaker property holds:

\begin{proposition}\label{l:approx-0}
  In every reachable state $\State$: ${\RelC}\subseteq{\urel^*}$.
\end{proposition}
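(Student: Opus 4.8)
The plan is to prove the invariant by induction on the length of the program run leading to $\State$. The key observation is that $\RelC$ can only change at the two lines that assign a canonic, namely \Cref{vq-can-set-one} and \Cref{vq-can-set-two}, while every other transition either leaves both $\RelC$ and $\urel$ untouched, or merely adds new $\urel$edges. Before tackling those two lines, I would dispose of all the others: since $\urel$ grows monotonically during a program run (\Cref{sim-mono}), the closure $\urel^*$ can only grow as well, so any transition that keeps $\RelC$ fixed---in particular one creating new $\urel$edges on \Cref{vq-edges1} or \Cref{vq-edges2}---cannot falsify the containment ${\RelC}\subseteq{\urel^*}$ if it held before. The base case is immediate: in the initial state $\RelC=\emptyset$, so the containment holds vacuously.

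It then remains to treat the two transitions that extend $\RelC$. For \Cref{vq-can-set-one}, executing $\BuildClass{\cannode}$ sets $\myc\cannode\defeq\cannode$, adding only the reflexive pair $(\cannode,\cannode)$ to $\RelC$; this pair lies in $\urel^*$ by reflexivity, so the invariant is preserved. The interesting case is \Cref{vq-can-set-two}, executed inside a call $\Enqueue{\nodetwo,\cannode}$, which adds the pair $(\nodetwo,\cannode)$ to $\RelC$. This call can only originate from \Cref{vq-enqueue-one}, inside the loop on \Cref{vq-siblings} of $\BuildClass{\cannode}$ while processing some node $\node$ that was just dequeued; there $\nodetwo$ is a \simSibling{} of $\node$, so by definition $\node\urel\nodetwo$. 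Moreover, since $\node$ was popped on \Cref{vq-peek} with canonic $\cannode$, \Cref{inv-queue-class} gives $\myc\node=\cannode$, and this assignment is still in force when \Cref{vq-can-set-two} runs because the canonic assignment is definitive (\Cref{canonic-not-overwritten}); hence $\node\RelC\cannode$ held before the transition, and by the induction hypothesis $\node\urel^*\cannode$. Combining $\nodetwo\urel\node$ (recall $\urel$ is symmetric) with $\node\urel^*\cannode$ yields $\nodetwo\urel^*\cannode$ by transitivity, which is exactly the required condition for the newly added pair.

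The main obstacle is essentially bookkeeping: making precise that at the moment \Cref{vq-can-set-two} fires one may simultaneously invoke $\node\urel\nodetwo$ (because $\nodetwo$ is a live \simSibling{} of the node $\node$ currently being processed) and $\myc\node=\cannode$ (carried over from the dequeue on \Cref{vq-peek}, through the definitiveness of canonics). Once the local variables $\node,\nodetwo,\cannode$ are correctly matched across the call boundary between $\BuildClass{\cannode}$ and $\Enqueue{\nodetwo,\cannode}$, and \Cref{inv-queue-class} and \Cref{canonic-not-overwritten} are threaded through, the closure reasoning is routine. Note in particular that the invariant is genuinely a \emph{one-sided} containment---$\RelC$ is the small, linear-size witness while $\urel^*$ is the larger relation approximating the spreaded query---so the proof never needs the converse inclusion, which only becomes available in the final state.
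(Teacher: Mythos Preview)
Your proof is correct and follows essentially the same approach as the paper: induction on the length of the program run, disposing of all transitions except the two canonic-assignments via monotonicity of $\urel$ (\Cref{sim-mono}), handling \Cref{vq-can-set-one} by reflexivity, and handling \Cref{vq-can-set-two} by tracing the call back to \Cref{vq-enqueue-one}, using that $\nodetwo$ is a \simSibling{} of the dequeued $\node$, invoking \Cref{inv-queue-class} to get $\node\RelC\cannode$, and applying the induction hypothesis. Your explicit appeal to \Cref{canonic-not-overwritten} to justify that $\myc\node=\cannode$ persists across the call boundary is a nice bit of extra care that the paper leaves implicit.
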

\begin{proof}
    We prove the statement by induction on the length of the program run leading to $\State$:
    \begin{itemize}
      \item In the initial state, ${\RelC} = \emptyset$ and therefore the statement holds trivially.
      \item As for the inductive step, we only need to discuss the program transitions that alter $\urel$ and $\RelC$.
      But first of all, note that during the execution of the algorithm $\urel$ can only grow (\Cref{sim-mono}), and that creating new $\urel$edges (while keeping $\RelC$ unchanged) does not falsify the statement if it was true before the addition. Therefore we actually need to discuss only the transitions that alter the canonical assignment, \ie{} \Cref{vq-can-set-one} and \Cref{vq-can-set-two}:
      \begin{itemize}
        \item \Cref{vq-can-set-one}:
         by \ih{}, ${\RelC}\subseteq{\urel^*}$ before the execution of the assignment $\myc\cannode=\cannode$. After the execution of that assignment, $\RelC$ differs only for the new entry $(\cannode,\cannode)$. Clearly $\cannode \urel^* \cannode$ because $\urel^*$ is reflexive by definition.
        \item \Cref{vq-can-set-two}, during the execution of $\Enqueue{\nodetwo, \cannode}$:
        by \ih{}, ${\RelC}\subseteq{\urel^*}$ before the execution of the assignment $\myc \nodetwo \defeq \cannode$. After the execution of that assignment, $\RelC$ differs only for the new entry $(\nodetwo,\cannode)$. Note that \sloppy$\Enqueue{\nodetwo, \cannode}$ is called from \Cref{vq-enqueue-one} during the execution of $\BuildClass{\cannode}$.
        Because of the loop on \Cref{vq-siblings}, $\nodetwo$ was a \simSibling{} of $\node$ for some node $\node$, and it still is because $\urel$ can only grow (\Cref{sim-mono}).
        We are now going to prove that $\node\urel^*\cannode$, which together with the fact that $\nodetwo$ is a \simSibling{} of $\node$, will allow us to conclude with $\nodetwo \urel^* \cannode$.

        \Cref{vq-enqueue-one} is executed after \Cref{vq-peek}, therefore $\myc\node=\cannode$ (\ie{} $\node\RelC\cannode$)  by \Cref{inv-queue-class}.
        By \ih{} $\node\RelC\cannode$ implies $\node\urel^*\cannode$, and we are done.
      \qedhere
      \end{itemize}
    \end{itemize}
\end{proof}

During a program run, the relation $\urel$ progressively approximates the relation $\DownEquivClos\Que$:
\begin{proposition}[$\urel$ Approximates $\DownEquivClos\Que$]\label{l:approx-1}
  In every reachable state $\State$: ${\Que} \subseteq {\urel} \subseteq {\DownEquivClos\Que}$.
\end{proposition}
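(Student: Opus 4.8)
The plan is to prove both inclusions simultaneously by induction on the length of the program run leading to $\State$, exactly in the style of the neighbouring \Cref{l:approx-0}. In the initial state $\State_0$ we have ${\urel} = {\Que}$ by definition of the initial state, so ${\Que} \subseteq {\urel}$ holds (as an equality) and ${\urel} = {\Que} \subseteq {\DownEquivClos\Que}$ since $\DownEquivClos\Que$ contains $\Que$ by the definition of spreading. For the inductive step I only inspect the transitions that actually modify $\urel$: for every other transition the claim is inherited unchanged from the immediately preceding state $\State'$.

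The lower bound ${\Que} \subseteq {\urel}$ is immediate: by \Cref{sim-mono} the relation $\urel$ only grows along a run, so from ${\Que} \subseteq {\urel}$ in $\State'$ we get ${\Que} \subseteq {\urel}$ in $\State$. For the upper bound ${\urel} \subseteq {\DownEquivClos\Que}$, the only transitions that enlarge $\urel$ are the creation of new $\sim$edges on \Cref{vq-edges1} and \Cref{vq-edges2}, both executed inside a call $\Enqueue{\nodetwo,\cannode}$. The key step is therefore to show that these freshly created edges already lie in $\DownEquivClos\Que$.

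To this end I first establish that $\nodetwo \DownEquivClos\Que \cannode$ at the moment \Enqueue{} is invoked. Since $\Enqueue{\nodetwo,\cannode}$ is called only from \Cref{vq-enqueue-one}, inside the loop on \Cref{vq-siblings} during $\BuildClass{\cannode}$, the node $\nodetwo$ is a \simSibling{} of the node $\node$ just dequeued on \Cref{vq-peek}; hence $\nodetwo \urel \node$ in $\State'$, and by the induction hypothesis $\nodetwo \DownEquivClos\Que \node$. Moreover, by \Cref{inv-queue-class} the dequeued node satisfies $\myc\node = \cannode$, i.e.\ $\node \RelC \cannode$. Combining \Cref{l:approx-0} (${\RelC} \subseteq {\urel^*}$) with the induction hypothesis ${\urel} \subseteq {\DownEquivClos\Que}$, and using that $\DownEquivClos\Que$ is already an equivalence relation (so that ${\urel^*} \subseteq {(\DownEquivClos\Que)^*} = {\DownEquivClos\Que}$), we obtain $\node \DownEquivClos\Que \cannode$. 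Transitivity of $\DownEquivClos\Que$ then yields $\nodetwo \DownEquivClos\Que \cannode$.

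Finally I conclude by cases on the common shape of $\nodetwo,\cannode$, recalling that edges are created only in the two matching branches of the switch on \Cref{vq-homo}. If $\nodetwo = \nabs{\nodetwo'}$ and $\cannode = \nabs{\cannode'}$, then closure of $\DownEquivClos\Que$ under \RuleAbs{} gives $\nodetwo' \DownEquivClos\Que \cannode'$, so the new edge $\nodetwo' \sim \cannode'$ is within $\DownEquivClos\Que$; if $\nodetwo = \napp{\nodetwo_1}{\nodetwo_2}$ and $\cannode = \napp{\cannode_1}{\cannode_2}$, then closure under \RuleAppl{} and \RuleAppr{} gives $\nodetwo_1 \DownEquivClos\Que \cannode_1$ and $\nodetwo_2 \DownEquivClos\Que \cannode_2$, so both new edges lie in $\DownEquivClos\Que$ (in the variable cases no edge is created). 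Hence every newly added $\sim$edge is contained in $\DownEquivClos\Que$, preserving the invariant. The only delicate point is the derivation of $\nodetwo \DownEquivClos\Que \cannode$, which is exactly where the auxiliary lemmas \Cref{inv-queue-class} and \Cref{l:approx-0} together with the equivalence-closure nature of $\DownEquivClos\Que$ must all be brought together.
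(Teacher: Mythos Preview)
Your proof is correct and follows essentially the same approach as the paper's own proof: induction on the length of the run, \Cref{sim-mono} for the lower bound, and for the upper bound an analysis of the only edge-creating transitions (\Cref{vq-edges1,vq-edges2}) where one first establishes $\nodetwo \DownEquivClos\Que \cannode$ via \Cref{inv-queue-class} and \Cref{l:approx-0}, and then uses closure of $\DownEquivClos\Que$ under the propagation rules. The only cosmetic difference is that the paper packages the key step as $\nodetwo \urel^* \cannode$ in the predecessor state and then applies the induction hypothesis once, whereas you split it into $\nodetwo \DownEquivClos\Que \node$ and $\node \DownEquivClos\Que \cannode$ and combine by transitivity; these are the same argument.
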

\begin{proof}
  We prove the statement by induction on the length of the execution trace leading to $\State$:
  \begin{itemize}
    \item Base case. In the initial state $({\urel}) = {\Que}$ and the statement clearly holds.
    \item Inductive step. First note that during the execution of the algorithm $\urel$ can only grow (\Cref{sim-mono}), therefore the requirement ${\Que} \subseteq {\urel}$ follows from the \ih{}
    In order to prove that ${\urel} \subseteq {\DownEquivClos\Que}$, we only need to discuss the program transitions that alter $\urel$, \ie{} those occurring on \Cref{vq-edges1} and \Cref{vq-edges2}:
    \begin{itemize}
      \item \Cref{vq-edges1} during the execution of $\Enqueue{\nodetwo,\cannode}$. Because of the check on the same line, $\nodetwo=\nabs{\nodetwo'}$ and $\cannode=\nabs{\cannode'}$ for some $\nodetwo',\cannode'$. Executing \Cref{vq-edges1} creates the new edge $\nodetwo'\urel\cannode'$.
      Since $\DownEquivClos\Que$ is propagated, in order to show the new requirement $\nodetwo'\DownEquivClos\Que\cannode'$ it suffices to show that $\nodetwo\DownEquivClos\Que\cannode$. We are going to show that $\nodetwo\urel^*\cannode$ in the state prior to $\State$: then by using the \ih{} we obtain $\nodetwo\DownEquivClos\Que\cannode$, and we can conclude.

      Note that $\Enqueue{\nodetwo, \cannode}$ is called from \Cref{vq-enqueue-one} during the execution of $\BuildClass{\cannode}$.
      Because of the loop on \Cref{vq-siblings}, $\nodetwo$ was a \simSibling{} of $\node$ for some node $\node$, and it still is because $\urel$ can only grow (\Cref{sim-mono}). In order to prove $\nodetwo\urel^*\cannode$, it thus suffices to prove $\node\urel^*\cannode$. The latter follows from \Cref{l:approx-0}, by using the fact that \Cref{vq-enqueue-one} is executed after \Cref{vq-peek}, therefore $\myc\node=\cannode$

      \item \Cref{vq-edges2} is similar to the previous case.
    \qedhere
    \end{itemize}
  \end{itemize}
\end{proof}

% As a consequence of \Cref{l:approx-0} and \Cref{l:approx-1}:
\begin{corollary}\label{cstar-qsharp}
  In every reachable state, ${\RelC^*} \subseteq {\DownEquivClos\Que}$.
\end{corollary}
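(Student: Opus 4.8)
The plan is to obtain this corollary by simply chaining the two preceding invariants, \Cref{l:approx-0} and \Cref{l:approx-1}, through the reflexive-symmetric-transitive closure $(\cdot)^*$. The only facts I need about $(\cdot)^*$ are that it is monotone (if ${\Rel}\subseteq{\Relx}$ then ${\Rel^*}\subseteq{\Relx^*}$) and idempotent (${(\Rel^*)^*}={\Rel^*}$), both immediate from \Cref{def:rst-clos}, together with the observation that $\DownEquivClos\Que$ is itself an equivalence relation, \ie{} already rst-closed, by its very definition (\Cref{def:dec} closes $\Que$ under \RuleRefl, \RuleSym, \RuleTrans, among others).

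First I would invoke \Cref{l:approx-0}, which states that ${\RelC}\subseteq{\urel^*}$ holds in every reachable state. Applying the rst-closure to both sides and using its monotonicity and idempotency, this yields ${\RelC^*}\subseteq{(\urel^*)^*}={\urel^*}$.

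Next I would invoke \Cref{l:approx-1}, whose relevant half gives ${\urel}\subseteq{\DownEquivClos\Que}$. Since $\DownEquivClos\Que$ is rst-closed, taking the rst-closure of this inclusion and again using monotonicity gives ${\urel^*}\subseteq{(\DownEquivClos\Que)^*}={\DownEquivClos\Que}$.

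Chaining the two inclusions produces ${\RelC^*}\subseteq{\urel^*}\subseteq{\DownEquivClos\Que}$, which is exactly the claim. There is no genuine obstacle here: the corollary is a one-line consequence of the invariants already established for $\RelC$ and $\urel$, and its entire content reduces to the elementary bookkeeping of the closure operator noted above. The conceptual work has all been done in proving \Cref{l:approx-0} and \Cref{l:approx-1}; this statement merely packages them into the succinct form ${\RelC^*}\subseteq{\DownEquivClos\Que}$ that will be convenient when later identifying $\eqc$ with $\DownEquivClos\Que$ in the final states.
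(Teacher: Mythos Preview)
Your proposal is correct and is exactly the intended derivation: the paper states \Cref{cstar-qsharp} as an immediate corollary of \Cref{l:approx-0} and \Cref{l:approx-1} without spelling out a proof, and your chaining ${\RelC^*}\subseteq{\urel^*}\subseteq{\DownEquivClos\Que}$ via monotonicity and idempotency of $(\cdot)^*$ is precisely the one-line argument that justifies it.
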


\begin{proposition}[All $\urel$edges Visited]\label{horrible-corollary}
  In every final state, $({\urel}) \subseteq (\RelC^*)$.
\end{proposition}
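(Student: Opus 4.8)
The plan is to deduce the statement from \Cref{abc}, which already gives exactly the desired conclusion $\node \RelC^* \nodetwo$ for every $\urel$edge with endnodes $\node,\nodetwo$, but only in a state reached right after the loop on \Cref{vq-siblings} has been executed with $\node$ as its (dequeued) local variable. So I would fix a final state $\State_f$ and an arbitrary $\urel$edge with endnodes $\node$ and $\nodetwo$; the goal $\node \RelC^* \nodetwo$ then follows once I show that (i) $\node$ has indeed been processed at some earlier point of the run, and (ii) the $\urel$edge in question was already present at that point.

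For (i), I would argue as follows. Since $\State_f$ is final, every node has a canonic assigned (\Cref{all-canonic-assigned}); let $\cannode \defeq \myc\node$. By \Cref{build-was-called}, the call $\BuildClass{\cannode}$ was made, and since the run has terminated this call has already returned in $\State_f$. Then \Cref{horrible-proof}, applied with $\myc\node = \cannode$ and the fact that $\BuildClass{\cannode}$ has returned, yields that $\node$ is processed, i.e. Lines~\ref{vq-peek}--\ref{vq-c-neq} were executed with locals $\cannode,\node$. In particular the loop on \Cref{vq-siblings} with locals $\cannode,\node$ was executed; let $\State$ be the state reached immediately after it.

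For (ii), I would observe that the $\urel$edge with endnodes $\node,\nodetwo$ present in $\State_f$ was already present in $\State$. Indeed, by \Cref{finalize-siblings} no $\urel$edge with endnode $\node$ can be created after the loop on \Cref{vq-parents} with locals $\cannode,\node$, and $\State$ is reached after that loop (it follows the loop on \Cref{vq-parents} and precedes completion of the loop on \Cref{vq-siblings}). Hence the edge exists in $\State$, so \Cref{abc} applies to $\State$ and gives $\node \RelC^* \nodetwo$. Finally, since the canonic assignment is definitive (\Cref{canonic-not-overwritten}), $\RelC$ can only grow along the run, hence so does $\RelC^*$, and $\node \RelC^* \nodetwo$ persists from $\State$ to $\State_f$. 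This establishes $\urel \subseteq \RelC^*$.

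The main obstacle is really just the bookkeeping of (ii): \Cref{abc} is a statement local to the instant at which $\node$ is processed, whereas the proposition speaks about the final state, so the whole argument hinges on \Cref{finalize-siblings} to guarantee that the edge we start with was already present at that earlier instant. Everything else is a routine chaining of the \emph{definitive canonic} and \emph{equivalence class processed} lemmas.
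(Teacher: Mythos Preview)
Your proposal is correct and follows essentially the same approach as the paper: both use \Cref{all-canonic-assigned}, \Cref{build-was-called}, and \Cref{horrible-proof} to establish that $\node$ has been processed, then invoke \Cref{finalize-siblings} and \Cref{abc} to conclude. Your version is slightly more explicit than the paper's in spelling out the persistence argument (that $\RelC^*$ at the intermediate state $\State$ carries over to $\State_f$ via \Cref{canonic-not-overwritten}), which the paper leaves implicit.
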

\begin{proof}
  Let $\State_f$ be a final state.
  By \Cref{all-canonic-assigned}, in $\State_f$ every node has a canonic assigned.
  Let $\node$ be a node, and $\cannode\defeq\myc\node$; we are going to prove that $\node \RelC^* \nodetwo$ for every \simSibling{} $\nodetwo$ of $\node$.

  By \Cref{build-was-called}, $\BuildClass{\cannode}$ was called before $\State_f$. Since $\State_f$ is final, $\BuildClass{\cannode}$ must have returned, and therefore $\node$ has already been processed (\Cref{horrible-proof}).
  After the loop on \Cref{vq-parents}, the \simSibling{s} of $\node$ are finalized (\Cref{finalize-siblings}),
  and after the loop on \Cref{vq-siblings}, the \simSibling{s} of $\node$ are all visited (\Cref{abc}).
\end{proof}

\begin{proposition}\label{final-dc}
  Let $\State_f$ a final state. Then $\RelC^*$ is a \core bisimulation.
\end{proposition}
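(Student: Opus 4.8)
The plan is to recast the statement entirely through the ``up to'' machinery of the previous section, exploiting the fact that in a final state the auxiliary relation $({\urel}\cup{=})$ has been fully absorbed into $\RelC^*$. Recall that being a \core bisimulation means being homogeneous and propagated (closed under \RuleAppl, \RuleAbs, \RuleAppr), so I must establish exactly these two properties for $\RelC^*$.

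First I would invoke \Cref{lemma-sim-upto}, which holds in every reachable state and in particular in $\State_f$: it tells me that $\RelC$ is a \core bisimulation upto $({\urel}\cup{=})$, i.e.\ that $\RelC$ is homogeneous and propagated upto $({\urel}\cup{=})$. The homogeneity of $\RelC$ immediately gives the homogeneity of $\RelC^*$ by \Cref{w-l-star}, dispatching one of the two requirements.

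The heart of the argument is to upgrade ``propagated upto $({\urel}\cup{=})$'' into plain ``propagated''. The key observation, specific to final states, is that $({\urel}\cup{=}) \subseteq {\RelC^*}$: the inclusion ${\urel}\subseteq{\RelC^*}$ is exactly the finalization fact \Cref{horrible-corollary} (``All $\urel$edges Visited''), while ${=}\subseteq{\RelC^*}$ holds because $\RelC^*$ is an rst-closure, hence reflexive. Feeding this inclusion into the monotonicity of the ``upto'' predicate (\Cref{dc-upto-sub}), I conclude that $\RelC$ is propagated upto $\RelC^*$. At this point \Cref{dc-star} applies: since $\RelC$ is homogeneous and propagated upto its own rst-closure $\RelC^*$, the closure $\RelC^*$ is itself propagated. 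Combining this with the homogeneity established above yields that $\RelC^*$ is homogeneous and propagated, i.e.\ a \core bisimulation, as required.

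The step I expect to carry the real weight is the use of \Cref{horrible-corollary}: the whole reduction works only because in a final state no $\urel$edge has been left unprocessed, so that $\urel$ is subsumed by the canonic assignment. Everything else is bookkeeping about how the ``upto'' relation commutes with the rst-closure, already packaged in \Cref{dc-upto-sub} and \Cref{dc-star}; the genuine content lives in the finalization invariant guaranteeing that the algorithm has visited every undirected edge before terminating.
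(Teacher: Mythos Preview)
Your proposal is correct and follows essentially the same approach as the paper's own proof: invoke \Cref{lemma-sim-upto} for homogeneity and the ``upto'' propagation, lift homogeneity via \Cref{w-l-star}, absorb $({\urel}\cup{=})$ into $\RelC^*$ using \Cref{horrible-corollary} and reflexivity, then conclude via \Cref{dc-upto-sub} and \Cref{dc-star}. Your identification of \Cref{horrible-corollary} as the step carrying the real weight is also spot on.
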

\begin{proof}
  By \Cref{lemma-sim-upto}, $\RelC$ is homogeneous and propagated up to $({\urel}\cup{=})$. We need to show that $\RelC^*$ is homogeneous and propagated.
  \begin{itemize}
    \item \emph{Homogeneous.} It follows from \Cref{w-l-star}.
    \item \emph{Propagated.} By \Cref{horrible-corollary} and \Cref{l:eqc-eq-cstar}, $({\urel})\subseteq({\RelC^*})$,
    which implies that $({\urel}\cup{=})\subseteq({\RelC^*})$ because $\RelC^*$ is reflexive.
    Therefore $\RelC$ is propagated up to $\RelC^*$ by \Cref{dc-upto-sub}, and $\RelC^*$ is propagated by \Cref{dc-star}.
  \qedhere
  \end{itemize}
\end{proof}

\begin{proposition}\label{eqc-eq-qsharp}
  Let $\State_f$ a final state. Then $(\RelC^*)=(\DownEquivClos\Que)$.
\end{proposition}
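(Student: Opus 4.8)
The plan is to prove the two inclusions separately, assembling invariants already established for reachable and final states. The inclusion ${\RelC^*} \subseteq {\DownEquivClos\Que}$ is immediate: it is exactly \Cref{cstar-qsharp}, which holds in every reachable state and hence in $\State_f$.

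For the converse inclusion ${\DownEquivClos\Que} \subseteq {\RelC^*}$ I would exploit that $\DownEquivClos\Que$ is, by \Cref{def:dec}, the \emph{smallest} relation containing $\Que$ and closed under the rules \RuleRefl, \RuleSym, \RuleTrans, \RuleAppl, \RuleAbs, \RuleAppr. Thus it suffices to verify that $\RelC^*$ is one such relation, i.e.\ that it contains $\Que$ and is closed under all six rules. Closure under \RuleRefl, \RuleSym, \RuleTrans holds by the very definition of the reflexive-symmetric-transitive closure $(\cdot)^*$. Closure under the propagation rules \RuleAppl, \RuleAbs, \RuleAppr follows from \Cref{final-dc}, which states that in a final state $\RelC^*$ is a \core bisimulation, hence propagated. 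Finally, to obtain ${\Que} \subseteq {\RelC^*}$ I would chain ${\Que} \subseteq {\urel}$ (the left inclusion of \Cref{l:approx-1}, valid in every reachable state) with ${\urel} \subseteq {\RelC^*}$ (\Cref{horrible-corollary}, valid in every final state). By minimality of $\DownEquivClos\Que$ we then conclude ${\DownEquivClos\Que} \subseteq {\RelC^*}$, and combining with the first inclusion gives $(\RelC^*) = (\DownEquivClos\Que)$.

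No step is a genuine obstacle here: the result is a pure assembly of previously isolated invariants, which is the payoff of having carefully separated the ``upto'' reasoning (\Cref{lemma-sim-upto}, \Cref{final-dc}) from the approximation bounds on $\urel$ (\Cref{l:approx-0}, \Cref{l:approx-1}, \Cref{horrible-corollary}). If anything deserves care, it is only the observation that the two auxiliary inclusions feeding ${\Que} \subseteq {\RelC^*}$ hold in different regimes---one in every reachable state, the other specifically in final states---so the entire argument must be carried out in the final state $\State_f$, where both are simultaneously available, together with the completeness of the canonic assignment (\Cref{all-canonic-assigned}) that underlies \Cref{horrible-corollary}.
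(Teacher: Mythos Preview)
Your proposal is correct and follows essentially the same approach as the paper's proof: both establish $(\RelC^*)\subseteq(\DownEquivClos\Que)$ via \Cref{cstar-qsharp}, and for the converse both appeal to the minimality of $\DownEquivClos\Que$ by showing that $\RelC^*$ contains $\Que$ (chaining \Cref{l:approx-1} with \Cref{horrible-corollary}), is an equivalence relation by definition, and is propagated by \Cref{final-dc}. Your write-up is simply more explicit about the regime in which each invariant holds.
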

\begin{proof}
  By \Cref{cstar-qsharp}, $(\RelC^*)\subseteq(\DownEquivClos\Que)$.
  In order to prove $(\DownEquivClos\Que)\subseteq(\RelC^*)$, it suffices to note
  that ${\Que} \subseteq (\RelC^*)$ (because ${\Que}\subseteq(\urel)$ by \Cref{l:approx-1}, and $(\urel)\subseteq(\RelC^*)$ by \Cref{horrible-corollary}), that $\RelC^*$ is an equivalence relation (by definition), and that $\RelC^*$ is propagated (\Cref{final-dc}).
\end{proof}

\begin{proofof}[\Cref{prop:correctness-body}]
  \Paste{prop:correctness-body}
\end{proofof}
\begin{proof}
  By \Cref{final-dc} and \Cref{eqc-eq-qsharp}. 
\end{proof}

% !TEX root = ../main.tex

\subsection{Completeness}\label{subsect:completeness}

\begin{definition}[Staircase Order $\prec$]\label{def:staircase}
  Let $\State$ be a reachable state, and $\node,\nodetwo$ be nodes of the \ladag{} $\xgraph$. We say that $\node \prec \nodetwo$ iff
  there exist a direction $d$ and a node $\nodethree$ such that $\Path d \nodetwo\nodethree$ (riser) and $\nodethree \RelC \node$ (tread).
\end{definition}

\begin{proposition}[Order of Active Classes]\label{order-callstack}
  Let $\State$ be a reachable state such that $\callstack=[\cannode_1, \ldots,\cannode_K]$.
  Then $\cannode_i \prec \cannode_{i+1}$ for every $0<i<K$.
\end{proposition}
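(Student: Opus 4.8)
The plan is to read the relation $\cannode_i \prec \cannode_{i+1}$ directly off the recursive call structure of $\BuildClass$, exploiting the fact that two consecutive entries of $\callstack$ stand in a caller/callee relationship. Concretely, I would show that $\BuildClass{\cannode_{i+1}}$ was invoked from within the body of $\BuildClass{\cannode_i}$ while the latter was processing a node $\node$ having $\cannode_{i+1}$ as a parent; the definition of the staircase order then applies with $\node$ as the tread.

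First I would pin down the call relationship. By the \emph{Active} and \emph{Call Order} clauses defining $\callstack$, all of $\BuildClass{\cannode_1}, \ldots, \BuildClass{\cannode_K}$ are still active in $\State$ and were opened in this order; since procedure activations obey a stack discipline, these calls are nested, with $\cannode_1$ outermost and $\cannode_K$ innermost. I would then argue that the set of active $\BuildClass$ calls at the instant $\BuildClass{\cannode_{i+1}}$ is invoked is exactly $\{\cannode_1, \ldots, \cannode_i\}$: each of these was opened before $\cannode_{i+1}$ and, being active in the later state $\State$, is active at that instant too (it cannot have returned and reopened, by \Cref{buildclass-once}); conversely any call active at that instant is an ancestor of the still-open $\cannode_{i+1}$ and hence survives to $\State$, so it appears in $\callstack$ before $\cannode_{i+1}$, i.e. among $\cannode_1, \ldots, \cannode_i$. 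The innermost (top-of-stack) such call is $\cannode_i$, and since the only call sites of $\BuildClass$ are \Cref{vq-build-class-one} (reached only from the top-level $\CheckHomogeneous$ loop) and \Cref{vq-build-class-two} (inside $\BuildClass$), while $\Enqueue$ never calls $\BuildClass$, the invocation of $\BuildClass{\cannode_{i+1}}$ must occur on \Cref{vq-build-class-two} executed by $\BuildClass{\cannode_i}$.

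Given this, the conclusion is immediate. When \Cref{vq-build-class-two} fires in $\BuildClass{\cannode_i}$, the loop on \Cref{vq-parents} is examining a parent $\cannode_{i+1}$ of the node $\node$ most recently popped on \Cref{vq-peek}. By \Cref{inv-queue-class} we have $\myc\node = \cannode_i$, that is $\node \RelC \cannode_i$, and this remains true in $\State$ because the canonic assignment is definitive (\Cref{canonic-not-overwritten}). As $\cannode_{i+1}$ is a (structural) parent of $\node$, the parent edge is one of the down/left/right edges, so there is a single direction $d$ with $\Path{d}{\cannode_{i+1}}{\node}$. Taking this $d$ as the riser and $\node$ as the tread, \Cref{def:staircase} gives exactly $\cannode_i \prec \cannode_{i+1}$.

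The main obstacle is the middle step: the identification of $\cannode_i$ as the direct caller of $\BuildClass{\cannode_{i+1}}$. This is an operational fact about the run-time stack of active calls rather than about the explicit data structures the development manipulates, so it must be derived carefully from the two clauses defining $\callstack$ together with the single intra-$\BuildClass$ call site \Cref{vq-build-class-two}. Once the caller is identified, the remainder is a one-line instantiation of the definition of $\prec$.
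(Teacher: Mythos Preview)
Your argument is correct, but it differs in structure from the paper's proof. The paper proceeds by induction on the length of the program run, doing a case analysis only on the transitions that modify $\callstack$: a return of $\BuildClass$ (pop the last entry; conclude by the inductive hypothesis), a call from \Cref{vq-build-class-one} (callstack becomes a singleton; nothing to prove), and a call from \Cref{vq-build-class-two} (push a new last entry $\nodetwo$; verify $\cannode_K \prec \nodetwo$ directly from the locals of the enclosing $\BuildClass{\cannode_K}$, using \Cref{inv-queue-class}). You instead fix the state $\State$ and, for each $i$, reconstruct the caller of $\BuildClass{\cannode_{i+1}}$ by reasoning globally about the stack discipline together with \Cref{buildclass-once}, then read off $\cannode_i \prec \cannode_{i+1}$ exactly as the paper does in its inductive case. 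Both routes land on the same concrete witnesses (the popped node $\node$ as tread, the parent edge as riser); the induction in the paper is shorter and avoids the meta-level argument about nested activations that you flag as your ``main obstacle,'' while your version makes the caller/callee relationship between consecutive $\callstack$ entries explicit, which is informative even if heavier to justify.
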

\begin{proof}
  We proceed by induction on the length of the program run.
  In the base case, \ie{} in the initial state, the callstack is empty and therefore the property holds trivially.

  As for the inductive step, we only need to discuss the program transitions that actually alter the callstack, \ie{} when $\BuildClass$ is called or when it returns:
  \begin{itemize}
    \item \BuildClass returns. In this case, the last entry of the callstack is removed, and the statement follows from the \ih{}
    \item \BuildClass{$\node$} is called (\Cref{vq-build-class-one}). Right before the call to $\BuildClass{\node}$ the callstack is empty, and right after the call $\callstack=[\node]$, \ie{} there is nothing to prove.
    \item \BuildClass{$\nodetwo$} is called (\Cref{vq-build-class-two}). In this case $\nodetwo$ is a parent of $\node$ (\ie{} there exists a direction $d$ such that $\Path d \nodetwo\node$), and $\cannode=\cannode_K$ before the call to $\BuildClass{\nodetwo}$.
    Note that \Cref{vq-build-class-two} is executed after \Cref{vq-peek} with locals $\cannode,\node$, and therefore $\myc\node=\cannode$ (\Cref{inv-queue-class}).

    After the call to $\BuildClass{\nodetwo}$, $\callstack=[\cannode_1,\ldots,\cannode_K,\nodetwo]$. The statement for $i<K$ follows from the \ih{}, and for the case $i=K$ we just proved that $\Path d \nodetwo\node$ and $\node \RelC \cannode_K$.
  \qedhere
  \end{itemize}
\end{proof}

The intuition is that recursive calls to $\BuildClass$ climb a stair of nodes in the \ladag{}. The algorithm fails when it encounters a step that was already climbed: this is because the staircase order is strict when $\DownEquivClos\Que$ is homogeneous and propagated.

\begin{proposition}[$\DownEquivClos\Que$ Respects Staircase Order]\label{iteration-of-ordered-new}
  Let $\State$ be a reachable state, and
  let $\cannode_1,\ldots,\cannode_K$ be nodes such that
  $\cannode_1 \prec \ldots \prec \cannode_K$.
  If $\DownEquivClos\Que$ is homogeneous, then $\cannode_i \DownEquivClos\Que \cannode_j $ if and only if $i=j$.
\end{proposition}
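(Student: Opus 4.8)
The plan is to split the biconditional. The direction $i=j\Rightarrow\cannode_i\DownEquivClos\Que\cannode_j$ is immediate: by the definition of spreading, $\DownEquivClos\Que$ is closed under \RuleRefl, hence reflexive, so $\cannode_i\DownEquivClos\Que\cannode_i$. For the converse I would argue by contradiction, assuming $\cannode_i\DownEquivClos\Que\cannode_j$ for some $i\neq j$---by symmetry we may take $i<j$---and producing a non-trivial cycle up to $\DownEquivClos\Que$. Before starting I would record the two ingredients I rely on. Since $\DownEquivClos\Que$ is homogeneous by hypothesis and, by the definition of spreading, already closed under all the equivalence and propagation rules, it is a \core sharing equivalence; write $\equiv$ for it. In particular $\equiv$ is a \core bisimulation and an equivalence relation, so Trace Equivalence (\Cref{w-l-same-paths}) and Trace Propagation (\Cref{propagate}) apply to it, and by \Cref{thm:quotient}(1) paths up to $\equiv$ are acyclic. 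Moreover \Cref{cstar-qsharp} gives ${\RelC}\subseteq{\RelC^*}\subseteq{\equiv}$, so every tread $\nodethree\RelC\node$ appearing in the definition of $\prec$ (\Cref{def:staircase}) upgrades to $\nodethree\equiv\node$.

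The heart of the argument is a lemma, proved by induction on $m=j-i\ge 1$: from a chain $\cannode_{j-m}\prec\cdots\prec\cannode_j$ one can build a \emph{single honest graph path} descending from $\cannode_j$ to a node equivalent to $\cannode_{j-m}$, i.e.\ there exist a non-empty trace $\tra$ and a node $\nodex$ with $\Path\tra{\cannode_j}\nodex$ and $\nodex\equiv\cannode_{j-m}$. The base case $m=1$ is just the definition of $\prec$: the riser $\Path d{\cannode_j}\nodethree$ together with the (upgraded) tread $\nodethree\equiv\cannode_{j-1}$ gives the claim with $\tra=d$. For the inductive step I would take the path $\Path\tra{\cannode_j}\nodex$ with $\nodex\equiv\cannode_{j-m}$ supplied by the hypothesis, and the extra step $\cannode_{j-m-1}\prec\cannode_{j-m}$ witnessed by some $\Path d{\cannode_{j-m}}\nodethree$ with $\nodethree\equiv\cannode_{j-m-1}$, and then \emph{transport} this last step across the equivalence $\nodex\equiv\cannode_{j-m}$: by Trace Equivalence $\nodex$ too admits a $d$-step, say $\Path d\nodex{\nodex'}$, and by Trace Propagation $\nodex'\equiv\nodethree\equiv\cannode_{j-m-1}$; prepending $d$ to $\tra$ yields the extended path $\Path{\pcons d\tra}{\cannode_j}{\nodex'}$, still non-empty. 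This transport---turning the abstract $\prec$-chain, whose treads sit in different nodes related only by $\equiv$, into one concrete path actually traced in the graph---is the only delicate point and the main obstacle; everything else is bookkeeping.

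Instantiating the lemma at $m=j-i$ produces a non-empty trace $\tra$ and a node $\nodex$ with $\Path\tra{\cannode_j}\nodex$ and $\nodex\equiv\cannode_i$. Under the contradiction hypothesis $\cannode_i\equiv\cannode_j$, transitivity of $\equiv$ gives $\nodex\equiv\cannode_j$, so $\Path\tra{\cannode_j}\nodex$ is a non-trivial closed path up to $\equiv$ based at $\cannode_j$. This contradicts the acyclicity of paths up to $\equiv$ (\Cref{thm:quotient}(1)). Hence no $i<j$ with $\cannode_i\equiv\cannode_j$ can exist, so $\cannode_i\DownEquivClos\Que\cannode_j$ forces $i=j$, completing the proof.
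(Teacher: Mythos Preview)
Your proof is correct and follows essentially the same route as the paper: both argue by contradiction assuming $i<j$, build by induction on $j-i$ a non-empty path $\Path\tra{\cannode_j}\nodex$ landing at a node $\nodex\equiv\cannode_i$ (transporting each $\prec$-step across the equivalence via Trace Equivalence and Trace Propagation, with \Cref{cstar-qsharp} upgrading the treads), and then derive a contradiction. The only cosmetic difference is the last step: the paper invokes the No-Triangles lemma (\Cref{OMG}) directly on $\cannode_i\equiv\cannode_j$, $\Path\tra{\cannode_j}\nodex$, and $\cannode_i\equiv\nodex$, whereas you phrase the same contradiction as a violation of acyclicity up to $\equiv$ (\Cref{thm:quotient}(1)); since \Cref{OMG} is exactly the acyclicity statement specialised to one equivalence step, the two are equivalent here.
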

\begin{proof}
  If $i=j$, then clearly $\cannode_i \DownEquivClos\Que \cannode_j $ because $\DownEquivClos\Que$ is reflexive.

  For the other direction, assume by contradiction and without loss of generality that $i<j$ (since $\DownEquivClos\Que$ is symmetric).
  We are going to prove that there exist a non-empty trace $\tra$ and a node $\nodetwo$ such that $\Path\tra{\cannode_j}\nodetwo$ and $\cannode_i \DownEquivClos\Que \nodetwo$:
  this, together with $\cannode_i \DownEquivClos\Que \cannode_j$, will yield a contradiction by \Cref{OMG}.

  We proceed by induction on $j - i - 1$:
  \begin{itemize}
    \item Case $i+1 = j$. Obtain from \Cref{order-callstack} a node $\node$ and a direction $d$ such that $\cannode_{i+1}=\Path d {\cannode_j}\node$ and $\node\RelC \cannode_i$, and conclude because $\cannode_i \DownEquivClos\Que \nodetwo$ by \Cref{cstar-qsharp}.
    \item Case $i+1 < j$. By \ih{} there exist a non-empty trace $\tra$ and a node $\node$ such that $\Path\tra{\cannode_j}\node$ and $\cannode_{i+1} \DownEquivClos\Que \node$.
    By \Cref{order-callstack}, there exist a direction $d$ and a node $\nodetwo$ such that $\Path d {\cannode_{i+1}}\nodetwo$ and $\nodetwo \RelC \cannode_i$ (and therefore $\cannode_i \DownEquivClos\Que \nodetwo$ by \Cref{cstar-qsharp}).
    From $\cannode_{i+1} \DownEquivClos\Que \node$ and $\Path d {\cannode_{i+1}}\nodetwo$, by \Cref{w-l-same-paths} there exists a node $\nodetwo'$ such that $\Path d \node{\nodetwo'}$ and $\nodetwo \DownEquivClos\Que \nodetwo'$.
    We can conclude, since $\Path{\pcons d \tra}{\cannode_j}{\nodetwo'}$ and $\cannode_i \DownEquivClos\Que \nodetwo'$.
  \qedhere
  \end{itemize}
\end{proof}

We are now ready to prove that the algorithm is complete:

\begin{theorem}[Completeness]\label{thm:completeness}
  If the algorithm fails, then $\DownEquivClos\Que$ is not homogeneous.
\end{theorem}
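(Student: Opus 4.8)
The plan is to prove the statement by a case analysis on the three lines at which \Cref{alg:yes-queue-sharing-check} can fail: \Cref{vq-fail-homo} inside \Enqueue, and \Cref{vq-visiting-fail} and \Cref{vq-c-neq} inside \BuildClass. The failure on \Cref{vq-fail-homo} is the direct one, so I would dispatch it first. There $\Enqueue{\nodetwo,\cannode}$ is invoked from \Cref{vq-enqueue-one} of some active $\BuildClass{\cannode}$ while handling a \simSibling{} $\nodetwo$ of a dequeued node $\node$, so $\myc\node=\cannode$ (\Cref{inv-queue-class}) and $\node\urel\nodetwo$. Using $\urel\subseteq\DownEquivClos\Que$ (\Cref{l:approx-1}) and $\RelC\subseteq\RelC^*\subseteq\DownEquivClos\Que$ (\Cref{cstar-qsharp}), transitivity of $\DownEquivClos\Que$ gives $\nodetwo\DownEquivClos\Que\cannode$. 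Since the fail branch of the case analysis on \Cref{vq-homo} is reached exactly when $\nodetwo$ and $\cannode$ are not homogeneous, this immediately exhibits two non-homogeneous nodes related by $\DownEquivClos\Que$, so $\DownEquivClos\Que$ is not homogeneous.

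For the two remaining, subtler failures I would argue by contradiction, assuming $\DownEquivClos\Que$ homogeneous and deriving an impossible cycle. In the failing state let $\callstack=[\cannode_1,\ldots,\cannode_K]$; by \Cref{order-callstack} this is a strict chain $\cannode_1\prec\cdots\prec\cannode_K$, and since we are executing the body of the innermost active call, $\cannode=\cannode_K$. The main tool is \Cref{iteration-of-ordered-new}: if $\DownEquivClos\Que$ is homogeneous, then along any $\prec$-chain distinct positions carry distinct $\DownEquivClos\Que$-classes. The strategy is therefore to locate, at the point of failure, a node whose class already occurs earlier on the callstack yet is forced by $\prec$ (or by $\DownEquivClos\Que$) to sit at a new position, contradicting strictness.

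For \Cref{vq-visiting-fail}, the offending parent $\nodetwo$ of $\node$ has canonic $\cannode'$ with $\visiting{\cannode'}=\true$, so $\BuildClass{\cannode'}$ is still active and $\cannode'=\cannode_m$ for some $m\le K$. The parent edge $\Path d \nodetwo\node$ together with $\node\RelC\cannode_K$ yields $\cannode_K\prec\nodetwo$ by \Cref{def:staircase}, extending the chain to $\cannode_1\prec\cdots\prec\cannode_K\prec\nodetwo$. But $\myc\nodetwo=\cannode_m$ gives $\nodetwo\DownEquivClos\Que\cannode_m$ (\Cref{cstar-qsharp}), so positions $m$ and $K+1$ share a class, contradicting \Cref{iteration-of-ordered-new}. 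For \Cref{vq-c-neq}, the offending \simSibling{} $\nodetwo$ of $\node$ has canonic $\cannode'\neq\cannode_K$; here $\node\urel\nodetwo$, $\node\RelC\cannode_K$ and $\nodetwo\RelC\cannode'$ already give $\cannode_K\DownEquivClos\Que\cannode'$, so it suffices to show $\cannode'$ lies on the callstack (hence $\cannode'=\cannode_m$ with $m\neq K$), which again contradicts \Cref{iteration-of-ordered-new}.

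The hard part will be this last claim, that $\cannode'$ cannot be an \emph{already completed} class. I would prove it by contradiction: if $\BuildClass{\cannode'}$ had already returned, then $\nodetwo$ would be processed (\Cref{horrible-proof}), and since by \Cref{finalize-siblings} the $\urel$-edge between $\node$ and $\nodetwo$ must already have existed when $\nodetwo$'s \simSibling{} loop ran, that loop would have visited $\node$. Depending on whether $\myc\node$ was then undefined, equal to $\cannode'$, or different, one obtains a contradiction either with the definitiveness of the canonic assignment (\Cref{canonic-not-overwritten}), since now $\myc\node=\cannode_K\neq\cannode'$, or with the assumption that the run reaches this first failure at all. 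This bookkeeping about \emph{when} edges and canonics are created is the delicate point; once it is settled, the staircase strictness contradiction is immediate, and the three cases together establish that any failure forces $\DownEquivClos\Que$ to be non-homogeneous.
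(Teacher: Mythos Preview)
Your proposal is correct and follows essentially the same strategy as the paper: a case analysis on the three failure lines, with \Cref{vq-fail-homo} handled directly by exhibiting $\nodetwo\DownEquivClos\Que\cannode$, and \Cref{vq-visiting-fail} and \Cref{vq-c-neq} handled by contradiction via the staircase order on $\callstack$ (\Cref{order-callstack}) together with its strictness under homogeneity (\Cref{iteration-of-ordered-new}). The only noteworthy difference is in the ``already returned'' subcase of \Cref{vq-c-neq}: the paper dispatches it in one line by invoking \Cref{abc} (which packages exactly the argument that once $\nodetwo$ has been processed, every $\urel$-neighbour of $\nodetwo$ is $\RelC^*$-related to it), whereas you unfold that argument by hand using \Cref{finalize-siblings} and a case split on the value of $\myc\node$ at the time $\nodetwo$'s sibling loop ran---this is valid but unnecessarily detailed, since \Cref{abc} already does that bookkeeping for you.
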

\begin{proof}
  Let $\State$ be a failing state, \ie{} a reachable state that transitions to \FailState{}.
  Let $\callstack=[\cannode_1,\ldots,\cannode_K]$ in $\State$.
  There are three lines of the algorithm in which failure may occur: \Cref{vq-visiting-fail}, \Cref{vq-c-neq}, and \Cref{vq-fail-homo}.
  We discuss these cases separately; in all cases, in order to prove that $\DownEquivClos\Que$ is not homogeneous, we assume $\DownEquivClos\Que$ to be homogeneous, and derive a contradiction.
  \begin{itemize}
    \item \underline{\Cref{vq-visiting-fail}.}
    Let $\nodetwo$ be a parent of $\node$ such that $\myc\nodetwo=\cannode'$ and $\visiting{\cannode'} = \true$.
    Since $\visiting{\cannode'} = \true$, the call to $\BuildClass{\cannode'}$ has not returned before $\State$, and
    therefore $\cannode'=\cannode_i$ for some $0<i\leq K$.
    By \Cref{order-callstack} $\cannode_1\prec \ldots \prec \cannode_K$, and therefore $\cannode'$ was already encountered in the stair of active classes.
    Note that $\myc\node=\cannode_K=\cannode$ (\Cref{inv-queue-class}), and therefore $\cannode_K \prec \nodetwo$. Also, from $\myc\nodetwo = \cannode_i$ it follows that $\nodetwo \DownEquivClos\Que \cannode_i$ (\Cref{cstar-qsharp}).
    The contradiction is obtained by noting that by \Cref{iteration-of-ordered-new} it cannot be the case that $\nodetwo \DownEquivClos\Que \cannode_i$, since $\cannode_1\prec \ldots \prec \cannode_K \prec \nodetwo$.

    \item \underline{\Cref{vq-c-neq}.}
     Let $\nodetwo$ be a \simSibling{} of $\node$ such that $\myc\nodetwo = \cannode' \neq \cannode$. By \Cref{inv-queue-class}, $\myc\node=\cannode$. Since $\myc\nodetwo = \cannode'$, $\BuildClass{\cannode'}$ must have been called (\Cref{build-was-called}), and there are two options:
     \begin{itemize}
       \item \underline{$\BuildClass{\cannode'}$ has already returned.}
        We show that this is not possible. In fact, if $\BuildClass{\cannode'}$ has returned before $\State$, then
        $\nodetwo$ has already been processed before $\State$ (\Cref{horrible-proof}). Therefore $\node =_c \nodetwo$ (\Cref{abc}), contradicting the hypothesis that $\cannode \neq \cannode'$.

       \item \underline{$\BuildClass{\cannode'}$ has not yet returned.}
       Therefore $\cannode'=\cannode_i$ for some $0<i<K$.
       From $\myc\node=\cannode=\cannode_K$ and $\myc\nodetwo=\cannode_i$ obtain $\node \urel^* \cannode_K$ and $\nodetwo \urel^* \cannode_i$ (\Cref{l:approx-0}).
       Since $\nodetwo$ is a \simSibling{} of $\node$, $\cannode_i \urel^* \cannode_K$, and therefore $\cannode_i \DownEquivClos\Que \cannode_K$ by \Cref{l:approx-1}. This yields a contradiction by \Cref{order-callstack} and \Cref{iteration-of-ordered-new}.
     \end{itemize}

    \item \underline{\Cref{vq-fail-homo}.} Assume that $\cannode$ and $\nodetwo$ do not respect the homogenity condition. We are going to prove that $\cannode \DownEquivClos\Que \nodetwo$, which contradicts the hypothesis that $\DownEquivClos\Que$ is homogeneous.
    \sloppy$\Enqueue{\nodetwo,\cannode}$ is called from \Cref{vq-enqueue-one}, and therefore $\nodetwo$ is a \simSibling{} of $\node$.
    \Cref{vq-enqueue-one} is executed after \Cref{vq-peek} with locals $\cannode,\node$, and therefore $\myc\node=\cannode$ by \Cref{inv-queue-class}.
    By \Cref{l:approx-0}, $\cannode\urel^*\node$, and therefore $\cannode\urel^*\nodetwo$.
    By \Cref{l:approx-1}, we conclude with $\cannode\DownEquivClos\Que\nodetwo$.
  \qedhere
  \end{itemize}
\end{proof}

\begin{proofof}[\Cref{prop:completeness-body}]
  \Paste{prop:completeness-body}
\end{proofof}
\begin{proof}
  By \Cref{thm:completeness} and \Cref{prop:univfirst}.
\end{proof}

% !TEX root = ../main.tex

% TODO: MOVE THESE DEFINITIONS
\newcommand{\NodesSet}{N}
\newcommand{\EdgesSet}{E}
\newcommand{\NoSiblings}[2]{\#edges({#1},{#2})}
\newcommand{\NoParents}[1]{\#parents({#1})}
\newcommand{\NoNodes}{|\NodesSet|}
\newcommand{\NoEdges}{|\EdgesSet|}

\subsection{Termination \& Linearity}
\label{subsect:linearity}

In this section, we are going to show that \Cref{alg:yes-queue-sharing-check} always terminates within a number of transitions that is linear on the size of the \ladag{} and the initial query $\Que$.

First some notations:
\begin{definition}~
  \begin{itemize}
    \item $\NodesSet\defeq \nodes(\xgraph)$ is the set of nodes in the \ladag{} $\xgraph$
    \item $\EdgesSet$ is the set of directed ($\pleft,\pdown,\pright$) edges in the \ladag{}
    \item $|\cdot|$ is the size of a (multi)set.
    \item $\NoSiblings\urel\node$ is the number of $\urel$edges with endnode $\node$. (Recall that $\urel$ is a multirelation)
    \item $\NoParents\node$ is the number of directed edges in $\EdgesSet$ with target $\node$.
  \end{itemize}
\end{definition}

The following lemma proves that in every reachable state
 the number of query edges is linear in the size of $\Que$ and $\RelC$.
 The statement of the lemma is actually more complicated, because the proof requires
 a more precise bound on $|\xundirected|$.
 In particular, the bound $|\xundirected|\leq 2\times|{\Que}| + 4\times|{\RelC}| $ holds
  in almost every state, but breaks temporarily during the execution of $\Enqueue$, just before the execution of \Cref{vq-can-set-two}.
  The problem is that Lines~\ref{vq-homo}--\ref{vq-fail-homo} may add
  new $\urel$edges, while the new canonic that restores the disequality is only assigned slightly later, on \Cref{vq-can-set-two}.

\begin{proposition}[Bound on $\xundirected$]\label{bound-sim}
  In every reachable state $\State$,
  $|\xundirected|\leq 2\times|{\Que}| + 4\times |\NodesSet|$.
\end{proposition}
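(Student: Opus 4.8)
The plan is to prove the bound by a direct accounting of which transitions can enlarge the multiset $\xundirected$, combined with the fact that each node is enqueued at most once; equivalently this is an induction on the length of the run, where the target quantity $2\times|{\Que}|+4\times|\NodesSet|$ is constant because $\NodesSet$ does not change. First I would record that $\xundirected$ never shrinks: by \Cref{sim-mono} no line removes $\urel$edges, so in any reachable state $\State$ the multiset $\xundirected$ equals its initial value together with every edge created so far. In the initial state $\undS{}\defeq\Que$, and since each query pair is recorded in both orientations this contributes at most $2\times|{\Que}|$ to $|\xundirected|$; it therefore remains to bound the edges created during the run.

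Next I would isolate the only source of new edges. Inspecting \Cref{alg:yes-queue-sharing-check}, the sole lines creating $\urel$edges are \Cref{vq-edges1} and \Cref{vq-edges2}, and both occur inside a call to $\Enqueue$. A single call $\Enqueue{\nodetwo, \cannode}$ creates at most two undirected edges: one in the abstraction case (\Cref{vq-edges1}), two in the application case (\Cref{vq-edges2}), and none in the two variable cases. Since each undirected edge contributes at most $2$ to the cardinality of $\xundirected$ (exactly $2$ between distinct nodes, and less for a self-loop), every call to $\Enqueue$ raises $|\xundirected|$ by at most $4$.

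It then remains to count the calls. By \Cref{enqueue-once}, $\Enqueue{\node,--}$ is called at most once per node $\node$, so there are at most $|\NodesSet|$ calls to $\Enqueue$ over the whole run. Hence the total number of edges created is at most $2\times|\NodesSet|$, contributing at most $4\times|\NodesSet|$ to $|\xundirected|$. Adding the initial contribution gives $|\xundirected|\le 2\times|{\Que}|+4\times|\NodesSet|$ in every reachable state, as required; cast as an induction, the base case is the initial state and the inductive step only has to inspect the edge-creating transitions just described.

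I expect the only delicate point to be the bookkeeping around \emph{when} edges are created relative to when canonics are assigned. A sharper invariant phrased with $|{\RelC}|$ in place of $|\NodesSet|$ is only preserved up to a bounded slack, since $\Enqueue$ creates its edges a few lines before it assigns the new canonic on \Cref{vq-can-set-two}, so in that brief window the edge count temporarily outruns $4\times|{\RelC}|$. Measuring instead against $|\NodesSet|$, which is invariant along the run and already bounds the number of $\Enqueue$ calls (including one possibly in progress), avoids this issue entirely, which is precisely why the $|\NodesSet|$-form is the convenient one to establish directly.
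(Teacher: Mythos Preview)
Your proposal is correct and follows essentially the same approach as the paper: both argue that the initial state contributes $2\times|{\Que}|$, that new $\urel$edges arise only inside $\Enqueue$ (at most two per call, hence four entries in the symmetric multiset), and then invoke \Cref{enqueue-once} to bound the number of calls by $|\NodesSet|$. Your closing remark about why $|\NodesSet|$ is preferred over $|{\RelC}|$ also matches the paper's discussion preceding the proposition.
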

\begin{proof}
  Proved by induction on the length of the program run leading to $\State$.
  In the initial state $\State_0$, $|\xundirected| = 2\times|{\Que}|$ because $\xundirected$ contains the same edges as $\Que$ (but $\xundirected$ is also symmetric and a multirelation, hence each $\urel$edge counts twice).
  
  During the program run, new $\urel$edges are created only by the \Enqueue procedure, which may create at most two new $\urel$edges per call (that count as four new entries of the symmetric multirelation $\xundirected$). By \Cref{enqueue-once}, \Enqueue is called at most $|\NodesSet|$ times in each program run, hence we conclude with the bound $|\xundirected|\leq 2\times|{\Que}| + 4\times |\NodesSet|$.
\end{proof}

\begin{theorem}[Termination \& Linearity]\label{thm:linearity}
  In every program run, the number of transitions is $O(\NoNodes + \NoEdges + |{\Que}|)$.
\end{theorem}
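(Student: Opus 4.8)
The plan is to bound the number of transitions by an amortized counting argument: I attribute to each line of the algorithm the total number of times it is executed over the whole run, and rely on the various ``at most once'' lemmas already established to keep each total linear. Recall that a transition is essentially the execution of a line of code, so it suffices to sum these line counts.

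First I would dispose of the outer structure. The main loop at \Cref{vq-main-loop} of $\CheckHomogeneous{}$ iterates once per node, each iteration executing only the constant-size test at \Cref{vq-build-class-one}, for a total of $O(\NoNodes)$ transitions. Since \BuildClass is called at most once per node (\Cref{buildclass-once}), there are at most $\NoNodes$ calls to \BuildClass, and the constant-size preamble and epilogue of each call (Lines~\ref{vq-can-set-one}, \ref{vq-visiting-true}, \ref{vq-stack-set}, and \ref{vq-visiting-false}) contribute $O(\NoNodes)$ overall. Likewise $\Enqueue{\nodetwo,\cannode}$ is called at most once per node $\nodetwo$ (\Cref{enqueue-once}), so all its executions together contribute $O(\NoNodes)$ transitions.

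The delicate part is the nested loops inside the \texttt{while} on \Cref{vq-stack-while}. The key observation is that each node is dequeued at most once (\Cref{dequeued-once}), so the body of the loop (Lines~\ref{vq-peek}--\ref{vq-c-neq}) is executed at most once per node, giving $O(\NoNodes)$ body executions summed over \emph{all} calls to \BuildClass. For the two inner loops I would charge their iterations to edges rather than nodes. When a node $\node$ is processed, the loop on \Cref{vq-parents} runs $\NoParents{\node}$ times, and since $\node$ is processed only once, summing over all processed nodes gives $\sum_{\node}\NoParents{\node}=\NoEdges$, hence $O(\NoEdges)$ transitions (the recursive calls to \BuildClass triggered at \Cref{vq-build-class-two} are already accounted for above). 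Symmetrically, the loop on \Cref{vq-siblings} runs $\NoSiblings{\urel}{\node}$ times; here I would invoke \Cref{finalize-siblings} to argue that the \simSibling{s} of $\node$ are frozen once its parents have been visited, so that the number of iterations equals the \emph{final} value of $\NoSiblings{\urel}{\node}$. Summing over all processed nodes then yields $|\xundirected|$, which by \Cref{bound-sim} is at most $2\times|{\Que}|+4\times\NoNodes$, contributing $O(\NoNodes+|{\Que}|)$ transitions. Adding all contributions gives the total $O(\NoNodes+\NoEdges+|{\Que}|)$, and termination follows a fortiori, since this bound is finite.

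The main obstacle, and the one place where genuine care is needed, is the \simSibling{s} loop: the multiset $\xundirected$ grows dynamically during the run, as query edges are propagated at Lines~\ref{vq-edges1}--\ref{vq-edges2}, so one cannot naively identify the number of iterations with a static quantity. The combination of \Cref{finalize-siblings} (no new $\urel$edge incident to $\node$ is created after its parents are processed) and \Cref{dequeued-once} (each node is processed exactly once) is precisely what lets this dynamic sum collapse to the single static bound of \Cref{bound-sim}; and that bound is itself linear only because of the parsimonious, post-canonization propagation of query edges. Combined with the fact that each atomic operation runs in constant time, this transition count also yields the linear running time asserted in \refprop{blind-is-linear}.
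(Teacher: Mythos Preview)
Your proof is correct and follows essentially the same approach as the paper: you decompose the transition count by procedure, invoke \Cref{buildclass-once}, \Cref{enqueue-once}, and \Cref{dequeued-once} to bound the number of calls and loop-body executions, charge the parent loop to $\NoEdges$ and the \simSibling{} loop to $|\xundirected|$ via \Cref{finalize-siblings} and \Cref{bound-sim}, and sum. You also correctly identify the dynamic growth of $\xundirected$ as the subtle point and explain why \Cref{finalize-siblings} resolves it, which is exactly the paper's argument.
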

\begin{proof}
% We are going to discuss
Let us assume a program run with end state $\State$ (not necessarily a final state).
We are going to discuss the number of transitions in the program run in a global way.
We group the transitions in the program run according to which procedure they are executing a line of, and we discuss the procedures $\CheckHomogeneous$, $\BuildClass$, and $\Enqueue$ separately.
\begin{enumerate}
  \item \underline{Transitions executing lines of $\CheckHomogeneous$.} \sloppy
   The procedure \CheckHomogeneous{} is called exactly once, in the initial state.
   The loop on \Cref{vq-main-loop} simply iterates on all the nodes of the \ladag{}: hence the loop bound is just $\NoNodes$. \Cref{vq-build-class-one} may call the procedure $\BuildClass$, but here we do not consider the transitions caused by that function, that will be discussed separately below.
   Therefore, the total number of transitions that execute lines of \CheckHomogeneous is simply O($\NoNodes$).

  \item \underline{Transitions executing lines of $\Enqueue$.}
   The procedure $\Enqueue{\node,--}$ is called at most once for every node $\node$ (\Cref{enqueue-once}). Since $\Enqueue$ contains no loops and no function calls, there is a constant bound on the number of transitions that are executed by $\Enqueue$ each time it is called.
   Therefore, the total number of transitions that execute lines of $\Enqueue$ is O($\NoNodes$).

  \item \underline{Transitions executing lines of $\BuildClass$.}
   The discussion of the complexity of $\BuildClass$ is more involved, because
   it contains multiple nested loops.

   As a first approximation, the total number of transitions executing lines of $\BuildClass$ is big-O of the number of calls to $\BuildClass$ plus the numer of transitions executing lines of the loop on \Cref{vq-stack-while} (\ie{} Lines~\ref{vq-stack-while}--\ref{vq-c-neq}).
   The number of calls to $\BuildClass$ is $\leq\NoNodes$ (\Cref{buildclass-once}).
   As for the loop on \Cref{vq-stack-while}, in every program run the body of the loop is executed at most once for every node $\node$ (\Cref{dequeued-once}).

   The body of the while loop contains two additional loops: one on \Cref{vq-parents}, and the other on \Cref{vq-siblings}:
   \begin{itemize}
     \item The loop on \Cref{vq-parents} (with locals $\cannode,\node$) simply iterates on all the parents of a node, which is a fixed number: therefore each time it is executed with local variable $\node$, it is responsible for $O(\NoParents\node)$ transitions.
     \item The discussion about the loop on \Cref{vq-siblings} (with locals $\cannode,\node$) is more involved: it iterates on
     all \simSibling{s} of a node $\node$, but $\urel$ changes during the execution of the loop.
     In fact, the body of the loop may call $\Enqueue{\nodetwo,\cannode}$ on \Cref{vq-enqueue-one}, which may create new $\urel$edges.
     However, note that the loop on \Cref{vq-siblings} is executed only after the loop on \Cref{vq-parents}, which finalizes the \simSibling{s} of $\node$ (\Cref{finalize-siblings}).
     Therefore the loop on \Cref{vq-siblings} iterates on a set of $\urel$edges that does not change afterwards during the program run, and is therefore responsible for $O(\NoSiblings\urel\node)$ transitions (where the relation $\urel$ is the one in $\State$).
   \end{itemize}

   Therefore the total number of transitions executing Lines~\ref{vq-stack-while}--\ref{vq-c-neq} is big-O of:
   \begin{flalign*}
     & \sum \set*{ \NoParents\node + \NoSiblings{\urel}\node + 1 \mid \text{\Cref{vq-peek} is executed with local } \node } \\
     \leq&  \sum_{\node\in\NodesSet} \left(\NoParents\node + \NoSiblings{\urel}\node + 1 \right) \\
     = & \sum_{\node\in\NodesSet} \NoParents\node + \sum_{\node\in\NodesSet} \NoSiblings{\urel}\node + \sum_{\node\in\NodesSet} 1 \\
     = & \,\, \NoEdges + |\xundirected| + \NoNodes.
   \end{flalign*}

   Note that by \Cref{bound-sim}, $|\xundirected|\in O(|{\Que}| + |{\NoNodes}|)$. Therefore the total number of transitions executing lines of $\BuildClass$ in a program run is O($\NoNodes + \NoEdges + |{\Que}|$).
\end{enumerate}
In conclusion, we obtain the following asymptotic bound on the number of transitions in a program run:
\[O(\NoNodes) + O(\NoNodes) + O(\NoNodes + \NoEdges + |{\Que}|) = O(\NoNodes + \NoEdges + |{\Que}|).\]
\end{proof}

\begin{proofof}[\Cref{prop:blind-is-linear}] 
  \Paste{prop:blind-is-linear}
\end{proofof}
\begin{proof}
  By \Cref{thm:linearity}.
\end{proof}

% Algorithm 2: Name Check
\section{The \NameCheck}
\begin{proofof}\Cref{thm:soundness}
  \Paste{soundness}
\end{proofof}
\begin{proof}
  First note that since the \HomogeneityCheck{} succeeded,
  both $\can$ and $\eqc$ are homogeneous relations.
  We also use the fact that $({\eqc}) = ({\can^*})$ (\Cref{l:eqc-eq-cstar}).
  \begin{itemize}
    \item \emph{Completeness:}
     If the \NameCheck{} fails on line 8, then $\can$ is not an open relation, and therefore also $\eqc$ is not open by \Cref{prop:t8whof}.
     If the \NameCheck{} fails on line 9, then $\can$ is not a bisimulation upto $\eqc$, and therefore $\eqc$ is not a bisimulation, and therefore it is not a sharing equivalence.
    \item\emph{Correctness:}
    Since also the \NameCheck{} succeeded, $\can$ is an open relation, and closed under \RuleBV{} upto $\eqc$. In order to show that $\eqc$ is a sharing equivalence, it suffices to show that it is open, and closed under \RuleBV{}.
    $\eqc$ is closed under \RuleBV{} by \Cref{closed-bv-star}. 
    $\eqc$ is open by \Cref{prop:t8whof}.
    \item \emph{Termination in linear time:} obvious, since the algorithm simply iterates on all variable nodes of the \ladag{}.
    \qedhere
  \end{itemize}
\end{proof}

}

\end{document}